\theoremstyle{plain}
\newtheorem{theorem}{Theorem}
\newtheorem{lemma}[theorem]{Lemma}
\newtheorem{proposition}[theorem]{Proposition}
\newtheorem{definition}[theorem]{Definition}
\newtheorem{corollary}[theorem]{Corollary}
\theoremstyle{definition}
\newtheorem{remark}{Remark}
\newtheorem{conjecture}{Conjecture}
\begin{document}

\title{Some New Results on Sequence Reconstruction Problem for Deletion Channels}

\author{Xiang Wang,~Weijun Fang,~Han Li, and ~Fang-Wei Fu
\thanks{X. Wang is with the School of Mathematics, Statistics and Mechanics, Beijing University of Technology, Beijing, 100124, China (e-mail: xwang@bjut.edu.cn). Weijun Fang is with State Key Laboratory of Cryptography and Digital Economy Security, Shandong University, Qingdao, 266237, China, Key Laboratory of Cryptologic Technology and Information Security, Ministry of Education, Shandong University, Qingdao, 266237, China and School of Cyber Science and Technology, Shandong University, Qingdao, 266237, China (email: fwj@sdu.edu.cn). Han Li is with the Chern Institute of Mathematics and LPMC, Nankai University, Tianjin 300071, China (e-mail: hli@mail.nankai.edu.cn). F.-W. Fu is with the Chern Institute of Mathematics and LPMC, Nankai University, Tianjin 300071, China (e-mail: fwfu@nankai.edu.cn).}}



\maketitle

\begin{abstract}
Levenshtein first introduced the sequence reconstruction problem in $2001$. In the realm of combinatorics, the sequence reconstruction problem is equivalent to determining the value of $N(n,d,t)$, which represents the maximum size of the intersection of two metric balls of radius $t$, given that the distance between their centers is at least $d$ and the sequence length is $n$. In this paper, We present a lower bound on $N(n,3,t)$ for $n\geq \max\{13,t+8\}$ and $t \geq 4$. For $t=4$, we prove that this lower bound is tight. This settles an open question posed by Pham, Goyal, and Kiah, confirming that $N(n,3,4)=20n-166$ for all $n \geq 13$.
\end{abstract}

\begin{IEEEkeywords}
Sequence reconstruction, deletion channels, deletion correcting codes.
\end{IEEEkeywords}

\section{Introduction}
\IEEEPARstart{T}{he} sequence reconstruction problem was initially proposed by Levenshtein~\cite{L1} in $2001$. In a communication model, a sequence $\mathbf{x}$ from a code $\mathcal{C}$, is transmitted through several noisy channels. A decoder then receives the distinct outputs from these noisy channels and reconstructs the original transmitted sequence $\mathbf{x}$. Levenshtein~\cite{L1,L2} determined the minimum number of transmission channels required for the exact reconstruction of the transmitted sequence. Originally motivated by applications in biology and chemistry, the sequence reconstruction problem has seen a resurgence of interest due to the advent of certain innovative data storage technologies. These include DNA-based data storage systems~\cite{Church,Yazdi,Lenz} and racetrack memories~\cite{Parkin,Chee}, which offer multiple inexpensive and noisy reads.

Let $S$ denote the set of all sequences, and let $\rho: S \times S \rightarrow \mathbb{N}$ represent a metric defined over the sequences in $S$. Consider the code $\mathcal{C}$, which is defined as a subset of $S$ and equipped with the distance metric $\rho$. Suppose the code $\mathcal{C}$ has a minimum distance of $d$, and each channel introduces at most $t$ errors, with the transmitted sequence being an element of $\mathcal{C}$.  In this framework, Levenshtein~\cite{L1} proved that the minimum number of transmission channels required exceeds the size of the largest intersection of two metric balls:
\begin{equation}
N(n,d,t)=\max\limits_{\mathbf{x}_1,\mathbf{x}_2\in S,\rho(\mathbf{x}_1,\mathbf{x}_2)\geq d }\{|B_t(\mathbf{x}_1)\cap B_t(\mathbf{x}_2)|\},\label{eq1}
\end{equation}
where $B_t(\mathbf{x})$ denotes the ball of radius $t$ centered at $\mathbf{x}$, and the length of any sequence is $n$. We refer to the task of determining $N(n,d,t)$ as the \emph{sequence reconstruction problem}.

Levenshtein~\cite{L1} investigated the sequence reconstruction problem, as defined in Equation~(\ref{eq1}), in the context of various channels, including those based on the Hamming distance, Johnson graphs, and several other metrics.  Subsequent research addressed this problem within the framework of permutations in~\cite{K1,Yaakobi,Wang1,Wang2,Wang3}, and it was further explored in relation to additional general error graphs in~\cite{L3,L4}. The problem has also been studied for the Grassmann graph in~\cite{Yaakobi} and deletions or insertions~\cite{Sala1,Lan,Sun1,Zhang,Song}. This variant addresses the dual of the sequence reconstruction problem \cite{Chee,Chrisnata,Cai,Sun2,Sun3,Sun4,Ye,Wu}, where codes are designed for a fixed number of channels to guarantee unique reconstruction at the receiver.

The problem of sequence reconstruction for the deletion channel has received attention in the academic literature. In this context, the deletion ball of a sequence $\mathbf{x}$ with radius $t$ is defined as the collection of all sequences resulting from up to $t$ deletions from $\mathbf{x}$. Levenshtein~\cite{L1,L2} first studied the sequence reconstruction problem for the deletion channel and determined the value of $N(n,1,t)$, where $S$ denotes the set of all nonbinary/binary sequences. In the realm of binary sequences, some authors further studied the sequence reconstruction problem for the deletion channel \cite{Gabrys,Pham,Pham2}. Specifically, Gabrys and Yaakobi~\cite{Gabrys} resolved this problem for the deletion channel by establishing the value of $N(n,2,t)$, under the condition that the Levenshtein distance between any two distinct binary sequences is at least two. Recently, Pham, Goyal, and Kiah~\cite{Pham,Pham2} presented an asymptotic solution of $N(n,d,t)$ for all $d\geq 2$, which is given by $N(n,d,t)=\frac{\binom{2d}{d}}{(t-d)!}n^{t-d}-O(n^{t-d-1})$ for $0\leq d\leq t<n$. In the specific case where $d=t$, they found that $N(n,d,t)=\binom{2t}{t}$. 
Furthermore, Zhang et al. \cite{Zhang} studied the sequence reconstruction problem for the binary $3$-deletion channel and characterized pairs of distinct binary sequences $\mathbf{x},\mathbf{y}$ for which $|D_3(\mathbf{x})\cap D_3(\mathbf{y})|\in \{19,20\}$, given that the distance between $\mathbf{x}$ and $\mathbf{y}$ is at least $3$. Here, $D_3(\mathbf{x})$ and $D_3(\mathbf{y})$ denote the deletion balls of radius $3$ centered at $\mathbf{x}, \mathbf{y}\in \{0,1\}^{n}$, respectively.

For the case where $d=3$ and $t=4$, they \cite{Pham2} gave an upper bound of $N(n,3,4)$, that is, $N(n,3,4)\leq 20n-150$ for any $n\geq 9$. 
In this paper, we study the sequence reconstruction problem for binary sequences over the deletion channel. We propose a lower bound on $N(n,3,t)$ for $n\geq 13$ and $t\geq 4$, and determine that $N(n,3,4)=20n-166$ holds for all $n\geq 13$. Furthermore, for any $n\geq 5$, we explicitly construct two length-$n$ binary sequences $\mathbf{x}$ and $\mathbf{y}$ with Levenshtein distance at least three, such that the cardinality of the intersection of their $4$-deletion balls is $N(n,3,4)$.

The rest of this paper is organized as follows. In Section~\ref{sec2}, we introduce definitions and notations related to the sequence reconstruction problem, along with some foundational results. In Section~\ref{sec3}, we prove a lower bound on $N(n,3,t)$ for any $n\geq 13$ and $t\geq 4$. In Section~\ref{sec4}, we show that $20n-166$ is an upper bound on $N(n,3,4)$ for any $n\geq 13$. Section~\ref{sec5} concludes the paper.

\section{Definitions and Preliminaries}
\label{sec2}

In this paper, we follow the same notation as stated in~\cite{Gabrys} and~\cite{Pham}. Let $\mathbb{F}_2$ be the set $\{0,1\}$ and $[n]=\{1,2,...,n-1,n\}$. Let $a\in \mathbb{F}_2$. Scalars are denoted by lowercase letters, while vectors or sequences are denoted by bold lowercase letters. Let $\mathbf{x}=x_1\,x_2\,\cdots\,x_n\in \mathbb{F}_2^n$. Let $|\mathbf{x}|$ denote the length of the sequence $\mathbf{x}$. Let $\mathbf{x}_{[i,j]}$ be the projection of $\mathbf{x}$ on the indices of one interval $[i,j]$, i.e., $\mathbf{x}_{[i,j]}=x_i\,\cdots\,x_j$. A run of $\mathbf{x}$ is a maximal interval which consists of the same symbol. The concatenation of two sequences $\mathbf{u}=u_1\, \cdots\, u_m$ and $\mathbf{v}=v_1\, \cdots\, v_n$ is denoted $\mathbf{u} \circ \mathbf{v}=u_1\, \cdots\, u_m\, v_1\, \cdots\, v_n$. The \emph{complement} of $a$ is defined to be $\overline{a}$. We also define the complement of a binary sequence $\mathbf{x}$, denoted $\overline{\mathbf{x}}$, to be the sequence obtained by taking the complement of each bit in $\mathbf{x}$.

Suppose $\mathcal{C} \subseteq \mathbb{F}^n_2$. For $\mathbf{u}\in \mathbb{F}_2^{m_1},\mathbf{v}=v_1\,\cdots\,v_{m_2}\in \mathbb{F}_2^{m_2}$ with $m_1,m_2\geq 0$, $\mathcal{C}_{\mathbf{v}}^{\mathbf{u}}$ is the set of all sequences in $\mathcal{C}$ that start with $\mathbf{u}$ and end with $v_{m_2}\,\cdots\,v_{1}$. For a sequence $\mathbf{u}=u_1\, \cdots\, u_m$, the set $\mathbf{u} \circ \mathcal{C}$ is defined as
\begin{equation}
\mathbf{u}\circ \mathcal{C}=\{\mathbf{u}\circ \mathbf{c}=u_1\,\cdots\,u_m\,c_1\,\cdots\,c_n|\mathbf{c}=c_1\,\cdots\,c_n\in \mathcal{C}\}.\nonumber
\end{equation}
For any sequence $x \in \mathbb{F}^n_2$, let $D_t(\mathbf{x})$ be the $t$-deletion ball centered at $\mathbf{x}$, that is,
\begin{equation}
D_t(\mathbf{x})=\{\mathbf{y}\in \mathbb{F}_2^{n-t}|\mathbf{y} ~\text{is a subsequence of}~ \mathbf{x}\}.\nonumber
\end{equation}
The \emph{Levenshtein distance} between any two sequences $\mathbf{x},\mathbf{y}\in \mathbb{F}_2^n$ is defined as
\begin{equation}
d_L(\mathbf{x},\mathbf{y})=\min\{t\geq 0| D_t(\mathbf{x})\cap D_t(\mathbf{y})\neq \emptyset\}.\nonumber
\end{equation}
For integers $1\leq d\leq t$, we define $N(n,d,t)$ to be the maximum possible size of the intersection of two deletion balls of radius $t$, centered at $\mathbf{x}, \mathbf{y} \in \mathbb{F}_2^n$ with distance at least $d$:
\begin{equation}
N(n,d,t)=\max
\{|D_t(\mathbf{x})\cap D_t(\mathbf{y})|:\mathbf{x},\mathbf{y}\in \mathbb{F}_2^n, d_L(\mathbf{x},\mathbf{y})\geq d\}.\nonumber
\end{equation}
For convenience, when $t<0$ or $t>n$, we denote $D_t(\mathbf{x})=\emptyset$.
Assuming a sequence from $\mathcal{C} \subseteq \mathbb{F}_2^n$ is transmitted over $N$ channels, with each channel experiencing exactly $t$ deletion errors, and all channel outputs are distinct, where $\mathcal{C}$ is a $(d-1)$-deletion-correcting code. Levenshtein~\cite{L1} proved that the minimum number of channels that guarantees the existence of a decoder that successfully decode any transmitted sequence is given by $N(n,d,t)+1$.

\begin{definition}
\label{def1}
A sequence $\mathbf{x}=x_1\,x_2\,\cdots\,x_n$ is $2$-periodic if $x_k=x_{k+2}$ for all $1\leq k\leq n-2$. We say that $\mathbf{x}$ is alternating if it is $2$-periodic and $x_1\neq x_2$. For convenience, length-one sequences are alternating.
\end{definition}

For convenience, we let $\mathbf{a}_n\in \mathbb{F}_2^n$ be the alternating sequence whose first bit is $1$. For $0<t<n$, let $D(n,t)$ denote the maximum cardinality of a deletion ball of radius $t$ in $\mathbb{F}_2^n$, which can be expressed as $D(n,t)=\max\{|D_t(\mathbf{x})|: \mathbf{x} \in \mathbb{F}_2^n\}$. As established in \cite{Calabi}, we have
\begin{equation}
D(n,t)=|D_t(\mathbf{a}_n)|=\sum\limits_{i=0}^{t}\binom{n-t}{i},\label{eq2}
\end{equation}
and it follows that
\begin{equation}
D(n,t)=D(n-1,t)+D(n-2,t-1).\label{eq3}
\end{equation}
For notational consistency, we define $D(n,t)=0$. in cases where $t>n, n<0$, or $t<0$.

The results of $N(n,d,t)$ have been discussed in the literature as follows. In the case where $d=1$, Levenshtein \cite{L1} presented the value of $N(n,1,t)$.
\begin{theorem}[Levenshtein \cite{L1}]
For $1 \leq t < n$,
\begin{equation}
N(n,1,t)=2D(n-2,t-1).\nonumber
\end{equation}
Furthermore, when $\mathbf{x}=\mathbf{a}_n$ and $\mathbf{y}$ is either $0\,1\,\mathbf{a}_{n-2}$ or $ 0\,\mathbf{a}_{n-1}$, we have $d_L(\mathbf{x},\mathbf{y})=1$ and $|D_t(\mathbf{x})\cap D_t(\mathbf{y})|=N(n,1,t)$.
\label{thm1}
\end{theorem}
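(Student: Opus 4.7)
The plan is to establish both the equality $N(n,1,t)=2D(n-2,t-1)$ and the achievability by the claimed pair.

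For the \emph{lower bound} (achievability), I would take $\mathbf{x}=\mathbf{a}_n$ and $\mathbf{y}=0\,\mathbf{a}_{n-1}$, and first verify that $d_L(\mathbf{x},\mathbf{y})=1$: deleting $x_1$ from $\mathbf{x}$ and $y_n$ from $\mathbf{y}$ produces the same length-$(n-1)$ string, while $\mathbf{x}\neq\mathbf{y}$. I would then partition $D_t(\mathbf{x})\cap D_t(\mathbf{y})$ according to the first bit $a\in\{0,1\}$ of $\mathbf{w}$. If $w_1=1$, the leftmost occurrence of $1$ in $\mathbf{x}$ is at position $1$ and in $\mathbf{y}$ at position $2$, so aligning greedily forces $\mathbf{w}_{[2,n-t]}$ to be a common subsequence of $\mathbf{x}_{[2,n]}$ (alternating of length $n-1$) and $\mathbf{y}_{[3,n]}$ (alternating of length $n-2$). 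Since $\mathbf{y}_{[3,n]}$ is a prefix of $\mathbf{x}_{[2,n]}$, the shorter suffix is the binding constraint, and the count is $|D_{t-1}(\mathbf{y}_{[3,n]})|=D(n-2,t-1)$ by Calabi's formula~(\ref{eq2}). The case $w_1=0$ is symmetric and disjoint, so the total is $2D(n-2,t-1)$.

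For the \emph{upper bound}, the plan is to show $|D_t(\mathbf{x})\cap D_t(\mathbf{y})|\leq 2D(n-2,t-1)$ for every $\mathbf{x}\neq\mathbf{y}$ in $\mathbb{F}_2^n$. I would again partition by the first bit of $\mathbf{w}$ and bound each class by $D(n-2,t-1)$. Fix $a\in\{0,1\}$ and let $i_x,i_y$ be the leftmost occurrences of $a$ in $\mathbf{x}$ and $\mathbf{y}$ (if either is missing, that class is empty). Any $\mathbf{w}$ with $w_1=a$ satisfies that $\mathbf{w}_{[2,n-t]}$ is a common subsequence of $\mathbf{x}_{[i_x+1,n]}$ and $\mathbf{y}_{[i_y+1,n]}$. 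When $\max(i_x,i_y)\geq 2$, the shorter suffix has length at most $n-2$, and we are deleting exactly $t-1$ symbols from that suffix, so the count is bounded by the maximum deletion-ball size $D(n-2,t-1)$.

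The main obstacle is the case $x_1=y_1=a$ (so both $i_x=i_y=1$), and symmetrically the case $x_n=y_n$. A naive induction on $n$ using Case~A for $w_1=\bar{a}$ and the inductive bound on length $n-1$ for $w_1=a$ gives $2D(n-3,t-1)+D(n-2,t-1)$, which strictly exceeds $2D(n-2,t-1)$ in general. To close the gap I would introduce a stronger inductive hypothesis that accounts for the number of agreeing leading and trailing bits: peel off all common prefix and suffix symbols of $\mathbf{x}$ and $\mathbf{y}$, reducing to a pair $\mathbf{x}',\mathbf{y}'$ of length $n'\leq n$ with $x'_1\neq y'_1$ and $x'_{n'}\neq y'_{n'}$, analyze this stripped problem with the clean argument above, and then carefully propagate the affix-peeling back to a bound in terms of $D(n-2,t-1)$, exploiting the recursion~(\ref{eq3}) $D(n,t)=D(n-1,t)+D(n-2,t-1)$. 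This affix-stripping argument, together with the two-case partition on $w_1$, is expected to yield the tight bound $2D(n-2,t-1)$ and match the lower bound.
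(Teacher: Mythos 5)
First, note that the paper itself does not prove Theorem~\ref{thm1}: it is quoted from Levenshtein~\cite{L1} without an argument, so there is no in-paper proof to compare against. Judged on its own terms, your lower-bound half is sound. The pair $\mathbf{x}=\mathbf{a}_n$, $\mathbf{y}=0\,\mathbf{a}_{n-1}$ does satisfy $d_L(\mathbf{x},\mathbf{y})=1$, the split on $w_1$ is a disjoint cover of the intersection, and in each class the greedy alignment (Proposition~\ref{prp2}) reduces the count to $|D_{t-1}(\mathbf{a}_{n-2})|=D(n-2,t-1)$ because the more-constrained deletion ball is nested inside the other, so the intersection is exactly the smaller ball. (You do not verify the second extremal pair $0\,1\,\mathbf{a}_{n-2}$ asserted in the statement, but one pair suffices for achievability, so this is a small omission.)

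The genuine gap is in the upper bound. Your argument is complete only when $x_1\neq y_1$ (or, after reversing both sequences, $x_n\neq y_n$): there each of the two classes is bounded by $D(n-i,t-i+1)\leq D(n-2,t-1)$ with $i\geq 2$, which is correct. For the remaining case $x_1=y_1$ and $x_n=y_n$ you rightly observe that the naive induction yields $2D(n-3,t-1)+D(n-2,t-1)$, which exceeds the target because $D(n-3,t-1)\geq D(n-4,t-2)$; but the proposed repair --- strip the longest common prefix and suffix, solve the stripped problem, and ``carefully propagate the affix-peeling back'' --- is a plan, not a proof, and the entire difficulty of the theorem is concentrated exactly there. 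Once you write $\mathbf{x}=\mathbf{u}\,\mathbf{v}\,\mathbf{w}$ and $\mathbf{y}=\mathbf{u}\,\widetilde{\mathbf{v}}\,\mathbf{w}$, the decomposition of $D_t(\mathbf{x})\cap D_t(\mathbf{y})$ (Lemma~\ref{lm4}) is a union over all splits $p+q\leq t$ of the deletion budget among $\mathbf{u}$, $\mathbf{v}$, $\mathbf{w}$, and this union is not disjoint; the nested-ball trick that made your lower-bound count exact is no longer available, and simply summing the pieces over-counts. Controlling that over-count is precisely the kind of delicate bookkeeping this paper needs Lemmas~\ref{lm6} and~\ref{lm7} (with their long Appendix~\ref{APP-D} proofs) for in the analogous radius-$3$ situations. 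As written, the hardest case of $N(n,1,t)\leq 2D(n-2,t-1)$ is asserted to be ``expected to yield the tight bound'' rather than established.
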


For $d=2$, the expression for $N(n,2,t)$ was derived by Gabrys and Yaakobi \cite{Gabrys}.
\begin{theorem}[Gabrys and Yaakobi \cite{Gabrys}]
When $2 \leq t < n$ and $n \geq 8$,
\begin{align}
N(n,2,t)=2D(n-4,t-2)+2D(n-5,t-2)+2D(n-7,t-2)+D(n-6,t-3)+D(n-7,t-3).\label{eq4}
\end{align}
\label{thm2}
\end{theorem}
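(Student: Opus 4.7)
The plan is to establish Theorem~\ref{thm2} by proving a matching lower bound via an explicit construction and an upper bound via a structural decomposition of the intersection.

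For the lower bound, I would exhibit an explicit pair $(\mathbf{x},\mathbf{y})$ of length-$n$ binary sequences with $d_L(\mathbf{x},\mathbf{y}) \ge 2$ whose $t$-deletion balls have intersection of the claimed size. Guided by Theorem~\ref{thm1}, where the extremal pair for $d=1$ is obtained from $\mathbf{a}_n$ by modifying a single bit at the boundary, a natural ansatz for $d = 2$ is to introduce two boundary ``defects'' in the alternating pattern, e.g.\ a pair of the form $\mathbf{x} = 0\,0\,\mathbf{a}_{n-2}$ paired against $\mathbf{y} = 0\,1\,1\,\mathbf{a}_{n-3}$, or a suitable variant with modifications at both ends. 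I would then partition $D_t(\mathbf{x})\cap D_t(\mathbf{y})$ by the canonical (leftmost/rightmost) embedding of each common subsequence into $\mathbf{x}$ and $\mathbf{y}$, and verify that the five pieces sum to $2D(n{-}4,t{-}2)+2D(n{-}5,t{-}2)+2D(n{-}7,t{-}2)+D(n{-}6,t{-}3)+D(n{-}7,t{-}3)$, using (\ref{eq2}) to identify each piece with a deletion ball of a shorter alternating sequence.

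For the upper bound, I would fix arbitrary $\mathbf{x},\mathbf{y}\in \mathbb{F}_2^n$ with $d_L(\mathbf{x},\mathbf{y}) \ge 2$, and bound $|D_t(\mathbf{x})\cap D_t(\mathbf{y})|$ by a case analysis on the canonical embeddings. For each common subsequence $\mathbf{z}$ of length $n-t$, let $L_{\mathbf{x}}(\mathbf{z})$ and $L_{\mathbf{y}}(\mathbf{z})$ denote the leftmost sets of $t$ deleted positions in $\mathbf{x}$ and $\mathbf{y}$, respectively, and let $R_{\mathbf{x}}(\mathbf{z})$, $R_{\mathbf{y}}(\mathbf{z})$ be defined symmetrically from the right. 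The first index where $L_{\mathbf{x}}(\mathbf{z})$ and $L_{\mathbf{y}}(\mathbf{z})$ diverge, together with the last index where $R_{\mathbf{x}}(\mathbf{z})$ and $R_{\mathbf{y}}(\mathbf{z})$ diverge, partitions $D_t(\mathbf{x})\cap D_t(\mathbf{y})$ into a bounded number of classes; each class can be identified with a deletion ball over a shorter sequence and hence bounded by a term of the form $D(n-k,t-j)$ via (\ref{eq2}). The constraint $d_L(\mathbf{x},\mathbf{y}) \ge 2$ excludes the patterns that would collapse the partition into a single piece (i.e.\ the $d=1$ case of Theorem~\ref{thm1}), which is exactly what leaves the five terms indicated.

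The main obstacle will be making the upper-bound case analysis match the lower-bound construction term-by-term. Unlike the $d=1$ case, where $N(n,1,t) = 2D(n-2,t-1)$ arises from a single local perturbation, the five terms in the $d=2$ formula reflect several distinct embedding patterns caused by the richer variety of ways two sequences can be at Levenshtein distance exactly $2$ (namely, one deletion on each side, or two deletions on one side, with various run configurations). I expect the cleanest route is a two-step reduction: first, use a monotonicity/exchange argument to reduce to the case that $\mathbf{x}$ is itself the alternating sequence $\mathbf{a}_n$ (since highly alternating sequences maximize each individual deletion ball by (\ref{eq2}) and also maximize intersections with any reasonable partner); then enumerate the finitely many normal forms of $\mathbf{y}$ at Levenshtein distance exactly $2$ from $\mathbf{a}_n$, computing $|D_t(\mathbf{a}_n)\cap D_t(\mathbf{y})|$ in each case by a direct recursion rooted in (\ref{eq3}). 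The bookkeeping for this enumeration, and the verification that the worst case realizes all five terms simultaneously, is where I expect the technical difficulty to lie.
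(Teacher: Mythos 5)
This statement is not proved in the paper: Theorem \ref{thm2} is quoted directly from Gabrys and Yaakobi \cite{Gabrys}, so there is no in-paper proof to compare against. Judged on its own terms, your two-sided plan (explicit extremal pair for the lower bound, structural decomposition for the upper bound) has the right overall shape --- it is the same shape as the paper's treatment of the $d=3$ case in Sections \ref{sec3} and \ref{sec4} --- but the specific route you propose for the upper bound contains a genuine gap.

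The gap is the ``monotonicity/exchange argument to reduce to the case that $\mathbf{x}$ is itself the alternating sequence $\mathbf{a}_n$.'' No such reduction is available, and the structure of the answer strongly suggests it is false: the presence of the terms $2D(n-7,t-2)$ and $D(n-7,t-3)$ (offsets eating into both ends) indicates that the extremal pair consists of two sequences that are each perturbations of $\mathbf{a}_n$ at both boundaries, exactly as in this paper's $d=3$ extremal pair $\mathbf{x}=1\,0\,1\,0\,1\,0\,\mathbf{a}_{n-8}\,1\,0$, $\mathbf{y}=0\,1\,1\,0\,0\,1\,\mathbf{a}_{n-8}\,0\,1$, where neither sequence is alternating. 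An exchange replacing $\mathbf{x}$ by $\mathbf{a}_n$ without decreasing $|D_t(\mathbf{x})\cap D_t(\mathbf{y})|$ and without decreasing $d_L$ would have to apply at the extremal pair itself, so at minimum you owe a proof that the maximum restricted to $\mathbf{x}=\mathbf{a}_n$ equals the unrestricted maximum --- which is precisely the hard content, not a bookkeeping step. The way this is actually done (in \cite{Gabrys}, and mirrored by Theorem \ref{thm5} here for $d=3$) is an induction on $n$: split $\mathcal{X}=D_t(\mathbf{x})\cap D_t(\mathbf{y})$ by the first and last output symbols using Propositions \ref{prp1} and \ref{prp2}, apply the inductive hypothesis to one piece and the already-known values $N(n',1,t')$ from Theorem \ref{thm1} to the remaining pieces, and treat the cases $x_1=y_1$ and $x_1\neq y_1$ separately. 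Two smaller issues: your primary lower-bound ansatz ($0\,0\,\mathbf{a}_{n-2}$ versus $0\,1\,1\,\mathbf{a}_{n-3}$) perturbs only the left boundary and cannot generate the $n-7$ offset terms, so only your hedged ``modifications at both ends'' variant has a chance of meeting the formula; and restricting the upper-bound enumeration to $\mathbf{y}$ at Levenshtein distance exactly $2$ does not by itself dispose of pairs at distance at least $3$, which are also included in the maximum defining $N(n,2,t)$.
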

By Theorem $\ref{thm2}$, we have the following corollary.
\begin{corollary}
For any $n\geq 8$, we have
\begin{equation}
N(n,2,3)=6n-30.\label{eq5}
\end{equation}
\label{cor1}
\end{corollary}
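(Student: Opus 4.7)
The plan is to derive Corollary \ref{cor1} as a direct specialization of Theorem \ref{thm2} by setting $t=3$ and evaluating each of the five $D(\cdot,\cdot)$ terms via the closed-form expression \eqref{eq2}. Since $n \geq 8$ ensures the hypothesis $2 \le t < n$ of Theorem \ref{thm2} is met, the formula \eqref{eq4} applies verbatim, and no further existence argument is needed: the construction witnessing the maximum is already inherited from Theorem \ref{thm2}.

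Concretely, I would first substitute $t=3$ into \eqref{eq4} to obtain
\begin{equation*}
N(n,2,3) = 2D(n-4,1) + 2D(n-5,1) + 2D(n-7,1) + D(n-6,0) + D(n-7,0).
\end{equation*}
Then I would use \eqref{eq2}, namely $D(m,t) = \sum_{i=0}^{t}\binom{m-t}{i}$, to compute each summand. For the terms with $t=1$ this gives $D(m,1) = 1 + (m-1) = m$, so
\begin{equation*}
D(n-4,1) = n-4, \qquad D(n-5,1) = n-5, \qquad D(n-7,1) = n-7.
\end{equation*}
For the terms with $t=0$ we simply get $D(m,0) = \binom{m}{0} = 1$, hence $D(n-6,0) = D(n-7,0) = 1$.

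Substituting these values back and collecting terms yields
\begin{equation*}
N(n,2,3) = 2(n-4) + 2(n-5) + 2(n-7) + 1 + 1 = 6n - 30,
\end{equation*}
which is exactly the claimed identity \eqref{eq5}. There is really no obstacle here: the corollary is purely a computational evaluation of Theorem \ref{thm2} at $t=3$, and the only care required is checking that the indices $n-4,n-5,n-6,n-7$ remain in the valid range where the formula \eqref{eq2} is meaningful, which is guaranteed by the hypothesis $n \geq 8$.
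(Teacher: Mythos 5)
Your proposal is correct and matches the paper's approach: the paper derives Corollary \ref{cor1} directly from Theorem \ref{thm2} (stating only ``By Theorem \ref{thm2}, we have the following corollary''), and your substitution of $t=3$ into \eqref{eq4} together with $D(m,1)=m$ and $D(m,0)=1$ from \eqref{eq2} is exactly the omitted arithmetic. The check that $n\geq 8$ keeps all indices in the valid range is also the right (and only) point of care.
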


Pham, Goyal, and Kiah \cite{Pham2} have provided asymptotically precise estimates of $N(n,d,t)$ for all $0 \leq d \leq t$.
\begin{theorem}[Pham, Goyal, and Kiah \cite{Pham2}]
For $0 \leq d \leq t < n$, we have that
\begin{align}
N(n,d,t)=\frac{\binom{2d}{d}}{(t-d)!}n^{t-d}-O(n^{t-d-1}).\nonumber
\end{align}
When $1\leq t=d$ and $n \geq 4t-2$, we obtain
\begin{equation}
N(n,d,d)=\binom{2d}{d}.\nonumber
\end{equation}
When $d=3$ and $t=4$, for any $n\geq 9$, we have
\begin{align}
N(n,3,4)\leq 20n-150.\nonumber
\end{align}
\label{thm3}
\end{theorem}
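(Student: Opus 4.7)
The plan is to prove the three parts of the theorem separately, leaning on the recursive identity $D(n,t)=D(n-1,t)+D(n-2,t-1)$ from Equation (\ref{eq3}) and on the base cases supplied by Theorems \ref{thm1}, \ref{thm2} and Corollary \ref{cor1}. Throughout, I would exploit the fact that every element of $D_t(\mathbf{x})\cap D_t(\mathbf{y})$ is a common subsequence $\mathbf{z}$ of length $n-t$, so bounding $|D_t(\mathbf{x})\cap D_t(\mathbf{y})|$ amounts to counting common subsequences jointly, subject to the constraint $d_L(\mathbf{x},\mathbf{y})\geq d$.

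For the asymptotic formula $N(n,d,t)=\frac{\binom{2d}{d}}{(t-d)!}n^{t-d}-O(n^{t-d-1})$, I would induct on $t-d$. The intuition is that of the $t$ deletions applied to $\mathbf{x}$ (and likewise to $\mathbf{y}$), at least $d$ must be ``spent'' on reconciling the $d$ positions where $\mathbf{x}$ and $\mathbf{y}$ genuinely differ, while the remaining $t-d$ deletions can fall freely in long runs. The $\frac{n^{t-d}}{(t-d)!}$ factor comes from choosing positions of the ``free'' deletions, and the $\binom{2d}{d}$ factor arises combinatorially from counting the monotone alignments (lattice paths from $(0,0)$ to $(d,d)$) of the $d$ coupled deletions on each side. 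The lower bound is then obtained by the canonical construction $\mathbf{x}=\mathbf{a}_n$ versus a variant $\mathbf{y}$ obtained by perturbing a prefix or suffix, mirroring the extremizers of Theorems \ref{thm1} and \ref{thm2}.

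For the tight case $N(n,d,d)=\binom{2d}{d}$ with $n\geq 4t-2$, the upper bound follows from specializing the asymptotic analysis: when $t=d$, there are no free deletions, and the count collapses exactly to the lattice-path bound $\binom{2d}{d}$. The spacing hypothesis $n\geq 4t-2$ guarantees that the ``left'' and ``right'' sides of the differing block do not interfere, so the alignment count saturates. The matching lower bound comes from an explicit construction where $\mathbf{x}$ and $\mathbf{y}$ differ in a single block of length $\leq 2d$ padded by long alternating buffers, whose common $(n-d)$-subsequences can be enumerated directly and seen to number $\binom{2d}{d}$.

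For the explicit bound $N(n,3,4)\leq 20n-150$, which is the most delicate piece, I would argue by conditioning on the first (leftmost) deletion position in a common subsequence $\mathbf{z}$. Roughly, fixing where the leading deletion lands reduces the residual counting problem to one with parameters $(d',t')=(2,3)$ or $(3,3)$, for which Corollary \ref{cor1} gives $N(n',2,3)=6n'-30$ and Theorem \ref{thm3} (applied with $d=t=3$) gives $\binom{6}{3}=20$. Summing the residual contributions over the possible positions and the $\binom{2\cdot 3}{3}=20$ alignments yields a bound of the form $20n - C$. The main obstacle, and where the specific constant $150$ is pinned down, is the careful treatment of boundary effects — configurations in which the deletion interacts with the first or last few bits of $\mathbf{x}$ and $\mathbf{y}$ — which requires an exhaustive but finite case analysis on the run structure of the two sequences near the endpoints.
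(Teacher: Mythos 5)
This statement is quoted by the paper from Pham, Goyal, and Kiah \cite{Pham2}; the paper contains no proof of it, so there is nothing internal to compare your argument against. Judged on its own terms, your proposal is a roadmap rather than a proof, and it has at least two genuine gaps. First, for the exact value $N(n,d,d)=\binom{2d}{d}$ you claim the upper bound ``follows from specializing the asymptotic analysis,'' but an asymptotic identity with an unspecified $O(n^{t-d-1})$ error term cannot yield an exact equality for every finite $n\geq 4t-2$: when $t=d$ the error term is $O(n^{-1})$, which only says the value converges to $\binom{2d}{d}$, not that it equals it from $n=4d-2$ onward. Pinning down the exact value requires a separate finite argument (in effect, showing that any common $(n-d)$-subsequence of two sequences at Levenshtein distance $d$ corresponds to one of at most $\binom{2d}{d}$ alignment patterns, and that the buffer condition $n\geq 4t-2$ prevents double-counting), none of which is carried out.

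Second, for $N(n,3,4)\leq 20n-150$ your plan of conditioning on the leftmost deletion and invoking $N(n',2,3)=6n'-30$ and $N(n',3,3)=20$ is in the right spirit --- it is essentially the recursion $|\mathcal{X}|\leq |\mathcal{X}^{x_1}|+|\mathcal{X}^{\overline{x_1}}|\leq N(n-1,3,4)+N(n-1,3,3)$ that the present paper itself uses in Theorem \ref{thm5} to prove the sharper bound $20n-166$ --- but that recursion only handles the case $x_1=y_1$ and transfers the entire difficulty into the base case and into the cases where the first (and last) symbols of $\mathbf{x}$ and $\mathbf{y}$ disagree. Your proposal acknowledges these ``boundary effects'' but gives no indication of how the case analysis actually closes, and in particular never derives the constant $150$; the sentence ``yields a bound of the form $20n-C$'' concedes exactly the point that needs proving. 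As written, all three parts of the theorem remain unestablished.
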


\begin{remark}
From Theorems \ref{thm1}-\ref{thm3}, we derive the following insights of $N(n,d,t)$:
\begin{itemize}
\item For $d=1,2$, the exact values of $N(n,d,t)$ have been established.
\item When $t=d=3$, it is evident that $N(n,3,3)=20$ for any $n \geq 10$.
\item For $d=3$ and $t=4$, Theorem \ref{thm3} implies $N(n,3,4)\leq 20n-150$ for any $n\geq 9$.
\end{itemize}
\end{remark}

For all $4\leq n\leq 14$, some values of $N(n,3,4)$ are given in Table $1$ of \cite{Pham2}. Specifically, $N(13,3,4)=94$ and $N(14,3,4)=114$ \cite{Pham2}. Building upon the definitions and preliminary results, we first establish that for $n\geq \max\{13,t+8\}$ and $t\geq 4$,
\begin{equation*}
N(n,3,t)\geq M(n,t),
\end{equation*}
where $M(n,t)$ is defined as 
\begin{equation*}
M(n,t)\triangleq 6D(n-6,t-3)+4D(n-8,t-3)+6D(n-9,t-3)+4D(n-11,t-3)+2D(n-13,t-5)+D(n-13,t-6).
\end{equation*}
Furthermore, for the specific case $t=4$, we determine the exact value
\begin{equation*}
N(n,3,4)=M(n,4)=20n-166
\end{equation*}
for all $n\geq 13$. For $5 \leq n\leq 12$, we used a computer search to give two length-$n$ sequences $\mathbf{x}$ and $\mathbf{y}$ such that the cardinality of the intersection of their $4$-deletion balls is $N(n,3,4)$ in Appendix \ref{APP-A}. In the following sections we only consider $n\geq 13$ and give our main result.

\begin{theorem}
For $n\geq \max\{13,t+8\}$ and $t\geq 4$, let $M(n,t)$ be defined as above. Then we have
\begin{equation*}
N(n,3,t)\geq M(n,t).
\end{equation*}
Furthermore, for $t=4$, the bound is tight and becomes 
\begin{equation*}
N(n,3,4)=M(n,4)=20n-166
\end{equation*}
for all $n\geq 13$.
\label{thm4}
\end{theorem}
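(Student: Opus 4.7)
The proof splits into a construction that establishes $N(n,3,t) \geq M(n,t)$ for all $n \geq 13$, $t \geq 4$, and a matching upper bound $N(n,3,4) \leq 20n - 166$ when $t = 4$.

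For the lower bound, the plan is to exhibit two explicit length-$n$ binary sequences $\mathbf{x}, \mathbf{y}$ at Levenshtein distance exactly $3$ whose $t$-deletion balls intersect in precisely $M(n,t)$ points. Since each summand of $M(n,t)$ has the form $D(n-k,\,t-j)$---the size of the deletion ball of an alternating sequence---I expect $\mathbf{x}$ and $\mathbf{y}$ to be alternating (or close to it) on a long interior block $\mathbf{a}_{n-O(1)}$ and to differ only on a small region near one or both endpoints. The construction can be viewed as a $d=3$ analogue of the Gabrys--Yaakobi configuration behind Theorem~\ref{thm2}. Having fixed such a pair, I would partition $D_t(\mathbf{x}) \cap D_t(\mathbf{y})$ according to the first several symbols of each common subsequence: once the prefix is fixed, the tail must be a subsequence of an alternating string of explicit length, and the size of its deletion ball is exactly one of the $D(n-k,\,t-j)$ terms. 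Organising the prefixes into six disjoint families yields the coefficients $6,4,6,4,2,1$. The Levenshtein distance condition is then verified by exhibiting three edits from $\mathbf{x}$ to $\mathbf{y}$ and ruling out any two-edit transformation.

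For the upper bound $N(n,3,4) \leq 20n - 166$, which refines the $20n - 150$ bound from Theorem \ref{thm3}, I would induct on $n$ with base case $n=13$, where $N(13,3,4)=94=M(13,4)$ by \cite{Pham2}. For the inductive step, take any $\mathbf{x},\mathbf{y}\in\mathbb{F}_2^n$ with $d_L(\mathbf{x},\mathbf{y})\geq 3$ and partition $D_4(\mathbf{x})\cap D_4(\mathbf{y})$ according to the first bit (or first two bits) of each common subsequence. Conditioning on these bits reduces each block to an intersection of deletion balls of shorter sequences, which can be controlled either by the inductive hypothesis, by Corollary~\ref{cor1} for $d=2$, or by the trivial bound $D(n,t)$ combined with the recursion $D(n,t)=D(n-1,t)+D(n-2,t-1)$. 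Symmetry under complementation and reversal will cut the number of head/tail configurations one must analyse.

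The main obstacle is closing the $16$-unit gap between the existing bound $20n - 150$ and the target $20n - 166$. This requires a finer case analysis of the leftmost (and rightmost) region where $\mathbf{x}$ and $\mathbf{y}$ disagree: only when that region is locally alternating does the extremal value arise, whereas every other local configuration must be shown to yield a strictly smaller intersection. Enumerating and tightly bounding these configurations---anchored to the $D(\cdot,\cdot)$ recursion and to the base-case values tabulated in \cite{Pham2}---is where the bulk of the technical work will lie, and it is the step most analogous to the intricate case analyses in \cite{Gabrys} and \cite{Pham2}.
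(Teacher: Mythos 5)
Your proposal correctly identifies the overall architecture the paper uses---an explicit extremal pair for the lower bound, and induction on $n$ combined with a prefix/suffix decomposition for the upper bound---but in both halves the step that actually constitutes the proof is left undone, so as written this is an outline rather than a proof.

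For the lower bound you never exhibit the sequences. The paper's construction is quite specific and parity-dependent: for odd $n$, $\mathbf{x}=1\,0\,1\,0\,1\,0\,\mathbf{a}_{n-8}\,1\,0$ and $\mathbf{y}=0\,1\,1\,0\,0\,1\,\mathbf{a}_{n-8}\,0\,1$, with the last two symbols swapped when $n$ is even; the pair differs both in the first six positions and in the last two, not merely ``near one endpoint.'' Your guess that the extremal pair is alternating on a long interior block is qualitatively right, but the value $M(n,t)$ with its coefficients $6,4,6,4,2,1$ only emerges from a two-sided decomposition of $D_t(\mathbf{x})\cap D_t(\mathbf{y})$ (conditioning simultaneously on a prefix and a suffix of the common subsequence, refined into roughly a dozen sub-blocks per quadrant, each evaluated via Levenshtein's $N(n,1,t)$ formula, the formula $|D_t(\mathbf{a}_n)|=D(n,t)$, and auxiliary counts for $|D_t(1\,\mathbf{a}_{n-1})|$ and $|D_t(0\,1\,\mathbf{a}_{n-2})|$), followed by collapsing the sum with the recursion $D(n,t)=D(n-1,t)+D(n-2,t-1)$. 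Without a concrete pair and this computation, neither $d_L(\mathbf{x},\mathbf{y})=3$ nor $|D_t(\mathbf{x})\cap D_t(\mathbf{y})|=M(n,t)$ is established.

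For the upper bound, your treatment of the case $x_1=y_1$ (induction on $\mathbf{x}',\mathbf{y}'$ plus the bound $N(n,3,3)=20$ for the complementary block) matches the paper and is fine. But you explicitly defer the case $x_1\neq y_1$, $x_n\neq y_n$, and that is where essentially all of the content lies: the paper spends six subcases and a collection of auxiliary lemmas bounding $|D_3(\mathbf{x})\cap D_3(\mathbf{y})|$ and $|D_2(\mathbf{x})\cap D_4(\mathbf{y})|$ for pairs of the form $\mathbf{u}\,\mathbf{v}\,\mathbf{w}$ versus $\mathbf{u}\,\widetilde{\mathbf{v}}\,\mathbf{w}$, with bounds that depend delicately on $|\mathbf{v}|$ and on whether $\mathbf{u}$ and $\mathbf{w}$ are alternating; several subcases also require Type-A/Type-B confusability arguments and small computer searches. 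Acknowledging that ``the bulk of the technical work will lie'' there is not a substitute for doing it. One further concrete point: the induction needs base cases $n=13$ \emph{and} $n=14$ (several subcases of the analysis only close for $n\geq 14$ or $n\geq 15$ and fall back on the tabulated values otherwise), whereas you anchor only at $n=13$.
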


\section{The lower bound}
\label{sec3}

The main objective of this section is to establish the lower bound  
\begin{equation*}
N(n,3,t) \geq M(n,t) \quad \text{for } n \geq \max\{13,t+8\} \text{ and } t \geq 4.
\end{equation*}
To this end, we construct explicit sequences $\mathbf{x}$ and $\mathbf{y}$ as follows:
\begin{itemize}
  \item If $n \geq 13$ is odd, let
  \begin{equation*}
    \mathbf{x}=1\,0\,1\,0\,1\,0\,\mathbf{a}_{n-8}\,1\,0,\qquad \mathbf{y}=0\,1\,1\,0\,0\,1\,\mathbf{a}_{n-8}\,0\,1;
  \end{equation*}  
  \item If $n \geq 13$ is even, let
  \begin{equation*}
    \mathbf{x}=1\,0\,1\,0\,1\,0\,\mathbf{a}_{n-8}\,0\,1,\qquad \mathbf{y}=0\,1\,1\,0\,0\,1\,\mathbf{a}_{n-8}\,1\,0.
  \end{equation*}  
\end{itemize}  
We will show that the Levenshtein distance between $\mathbf{x}$ and $\mathbf{y}$ is $3$, and that  
\begin{equation*}
|D_t(\mathbf{x})\cap D_t(\mathbf{y})|=M(n,t),
\end{equation*}
which directly leads to our desired lower bound. 

Before determining the values of $d_L(\mathbf{x},\mathbf{y})$ and $|D_t(\mathbf{x})\cap D_t(\mathbf{y})|$, we need some results obtained by Gabrys and Yaakobi \cite{Gabrys}.

\begin{proposition}[{\cite[Lemma 1]{Gabrys}}]
Let $n,m_1,m_2$ be non-negative integers such that $m_1+m_2\leq n-t$. For any $\mathbf{u}\in \mathbb{F}_2^n$ we have
\begin{equation}
|D_t(\mathbf{u})|=\sum\limits_{\mathbf{u}_1\in \mathbb{F}_2^{m_1}}\sum\limits_{\mathbf{u}_2\in \mathbb{F}_2^{m_2}}|D_t(\mathbf{u})_{\mathbf{u}_2}^{\mathbf{u}_1}|.\nonumber
\end{equation}
\label{prp1}
\end{proposition}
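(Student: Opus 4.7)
The plan is to prove the identity by exhibiting a partition of $D_t(\mathbf{u})$ indexed by the pairs $(\mathbf{u}_1,\mathbf{u}_2)\in\mathbb{F}_2^{m_1}\times\mathbb{F}_2^{m_2}$. Since every element of $D_t(\mathbf{u})$ has length exactly $n-t$, and the hypothesis $m_1+m_2\leq n-t$ ensures that a length-$(n-t)$ binary string admits a well-defined prefix of length $m_1$ and a disjoint suffix of length $m_2$, the sets $D_t(\mathbf{u})_{\mathbf{u}_2}^{\mathbf{u}_1}$ should cleanly partition the whole deletion ball.

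First I would fix $\mathbf{w}\in D_t(\mathbf{u})$ and observe that $\mathbf{w}$ has a uniquely determined prefix $\mathbf{w}_{[1,m_1]}=\mathbf{u}_1^\ast\in\mathbb{F}_2^{m_1}$ and a uniquely determined suffix $\mathbf{w}_{[n-t-m_2+1,n-t]}$, which, when reversed bit-by-bit, yields some $\mathbf{u}_2^\ast\in\mathbb{F}_2^{m_2}$. By the definition of $\mathcal{C}_{\mathbf{v}}^{\mathbf{u}}$ applied with $\mathcal{C}=D_t(\mathbf{u})$, this says precisely that $\mathbf{w}\in D_t(\mathbf{u})_{\mathbf{u}_2^\ast}^{\mathbf{u}_1^\ast}$, and moreover $(\mathbf{u}_1^\ast,\mathbf{u}_2^\ast)$ is the unique pair with this property. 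Next I would note that every set $D_t(\mathbf{u})_{\mathbf{u}_2}^{\mathbf{u}_1}$ is by construction a subset of $D_t(\mathbf{u})$, so the union over all $(\mathbf{u}_1,\mathbf{u}_2)$ lies in $D_t(\mathbf{u})$; combined with the previous step, the union equals $D_t(\mathbf{u})$ and the sets are pairwise disjoint.

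Finally I would take cardinalities on both sides of this partition
\begin{equation*}
D_t(\mathbf{u})=\bigsqcup_{\mathbf{u}_1\in\mathbb{F}_2^{m_1}}\bigsqcup_{\mathbf{u}_2\in\mathbb{F}_2^{m_2}} D_t(\mathbf{u})_{\mathbf{u}_2}^{\mathbf{u}_1},
\end{equation*}
which yields the claimed identity. The only subtle point, and the one I would state explicitly, is why the hypothesis $m_1+m_2\leq n-t$ is necessary: if it failed, then the prefix of length $m_1$ and the suffix of length $m_2$ would overlap in a length-$(n-t)$ word, the two projections would no longer be independently specifiable, and many of the sets $D_t(\mathbf{u})_{\mathbf{u}_2}^{\mathbf{u}_1}$ would either be forced empty (when the prefix and suffix are inconsistent on the overlap) or be counted more than once. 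Under the stated hypothesis this degeneracy cannot occur, so the argument goes through. No real obstacle is anticipated; the result is essentially bookkeeping, and the main care is simply in handling the reversal convention built into the definition of $\mathcal{C}_{\mathbf{v}}^{\mathbf{u}}$ so that the suffix bits line up with $\mathbf{u}_2$ in the correct order.
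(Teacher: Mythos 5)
Your proof is correct and is essentially the same partition argument as the cited source: since every element of $D_t(\mathbf{u})$ has length exactly $n-t\geq m_1+m_2$, each word determines a unique prefix $\mathbf{u}_1\in\mathbb{F}_2^{m_1}$ and (reversed) suffix $\mathbf{u}_2\in\mathbb{F}_2^{m_2}$, so the sets $D_t(\mathbf{u})_{\mathbf{u}_2}^{\mathbf{u}_1}$ partition the deletion ball and the identity follows by counting; the paper itself gives no proof, citing \cite[Lemma 1]{Gabrys}, whose argument is exactly this. One minor caveat: your closing claim that $m_1+m_2\leq n-t$ is \emph{necessary} is not quite right---even if the prefix and suffix overlapped, each word would still have a uniquely determined length-$m_1$ prefix and length-$m_2$ suffix, so the sets would remain pairwise disjoint (with many of them empty) and the counting identity would survive; the hypothesis is what makes the subsequent structural decomposition (Proposition \ref{prp2}) work, but this side remark does not affect the validity of your proof of the stated result.
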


\begin{proposition}[{\cite[Lemma 2]{Gabrys}}]
Let $n,m_1,m_2,t$ be non-negative integers such that $m_1+m_2\leq n-t$, and $\mathbf{u}=u_1\,\cdots\,u_n\in \mathbb{F}_2^n, \mathbf{u}_1\in \mathbb{F}_2^{m_1}, \mathbf{u}_2=v_1\,\cdots\,v_{m_2}\in \mathbb{F}_2^{m_2}$.
Assume that $k_1$ is the smallest integer such that $\mathbf{u}_1$ is a subsequence of $u_1\,\cdots\,u_{k_1}$ and $k_2$ is the largest integer where $v_{m_2}\,\cdots\,v_{1}$ is a subsequence of $u_{k_2}\,\cdots\,u_n$. If $k_1<k_2$ then
\begin{equation}
D_t(\mathbf{u})_{\mathbf{u}_2}^{\mathbf{u}_1}=\mathbf{u}_1\circ D_{t^{*}}(u_{k_1+1}\,\cdots\,u_{k_2-1})\circ v_{m_2}\,\cdots\,v_{1},\nonumber
\end{equation}
where $t^*=t-(k_1-m_1)-(n-k_2+1-m_2)$. In particular, $|D_t(\mathbf{u})_{\mathbf{u}_2}^{\mathbf{u}_1}|=|D_{t^{*}}(u_{k_1+1}\,\cdots\,u_{k_2-1})|$.
\label{prp2}
\end{proposition}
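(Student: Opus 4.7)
My plan is to prove Theorem~\ref{thm4} in two independent stages: first establishing the lower bound $N(n,3,t)\geq M(n,t)$ by analysing the explicit pair $(\mathbf{x},\mathbf{y})$ already written in the excerpt, and then proving the matching upper bound $N(n,3,4)\leq 20n-166$ to conclude tightness for $t=4$.

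For the lower bound I will proceed in two steps. First I will verify $d_L(\mathbf{x},\mathbf{y})=3$: the inequality $d_L(\mathbf{x},\mathbf{y})\geq 3$ will be obtained by inspecting the $6$-bit boundary patterns $101010$ and $011001$ of $\mathbf{x}$ and $\mathbf{y}$ and showing that two deletions on each side cannot force these patterns to coincide, so that $D_2(\mathbf{x})\cap D_2(\mathbf{y})=\emptyset$; the opposite inequality will be witnessed by a single explicit common length-$(n-3)$ subsequence. Next I will compute $|D_t(\mathbf{x})\cap D_t(\mathbf{y})|$ using Proposition~\ref{prp1} to decompose the intersection by fixed prefixes $\mathbf{u}_1$ of length $m_1$ and fixed suffixes $\mathbf{u}_2$ of length $m_2$, chosen so that the pair $(m_1,m_2)$ pins down the retained bits in the boundaries of $\mathbf{x}$ and $\mathbf{y}$. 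Proposition~\ref{prp2} then collapses each nonempty cell to a deletion ball of the alternating middle $\mathbf{a}_{n-8}$ (or a closely related alternating substring) with a reduced budget $t^{*}$, so that by \eqref{eq2} each cell contributes a term of the form $D(m,t^{*})$. Grouping cells that yield equal $(m,t^{*})$ pairs will reproduce exactly the six summands making up $M(n,t)$.

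For the upper bound I will assume arbitrary $\mathbf{x},\mathbf{y}\in\mathbb{F}_2^n$ with $d_L(\mathbf{x},\mathbf{y})\geq 3$ and argue $|D_4(\mathbf{x})\cap D_4(\mathbf{y})|\leq 20n-166$. Since the weaker bound $20n-150$ is already in Theorem~\ref{thm3}, the task is to sharpen this by $16$. My plan combines an induction on $n$ with a boundary analysis. The base case $n=13$ is supplied by the tabulated value $N(13,3,4)=94=20\cdot 13-166$ from~\cite{Pham2}. The inductive step will apply Proposition~\ref{prp1} to peel off one extreme bit, invoke the inductive bound on the remaining length-$(n-1)$ factor, and use Corollary~\ref{cor1} together with $N(n,3,3)=20$ (valid for $n\geq 10$) to control the contributions from cells in which deletions concentrate at the boundary.

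The main obstacle will be this upper bound: saving the additional $16$ forces a careful classification of near-extremal pairs. I anticipate the crux to be showing that any pair $(\mathbf{x},\mathbf{y})$ not essentially matching our lower-bound construction (up to complementation and reversal) loses at least $16$ common subsequences at the boundary compared with the naive $20n-150$ estimate. Reconciling this boundary bookkeeping cleanly with the inductive bulk estimate, so that the improvement accumulates to exactly $-166$ rather than $-150$, is where most of the technical effort will concentrate.
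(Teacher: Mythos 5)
Your proposal does not engage with the statement it is supposed to prove. The statement in question is Proposition~\ref{prp2}, a structural identity for the restricted deletion ball: it asserts that the set $D_t(\mathbf{u})_{\mathbf{u}_2}^{\mathbf{u}_1}$ of length-$(n-t)$ subsequences of $\mathbf{u}$ that begin with $\mathbf{u}_1$ and end with $v_{m_2}\,\cdots\,v_1$ factors as $\mathbf{u}_1\circ D_{t^*}(u_{k_1+1}\,\cdots\,u_{k_2-1})\circ v_{m_2}\,\cdots\,v_1$ with $t^*=t-(k_1-m_1)-(n-k_2+1-m_2)$. What you have written instead is a proof plan for the paper's main result, Theorem~\ref{thm4} (the lower bound $N(n,3,t)\geq M(n,t)$ and the tightness $N(n,3,4)=20n-166$), and your plan explicitly \emph{invokes} Proposition~\ref{prp2} as a tool for collapsing the cells of the prefix--suffix decomposition. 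As a response to the task, this is circular: the proposition is treated as given, and nothing in your text addresses the claimed factorization, the role of the extremality of $k_1$ and $k_2$, the hypothesis $k_1<k_2$, or the deletion-budget accounting encoded in $t^*$. (For context, the paper itself does not reprove this proposition either; it quotes it verbatim from Gabrys and Yaakobi as \cite[Lemma 2]{Gabrys}. A blind proof attempt was therefore expected to supply that short argument, not to redo the main theorem.)

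For completeness, the missing argument is a standard two-sided greedy-embedding proof. For the inclusion $\supseteq$, if $\mathbf{m}\in D_{t^*}(u_{k_1+1}\,\cdots\,u_{k_2-1})$ then $|\mathbf{m}|=(k_2-k_1-1)-t^*$, so $|\mathbf{u}_1\circ\mathbf{m}\circ v_{m_2}\,\cdots\,v_1|=m_1+(k_2-k_1-1)-t^*+m_2=n-t$; concatenating an embedding of $\mathbf{u}_1$ into $u_1\,\cdots\,u_{k_1}$, of $\mathbf{m}$ into $u_{k_1+1}\,\cdots\,u_{k_2-1}$, and of $v_{m_2}\,\cdots\,v_1$ into $u_{k_2}\,\cdots\,u_n$ (these exist by the definitions of $k_1,k_2$ and are position-disjoint since $k_1<k_2$) shows the concatenation lies in $D_t(\mathbf{u})_{\mathbf{u}_2}^{\mathbf{u}_1}$. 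For the inclusion $\subseteq$, take any $\mathbf{z}\in D_t(\mathbf{u})_{\mathbf{u}_2}^{\mathbf{u}_1}$ and normalize an embedding of $\mathbf{z}$ into $\mathbf{u}$: the leftmost-greedy matching of the prefix $\mathbf{u}_1$ terminates exactly at position $k_1$ (by minimality of $k_1$, greedy matching minimizes the last matched index), and the rightmost-greedy matching of the suffix $v_{m_2}\,\cdots\,v_1$ begins exactly at position $k_2$ (by maximality of $k_2$); since $k_1<k_2$ these normalizations are compatible, and the middle portion of $\mathbf{z}$, of length $n-t-m_1-m_2$, is then forced to embed into $u_{k_1+1}\,\cdots\,u_{k_2-1}$, i.e., it lies in $D_{t^*}(u_{k_1+1}\,\cdots\,u_{k_2-1})$ since $(k_2-k_1-1)-(n-t-m_1-m_2)=t^*$. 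The cardinality claim follows because the map $\mathbf{m}\mapsto \mathbf{u}_1\circ\mathbf{m}\circ v_{m_2}\,\cdots\,v_1$ is injective. None of these steps appears in your proposal, so as a proof of Proposition~\ref{prp2} it is entirely missing.
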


We prove that $d_L(\mathbf{x},\mathbf{y})=3$ as follows.

\begin{lemma}
Let $n\geq \max\{13,t+8\}$ and $t\geq 4$. Define the sequences $\mathbf{x}$ and $\mathbf{y}$ as follows:
\begin{itemize}
    \item If $n \geq 13$ is odd, let  
          $\mathbf{x}=1\,0\,1\,0\,1\,0\,\mathbf{a}_{n-8}\,1\,0$ and $\mathbf{y}=0\,1\,1\,0\,0\,1\,\mathbf{a}_{n-8}\,0\,1$.
    \item If $n \geq 13$ is even, let
          $\mathbf{x}=1\,0\,1\,0\,1\,0\,\mathbf{a}_{n-8}\,0\,1$ and $\mathbf{y}=0\,1\,1\,0\,0\,1\,\mathbf{a}_{n-8}\,1\,0$.
\end{itemize}
Then
\begin{equation*}
d_L(\mathbf{x},\mathbf{y})=3.
\end{equation*}
\label{lm1}
\end{lemma}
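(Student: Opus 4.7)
The plan is to establish the two inequalities $d_L(\mathbf{x},\mathbf{y})\le 3$ and $d_L(\mathbf{x},\mathbf{y})\ge 3$ separately. Since $|\mathbf{x}|=|\mathbf{y}|=n$, the first is equivalent to $D_3(\mathbf{x})\cap D_3(\mathbf{y})\ne\emptyset$ and the second to $D_2(\mathbf{x})\cap D_2(\mathbf{y})=\emptyset$.

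For the upper bound, I would exhibit a single explicit common length-$(n-3)$ subsequence. In the odd-$n$ case, take $\mathbf{z}$ to be the alternating word of length $n-3$ starting with $0$ and ending with $1$. Deleting $x_1, x_{n-1}, x_n$ from $\mathbf{x}$ leaves $x_2\cdots x_{n-2}=0\,1\,0\,1\,0\circ\mathbf{a}_{n-8}$, which equals $\mathbf{z}$ because $\mathbf{a}_{n-8}$ starts with $1$. Deleting $y_3, y_5, y_7$ from $\mathbf{y}$ removes the three defects that make $\mathbf{y}$ non-alternating before position~$8$; after these deletions the remaining bits coincide with the same $\mathbf{z}$, using that the middle block $\mathbf{a}_{n-8}$ is shared by $\mathbf{x}$ and $\mathbf{y}$. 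The even-$n$ case is handled by the symmetric construction yielding the alternating word of length $n-3$ ending in $0$.

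For the lower bound, I would argue by contradiction: assume some $\mathbf{z}\in\mathbb{F}_2^{n-2}$ lies in $D_2(\mathbf{x})\cap D_2(\mathbf{y})$. The driving structural distinction is that $\mathbf{y}$ contains three consecutive length-$2$ runs in its seven-bit prefix $y_1\cdots y_7=0\,1\,1\,0\,0\,1\,1$, while $\mathbf{x}$ is alternating everywhere except for a single length-$2$ run at positions $\{x_{n-2},x_{n-1}\}$ near its tail. Since each deletion destroys at most one of $\mathbf{y}$'s three prefix doubles, any $\mathbf{z}\in D_2(\mathbf{y})$ must still contain a length-$2$ run (either $00$ or $11$) among its first seven positions. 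I would then enumerate the structurally distinct length-$(n-2)$ subsequences of $\mathbf{x}$, classified by how many of $\{x_{n-2},x_{n-1}\}$ are kept and by where the remaining deletion lies (in the prefix $101010$, inside the alternating middle, or in the last bit $x_n$). In each class, I would locate the length-$2$ run(s) of the resulting $\mathbf{z}$ — either inherited from the tail double run of $\mathbf{x}$ and hence appearing near the end of $\mathbf{z}$, or newly created by deleting an interior bit of the alternating portion, in which case both its position and its value are pinned down by the deletion site — and show these never match the obligatory early double forced by the $\mathbf{y}$ analysis.

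The main obstacle I anticipate is the bookkeeping in this final comparison: tracking how every $2$-deletion pattern in $\mathbf{x}$ shifts the positions and values of surviving and newly created length-$2$ runs, and ruling out every potential overlap with the obligatory early double from the $\mathbf{y}$ side. A cleaner route I would try first is to encode the leftmost position of a length-$2$ run in a sequence as a combinatorial invariant, and to show that this invariant takes values in disjoint ranges on $D_2(\mathbf{x})$ and on $D_2(\mathbf{y})$. The hypothesis $n\ge 13$ enters precisely by guaranteeing that the alternating middle $\mathbf{a}_{n-8}$ is long enough that deletions affecting the prefix cannot interact with those affecting the suffix, so the two ends of the sequence can be analyzed independently.
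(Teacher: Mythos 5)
Your upper bound is fine: the common alternating subsequence of length $n-3$ you describe is a valid witness (the paper uses a different witness, $1\,1\,0\,1\,\mathbf{a}_{n-8}\,1$, obtained by deleting $x_2,x_6,x_n$ and $y_1,y_5,y_{n-1}$, but yours works equally well). The problem is the lower bound. Your central claim --- that the length-$2$ runs of any $\mathbf{z}\in D_2(\mathbf{x})$ ``never match the obligatory early double'' forced on elements of $D_2(\mathbf{y})$ --- is false, and so is the backup invariant (disjoint ranges for the leftmost double). Concretely, for odd $n$ delete $x_2$ and $x_6$ from $\mathbf{x}=1\,0\,1\,0\,1\,0\,\mathbf{a}_{n-8}\,1\,0$ to get $\mathbf{z}^{(1)}=1\,1\,0\,1\,\mathbf{a}_{n-8}\,1\,0$, and delete $y_1$ and $y_4$ from $\mathbf{y}=0\,1\,1\,0\,0\,1\,\mathbf{a}_{n-8}\,0\,1$ to get $\mathbf{z}^{(2)}=1\,1\,0\,1\,\mathbf{a}_{n-8}\,0\,1$. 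These two sequences agree on their first $n-4$ symbols; in particular they have identical doubles (same positions, same values) everywhere except in the last two coordinates. So no argument that derives a contradiction from the prefix run-structure alone can succeed: the only obstruction in this example lives in the tail ($\cdots 1\,1\,0$ versus $\cdots 0\,1$), and your plan, as written, locates the contradiction entirely in the head.

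The idea can be repaired, but only as a joint deletion-budget argument, not as two independent local contradictions. Matching $\mathbf{x}$'s alternating head against $\mathbf{y}$'s three disjoint doubles at positions $(2,3),(4,5),(6,7)$ forces $|I\cap[2,9]|+|J\cap[2,7]|\ge 3$, where $I,J$ are the two deletion positions in $\mathbf{x},\mathbf{y}$ respectively (each $\mathbf{y}$-double not hit by a deletion survives into $\mathbf{z}$, while each early double of $\mathbf{z}$ seen from $\mathbf{x}$ requires its own deleted interior bit). That leaves at most one deletion, on one side only, for the tail, and one then checks directly that the suffixes $x_{n-3}\cdots x_n=0\,1\,1\,0$ and $y_{n-3}\cdots y_n=0\,1\,0\,1$ cannot be reconciled under that constraint. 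You have all the ingredients for this (you even classify by how many of $x_{n-2},x_{n-1}$ survive), but the logical glue --- that the head consumes the budget needed for the tail --- is missing, and the ``disjoint ranges'' shortcut must be abandoned. For comparison, the paper avoids all of this by splitting $D_2(\mathbf{x})\cap D_2(\mathbf{y})$ according to the first two symbols of a putative common subsequence (its Propositions~\ref{prp1} and~\ref{prp2}) and observing that each of the four resulting intersections is forced to be empty; that route is considerably less delicate.
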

\begin{proof}
First assume $n$ is odd. For convenience, let $\mathbf{x}=x_1\,x_2\,\cdots\,x_n$ and $\mathbf{y}=y_1\,y_2\,\cdots\,y_n$. We can obtain a subsequence $1\,1\,0\,1\,\mathbf{a}_{n-8}\,1$ by deleting the $2$-th, $6$-th, $n$-th elements in $\mathbf{x}$, or deleting the $1$-th, $5$-th, $(n-1)$-th elements in $\mathbf{y}$. Thus, $d_L(\mathbf{x},\mathbf{y})\leq 3$.

On the other hand, we will show that $d_L(\mathbf{x},\mathbf{y})>2$. That is, $D_2(\mathbf{x})\cap D_2(\mathbf{y})=\emptyset$. Let $\mathcal{X}=D_2(\mathbf{x})\cap D_2(\mathbf{y})$. By Propositions $\ref{prp1}$ and $\ref{prp2}$, we have  $|\mathcal{X}|=|\mathcal{X}^{00}|+|\mathcal{X}^{01}|+|\mathcal{X}^{10}|+|\mathcal{X}^{11}|$, where
$$\mathcal{X}^{00}=00\circ \big(D_0(1\,0\,\mathbf{a}_{n-8}\,1\,0)\cap D_0(0\,1\,\mathbf{a}_{n-8}\,0\,1)\big), $$
$$\mathcal{X}^{01}=01\circ \big(D_1(0\,1\,0\,\mathbf{a}_{n-8}\,1\,0)\cap D_2(1\,0\,0\,1\,\mathbf{a}_{n-8}\,0\,1)\big), $$
$$\mathcal{X}^{10}=10\circ \big(D_2(1\,0\,1\,0\,\mathbf{a}_{n-8}\,1\,0)\cap D_0(0\,1\,\mathbf{a}_{n-8}\,0\,1)\big), $$
$$\mathcal{X}^{11}=11\circ \big(D_1(0\,1\,0\,\mathbf{a}_{n-8}\,1\,0)\cap D_1(0\,0\,1\,\mathbf{a}_{n-8}\,0\,1)\big).$$

Consider the case of $\mathcal{X}^{00}$. Since $n\geq 13$, the first element of $1\,0\,\mathbf{a}_{n-8}\,1\,0$ and $0\,1\,\mathbf{a}_{n-8}\,0\,1$  are $1$ and $0$, respectively. So, we have that $\mathcal{X}^{00}=\emptyset$. Similarly, we also prove that $\mathcal{X}^{01}=\mathcal{X}^{10}=\mathcal{X}^{11}=\emptyset$. Consequently, the lemma is verified for all odd $n\geq 13$. An analogous argument covers the case of even $n$, thereby establishing the lemma.
\end{proof}

To compute $|D_t(\mathbf{x}) \cap D_t(\mathbf{y})|$ for the sequences defined in Lemma $\ref{lm1}$, we first establish the following auxiliary result.
\begin{lemma}
Then we have
\begin{equation}
|D_t(1\,\mathbf{a}_{n-1})|=D(n-1,t)+D(n-3,t-2),\quad\text{for}~ n\geq t+1,\nonumber
\end{equation}
\begin{equation}
|D_t(0\,1\,\mathbf{a}_{n-2})|=D(n-2,t)+D(n-2,t-1)+D(n-4,t-2),\quad \quad\text{for}~ n\geq t+2.\nonumber
\end{equation}
\label{lm2}
\end{lemma}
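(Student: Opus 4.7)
The plan is to prove both identities by the same mechanism: decompose each deletion ball according to the first bit of its elements and invoke Propositions~\ref{prp1} and~\ref{prp2}. I will write $D_t(\mathbf{u})^{a}$ for the subset of $D_t(\mathbf{u})$ whose elements begin with the bit $a\in\{0,1\}$, so that Proposition~\ref{prp1} with $m_1=1,m_2=0$ produces the clean split $|D_t(\mathbf{u})|=|D_t(\mathbf{u})^{0}|+|D_t(\mathbf{u})^{1}|$. Each piece will then be reduced to a deletion ball of an alternating suffix, at which point the explicit formula~(\ref{eq2}) finishes the job.

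For the first identity, I will write $1\,\mathbf{a}_{n-1}=u_1\,u_2\,\cdots\,u_n=1\,1\,0\,1\,0\,\cdots$ and identify the smallest $k_1$ in Proposition~\ref{prp2} for each starting bit. The bit $1$ appears at position $1$, so $k_1=1$; the bit $0$ appears first at position $3$, so $k_1=3$. Taking the empty suffix ($k_2=n+1$), Proposition~\ref{prp2} gives $t^*=t$ and $t^*=t-2$ in the two cases, reducing the pieces to $|D_{t}(u_2\cdots u_n)|$ and $|D_{t-2}(u_4\cdots u_n)|$. Because $u_2\cdots u_n=\mathbf{a}_{n-1}$ and $u_4\cdots u_n=\mathbf{a}_{n-3}$ are themselves alternating, (\ref{eq2}) converts these to $D(n-1,t)$ and $D(n-3,t-2)$, whose sum is the claimed expression.

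For the second identity, I will run exactly the same argument on $0\,1\,\mathbf{a}_{n-2}=u_1\,u_2\,\cdots\,u_n=0\,1\,1\,0\,1\,\cdots$. The smallest position carrying a $0$ is $k_1=1$, giving $t^*=t$ and reducing the first piece to $|D_t(1\,\mathbf{a}_{n-2})|$; then I will recycle the first identity (applied with $n$ replaced by $n-1$) to equate this with $D(n-2,t)+D(n-4,t-2)$. The smallest position carrying a $1$ is $k_1=2$, giving $t^*=t-1$ and reducing the second piece to $|D_{t-1}(\mathbf{a}_{n-2})|=D(n-2,t-1)$ by (\ref{eq2}). Summing the three contributions yields the claimed formula.

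The entire argument is bookkeeping once one spots the key structural fact: deleting the initial $1$ or initial $11$ of $1\,\mathbf{a}_{n-1}$, and respectively the initial $0$ or initial $01$ of $0\,1\,\mathbf{a}_{n-2}$, leaves a purely alternating suffix. I do not anticipate a genuine obstacle; the only care needed is in extracting the correct $k_1$ and resulting $t^*$ in each of the four cases, which rests on the trivial observation that the first two bits of both $1\,\mathbf{a}_{n-1}$ and $0\,1\,\mathbf{a}_{n-2}$ are forced to be a repeated bit followed by an alternation.
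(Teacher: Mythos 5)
Your proposal is correct and follows essentially the same route as the paper: both decompose the deletion ball by the first bit via Propositions~\ref{prp1} and~\ref{prp2}, reduce each piece to a deletion ball of an alternating suffix evaluated by~(\ref{eq2}), and reuse the first identity to handle the $0$-prefix piece of $0\,1\,\mathbf{a}_{n-2}$. The values of $k_1$ and $t^*$ you extract in each of the four cases match the paper's computation exactly.
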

\begin{proof} 
For convenience, let $\mathbf{u}=1\,\mathbf{a}_{n-1}$. By Propositions \ref{prp1} and \ref{prp2}, we have $|D_t(\mathbf{u})|=|D_t(\mathbf{u})^{0}|+|D_t(\mathbf{u})^{1}|$, where
$$|D_t(\mathbf{u})^{0}|=|0\circ D_{t-2}(\mathbf{a}_{n-3})|=D(n-3,t-2), \qquad |D_t(\mathbf{u})^{1}|=|1\circ D_{t}(\mathbf{a}_{n-1})|=D(n-1,t).$$
Therefore, $|D_t(1\,\mathbf{a}_{n-1})|=D(n-1,t)+D(n-3,t-2)$ for any $n\geq t+1$. 

Let $\mathbf{v}=0\,1\,\mathbf{a}_{n-2}$. By Propositions \ref{prp1} and \ref{prp2}, we have $|D_t(\mathbf{v})|=|D_t(\mathbf{v})^{0}|+|D_t(\mathbf{v})^{1}|$, where
$$|D_t(\mathbf{v})^{0}|=|0\circ D_{t}(1\,\mathbf{a}_{n-2})|=|D_{t}(1\,\mathbf{a}_{n-2})|=D(n-2,t)+D(n-4,t-2),$$
$$|D_t(\mathbf{v})^{1}|=|1\circ D_{t-1}(\mathbf{a}_{n-2})|=D(n-2,t-1).$$
Thus, we have $|D_t(0\,1\,\mathbf{a}_{n-1})|=D(n-2,t)+D(n-2,t-1)+D(n-4,t-2)$ for any $n\geqslant t+2$. This completes the proof of the lemma.
\end{proof}

By Lemma $\ref{lm2}$, we have the following theorem which gives the value of $|D_t(\mathbf{x})\cap D_t(\mathbf{y})|$ for the sequences defined in Lemma $\ref{lm1}$.

\begin{theorem}
Let $n\geq \max\{13,t+8\}$ and $t\geq 4$. Then we have
\begin{equation}
N(n,3,t)\geq M(n,t).\nonumber
\end{equation}
In particular, $N(n,3,4)\geq M(n,4)=20n-166$.
\end{theorem}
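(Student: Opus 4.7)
The plan is to explicitly evaluate $|D_t(\mathbf{x})\cap D_t(\mathbf{y})|$ for the pair $(\mathbf{x},\mathbf{y})$ constructed in Lemma~\ref{lm1} and show it equals $M(n,t)$. Since Lemma~\ref{lm1} already establishes $d_L(\mathbf{x},\mathbf{y})=3$, this pair will witness $N(n,3,t)\geq M(n,t)$. The specialization $M(n,4)=20n-166$ is then routine: from $D(m,1)=m$ and $D(m,t)=0$ for $t<0$, the last two summands of $M(n,4)$ vanish, and $6(n-6)+4(n-8)+6(n-9)+4(n-11)=20n-166$.

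The main tool is the decomposition framework of Propositions~\ref{prp1} and~\ref{prp2}. I would write
\begin{equation*}
|D_t(\mathbf{x})\cap D_t(\mathbf{y})| = \sum_{\mathbf{u}_1\in \mathbb{F}_2^{m_1}}\sum_{\mathbf{u}_2\in \mathbb{F}_2^{m_2}} \bigl|D_t(\mathbf{x})^{\mathbf{u}_1}_{\mathbf{u}_2}\cap D_t(\mathbf{y})^{\mathbf{u}_1}_{\mathbf{u}_2}\bigr|
\end{equation*}
for judiciously chosen $m_1,m_2$. Since $\mathbf{x}$ and $\mathbf{y}$ share the alternating core $\mathbf{a}_{n-8}$ at positions $7,\dots,n-2$ while their six-bit prefixes ($1\,0\,1\,0\,1\,0$ versus $0\,1\,1\,0\,0\,1$) and two-bit suffixes differ completely, small choices of $(m_1,m_2)$ whose total matches the number of boundary bits that the prefix/suffix is forced to absorb will reduce each summand, via Proposition~\ref{prp2}, to an intersection of two deletion balls on essentially the same middle segment.

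For each fixed $(\mathbf{u}_1,\mathbf{u}_2)$, I would compute the critical indices $k_1(\mathbf{x}),k_2(\mathbf{x}),k_1(\mathbf{y}),k_2(\mathbf{y})$ from Proposition~\ref{prp2}, so that $D_t(\mathbf{x})^{\mathbf{u}_1}_{\mathbf{u}_2}$ and $D_t(\mathbf{y})^{\mathbf{u}_1}_{\mathbf{u}_2}$ each take the form $\mathbf{u}_1\circ D_{t^*}(\cdot)\circ v_{m_2}\cdots v_1$. Taking the intersection corresponds to using the larger of the two $k_1$'s and the smaller of the two $k_2$'s, adjusting the residual radius accordingly; this yields a deletion ball of a single common middle. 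When that middle is alternating, the count is given directly by~(\ref{eq2}); when it is an alternating sequence prefixed by $1$ or by $0\,1$, Lemma~\ref{lm2} supplies the count. Grouping contributions by length and residual radius should then produce the six summands of $M(n,t)$, with the coefficients $6,4,6,4,2,1$ enumerating the distinct prefix/suffix patterns contributing to each $(n',t^*)$-pair; the first four terms arise from configurations absorbing three boundary deletions, while the last two arise from configurations absorbing five or six boundary deletions.

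The main obstacle is the combinatorial case analysis: the six-bit front mismatch generates many candidate prefixes $\mathbf{u}_1$, and for each one must verify $k_1<k_2$ in both sequences, identify the resulting common middle, and check nonemptiness to avoid double counting. The two parities of $n$ require parallel treatments, but the symmetry of the construction (the trailing $1\,0$ and $0\,1$ simply swap roles with the parity of $\mathbf{a}_{n-8}$) ensures both parities yield the same count $M(n,t)$. Where necessary, the recursion~(\ref{eq3}) can be applied to combine intermediate $D(\cdot,\cdot)$ expressions into the clean closed form stated in the theorem.
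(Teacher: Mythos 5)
Your overall strategy is the same as the paper's: take the explicit pair from Lemma~\ref{lm1}, decompose $D_t(\mathbf{x})\cap D_t(\mathbf{y})$ by prefixes and suffixes via Propositions~\ref{prp1} and~\ref{prp2}, evaluate the resulting pieces using Equation~(\ref{eq2}) and Lemma~\ref{lm2}, and recombine with Equation~(\ref{eq3}); the parity symmetry you invoke at the end is also exactly how the paper handles even $n$.

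There is, however, one concrete step that would fail as described. You claim that after fixing $(\mathbf{u}_1,\mathbf{u}_2)$ and applying Proposition~\ref{prp2} to both sequences, ``taking the intersection corresponds to using the larger of the two $k_1$'s and the smaller of the two $k_2$'s \dots\ this yields a deletion ball of a single common middle.'' That is only true for the summands in which one of the two reduced balls contains the other (typically because one residual radius is much larger, or the two middles coincide). For several of the summands that actually carry the weight of the computation, the two reduced middles are genuinely different sequences with equal residual radii — for example, terms of the form $D_{t'}(1\,0\,\mathbf{a}_{m})\cap D_{t'}(0\,1\,\mathbf{a}_{m})$ — and their intersection is not a single deletion ball. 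The paper evaluates these via Levenshtein's Theorem~\ref{thm1}, $N(m+2,1,t')=2D(m,t'-1)$, and it is precisely these terms that produce the even coefficients in $M(n,t)$. Your proposal never invokes Theorem~\ref{thm1} (or any substitute for computing the intersection of two distinct deletion balls at Levenshtein distance one), so as written the mechanism you describe cannot produce the correct counts for those summands. The gap is repairable — you need to recognize which summands reduce to a containment (count by Equation~(\ref{eq2}) or Lemma~\ref{lm2}) and which reduce to a distance-one pair (count by Theorem~\ref{thm1}) — but that distinction is a missing ingredient, not a routine detail.
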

\begin{proof}
Assume $n$ is odd. Let $\mathbf{x}=1\,0\,1\,0\,1\,0\,\mathbf{a}_{n-8}\,1\,0$ and  $\mathbf{y}=0\,1\,1\,0\,0\,1\,\mathbf{a}_{n-8}\,0\,1$. We denote $\mathcal{X}=D_t(\mathbf{x})\cap D_t(\mathbf{y})$. Then, by Proposition $\ref{prp1}$, we have $|\mathcal{X}|=|\mathcal{X}^{0}_0|+|\mathcal{X}^{0}_1|+|\mathcal{X}^{1}_0|+|\mathcal{X}^{1}_1|$,
where
\begin{align*}
|\mathcal{X}^0_0|=&|D_{t-1}(1\,0\,1\,0\,\mathbf{a}_{n-8}\,1)\cap D_{t-1}(1\,1\,0\,0\,1\,\mathbf{a}_{n-8})|,\\
|\mathcal{X}^0_1|=&|D_{t-2}(1\,0\,1\,0\,\mathbf{a}_{n-8})\cap D_{t}(1\,1\,0\,0\,1\,\mathbf{a}_{n-8}\,0)|,\\
|\mathcal{X}^1_0|=&|D_{t}(0\,1\,0\,1\,0\,\mathbf{a}_{n-8}\,1)\cap D_{t-2}(1\,0\,0\,1\,\mathbf{a}_{n-8})|,\\
|\mathcal{X}^1_1|=&|D_{t-1}(0\,1\,0\,1\,0\,\mathbf{a}_{n-8})\cap D_{t-1}(1\,0\,0\,1\,\mathbf{a}_{n-8}\,0)|.
\end{align*}

First we compute the value of $|\mathcal{X}^{0}_0|$. We continue to decompose the set of $\mathcal{X}^{0}_0$ as follows. Since $n$ is odd and $n\geq \max\{13,t+8\}$, then we have $|\mathcal{X}^0_0|=|\mathcal{X}^{00}_{00}|+|\mathcal{X}^{000}_{010}|+|\mathcal{X}^{000}_{011}|+|\mathcal{X}^{001}_{01}|+|\mathcal{X}^{010}_{00}|
+|\mathcal{X}^{011}_{00}|+|\mathcal{X}^{010}_{010}|+|\mathcal{X}^{010}_{011}|+|\mathcal{X}^{0110}_{010}|+|\mathcal{X}^{0110}_{011}|+
|\mathcal{X}^{0111}_{01}|$ with
\begin{small}
\begin{align*}
|\mathcal{X}^{00}_{00}|=&|D_{t-4}(1\,0\,\mathbf{a}_{n-10})\cap D_{t-4}(0\,1\,\mathbf{a}_{n-10})|\overset{(a)}{=}2D(n-10,t-5),\\
|\mathcal{X}^{000}_{010}|=&|D_{t-4}(\mathbf{a}_{n-10})\cap D_{t-3}(1\,\mathbf{a}_{n-10})|=|D_{t-4}(\mathbf{a}_{n-10})|\overset{(b)}{=}D(n-10,t-4),\\
|\mathcal{X}^{000}_{011}|=&|D_{t-3}(\mathbf{a}_{n-10}\,0)\cap D_{t-4}(1\,\mathbf{a}_{n-11})|=|D_{t-4}(1\,\mathbf{a}_{n-11})|\overset{(c)}{=}D(n-11,t-4)+D(n-13,t-6),\\
|\mathcal{X}^{001}_{01}|=&|D_{t-2}(0\,\mathbf{a}_{n-8})\cap D_{t-4}(\mathbf{a}_{n-9})|=|D_{t-4}(\mathbf{a}_{n-9})|\overset{(b)}{=}D(n-9,t-4),\\
|\mathcal{X}^{010}_{00}|=&|D_{t-3}(1\,0\,\mathbf{a}_{n-10})\cap D_{t-3}(0\,1\,\mathbf{a}_{n-10})|\overset{(a)}{=}2D(n-10,t-4),\\
|\mathcal{X}^{011}_{00}|=&|D_{t-4}(0\,\mathbf{a}_{n-10})\cap D_{t-2}(0\,0\,1\,\mathbf{a}_{n-10})|=|D_{t-4}(0\,\mathbf{a}_{n-10})|\overset{(b)}{=}D(n-9,t-4),\\
|\mathcal{X}^{010}_{010}|=&|D_{t-2}(1\,0\,\mathbf{a}_{n-10})\cap D_{t-2}(0\,1\,\mathbf{a}_{n-10})|\overset{(a)}{=}2D(n-10,t-3),\\
|\mathcal{X}^{010}_{011}|=&|D_{t-1}(1\,0\,\mathbf{a}_{n-9})\cap D_{t-3}(0\,1\,\mathbf{a}_{n-11})|=|D_{t-3}(0\,1\,\mathbf{a}_{n-11})|\overset{(d)}{=}D(n-11,t-3)+D(n-11,t-4)+D(n-13,t-5),\\
|\mathcal{X}^{0110}_{010}|=&|D_{t-3}(\mathbf{a}_{n-10})\cap D_{t-1}(0\,1\,\mathbf{a}_{n-10})|=|D_{t-3}(\mathbf{a}_{n-10})|\overset{(b)}{=}D(n-10,t-3),\\
|\mathcal{X}^{0110}_{011}|=&|D_{t-2}(\mathbf{a}_{n-9})\cap D_{t-2}(0\,1\,\mathbf{a}_{n-11})|\overset{(a)}{=}2D(n-11,t-3),\\
|\mathcal{X}^{0111}_{01}|=&|D_{t-3}(0\,\mathbf{a}_{n-10})\cap D_{t-3}(\mathbf{a}_{n-9})|\overset{(a)}{=}2D(n-11,t-4),
\end{align*}
\end{small}
where $(a)$ follows from Theorem $\ref{thm1}$, $(b)$ follows by applying Equation $(\ref{eq2})$, $(c)$ follows from Lemma $\ref{lm2}$, $(d)$ follows from Lemma $\ref{lm2}$. Thus, we have
\begin{align}
|\mathcal{X}^0_0|&=3D(n-10,t-3)+3D(n-11,t-3)+2D(n-9,t-4)+3D(n-10,t-4)+4D(n-11,t-4)\nonumber\\
&~~~~~~~~~~~+2D(n-10,t-5)+D(n-13,t-5)+D(n-13,t-6).\nonumber
\end{align}

Next consider the set $\mathcal{X}^{0}_1$. Then, for $n\geq \max\{13,t+8\}$, we have $|\mathcal{X}^0_1|=|\mathcal{X}^{00}_{10}|+|\mathcal{X}^{00}_{11}|+|\mathcal{X}^{01}_{10}|+|\mathcal{X}^{010}_{11}|+|\mathcal{X}^{011}_{11}|$ with
\begin{align*}
|\mathcal{X}^{00}_{10}|=&|D_{t-4}(1\,0\,\mathbf{a}_{n-10})\cap D_{t-2}(0\,1\,\mathbf{a}_{n-8})|=|D_{t-4}(1\,0\,\mathbf{a}_{n-10})|\overset{(a)}{=}D(n-8,t-4),\\
|\mathcal{X}^{00}_{11}|=&|D_{t-3}(1\,0\,\mathbf{a}_{n-9})\cap D_{t-3}(0\,1\,\mathbf{a}_{n-9})|\overset{(b)}{=}2D(n-9,t-4),\\
|\mathcal{X}^{01}_{10}|=&|D_{t-3}(0\,1\,0\,\mathbf{a}_{n-10})\cap D_{t}(1\,0\,0\,1\,\mathbf{a}_{n-8})|=|D_{t-3}(0\,1\,0\,\mathbf{a}_{n-10})|\overset{(a)}{=}D(n-7,t-3),\\
|\mathcal{X}^{010}_{11}|=&|D_{t-2}(1\,0\,\mathbf{a}_{n-9})\cap D_{t-2}(0\,1\,\mathbf{a}_{n-9})|\overset{(b)}{=}2D(n-9,t-3),\\
|\mathcal{X}^{011}_{11}|=&|D_{t-3}(0\,\mathbf{a}_{n-9})\cap D_{t-1}(0\,0\,1\,\mathbf{a}_{n-9})|=|D_{t-3}(0\,\mathbf{a}_{n-9})|\overset{(a)}{=}D(n-8,t-3),
\end{align*}
where $(a)$ follows by applying Equation $(\ref{eq2})$, $(b)$ follows from Theorem $\ref{thm1}$. Thus, we have
\begin{align}
|\mathcal{X}^0_1|&=D(n-7,t-3)+D(n-8,t-3)+2D(n-9,t-3)+D(n-8,t-4)+2D(n-9,t-4).\nonumber
\end{align}

We discuss the value of  $|\mathcal{X}^{1}_0|$. Then, for $n\geq \max\{13,t+8\}$, we have $|\mathcal{X}^1_0|=|\mathcal{X}^{10}_{00}|+|\mathcal{X}^{10}_{01}|+|\mathcal{X}^{110}_{00}|+|\mathcal{X}^{111}_{00}|+
|\mathcal{X}^{110}_{010}|+|\mathcal{X}^{110}_{011}|+|\mathcal{X}^{111}_{01}|$ with
\begin{small}
\begin{align*}
|\mathcal{X}^{10}_{00}|=&|D_{t-2}(1\,0\,1\,0\,\mathbf{a}_{n-10})\cap D_{t-4}(0\,1\,\mathbf{a}_{n-10})|=|D_{t-4}(0\,1\,\mathbf{a}_{n-10})|\overset{(a)}{=}D(n-10,t-4)+D(n-10,t-5)+D(n-12,t-6),\\
|\mathcal{X}^{10}_{01}|=&|D_{t}(1\,0\,1\,0\,\mathbf{a}_{n-8})\cap D_{t-3}(0\,1\,\mathbf{a}_{n-9})|=|D_{t-3}(0\,1\,\mathbf{a}_{n-9})|\overset{(a)}{=}D(n-9,t-3)+D(n-9,t-4)+D(n-11,t-5),\\
|\mathcal{X}^{110}_{00}|=&|D_{t-3}(1\,0\,\mathbf{a}_{n-10})\cap D_{t-3}(0\,1\,\mathbf{a}_{n-10})|\overset{(b)}{=}2D(n-10,t-4),\\
|\mathcal{X}^{111}_{00}|=&|D_{t-4}(0\,\mathbf{a}_{n-10})\cap D_{t-5}(\mathbf{a}_{n-10})|=|D_{t-5}(\mathbf{a}_{n-10})|\overset{(c)}{=}D(n-10,t-5),\\
|\mathcal{X}^{110}_{010}|=&|D_{t-2}(1\,0\,\mathbf{a}_{n-10})\cap D_{t-2}(0\,1\,\mathbf{a}_{n-10})|\overset{(b)}{=}2D(n-10,t-3),\\
|\mathcal{X}^{110}_{011}|=&|D_{t-1}(1\,0\,\mathbf{a}_{n-9})\cap D_{t-3}(0\,1\,\mathbf{a}_{n-11})|=|D_{t-3}(0\,1\,\mathbf{a}_{n-11})|\overset{(a)}{=}D(n-11,t-3)+D(n-11,t-4)+D(n-13,t-5),\\
|\mathcal{X}^{111}_{01}|=&|D_{t-2}(0\,\mathbf{a}_{n-8})\cap D_{t-4}(\mathbf{a}_{n-9})|=|D_{t-4}(\mathbf{a}_{n-9})|\overset{(c)}{=}D(n-9,t-4),
\end{align*}
\end{small}
where $(a)$ follows from Lemma $\ref{lm2}$, $(b)$ follows from Theorem $\ref{thm1}$, $(c)$ follows by applying Equation $(\ref{eq2})$. Thus, we have
\begin{align}
|\mathcal{X}^1_0|&=D(n-9,t-3)+2D(n-10,t-3)+D(n-11,t-3)+2D(n-9,t-4)+3D(n-10,t-4)\nonumber\\
&~~~~~~~~~~~+D(n-11,t-4)+2D(n-10,t-5)+D(n-11,t-5)+D(n-13,t-5)+D(n-12,t-6).\nonumber
\end{align}

Finally consider the set $\mathcal{X}^{1}_1$. Then, for $n\geq \max\{13,t+8\}$, we have $|\mathcal{X}^1_1|=|\mathcal{X}^{10}_{10}|+|\mathcal{X}^{10}_{11}|+|\mathcal{X}^{11}_{10}|+|\mathcal{X}^{110}_{11}|+
|\mathcal{X}^{111}_{11}|$ with
\begin{small}
\begin{align*}
|\mathcal{X}^{10}_{10}|=&|D_{t-2}(1\,0\,\mathbf{a}_{n-8})\cap D_{t-2}(0\,1\,\mathbf{a}_{n-8})|\overset{(a)}{=}2D(n-8,t-3),\\
|\mathcal{X}^{10}_{11}|=&|D_{t-1}(1\,0\,1\,0\,\mathbf{a}_{n-9})\cap D_{t-3}(0\,1\,\mathbf{a}_{n-9})|=|D_{t-3}(0\,1\,\mathbf{a}_{n-9})|\overset{(b)}{=}D(n-9,t-3)+D(n-9,t-4)+D(n-11,t-5),\\
|\mathcal{X}^{11}_{10}|=&|D_{t-3}(0\,1\,0\,\mathbf{a}_{n-10})\cap D_{t-1}(0\,0\,1\,\mathbf{a}_{n-8})|=|D_{t-3}(0\,1\,0\,\mathbf{a}_{n-10})|\overset{(c)}{=}D(n-7,t-3),\\
|\mathcal{X}^{110}_{11}|=&|D_{t-2}(1\,0\,\mathbf{a}_{n-9})\cap D_{t-2}(0\,1\,\mathbf{a}_{n-9})|\overset{(a)}{=}2D(n-9,t-3),\\
|\mathcal{X}^{111}_{11}|=&|D_{t-3}(0\,\mathbf{a}_{n-9})\cap D_{t-4}(\mathbf{a}_{n-9})|=|D_{t-4}(\mathbf{a}_{n-9})|\overset{(c)}{=}D(n-9,t-4),
\end{align*}
\end{small}
where $(a)$ follows from Theorem $\ref{thm2}$, $(b)$ follows from Lemma $\ref{lm2}$, $(c)$ follows by applying Equation $(\ref{eq2})$. Thus, we have
\begin{align}
|\mathcal{X}^1_1|=D(n-7,t-3)+2D(n-8,t-3)+3D(n-9,t-3)+2D(n-9,t-4)+D(n-11,t-5).\nonumber
\end{align}

Therefore, for any $n\geqslant \max\{13,t+8\}$ we have
\begin{align}
|\mathcal{X}|&=|\mathcal{X}^{0}_0|+|\mathcal{X}^{0}_1|+|\mathcal{X}^{1}_0|+|\mathcal{X}^{1}_1|\nonumber\\
&=2D(n-7,t-3)+3D(n-8,t-3)+6D(n-9,t-3)+5D(n-10,t-3)+4D(n-11,t-3)\nonumber\\
&~~~+D(n-8,t-4)+8D(n-9,t-4)+6D(n-10,t-4)+5D(n-11,t-4)\nonumber\\
&~~~+4D(n-10,t-5)+2D(n-11,t-5)+2D(n-13,t-5)+D(n-12,t-6)+D(n-13,t-6)\nonumber\\
&\overset{(a)}{=}6D(n-6,t-3)+4D(n-8,t-3)+6D(n-9,t-3)+4D(n-11,t-3)+2D(n-13,t-5)+D(n-13,t-6),\nonumber
\end{align}
where $(a)$ follows by applying Equation $(\ref{eq3})$.

When $n$ is even, let $\mathbf{x}=1\,0\,1\,0\,1\,0\,\mathbf{a}_{n-8}\,0\,1$ and  $\mathbf{y}=0\,1\,1\,0\,0\,1\,\mathbf{a}_{n-8}\,1\,0$. We denote $\mathcal{Y}=D_t(\mathbf{x})\cap D_t(\mathbf{y})$. Then, by Proposition $\ref{prp1}$, we have $|\mathcal{Y}|=|\mathcal{Y}^{0}_0|+|\mathcal{Y}^{0}_1|+|\mathcal{Y}^{1}_0|+|\mathcal{Y}^{1}_1|$,
where
\begin{align*}
|\mathcal{Y}^0_0|=&|D_{t-2}(1\,0\,1\,0\,\mathbf{a}_{n-8})\cap D_{t}(1\,1\,0\,0\,1\,\mathbf{a}_{n-8}\,1)|,\\
|\mathcal{Y}^0_1|=&|D_{t-1}(1\,0\,1\,0\,\mathbf{a}_{n-8}\,0)\cap D_{t-1}(1\,1\,0\,0\,1\,\mathbf{a}_{n-8})|,\\
|\mathcal{Y}^1_0|=&|D_{t-1}(0\,1\,0\,1\,0\,\mathbf{a}_{n-8})\cap D_{t-1}(1\,0\,0\,1\,\mathbf{a}_{n-8}\,1)|,\\
|\mathcal{Y}^1_1|=&|D_{t}(0\,1\,0\,1\,0\,\mathbf{a}_{n-8}\,0)\cap D_{t-2}(1\,0\,0\,1\,\mathbf{a}_{n-8})|.
\end{align*}
Here, we decompose $\mathcal{Y}^0_0$ using the similar decomposition method of $\mathcal{X}^0_1$ and view the rightmost $1$ in $\mathcal{Y}^0_0$ as the rightmost $0$ in $\mathcal{X}^0_1$. Since $n$ is even, then we have $|\mathcal{Y}^0_0|=|\mathcal{Y}^{00}_{01}|+|\mathcal{Y}^{00}_{00}|+|\mathcal{Y}^{01}_{01}|+|\mathcal{Y}^{010}_{00}|+|\mathcal{Y}^{011}_{00}|$ with
\begin{align*}
|\mathcal{Y}^{00}_{01}|=&|D_{t-4}(1\,0\,\mathbf{a}_{n-10})\cap D_{t-2}(0\,1\,\mathbf{a}_{n-8})|=|D_{t-4}(1\,0\,\mathbf{a}_{n-10})|\overset{(a)}{=}D(n-8,t-4),\\
|\mathcal{Y}^{00}_{00}|=&|D_{t-3}(1\,0\,\mathbf{a}_{n-9})\cap D_{t-3}(0\,1\,\mathbf{a}_{n-9})|\overset{(b)}{=}2D(n-9,t-4),\\
|\mathcal{Y}^{01}_{01}|=&|D_{t-3}(0\,1\,0\,\mathbf{a}_{n-10})\cap D_{t}(1\,0\,0\,1\,\mathbf{a}_{n-8})|=|D_{t-3}(0\,1\,0\,\mathbf{a}_{n-10})|\overset{(a)}{=}D(n-7,t-3),\\
|\mathcal{Y}^{010}_{00}|=&|D_{t-2}(1\,0\,\mathbf{a}_{n-9})\cap D_{t-2}(0\,1\,\mathbf{a}_{n-9})|\overset{(b)}{=}2D(n-9,t-3),\\
|\mathcal{Y}^{011}_{00}|=&|D_{t-3}(0\,\mathbf{a}_{n-9})\cap D_{t-1}(0\,0\,1\,\mathbf{a}_{n-9})|=|D_{t-3}(0\,\mathbf{a}_{n-9})|\overset{(a)}{=}D(n-8,t-3),
\end{align*}
where $(a)$ follows by applying Equation $(\ref{eq2})$, $(b)$ follows from Theorem $\ref{thm1}$. Thus, we have
\begin{align}
|\mathcal{Y}^0_0|=|\mathcal{X}^0_1|&=D(n-7,t-3)+D(n-8,t-3)+2D(n-9,t-3)+D(n-8,t-4)+2D(n-9,t-4).\nonumber
\end{align}
Similarly, we also have $|\mathcal{Y}^0_1|=|\mathcal{X}^0_0|$, $|\mathcal{Y}^1_0|=|\mathcal{X}^1_1|$, and $|\mathcal{Y}^1_1|=|\mathcal{X}^1_0|$. Thus, when $n$ is even, we have $|\mathcal{Y}|=M(n,t)$. So, $N(n,3,t)\geq M(n,t)$ for any $n\geq \max\{13,t+8\}$ and $t\geqslant 4$. In particular, when $t=4$, then $M(n,4)=20n-166$ since $D(m,1)=m$. Thus, the theorem is proved.
\end{proof}

\section{The upper bound}
\label{sec4}
In this section, we will prove that the lower bound $M(n,4)=20n-166$ established in the previous section also serves as an upper bound for $N(n,3,4)$. For convenience, let $\mathbf{x}=x_1\,\cdots\,x_{n}, \mathbf{y}=y_1\,\cdots\,y_n\in \mathbb{F}_2^{n}$, and let $\mathcal{X}=D_t(\mathbf{x})\cap D_t(\mathbf{y})$ ($n \geq t \geq 1$), Proposition \ref{prp2} implies that for $\mathbf{u}_1\in \mathbb{F}_2^{m_1}$ and $\mathbf{u}_2=v_{1}\,\cdots\,v_{m_2}\in \mathbb{F}_2^{m_2}$,
\begin{align*}
\mathcal{X}_{\mathbf{u}_2}^{\mathbf{u}_1}=\mathbf{u}_1\circ \big(D_{t-(1+\ell-m_1)-(1+\ell_1-m_2)}(x_{2+\ell}\,\cdots\,x_{n-1-\ell_1})
\cap D_{t-(1+\ell^*-m_1)-(1+\ell_1^*-m_2)}(y_{2+\ell^*}\,\cdots\,y_{n-1-\ell_1^*})\big)\circ v_{m_2}\,\cdots\,v_{1}.
\end{align*}
Here, $\ell,\ell^*$ are defined as the smallest integer such that $\mathbf{u}_1$ is a subsequence of $x_1\,\cdots\,x_{1+l}$ and $y_1\,\cdots\,y_{1+l^*}$, respectively. Similarly, $\ell_1,\ell_1^*$ are defined as the largest integer such that $\mathbf{u}_2$ is a subsequence of $x_{n-\ell_1}\,\cdots\,x_n$ and $y_{n-\ell_1^*}\,\cdots\,y_n$, respectively.  Unless otherwise stated, $\ell,\ell^*,\ell_1,\ell_1^*$ are defined this way during the decomposition of $\mathcal{X}$, with $\ell, \ell^*,\ell_1,\ell_1^* \geq 0$ for simplicity. First, we give the main theorem in the following. 

\begin{theorem}
\label{thm5}
For any $n\geq 13$, let $\mathbf{x}=x_1\,x_2\,\cdots\,x_n\in \mathbb{F}_2^n, \mathbf{y}=y_1\,y_2\,\cdots\,y_n\in \mathbb{F}_2^n$ such that $d_L(\mathbf{x}, \mathbf{y})\geq 3$.
Then, $|D_4(\mathbf{x})\cap D_4(\mathbf{y})|\leq 20n-166$.
\end{theorem}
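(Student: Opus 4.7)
My plan is to establish the upper bound by decomposing the intersection $\mathcal{X}=D_4(\mathbf{x})\cap D_4(\mathbf{y})$ along common prefixes and suffixes, reducing each piece to an intersection of deletion balls of smaller radius whose cardinality is controlled either by an exact formula from Theorems~\ref{thm1}--\ref{thm2} and Corollary~\ref{cor1}, or by the trivial bound $|D_s(\mathbf{w})|\leq D(|\mathbf{w}|,s)$ via~(\ref{eq2}).

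Concretely, I would first apply Proposition~\ref{prp1} with $m_1=m_2=1$ to split $\mathcal{X}$ into four pieces $\mathcal{X}^{a}_{b}$ for $a,b\in\{0,1\}$. By Proposition~\ref{prp2}, each piece is an intersection of deletion balls of radius $4-\alpha-\beta$ on shorter windows of $\mathbf{x}$ and $\mathbf{y}$, where $\alpha,\beta$ measure the cost of matching the prefix $a$ and the suffix $b$ in the worse of the two sequences. I would then refine the decomposition according to the first three symbols $x_1x_2x_3$ and $y_1y_2y_3$, and symmetrically the last three; this mirrors the fine decomposition used in the lower-bound proof, where subsets such as $\mathcal{X}^{0110}_{011}$ appeared. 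Stripping a common prefix of length $k$ preserves the inner Levenshtein distance $\geq 3$ when no mismatches are consumed, while a prefix that forces mismatches drops both the inner radius and the inner distance; when the inner distance drops to $\geq 2$ or $\geq 1$, we are precisely in the regimes handled exactly by Theorem~\ref{thm2} (in particular $N(m,2,3)=6m-30$) and by Theorem~\ref{thm1}.

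Every atomic piece in the resulting tree should fall into one of four categories: \textbf{(i)} the inner pair has $d_L\geq 2$ with reduced radius $\leq 3$, bounded by $N(m,2,t)$; \textbf{(ii)} the inner pair has $d_L\geq 1$, bounded by $N(m,1,t)=2D(m-2,t-1)$; \textbf{(iii)} one side collapses so that $|\mathcal{X}^{\mathbf{u}_1}_{\mathbf{u}_2}|\leq D(m,s)$ via~(\ref{eq2}); \textbf{(iv)} the required prefixes/suffixes cannot simultaneously embed in both $\mathbf{x}$ and $\mathbf{y}$, making the piece empty. Summing these contributions and repeatedly applying~(\ref{eq3}) should collapse the expression into $M(n,4)=20n-166$. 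As a consistency check, the pair in Lemma~\ref{lm1} saturates each atomic bound in the decomposition, so any deviation from that alternating-plus-boundary pattern ought to strictly reduce at least one piece, giving a strictly smaller total.

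The main obstacle will be the bookkeeping of a large case split. The hypothesis $d_L(\mathbf{x},\mathbf{y})\geq 3$ rules out many prefix/suffix combinations but still leaves a substantial residue, and one must exploit the symmetries $\mathbf{x}\leftrightarrow\mathbf{y}$, $\mathbf{x}\mapsto\overline{\mathbf{x}}$, and left-right reversal to keep the number of essentially distinct configurations manageable. The delicate cases are those in which a naive summation yields $20n-k$ with $k$ slightly less than $166$: here the distance hypothesis must be used more carefully, typically by examining a prefix or suffix of length $4$ or $5$ and showing that one of the sub-pieces is strictly smaller than the a priori category bound, so that the deficit $-166$ is actually attained rather than the weaker $-150$ of Theorem~\ref{thm3}.
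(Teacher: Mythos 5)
Your decomposition framework (Propositions~\ref{prp1} and~\ref{prp2}, refine by boundary symbols, bound atomic pieces by $N(m,2,t)$, $N(m,1,t)$, or $D(m,s)$) matches the paper's starting point, but there are two concrete gaps. First, your four categories (i)--(iv) do not cover the piece that arises when $x_1=y_1$ (or $x_n=y_n$): stripping that common symbol leaves an inner pair that still has $d_L\geq 3$ and still has radius $4$, which is the original problem on length $n-1$, not a reduced-distance or reduced-radius instance. The paper closes this loop by \emph{induction on $n$} (base cases $n=13,14$ by computer search), bounding $|\mathcal{X}^{x_1}|\leq 20(n-1)-166$ by the inductive hypothesis and the complementary piece by $N(n-1,3,3)=20$ from Theorem~\ref{thm3}. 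Your proposal contains no induction and no substitute for it, so this case is simply unhandled.

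Second, and more seriously, the naive sum of your category bounds does not reach $-166$; it reaches roughly $-150$, which is exactly the Pham--Goyal--Kiah bound of Theorem~\ref{thm3} that the paper is improving. For example, in the central subcase $x_1\neq y_1$, $x_n\neq y_n$ with alternating boundaries, the four pieces are each of the form $|D_3(\cdot)\cap D_3(\cdot)|$ or $|D_2(\cdot)\cap D_4(\cdot)|$ with inner distance only $\geq 2$ (or an unequal-length analogue), and $N(n-3,2,3)=6n-48$ per $D_3$-piece already oversums. The paper's real work is in Lemmas~\ref{lm6} and~\ref{lm7} (proved via Lemma~\ref{lm4} and the Type-A/B confusability machinery of Lemmas~\ref{lm13}--\ref{lm16}), which sharpen these bounds to $6n-52$, $4n-30$, etc., as a function of the length $l$ of the differing middle block and the alternating structure of the common prefix and suffix, and which then show that the extremal configurations forcing equality in different pieces are mutually incompatible (e.g., in Lemma~\ref{lm10} at most two of the four inequalities can be tight, so the naive total $20n-164$ drops to $20n-166$). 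Your ``consistency check'' assumes any deviation from the Lemma~\ref{lm1} pattern strictly reduces some piece, but deviations can increase one piece while decreasing another; establishing that the \emph{total} never exceeds $20n-166$ is precisely the content of these structural lemmas, and your proposal does not supply the idea behind them.
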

\begin{proof}
We will prove the above result by induction on the lengths of the sequences $\mathbf{x,y}$. The base cases for $n=13$ or $14$ were verified using a computerized search.

Let $\mathbf{x}=x_1\,x_2\,\cdots\,x_n=x_1\,\mathbf{x}', \mathbf{y}=y_1\,y_2\,\cdots\,y_n=y_1\,\mathbf{y}'$ and denote $\mathcal{X}=D_4(\mathbf{x})\cap D_4(\mathbf{y})$. There are two cases: case $1)$ $x_1= y_1$; case $2)$ $x_1\neq y_1$. Suppose the result holds for all $m\leq n-1$.

First, we consider case $1)$. By Proposition $\ref{prp1}$, we have
$$|\mathcal{X}|=|\mathcal{X}^{x_1}|+|\mathcal{X}^{\overline{x_1}}|$$
with $\mathcal{X}^{x_1}=x_1\circ \big(D_4(\mathbf{x}')\cap D_4(\mathbf{y}')\big)$ and $\mathcal{X}^{\overline{x_1}}=\overline{x_1}\circ\big(D_{3-\ell}(\mathbf{x}'')\cap D_{3-\ell^*}(\mathbf{y}'')\big)$, where $\mathbf{x}''=x_{3+\ell}\,\cdots\,x_n$, $\mathbf{y}''=y_{3+\ell^*}\,\cdots\,y_n$, and $\ell,\ell^*\geq 0$.

Then  $d_L(\mathbf{x}', \mathbf{y}')\geq 3$ and from the inductive hypothesis we get
$$|\mathcal{X}^{x_1}|=|D_4(\mathbf{x}')\cap D_4(\mathbf{y}')|\leq 20(n-1)-166.$$

Moreover, by Theorem $\ref{thm3}$ we have
$$|\mathcal{X}^{\overline{x_1}}|=|D_{3-\ell}(\mathbf{x}'')\cap D_{3-\ell^*}(\mathbf{y}'')|\leq |D_{3}(\mathbf{x}')\cap D_{3}(\mathbf{y}')|=20$$ for any $n\geq 13$ since $D_{3-\ell}(\mathbf{x}'')\subset D_{3}(\mathbf{x}')$ and $D_{3-\ell^*}(\mathbf{y}'')\subset D_{3}(\mathbf{y}')$. Therefore, $|\mathcal{X}|\leq 20n-166$ which completes the proof for the case where $x_1=y_1$.

Next, we consider case $2):$ $x_1\neq y_1$. We further assume that $x_n \neq y_n$, since otherwise, if $x_n = y_n$, we can reverse both sequences, which reduces the problem to case~$1)$. Moreover, the conditions $x_1\neq y_1$ and $x_n\neq y_n$ can be divided into the following subcases:
\begin{itemize}
\item (a) The subcase where $x_i \neq y_i$ for $i=1,n$, and at least one of the following holds: $x_1=x_2, x_{n-1}=x_n, y_1=y_2$, or $y_{n-1}=y_n$;
\item (b)  The subcase with $x_i \neq y_i$ (for $i=1, n$), $x_j \neq x_{j+1}$ and $y_j \neq y_{j+1}$ (for $j = 1, n-1$), and with either ($x_3 \neq y_3~ \text{and}~ x_2=x_3$) or ($x_{n-2} \neq y_{n-2}~ \text{ and }~ x_{n-1} = x_{n-2}$);
\item (c) The subcase with $x_i \neq y_i$ (for $i =1, n$), $x_j \neq x_{j+1}$ and $y_j \neq y_{j+1}$ (for $j = 1, n-1$), and with either ($x_3\neq y_3, x_2\neq x_3, x_{n-2}\neq y_{n-2}$) or ($x_{n-2}\neq y_{n-2}, x_{n-1}\neq x_{n-2}, x_3\neq y_3$); 
\item (d) The subcase with $x_i \neq y_i$ (for $i =1, n$), $x_j \neq x_{j+1}$ and $y_j \neq y_{j+1}$ (for $j = 1, n-1$), and with either ($x_3\neq y_3, x_2\neq x_3, x_{n-2}=y_{n-2}$) or ($x_{n-2}\neq y_{n-2}, x_{n-1}\neq x_{n-2}, x_3=y_3$); 
\item (e) The subcase with $x_i \neq y_i$ (for $i = 1, n$), $x_j \neq x_{j+1}$ and $y_j \neq y_{j+1}$ (for $j = 1, n-1$), and with $x_l = y_l$ for $l = 3, n-2$, and at least one of the following: $x_4 \neq y_4$, $x_4 = y_4 = x_3$, $x_{n-3} \neq y_{n-3}$, or $x_{n-3} = y_{n-3} = x_{n-2}$;
\item (f) The subcase where $x_i \neq y_i$ (for $i = 1, n$), $x_j \neq x_{j+1}$ and $y_j \neq y_{j+1}$ (for $j = 1, 3,n-3, n-1$), and $x_l = y_l$ for ($l = 3, 4, n-3, n-2$).
\end{itemize}  
For the subcases $(a)$-$(f)$, we can prove that $|D_4(\mathbf{x})\cap D_4(\mathbf{y})|\leq 20n-166$ by using Lemmas $\ref{lm8}$, $\ref{lm9}$, $\ref{lm10}$, $\ref{lm11}$, $\ref{lm12}$, $\ref{lm17}$, respectively. So, the theorem holds.

\end{proof}

To facilitate understanding that subcases $(a)$–$(f)$ exhaust all possible cases, the logical flow is illustrated in Fig.~$1$. For convenience, let $\mathbf{x}=x_1\,\cdots\,x_n, \mathbf{y}=y_1\,\cdots\,y_n\in \mathbb{F}_2^n$ and let $a,b\in \mathbb{F}_2$. Without loss of generality, we consider the subcases $(a)$-$(f)$ by setting $x_1=1,y_1=0,x_n=a,y_n=\overline{a}$.

\begin{figure}[ht]
\centering
\includegraphics[scale=0.45]{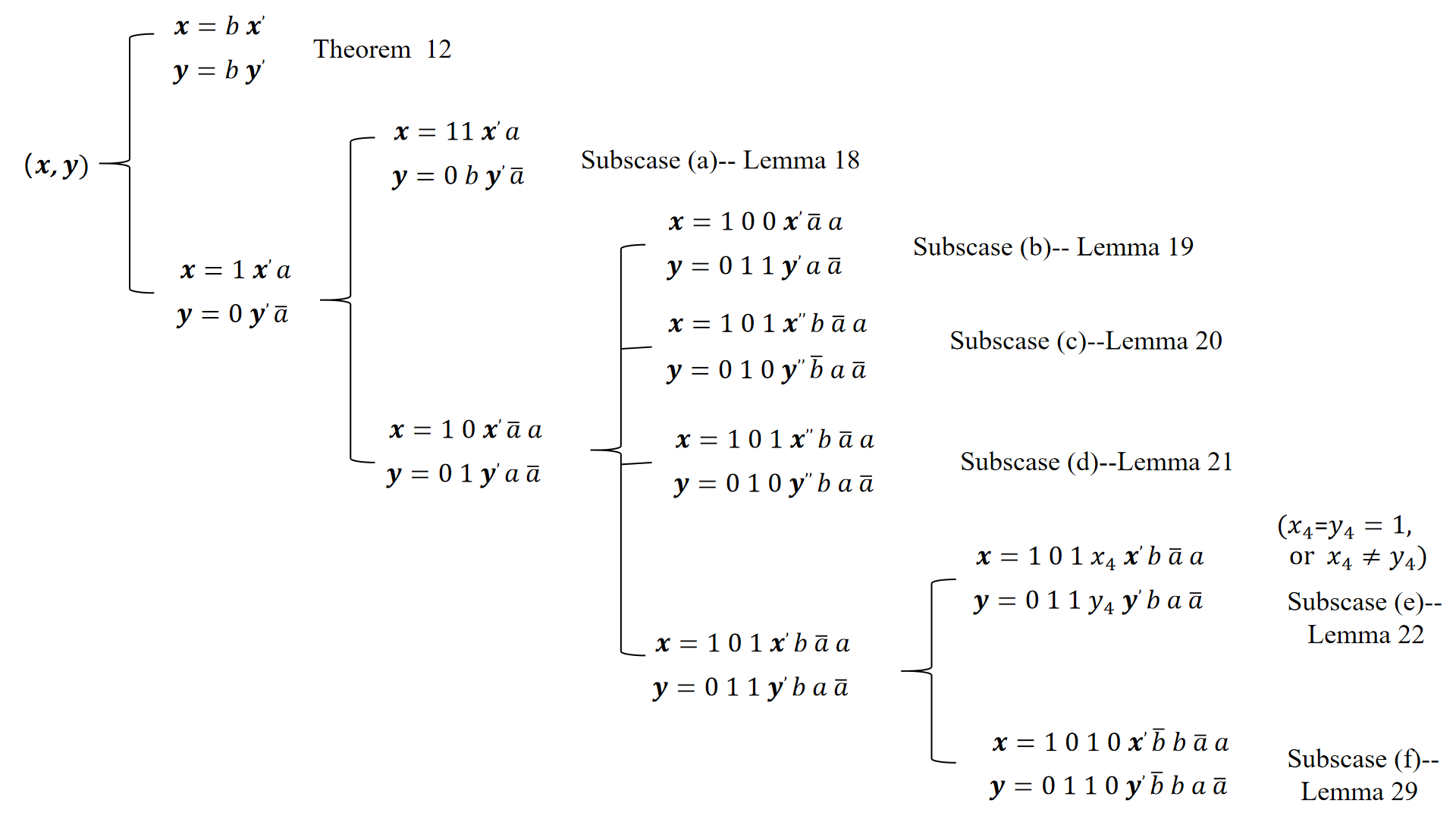}
\caption{Outline of proof of upper bound.}
\end{figure}

Now, we examine the logical flow shown in Fig. $1$. A basic observation is that analyzing the structure of $\mathbf{x}$ and $\mathbf{y}$ from either the left or the right side is equivalent. Here, the cases $x_1 = y_1$ and $x_n = y_n$ can be analyzed using the same approach. Moreover, the analysis of $\mathbf{x}$ and $\mathbf{y}$ is symmetric. For example, the approach used to analyze the case $x_1=x_2$ is exactly the same as that for the case $y_1=y_2$. Therefore, we only discuss some cases of $\mathbf{x,y}$ from the left side in the following lemmas which is also suitable for other cases from the right side. Unless otherwise specified, we assume 
$d_L(\mathbf{x},\mathbf{y})\geq 3$ in the following.

$d_L(\mathbf{x},\mathbf{y})\geq 3$ in the following.

\begin{enumerate}
    \item \textbf{Subcase (a):} For the case where $x_1 \neq y_1$ and $x_n \neq y_n$, Lemma $\ref{lm8}$ demonstrates that if $x_1= x_2$, then $|D_4(\mathbf{x})\cap D_4(\mathbf{y})|\leq 20n-166$. Similarly, the conclusion also holds for $x_{n-1}=x_{n}$, or $y_1=y_2$, or $y_{n-1}=y_n$.
    \item \textbf{Subcase (b):} For the case where $x_i \neq y_i$ (for $i=1, n$), $x_j \neq x_{j+1}$ and $y_j \neq y_{j+1}$ (for $j=1, n-1$), Lemma $\ref{lm9}$ demonstrates that if $x_3\neq  y_3$ and $x_2 = x_3$, then $|D_4(\mathbf{x})\cap D_4(\mathbf{y})|\leq 20n-166$. Similarly, the conclusion also holds for $x_{n-2}\neq y_{n-2}$ and $x_{n-1}=x_{n-2}$ by reversing $\mathbf{x}$ and $\mathbf{y}$. 
    \item \textbf{Subcase (c):} For the case where $x_i \neq y_i$ (for $i=1, n$), $x_j \neq x_{j+1}$ and $y_j \neq y_{j+1}$ (for $j=1, n-1$), Lemma $\ref{lm10}$ states that if $x_3\neq  y_3, x_2\neq x_3$, and $x_{n-2}\neq x_{n-1}, x_{n-2}\neq y_{n-2}$, then $|D_4(\mathbf{x})\cap D_4(\mathbf{y})|\leq 20n-166$, and Lemma $\ref{lm9}$ states that if $x_{n-2}=x_{n-1}, x_{n-2}\neq y_{n-2}$, then $|D_4(\mathbf{x})\cap D_4(\mathbf{y})|\leq 20n-166$. Thus, the conclusion holds for $x_3\neq  y_3, x_2\neq x_3$, and $x_{n-2}\neq y_{n-2}$. Similarly, the conclusion also holds for $x_{n-2}\neq y_{n-2}$, $x_{n-1}\neq x_{n-2}$, and $x_3\neq y_3$ by reversing $\mathbf{x}$ and $\mathbf{y}$.
    \item \textbf{Subcase (d):} For the case where $x_i \neq y_i$ (for $i=1, n$), $x_j \neq x_{j+1}$ and $y_j \neq y_{j+1}$ (for $j=1, n-1$), Lemma $\ref{lm11}$ states that if $x_3\neq  y_3, x_2\neq x_3$, and $x_{n-2}=y_{n-2}$, then $|D_4(\mathbf{x})\cap D_4(\mathbf{y})|\leq 20n-166$. Similarly, the conclusion also holds for $x_{n-2}\neq y_{n-2}$, $x_{n-1}\neq x_{n-2}$, and $x_3=y_3$ by reversing $\mathbf{x}$ and $\mathbf{y}$. In summary, for the case where $x_i \neq y_i$ (for $i=1, n$), $x_j \neq x_{j+1}$ and $y_j \neq y_{j+1}$ (for $j=1, n-1$), Lemmas $\ref{lm9}$, $\ref{lm10}$, $\ref{lm11}$ together demonstrate that if $x_3\neq y_3$ or $x_{n-2}\neq y_{n-2}$ then $|D_4(\mathbf{x})\cap D_4(\mathbf{y})|\leq 20n-166$.
    \item  \textbf{Subcase (e):} For the case where $x_i \neq y_i$ (for $i = 1, n$), $x_j \neq x_{j+1}$ and $y_j \neq y_{j+1}$ (for $j = 1, n-1$), and with $x_l = y_l$ for $l = 3, n-2$, Lemma $\ref{lm12}$ shows that if $x_4\neq y_4$ or $x_4=y_4=x_3$, then $|D_4(\mathbf{x})\cap D_4(\mathbf{y})|\leq 20n-166$. Similarly, the conclusion also holds for $x_{n-3} \neq y_{n-3}$ or $x_{n-3} = y_{n-3} = x_{n-2}$ by reversing $\mathbf{x}$ and $\mathbf{y}$.
    \item  \textbf{Subcase (f):} For the subcase where $x_i \neq y_i$ (for $i = 1, n$), $x_j \neq x_{j+1}$ and $y_j \neq y_{j+1}$ (for $j = 1, 3, n-3, n-1$), and $x_l = y_l$ (for $l = 3, 4, n-3, n-2$),  Lemma $\ref{lm17}$ shows that $|D_4(\mathbf{x})\cap D_4(\mathbf{y})|\leq 20n-166$.
\end{enumerate}

In order to prove Lemmas  $\ref{lm8}$, $\ref{lm9}$, $\ref{lm10}$, $\ref{lm11}$, $\ref{lm12}$, $\ref{lm17}$, we need some lemmas (i.e., Lemmas $\ref{lm3}$-$\ref{lm7}$) as follows.
\begin{lemma}
\label{lm3}
Let $\mathbf{x}\in \mathbb{F}_2^{n_1}$ and $\mathbf{y}\in \mathbb{F}_2^{n_2}$. Then we have the following results:
\begin{enumerate}
    \item If $n_1=n_2+1$ and $\mathbf{y}\notin D_1(\mathbf{x})$, then $|D_2(\mathbf{x})\cap D_1(\mathbf{y})|\leq 3$, and $|D_3(\mathbf{x})\cap D_2(\mathbf{y})|\leq 3n_2-8$ for any $n_2\geq 4$.
    \item If $n_1=n_2+2$ and $\mathbf{y}\notin D_2(\mathbf{x})$, then $|D_3(\mathbf{x})\cap D_1(\mathbf{y})|\leq 4$, and $|D_4(\mathbf{x})\cap D_2(\mathbf{y})|\leq 4n_2-13$ for any $n_2\geq 5$.
\end{enumerate}
\end{lemma}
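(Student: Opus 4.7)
My plan is to prove the four inequalities by combining Proposition~\ref{prp2}-style decompositions with induction on $n_2$, invoking the forbidden-embedding hypotheses $\mathbf{y}\notin D_1(\mathbf{x})$ and $\mathbf{y}\notin D_2(\mathbf{x})$ contrapositively throughout. The two constant bounds in items~1(a) and~2(a) would be established first and then used to supply the boundary contribution in the induction that yields the linear bounds in items~1(b) and~2(b).

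For the constant bounds $|D_2(\mathbf{x})\cap D_1(\mathbf{y})|\leq 3$ and $|D_3(\mathbf{x})\cap D_1(\mathbf{y})|\leq 4$, I would decompose the intersection as $S^0\cup S^1$ by the first symbol of the common subsequence. By Proposition~\ref{prp2}, each piece has the form $a\circ\bigl(D_{t-p_a}(\mathbf{x}^{(a)})\cap D_{s-q_a}(\mathbf{y}^{(a)})\bigr)$, where $p_a$ and $q_a$ are one less than the positions of the first $a$ in $\mathbf{x}$ and $\mathbf{y}$. Because the available deletion budgets are small, only a handful of $(p_a,q_a)$ pairs give nonempty pieces, and each resulting atomic intersection is either a singleton (when a radius becomes $0$), governed by Theorem~\ref{thm1}, or reduces by a direct argument to a shorter instance. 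The forbidden-embedding hypothesis would be invoked contrapositively in precisely the borderline subcase where certain suffixes of $\mathbf{x}$ and $\mathbf{y}$ would otherwise coincide and inflate the count.

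For the linear bounds I would induct on $n_2$, with base cases $n_2=4$ and $n_2=5$ checked by direct enumeration. When $x_1=y_1=a$, the $a$-piece of the first-bit decomposition reduces, after stripping the common leading symbol, to the same problem on the shorter pair $(x_2\cdots x_{n_1},\,y_2\cdots y_{n_2})$; the forbidden-embedding condition is inherited, since otherwise appending the common prefix would produce $\mathbf{y}\in D_1(\mathbf{x})$. The induction hypothesis then gives at most $3(n_2-1)-8$ (respectively $4(n_2-1)-13$) for this piece. The complementary $\bar{a}$-piece decomposes according to $(p_{\bar a},q_{\bar a})$ into at most six atomic subpieces, each bounded by a constant via Theorem~\ref{thm1}, equation~(\ref{eq2}), or the just-proved first bound of the lemma; their sum is at most $3$ (respectively $4$), closing the induction.

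The main obstacle is the case $x_1\neq y_1$, in which no single prefix can be stripped without breaking the length-difference-one constraint, so direct induction is unavailable. Here I would argue without induction: the first-bit decomposition still splits the intersection into exactly two pieces, and a finite case analysis on the first-occurrence positions $(p_0,q_0,p_1,q_1)$ of $0$ and $1$ in each sequence reduces every sub-case to an intersection that is either a singleton, controlled by Theorem~\ref{thm1} or Theorem~\ref{thm2} after verifying (via the contrapositive of the forbidden embedding) that the associated Levenshtein distance is at least $1$, or covered by the already-proved first bound of the lemma. Summing the contributions across all sub-cases then delivers the linear bound $3n_2-8$ (respectively $4n_2-13$); verifying that each of these residual sub-cases closes cleanly is the most laborious part of the proof.
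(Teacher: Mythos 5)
Your proposal is correct and follows essentially the same route as the paper: a first-symbol decomposition via Proposition~\ref{prp2}, the forbidden-embedding hypothesis used contrapositively to kill or shrink the borderline pieces, and Levenshtein's bound $N(m,1,2)=2(m-2)$ for the remaining equal-length piece, with the case $x_1\neq y_1$ handled directly. The only difference is organizational: the paper strips the entire longest common prefix in one step using the decomposition of Lemma~\ref{lm4} (so the zero-budget terms vanish at once), whereas you peel one common symbol at a time by induction on $n_2$; both close with the same arithmetic.
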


The proof of Lemma $\ref{lm3}$ is given in Appendix \ref{APP-C}.

\begin{lemma}[{\cite[Proposition 16]{Chrisnata}}]
\label{lm4}
Let $\mathbf{x}=\mathbf{u}\,\mathbf{v}\,\mathbf{w}$ and $\mathbf{y}=\mathbf{u}\,\mathbf{\widetilde{v}}\,\mathbf{w}$ such that $\mathbf{u},\mathbf{v},\mathbf{w},\mathbf{\widetilde{v}}$ are binary sequences. Then
\begin{equation}
D_t(\mathbf{x})\cap D_s(\mathbf{y})=\bigcup \limits_{p+q\leq \min\{t,s\}} D_p(\mathbf{u})\circ \big(D_{t-p-q}(\mathbf{v})\cap D_{s-p-q}(\mathbf{\widetilde{v}})\big)\circ D_q(\mathbf{w}).\nonumber
\end{equation}
\end{lemma}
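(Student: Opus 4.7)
The plan is to establish the set equality by proving the two inclusions separately. For the $\supseteq$ inclusion, given any $\mathbf{z}=\mathbf{z}_1\circ\mathbf{z}_2\circ\mathbf{z}_3$ in the right-hand side with $\mathbf{z}_1\in D_p(\mathbf{u})$, $\mathbf{z}_2\in D_{t-p-q}(\mathbf{v})\cap D_{s-p-q}(\widetilde{\mathbf{v}})$, and $\mathbf{z}_3\in D_q(\mathbf{w})$, I simply concatenate the block-by-block deletion patterns: $\mathbf{z}$ is obtained from $\mathbf{x}=\mathbf{u}\mathbf{v}\mathbf{w}$ by $p+(t-p-q)+q=t$ deletions and from $\mathbf{y}=\mathbf{u}\widetilde{\mathbf{v}}\mathbf{w}$ by $s$ deletions, so $\mathbf{z}\in D_t(\mathbf{x})\cap D_s(\mathbf{y})$.

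The more delicate direction is $\subseteq$. The central difficulty is that a single $\mathbf{z}\in D_t(\mathbf{x})\cap D_s(\mathbf{y})$ may admit different embeddings into $\mathbf{x}$ and into $\mathbf{y}$, so the two induced decompositions of $\mathbf{z}$ need not agree a priori. My strategy is to exploit the fact that the prefix $\mathbf{u}$ and the suffix $\mathbf{w}$ are shared by $\mathbf{x}$ and $\mathbf{y}$ in order to define a single split of $\mathbf{z}$ intrinsically from $\mathbf{z}$, $\mathbf{u}$, and $\mathbf{w}$ alone. Let $K_1$ denote the length of the longest prefix of $\mathbf{z}$ that is a subsequence of $\mathbf{u}$, and set $\mathbf{z}_1=z_1\cdots z_{K_1}$. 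A leftmost-greedy embedding argument---processing $\mathbf{z}$ character by character and picking at each step the leftmost admissible position of $\mathbf{u}\mathbf{v}\mathbf{w}$---shows that the leftmost embedding of $\mathbf{z}$ into $\mathbf{u}\mathbf{v}\mathbf{w}$ places exactly $K_1$ characters of $\mathbf{z}$ into the $\mathbf{u}$-block, whence the remainder $\mathbf{z}':=z_{K_1+1}\cdots z_m$ is automatically a subsequence of $\mathbf{v}\mathbf{w}$. Since $K_1$ depends only on $\mathbf{z}$ and $\mathbf{u}$, the same analysis applied with $\mathbf{y}$ in place of $\mathbf{x}$ gives the same $K_1$ and yields that $\mathbf{z}'$ is also a subsequence of $\widetilde{\mathbf{v}}\mathbf{w}$.

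I then repeat the trick from the right side. Let $K_3$ be the length of the longest suffix of $\mathbf{z}'$ that is a subsequence of $\mathbf{w}$, set $\mathbf{z}_3=z_{m-K_3+1}\cdots z_m$, and let $\mathbf{z}_2=z_{K_1+1}\cdots z_{m-K_3}$ be the middle block. A symmetric rightmost-greedy embedding argument applied to the embeddings of $\mathbf{z}'$ into $\mathbf{v}\mathbf{w}$ and into $\widetilde{\mathbf{v}}\mathbf{w}$ forces $\mathbf{z}_2$ to be a subsequence of both $\mathbf{v}$ and $\widetilde{\mathbf{v}}$. Taking $p=|\mathbf{u}|-K_1\geq 0$ and $q=|\mathbf{w}|-K_3\geq 0$, the length identity $|\mathbf{x}|-t=|\mathbf{z}|=|\mathbf{y}|-s$ then yields $\mathbf{z}_2\in D_{t-p-q}(\mathbf{v})\cap D_{s-p-q}(\widetilde{\mathbf{v}})$ with $p+q\leq\min\{t,s\}$, placing $\mathbf{z}$ in the right-hand side. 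The main technical obstacle is justifying the two greedy-embedding claims---that the maximum number of characters an embedding of $\mathbf{z}$ can place into the $\mathbf{u}$-block (respectively the $\mathbf{w}$-block) equals the intrinsic quantity $K_1$ (respectively $K_3$) and is simultaneously attained by the leftmost (respectively rightmost) embedding on both sides. Both statements reduce to the lex-extremality of the greedy match together with the standing hypothesis that $\mathbf{z}$ is already a subsequence of $\mathbf{x}$ and of $\mathbf{y}$.
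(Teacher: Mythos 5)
Your proof is correct. Note that the paper does not actually prove this statement: it is imported verbatim as Proposition~16 of the cited reference by Chrisnata, Kiah, and Yaakobi, so there is no in-paper argument to compare against. Your plan is the natural self-contained one. The $\supseteq$ direction is immediate, as you say. For $\subseteq$, the key point you correctly identify is that the split of $\mathbf{z}$ must be defined intrinsically from $\mathbf{z}$, $\mathbf{u}$, $\mathbf{w}$ alone so that it is simultaneously compatible with embeddings into $\mathbf{x}$ and into $\mathbf{y}$; taking $K_1$ to be the longest prefix of $\mathbf{z}$ that is a subsequence of $\mathbf{u}$ and $K_3$ the longest suffix of the remainder that is a subsequence of $\mathbf{w}$ achieves exactly this, and the bookkeeping $p=|\mathbf{u}|-K_1$, $q=|\mathbf{w}|-K_3$ together with $|\mathbf{x}|-t=|\mathbf{z}|=|\mathbf{y}|-s$ gives $\mathbf{z}_2\in D_{t-p-q}(\mathbf{v})\cap D_{s-p-q}(\widetilde{\mathbf{v}})$ and $p+q\leq\min\{t,s\}$ automatically, since $t-p-q=|\mathbf{v}|-|\mathbf{z}_2|\geq 0$ and likewise for $s$. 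The only step you state rather than prove is the greedy-embedding fact that the leftmost embedding of $\mathbf{z}$ into $\mathbf{u}\mathbf{v}\mathbf{w}$ places exactly $K_1$ symbols in the $\mathbf{u}$-block; this follows because the greedy match of $z_1\cdots z_j$ terminates at the minimal index $k_j$ with $z_1\cdots z_j$ a subsequence of the length-$k_j$ prefix, so the count of symbols landing in $\mathbf{u}$ is precisely the largest $j$ with $z_1\cdots z_j$ a subsequence of $\mathbf{u}$. Spelling that out (and its mirror image for $K_3$) would make the argument fully rigorous.
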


\begin{lemma}[{\cite[Lemma 23]{Chrisnata}}]
\label{lm5}
Let $\mathbf{x}=\mathbf{u}\,\mathbf{v}\,\mathbf{w}$ and $\mathbf{y}=\mathbf{u}\,\mathbf{\widetilde{v}}\,\mathbf{w}$ such that $\mathbf{u},\mathbf{v},\mathbf{w},\mathbf{\widetilde{v}}$ are binary sequences. If $|\mathbf{v}|=|\mathbf{\widetilde{v}}|$ and $d_L(\mathbf{v},\mathbf{\widetilde{v}})\geq t$, then
\begin{equation}
D_t(\mathbf{x})\cap D_t(\mathbf{y})=\mathbf{u}\circ \big(D_{t}(\mathbf{v})\cap D_{t}(\mathbf{\widetilde{v}})\big)\circ \mathbf{w}.\nonumber
\end{equation}
\end{lemma}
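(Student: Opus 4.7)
The plan is to derive the lemma as an immediate consequence of Lemma~\ref{lm4}. The reverse inclusion $\mathbf{u}\circ(D_t(\mathbf{v})\cap D_t(\mathbf{\widetilde{v}}))\circ \mathbf{w}\subseteq D_t(\mathbf{x})\cap D_t(\mathbf{y})$ is immediate from the definition of a deletion ball: any $\mathbf{z}$ obtained from both $\mathbf{v}$ and $\mathbf{\widetilde{v}}$ by exactly $t$ deletions yields $\mathbf{u}\circ\mathbf{z}\circ\mathbf{w}$ as a length-$(n-t)$ subsequence of both $\mathbf{x}$ and $\mathbf{y}$ via deletions confined to the middle block. So the content of the lemma lies in the forward inclusion.

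For the forward inclusion, I would apply Lemma~\ref{lm4} with $s=t$ to obtain
\begin{equation*}
D_t(\mathbf{x})\cap D_t(\mathbf{y})=\bigcup_{p+q\leq t} D_p(\mathbf{u})\circ\bigl(D_{t-p-q}(\mathbf{v})\cap D_{t-p-q}(\mathbf{\widetilde{v}})\bigr)\circ D_q(\mathbf{w}).
\end{equation*}
The key observation is that whenever $p+q\geq 1$, we have $t-p-q<t\leq d_L(\mathbf{v},\mathbf{\widetilde{v}})$, so by the very definition of Levenshtein distance the inner intersection $D_{t-p-q}(\mathbf{v})\cap D_{t-p-q}(\mathbf{\widetilde{v}})$ must be empty. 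Hence every summand with $p+q\geq 1$ vanishes, and only the $(p,q)=(0,0)$ contribution survives. Since $D_0(\mathbf{u})=\{\mathbf{u}\}$ and $D_0(\mathbf{w})=\{\mathbf{w}\}$, this residual term is exactly $\mathbf{u}\circ(D_t(\mathbf{v})\cap D_t(\mathbf{\widetilde{v}}))\circ\mathbf{w}$, which gives the claimed equality.

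There is no substantive obstacle in this argument; the proof is essentially a one-line reduction once Lemma~\ref{lm4} is available. The two hypotheses play transparent roles: the equal-length requirement $|\mathbf{v}|=|\mathbf{\widetilde{v}}|$ ensures that $d_L(\mathbf{v},\mathbf{\widetilde{v}})$ is well-defined in the sense used throughout the paper (the deletion-ball intersection at any common radius is non-empty only for equal-length centers), while the distance bound $d_L(\mathbf{v},\mathbf{\widetilde{v}})\geq t$ is precisely what is needed to kill every off-diagonal $(p,q)$ term in the Lemma~\ref{lm4} decomposition. The only mild bookkeeping is to confirm that no issue arises for large $p$ or $q$ (where $D_p(\mathbf{u})$ or $D_q(\mathbf{w})$ may be empty), which only strengthens the conclusion that those summands do not contribute.
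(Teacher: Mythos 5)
Your argument is correct: the paper itself states this lemma as a citation of \cite[Lemma~23]{Chrisnata} and gives no proof, and your derivation from Lemma~\ref{lm4} (specialized to $s=t$, with every $p+q\geq 1$ term killed because $t-p-q<t\leq d_L(\mathbf{v},\mathbf{\widetilde{v}})$ forces $D_{t-p-q}(\mathbf{v})\cap D_{t-p-q}(\mathbf{\widetilde{v}})=\emptyset$ by the definition of Levenshtein distance) is exactly the natural route and matches the source's own reduction of its Lemma~23 to its Proposition~16.
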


\begin{lemma}
\label{lm6}
Let $n\geq 10$ and let $\mathbf{x,y}\in \mathbb{F}_2^n$ with $\mathbf{x}=\mathbf{u}\,\mathbf{v}\,\mathbf{w}$ and  $\mathbf{y}=\mathbf{u}\,\mathbf{\widetilde{v}}\,\mathbf{w}$, $|\mathbf{u}|=s$, $|\mathbf{w}|=t$, $|\mathbf{v}|=|\mathbf{\widetilde{v}}|=l$, $l\geq2$, and $s+t\geq 1$, where $d_L(\mathbf{x},\mathbf{y})\geq 2$, $\mathbf{u}$ and $\mathbf{w}$ are the longest common prefix and suffix of $\mathbf{x}$ and $\mathbf{y}$, respectively. When $s,t\geq 1$, we have \begin{equation}
|D_3(\mathbf{x})\cap D_3(\mathbf{y})|\leq 6n-34.\nonumber
\end{equation}
Specifically, when $s,t\geq 1$, we have the following results:
\begin{enumerate}
    \item If $l=6$ then $|D_3(\mathbf{x})\cap D_3(\mathbf{y})|\leq 6n-34$. Moreover, the equality holds only when $\mathbf{u}$, $\mathbf{w}$ are alternating sequences, $d_L(\mathbf{v},\mathbf{\widetilde{v}})=2$, and $|D_3(\mathbf{v})\cap D_3(\mathbf{\widetilde{v}})|=8$;
    \item If $l=7$ then $|D_3(\mathbf{x})\cap D_3(\mathbf{y})|\leq 6n-35$.  Moreover, the equality holds only when $\mathbf{u}$, $\mathbf{w}$ are alternating sequences, $d_L(\mathbf{v},\mathbf{\widetilde{v}})=2$, and $|D_3(\mathbf{v})\cap D_3(\mathbf{\widetilde{v}})|=13$;
    \item  If $l\geq 8$ then $|D_3(\mathbf{x})\cap D_3(\mathbf{y})|\leq 6n-36$;
    \item  If $l=5$ then $|D_3(\mathbf{x})\cap D_3(\mathbf{y})|\leq 4n-18$;
    \item  If $l=4$ then $|D_3(\mathbf{x})\cap D_3(\mathbf{y})|\leq 4n-16$ for any $n\geq 11$; When $n=10$, we have $|D_3(\mathbf{x})\cap D_3(\mathbf{y})|\leq 22$;
    \item  If $l=3$ then $|D_3(\mathbf{x})\cap D_3(\mathbf{y})|\leq 2n-5$;
    \item  If $l=2$ then $|D_3(\mathbf{x})\cap D_3(\mathbf{y})|\leq n-2$.
\end{enumerate}
Consider $s=0, t\geq 1$ or $s\geq 1, t=0$, we have $|D_3(\mathbf{x})\cap D_3(\mathbf{y})|\leq 6n-31$; if $l=7$ then we have $|D_3(\mathbf{x})\cap D_3(\mathbf{y})|\leq 6n-32$; if $l\geq 8$ then we have $|D_3(\mathbf{x})\cap D_3(\mathbf{y})|\leq 6n-33$. When $s,t\geq 1$, if $\mathbf{u}$ and $\mathbf{w}$ are both not alternating sequences then $|D_3(\mathbf{x})\cap D_3(\mathbf{y})|\leq 6n-40$. Furthermore, when $s=0, t\geq 1$ or $s\geq 1, t=0$, if $\mathbf{u}$ or $\mathbf{w}$ is not an alternating sequence then $|D_3(\mathbf{x})\cap D_3(\mathbf{y})|\leq 6n-37$.
\end{lemma}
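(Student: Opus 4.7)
The plan is to apply Lemma~\ref{lm4} (the Chrisnata decomposition) with $t=s=3$, exploit the constraint $d_L(\mathbf{v},\mathbf{\widetilde{v}})\geq 2$ to reduce the union to only three surviving pieces, and then sharpen the resulting estimate with a case split on $l$ together with an overlap computation. Because $\mathbf{u}$ and $\mathbf{w}$ are common, $d_L(\mathbf{v},\mathbf{\widetilde{v}})=d_L(\mathbf{x},\mathbf{y})\geq 2$, which kills every term in the decomposition whose middle factor has radius $3-p-q\leq 1$. Lemma~\ref{lm4} therefore collapses to
\begin{align*}
A_0 &= \mathbf{u}\circ\bigl(D_3(\mathbf{v})\cap D_3(\mathbf{\widetilde{v}})\bigr)\circ\mathbf{w},\\
A_1 &= D_1(\mathbf{u})\circ\bigl(D_2(\mathbf{v})\cap D_2(\mathbf{\widetilde{v}})\bigr)\circ\mathbf{w},\\
A_2 &= \mathbf{u}\circ\bigl(D_2(\mathbf{v})\cap D_2(\mathbf{\widetilde{v}})\bigr)\circ D_1(\mathbf{w}),
\end{align*}
and the task reduces to bounding $|A_0\cup A_1\cup A_2|$.

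For the ingredients, $|D_1(\mathbf{u})|$ equals the number of runs of $\mathbf{u}$, which is at most $s$ with equality iff $\mathbf{u}$ is alternating; likewise $|D_1(\mathbf{w})|\leq t$. The maximality of the common prefix and suffix forces $v_1\neq\widetilde{v}_1$ and $v_l\neq\widetilde{v}_l$. These endpoint constraints together with $d_L(\mathbf{v},\mathbf{\widetilde{v}})\geq 2$ let us invoke Theorem~\ref{thm3} (case $d=t=2$) to get $|D_2(\mathbf{v})\cap D_2(\mathbf{\widetilde{v}})|\leq\binom{4}{2}=6$ as soon as $l\geq 6$, and Corollary~\ref{cor1} to get $|D_3(\mathbf{v})\cap D_3(\mathbf{\widetilde{v}})|\leq 6l-30$ as soon as $l\geq 8$. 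For $l\in\{2,3,\dots,7\}$ the exact maxima of $|D_2\cap D_2|$ and $|D_3\cap D_3|$ under the endpoint constraints are determined by a short finite enumeration of admissible pairs $(\mathbf{v},\mathbf{\widetilde{v}})$.

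In the principal regime $l\geq 8$, $s,t\geq 1$, and $\mathbf{u},\mathbf{w}$ alternating, the naive union estimate $|A_0|+|A_1|+|A_2|\leq (6l-30)+6s+6t=6n-30$ overshoots the claimed $6n-36$ by exactly $6$. The missing saving comes from $A_1\cap A_2$: any $\mathbf{z}\in A_1\cap A_2$ must have prefix $\mathbf{u}$ and suffix $\mathbf{w}$, and its middle block of length $l-3$ must extend on the left by $u_s$ and on the right by $w_1$ into elements of $D_2(\mathbf{v})\cap D_2(\mathbf{\widetilde{v}})$; enumerating these forced extensions pins the overlap at $6$ and recovers the sharp bound by inclusion-exclusion. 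The same pattern handles $l=7$ (where $|D_3\cap D_3|\leq 13$ yields $6n-35$) and $l=6$ (where $|D_3\cap D_3|\leq 8$ yields $6n-34$); for $l\in\{2,3,4,5\}$ the factors are so constrained that the direct sum already yields $n-2$, $2n-5$, $4n-16$, $4n-18$, respectively. The variants ``$s=0$ or $t=0$'' kill one of $A_1,A_2$ entirely and loosen the bound by $3$, while ``$\mathbf{u}$ or $\mathbf{w}$ not alternating'' uses the tighter $|D_1(\cdot)|\leq s-1$ or $t-1$ to shave a further $6$.

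The main obstacle will be the inclusion-exclusion/overlap computation just described. The naive union bound is off by a constant in every regime with $l\geq 6$, and extracting the precise overlap amounts to characterizing when two different decompositions $\mathbf{u}'\mathbf{v}'\mathbf{w}$ and $\mathbf{u}\mathbf{v}''\mathbf{w}'$ of the same length-$(n-3)$ sequence coincide; this is where care is required to avoid double counting and where the argument is most delicate. This same step delivers the equality characterizations in parts 1 and 2: equality requires all three maxima to be attained simultaneously---which forces $\mathbf{u},\mathbf{w}$ alternating---and the overlap correction to match only in the prescribed configurations, which in turn forces $d_L(\mathbf{v},\mathbf{\widetilde{v}})=2$ with $|D_3(\mathbf{v})\cap D_3(\mathbf{\widetilde{v}})|$ equal to $8$ when $l=6$ and $13$ when $l=7$.
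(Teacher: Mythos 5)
Your setup---the Lemma~\ref{lm4} decomposition, the observation that $d_L(\mathbf{v},\mathbf{\widetilde{v}})\geq 2$ kills every term whose middle radius is at most one, and the use of $N(l,2,2)$, $N(l,2,3)$ via Theorem~\ref{thm3}, Corollary~\ref{cor1} and finite search for small $l$---coincides with the paper's proof. The gap is exactly in the step you flag as delicate: you attribute the saving of $6$ over the naive union bound to the overlap $A_1\cap A_2$ and claim it is ``pinned at $6$.'' That is false. An element of $A_1\cap A_2$ must begin with the full $\mathbf{u}$, end with the full $\mathbf{w}$, and have a middle block $\mathbf{q}$ of length $l-3$ such that both $u_s\circ\mathbf{q}$ and $\mathbf{q}\circ w_1$ lie in $D_2(\mathbf{v})\cap D_2(\mathbf{\widetilde{v}})$; in particular $|A_1\cap A_2|$ is at most the number of elements of $D_2(\mathbf{v})\cap D_2(\mathbf{\widetilde{v}})$ whose first symbol is $u_s$. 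In the extremal configuration $\mathbf{v}=1\,0\,1\,0\,1\,0$, $\mathbf{\widetilde{v}}=0\,1\,1\,0\,0\,1$ one computes $D_2(\mathbf{v})\cap D_2(\mathbf{\widetilde{v}})=\{1001,1101,1100,0101,0100,0110\}$: exactly three of the six elements begin with any given symbol, so $|A_1\cap A_2|\leq 3$ and the inequality $|A_0\cup A_1\cup A_2|\leq |A_0|+|A_1|+|A_2|-|A_1\cap A_2|$ only yields $6n-31$, not the required $6n-34$.

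The saving actually lives in $A_1\cap A_0$ and $A_2\cap A_0$. The paper isolates the single deletion $\mathbf{u}_{[1,s-1]}\in D_1(\mathbf{u})$: every element of $\mathbf{u}_{[1,s-1]}\circ\big(D_2(\mathbf{v})\cap D_2(\mathbf{\widetilde{v}})\big)\circ\mathbf{w}$ whose middle block starts with $u_s$ reparses as $\mathbf{u}\circ(\cdot)\circ\mathbf{w}$ with middle block in $D_3(\mathbf{v})\cap D_3(\mathbf{\widetilde{v}})$, hence is already counted in $A_0$; the remaining middle blocks start with $\overline{u_s}\in\{v_1,\widetilde{v}_1\}$, and Lemma~\ref{lm3}(1) (whose hypothesis $\mathbf{y}\notin D_1(\mathbf{x})$ is supplied by $d_L(\mathbf{v},\mathbf{\widetilde{v}})\geq 2$) bounds their number by $3$. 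Doing the same on the right with $\mathbf{w}_{[2,t]}$ gives $|\mathcal{X}|\leq (s+t-2)\,|D_2(\mathbf{v})\cap D_2(\mathbf{\widetilde{v}})|+|D_3(\mathbf{v})\cap D_3(\mathbf{\widetilde{v}})|+6$, which produces $6n-36$ for $l\geq 8$ and, for $l=4,5$, the bounds $4n-16$ and $4n-18$. Note that your claim that the ``direct sum'' already yields the latter two is also off by $2$ in each case: the direct sums are $4(n-4)+2=4n-14$ and $4(n-5)+4=4n-16$, so even there the $\pm 3$ correction on each side is needed. The skeleton of your argument is right, but the overlap you must control is with $A_0$, not between $A_1$ and $A_2$, and the tool is Lemma~\ref{lm3}.
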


\begin{lemma}
\label{lm7}
Let $n\geq 9$ and let $\mathbf{x}\in \mathbb{F}_2^n,\mathbf{y}\in \mathbb{F}_2^{n+2}$ with $\mathbf{x}=\mathbf{u}\,\mathbf{v}\,\mathbf{w}$, $\mathbf{y}=\mathbf{u}\,\mathbf{\widetilde{v}}\,\mathbf{w}$, $|\mathbf{u}|=s$, $|\mathbf{w}|=t$,
$|\mathbf{v}|=l, |\mathbf{\widetilde{v}}|=l+2$, $l\geq2$, and $s+t\geq 1$, where $\mathbf{x}\notin D_2(\mathbf{y})$, $\mathbf{u}$ and $\mathbf{w}$ are the longest common prefix and suffix of $\mathbf{x}$ and $\mathbf{y}$, respectively. When $s,t\geq 1$, we have
\begin{equation}
|D_2(\mathbf{x})\cap D_4(\mathbf{y})|\leq 4n-14.\nonumber
\end{equation}
Specifically, when $s,t\geq 1$, we have the following results:
\begin{enumerate}
    \item  If $l=4$ then $|D_2(\mathbf{x})\cap D_4(\mathbf{y})|\leq 4n-14$. Moreover, the equality holds only when $l=4$;
    \item  If $l\geq 5$ then $|D_2(\mathbf{x})\cap D_4(\mathbf{y})|\leq 4n-15$;
    \item  If $l=3$ then $|D_2(\mathbf{x})\cap D_4(\mathbf{y})|\leq 3n-7$;
    \item  If $l=2$ then $|D_2(\mathbf{x})\cap D_4(\mathbf{y})|\leq 2n-3$.
\end{enumerate}
Consider $s=0, t\geq 1$ or $s\geq 1, t=0$, if $l\geq 5$ then we have $|D_2(\mathbf{x})\cap D_4(\mathbf{y})|\leq 4n-14$. When $s,t\geq 1$, if  $\mathbf{u}$ and $\mathbf{w}$ are both not alternating sequences then we have that $|D_2(\mathbf{x})\cap D_4(\mathbf{y})|\leq 4n-17$ for any $l\geq 5$, $|D_2(\mathbf{x})\cap D_4(\mathbf{y})|\leq 4n-16$ for $l=4$. Furthermore, when $s=0, t\geq 1$ or $s\geq 1, t=0$, if  $\mathbf{u}$ or $\mathbf{w}$ is not an alternating sequence then $|D_2(\mathbf{x})\cap D_4(\mathbf{y})|\leq 4n-16$ for any $l\geq 5$, $|D_2(\mathbf{x})\cap D_4(\mathbf{y})|\leq 4n-15$ for $l=4$.
\end{lemma}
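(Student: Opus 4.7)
The plan is to apply Lemma~\ref{lm4} with $t=2$ and $s=4$, decomposing $D_2(\mathbf{x}) \cap D_4(\mathbf{y})$ as a union of at most six pieces indexed by the pairs $(p,q)$ with $p+q \leq 2$. Because $\mathbf{u}$ and $\mathbf{w}$ are the longest common prefix and suffix of $\mathbf{x}$ and $\mathbf{y}$, the hypothesis $\mathbf{x} \notin D_2(\mathbf{y})$ forces $\mathbf{v} \notin D_2(\mathbf{\widetilde{v}})$; hence each of the three pairs with $p+q=2$ contributes an empty middle factor $\{\mathbf{v}\} \cap D_2(\mathbf{\widetilde{v}}) = \emptyset$, leaving only $(0,0)$, $(0,1)$ and $(1,0)$ to analyze. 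The three surviving pieces are $\{\mathbf{u}\} \circ (D_2(\mathbf{v}) \cap D_4(\mathbf{\widetilde{v}})) \circ \{\mathbf{w}\}$, $\{\mathbf{u}\} \circ (D_1(\mathbf{v}) \cap D_3(\mathbf{\widetilde{v}})) \circ D_1(\mathbf{w})$, and $D_1(\mathbf{u}) \circ (D_1(\mathbf{v}) \cap D_3(\mathbf{\widetilde{v}})) \circ \{\mathbf{w}\}$, respectively.

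For $l \geq 5$, I would apply Lemma~\ref{lm3} part 2 (with $\mathbf{\widetilde{v}}$ as the longer sequence and $\mathbf{v}$ the shorter, using $\mathbf{v} \notin D_2(\mathbf{\widetilde{v}})$) to bound the middle factors by $|D_2(\mathbf{v}) \cap D_4(\mathbf{\widetilde{v}})| \leq 4l-13$ and $|D_1(\mathbf{v}) \cap D_3(\mathbf{\widetilde{v}})| \leq 4$. Each outer factor $|D_1(\mathbf{u})|$, $|D_1(\mathbf{w})|$ equals the number of runs of the corresponding sequence, hence is at most $s$ (resp.\ $t$) with equality iff the sequence is alternating. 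Summing the three pieces with the trivial union bound yields
\begin{equation*}
|D_2(\mathbf{x}) \cap D_4(\mathbf{y})| \leq (4l-13) + 4\,|D_1(\mathbf{w})| + 4\,|D_1(\mathbf{u})| \leq 4l-13+4t+4s = 4n-13.
\end{equation*}
To tighten this to the claimed $4n-15$ (and hence the overall $4n-14$), I would identify at least two length-$(n-2)$ words that appear in more than one of the three pieces: a word $\mathbf{u}\,\mathbf{z}\,\mathbf{w}$ in the $(0,0)$-piece also sits in the $(0,1)$-piece exactly when $\mathbf{z} \circ w_1 \in D_1(\mathbf{v}) \cap D_3(\mathbf{\widetilde{v}})$, witnessed by the forced choice $\mathbf{w}_i = w_2\cdots w_t \in D_1(\mathbf{w})$, and the symmetric statement (with $u_s \circ \mathbf{z}$) produces an overlap with $(1,0)$. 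When $\mathbf{u}$ or $\mathbf{w}$ fails to be alternating, its run count drops by at least one, lowering the bound by $4$ per non-alternating side and yielding the sharper constants $4n-17$ (both non-alternating) and $4n-16$ (the single-sided variant) stated at the end.

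For $l \in \{2,3,4\}$, Lemma~\ref{lm3} part 2 is either inapplicable or weaker than the direct estimate $|D_2(\mathbf{v}) \cap D_4(\mathbf{\widetilde{v}})| \leq |D_2(\mathbf{v})| \leq D(l,2)$; for $l=4$ this gives $4$, so the naive sum becomes $4+4s+4t = 4n-12$, and the same overlap recovery yields the advertised (and achievable) $4n-14$. For $l=3$ and $l=2$ the middle factors shrink further and the $3n-7$ and $n-2$ bounds drop out after the analogous bookkeeping. The edge case $s=0$ (respectively $t=0$) empties the $(1,0)$- (respectively $(0,1)$-) piece entirely, so the naive sum is $4n-13$ with only two terms; the available overlap now saves just one element, which explains the looser $4n-14$ in that single-sided regime.

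The hard part is exactly this overlap analysis: showing unconditionally that the union of the three surviving sets is strictly smaller than the sum of their sizes by at least $2$ (or $1$ in the single-sided case), under only the hypotheses $\mathbf{v} \notin D_2(\mathbf{\widetilde{v}})$ and that $\mathbf{u}$, $\mathbf{w}$ are maximal common affixes. I expect to do this by a case split on the boundary symbols $u_s, w_1, v_1, v_l, \widetilde{v}_1, \widetilde{v}_{l+2}$ and the inequalities forced between them, verifying in each case that the pair $(\mathbf{z}, \mathbf{z}\circ w_1)$ simultaneously realizes membership in $D_2(\mathbf{v}) \cap D_4(\mathbf{\widetilde{v}})$ and $D_1(\mathbf{v}) \cap D_3(\mathbf{\widetilde{v}})$ enough times to account for the savings. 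Pinning down the exact number of doubly-counted elements in each structural case is where the delicate bookkeeping happens and where the precise constants $4n-14$, $4n-15$, $4n-17$ and $4n-16$ in the statement come from.
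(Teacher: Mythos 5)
Your skeleton coincides with the paper's: the same Lemma~\ref{lm4} decomposition, the same observation that $\mathbf{v}\notin D_2(\mathbf{\widetilde{v}})$ kills the $p+q=2$ terms, and the same naive union bound $(4l-13)+4s+4t=4n-13$ for $l\geq 5$ that must then be improved by $2$. The gap is that this improvement --- which is the entire content of the lemma beyond bookkeeping --- is only announced, not proved. Your plan is to exhibit two words lying in more than one of the three pieces, but that is not unconditionally possible: if $|D_1(\mathbf{v})\cap D_3(\mathbf{\widetilde{v}})|\leq 3$ the pieces may well be pairwise disjoint, and you would instead need to observe that the naive sum is already at most $3(s+t)+(4l-13)=3n+l-13\leq 4n-15$; you never make this case split. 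In the complementary case $|D_1(\mathbf{v})\cap D_3(\mathbf{\widetilde{v}})|=4$, proving that an overlap exists amounts to showing that at least one element of $D_1(\mathbf{v})\cap D_3(\mathbf{\widetilde{v}})$ begins with $u_s$, i.e.\ that at most three begin with $\overline{u_s}$ --- which is exactly the nontrivial fact you still owe, so the ``overlap hunt'' is a restatement of the problem rather than a solution.

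The paper realizes the saving differently and without any existence claim about overlaps. It isolates the single deletion $\mathbf{u}_{[1,s-1]}\in D_1(\mathbf{u})$ and splits its contribution by the symbol following that prefix: words continuing with $u_s$ are literally elements of the $(0,0)$-piece $\mathbf{u}\circ\big(D_2(\mathbf{v})\cap D_4(\mathbf{\widetilde{v}})\big)\circ\mathbf{w}$ and hence cost nothing new, while words continuing with $\overline{u_s}$ form a set of the shape $D_1(v_2\cdots v_l)\cap D_{2-\ell}(\widetilde{v}_{3+\ell}\cdots\widetilde{v}_{l+2})$, which Lemma~\ref{lm3}(1) bounds by $3$ because $\mathbf{v}\notin D_2(\mathbf{\widetilde{v}})$ propagates to these truncations. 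Doing this at both ends replaces two of the ``$4$''s in your sum by ``$3$''s, giving $(s+t-2)\cdot 4+6+(4l-13)=4n-15$ uniformly, with no case distinction on whether $|D_1(\mathbf{v})\cap D_3(\mathbf{\widetilde{v}})|$ equals $4$. You need to supply this argument (or an equivalent one) for the bound to stand; as written, ``the union is smaller than the sum by at least $2$'' is precisely the unproved step. A smaller slip: the $l=2$ target is $2n-3$, not $n-2$ (the latter is the $l=2$ bound of Lemma~\ref{lm6}).
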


The proof of Lemmas $\ref{lm6}$ and $\ref{lm7}$ is given in Appendix \ref{APP-D}.

In the following, we will give the proofs for all subcases $(a)$–$(f)$.
\subsection{The proof of the subcase $(a)$}
Consider the subcase with $x_1 \neq y_1$, $x_n \neq y_n$, and suppose that at least one of the four equalities $x_1=x_2$, $x_{n-1}=x_n$, $y_1=y_2$, or $y_{n-1}=y_n$ holds. 

\begin{lemma}
\label{lm8}
Let $n\geq 13$ and let $\mathbf{x}=x_1\,\cdots\,x_n, \mathbf{y}=y_1\,\cdots\,y_n\in \mathbb{F}_2^n$ with $x_1=x_2, x_1\neq y_1, x_n\neq y_n$. If $d_L(\mathbf{x},\mathbf{y})\geq 3$, then
\begin{equation}
|D_4(\mathbf{x})\cap D_4(\mathbf{y})|\leq 20n-166.\nonumber
\end{equation}
\end{lemma}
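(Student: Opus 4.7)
Normalize so that $x_1=x_2=1$, $y_1=0$, $x_n=a$, and $y_n=\overline{a}$. My strategy is to split $\mathcal{X}=D_4(\mathbf{x})\cap D_4(\mathbf{y})$ according to the leading bit of the common subsequence via Proposition~\ref{prp1}, then use Proposition~\ref{prp2} — together with the fact that $\mathbf{x}$ opens with the double symbol $11$ and $\mathbf{y}$ opens with $0$ — to peel off the forced leftmost matches on each side. The resulting pieces are intersections of deletion balls on shorter sequences whose radii are at most $4$, each of which can be bounded using Lemmas~\ref{lm3}, \ref{lm6}, \ref{lm7} or the global bound of Theorem~\ref{thm3}.

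\textbf{Carrying out the decomposition.} Let $\beta\geq 2$ be maximal with $x_1=\cdots=x_\beta=1$ and $\alpha\geq 1$ be maximal with $y_1=\cdots=y_\alpha=0$. Proposition~\ref{prp2} yields
\begin{align*}
\mathcal{X}^0 &= 0\circ\bigl(D_{4-\beta}(x_{\beta+2}\cdots x_n)\cap D_4(y_2\cdots y_n)\bigr),\\
\mathcal{X}^1 &= 1\circ\bigl(D_4(x_2\cdots x_n)\cap D_{4-\alpha}(y_{\alpha+2}\cdots y_n)\bigr),
\end{align*}
so $\mathcal{X}^0$ is non-empty only for $\beta\in\{2,3,4\}$ and $\mathcal{X}^1$ only for $\alpha\in\{1,2,3,4\}$. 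When $\beta=2$, the intersection for $\mathcal{X}^0$ is a $|D_2\cap D_4|$ on sequences of length-difference $2$, which is precisely the regime of Lemma~\ref{lm7}; after matching variables this yields at most $4(n-3)-14=4n-26$. For $\beta\in\{3,4\}$ the $\mathbf{x}$-side ball has radius at most $1$, so the intersection is bounded crudely by $|D_{4-\beta}(x_{\beta+2}\cdots x_n)|$, which is $O(n)$ or $\leq 1$, respectively. Likewise, when $\alpha=2$ the $\mathbf{y}$-side ball of $\mathcal{X}^1$ has radius $2$ and Lemma~\ref{lm3}(2) applies, while for $\alpha\in\{3,4\}$ a crude bound on $|D_{4-\alpha}(y_{\alpha+2}\cdots y_n)|$ is enough.

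\textbf{The critical case and main obstacle.} The dominant configuration is $\alpha=1$, $\beta=2$ (i.e., $\mathbf{x}$ opens with $110$ and $\mathbf{y}$ opens with $01$). Here $\mathcal{X}^1=1\circ\bigl(D_4(x_2\cdots x_n)\cap D_3(y_3\cdots y_n)\bigr)$ is an asymmetric intersection on sequences whose length-difference is $1$, and it does not fit any of Lemmas~\ref{lm3}, \ref{lm6}, \ref{lm7} directly. I would attack it with a second application of Proposition~\ref{prp1}, splitting on the next leading bit to match each sub-piece to one of the templates $D_3\cap D_3$ (bounded by Lemma~\ref{lm6}), $D_3\cap D_2$ (Lemma~\ref{lm3}(1)), or $D_4\cap D_2$ (Lemma~\ref{lm3}(2)). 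Combining the $4n-26$ from $\mathcal{X}^0$ with the roughly $6n+O(1)$ contributions from the sub-pieces of $\mathcal{X}^1$ must land at exactly $20n-166$, which is the real challenge: it forces me to exploit the \emph{sharpened} versions of Lemmas~\ref{lm6} and \ref{lm7} — namely the improved constants available when the longest common prefix/suffix $\mathbf{u},\mathbf{w}$ of the inner sequences is non-alternating, or when the middle-block length $l$ is small — and to use the hypothesis $d_L(\mathbf{x},\mathbf{y})\geq 3$ both to verify the side conditions of those lemmas (e.g.\ $\mathbf{y}'\notin D_2(\mathbf{x}')$) and to exclude extremal inner configurations that would otherwise block the $-166$ constant.
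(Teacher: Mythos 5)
Your overall strategy (decompose with Propositions~\ref{prp1}--\ref{prp2}, then bound each piece by Lemma~\ref{lm3} or a known value of $N(\cdot,\cdot,\cdot)$) is the right flavor, but there is a concrete gap in your handling of the dominant case. By conditioning only on the leading symbol, you are left with $\mathcal{X}^1=1\circ\bigl(D_4(x_2\cdots x_n)\cap D_3(y_3\cdots y_n)\bigr)$, a $D_4\cap D_3$ intersection on sequences of length difference $1$, and you propose to resolve it with one more split on the next leading bit. That does not work: in the sub-piece where the exposed symbols of both remainders agree (e.g.\ $y_3=1=x_2$), you obtain $D_4(x_3\cdots x_n)\cap D_3(y_4\cdots y_n)$, which is again a $D_4\cap D_3$ intersection with length difference $1$ — the same non-template shape — and this can recur at every level. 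So a single further split does not reduce everything to the templates $D_3\cap D_3$, $D_3\cap D_2$, $D_4\cap D_2$. The paper sidesteps this entirely by conditioning \emph{simultaneously} on the first and last symbols of the output, writing $|\mathcal{X}|=|\mathcal{X}^0_a|+|\mathcal{X}^0_{\overline a}|+|\mathcal{X}^1_a|+|\mathcal{X}^1_{\overline a}|$: since $x_n\neq y_n$, fixing the last output symbol forces at least one extra deletion on one side, so every piece immediately becomes one of $D_4\cap D_2$ (length difference $2$, Lemma~\ref{lm3}), $D_3\cap D_2$ (length difference $1$, Lemma~\ref{lm3}), $D_2\cap D_2$ (Theorem~\ref{thm1}), $D_3\cap D_3$ on equal lengths (Corollary~\ref{cor1}), or something of size $O(n)$ trivially. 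You need this two-sided split (or an induction) to close the argument.

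Two further corrections. First, your appeal to Lemma~\ref{lm7} for $\mathcal{X}^0$ is not justified: that lemma presupposes a decomposition $\mathbf{u}\,\mathbf{v}\,\mathbf{w}$ / $\mathbf{u}\,\widetilde{\mathbf{v}}\,\mathbf{w}$ with $s+t\geq 1$, and its headline bound $4n-14$ requires $s,t\geq 1$; here $x_n\neq y_n$ forces $t=0$ (and possibly $s=0$), so the hypotheses fail. The correct tool is Lemma~\ref{lm3}(2), which needs only $x_4\cdots x_n\notin D_2(y_2\cdots y_n)$ (guaranteed by $d_L(\mathbf{x},\mathbf{y})\geq 3$) and yields $4(n-3)-13=4n-25$. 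Second, your ``main obstacle'' is a misdiagnosis: subcase~(a) is not tight. The paper's four bounds sum to $14n-102\leq 20n-166$ for $n\geq 13$, with comfortable slack, so no sharpened versions of Lemmas~\ref{lm6}--\ref{lm7} and no exclusion of extremal inner configurations are needed here; indeed the paper's proof of this lemma never invokes Lemmas~\ref{lm6} or \ref{lm7} at all.
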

\begin{proof}
For convenience, we let $x_1=1, x_n=a,y_n=\overline{a}$ with $a\in \mathbb{F}_2$. Then we have $\mathbf{x}=1\,1\,x_3\,\cdots\,x_{n-1}\,a$ and  $\mathbf{y}=0\,y_2\,y_3\,\cdots\,y_{n-1}\,\overline{a}$. We denote $\mathcal{X}=D_4(\mathbf{x})\cap D_4(\mathbf{y})$. By Proposition $\ref{prp1}$, we have $|\mathcal{X}|=|\mathcal{X}^{0}_{a}|+|\mathcal{X}^{0}_{\overline{a}}|+|\mathcal{X}^{1}_{a}|+|\mathcal{X}^1_{\overline{a}}|$.

First, we compute the value of $|\mathcal{X}^{0}_{a}|$. By Proposition $\ref{prp2}$, we have
$$\mathcal{X}^{0}_{a}=0\circ \big(D_{2-\ell}(x_{4+\ell}\,\cdots\,x_{n-1})\cap D_{3-\ell^*}(y_2\,\cdots\,y_{n-2-\ell^*})\big) \circ a,$$
where $\ell,\ell^*\geq 0$ are integers such that $x_{3+l}$ is the leftmost element $0$ of $\mathbf{x}$ and $y_{n-1-\ell^*}$ is the rightmost element $a$ of $\mathbf{y}$. Specifically, we discuss the value of $|\mathcal{X}^{0}_{a}|$ based on the different values of $\ell$ and $\ell^*$.

\begin{itemize}
\item When $\ell=0,\ell^*=0$, then $\mathbf{x}=1\,1\,0\,x_4\,\cdots\,x_{n-1}\,a$ and $\mathbf{y}=0\,y_2\,\cdots\,y_{n-2}\,a\,\overline{a}$. If $x_4\,\cdots\,x_{n-1}\in D_1(y_2\,\cdots\,y_{n-2})$ then $d_L(\mathbf{x},\mathbf{y})\leq 2$. Thus, $x_4\,\cdots\,x_{n-1}\notin D_1(y_2\,\cdots\,y_{n-2})$. By Lemma $\ref{lm3}$, it follows that $|D_2(x_4\,\cdots\,x_{n-1})\cap D_3(y_2\,\cdots\,y_{n-2})|\leq 3(n-4)-8=3n-20.$
\item When $\ell=0,\ell^*=1$, we easily obtain $x_4\,\cdots\,x_{n-1}\neq y_2\,\cdots\,y_{n-3}$. By Theorem $\ref{thm1}$, we have $|D_2(x_4\,\cdots\,x_{n-1})\cap D_2(y_2\,\cdots\,y_{n-3})|\leq N(n-4,1,2)=2(n-6)=2n-12.$
\item When $\ell\geq 1$ or $\ell^*\geq 2$ then $2-\ell\leq 1$ or $3-\ell^*\leq 1$. Thus,  $|\mathcal{X}^{0}_{a}|\leq \min\{|D_1(x_5\,\cdots\,x_{n-1})|,|D_1(y_2\,\cdots\,y_{n-4})|\}\leq (n-5).$
\end{itemize}
Therefore, for any $n\geq 13$, by the above discussion it follows that
\begin{equation}
|\mathcal{X}^{0}_{a}|\leq 3n-20.\nonumber
\end{equation}

Second, we estimate the value of $|\mathcal{X}^{0}_{\overline{a}}|$. By Proposition $\ref{prp2}$, we have
$$|\mathcal{X}^{0}_{\overline{a}}|=|0\circ \big(D_{1-\ell-\ell^*}(x_{4+\ell}\,\cdots\,x_{n-2-\ell^*})\cap D_{4}(y_2\,\cdots\,y_{n-1})\big)\circ \overline{a}|\leq |D_1(x_{4}\,\cdots\,x_{n-2})|\leq (n-5),$$
where $\ell,\ell^*\geq 0$ are integers such that $x_{3+l}$ is the leftmost element $0$ of $\mathbf{x}$ and $x_{n-1-\ell^*}$ is the rightmost element $\overline{a}$ of $\mathbf{x}$.

Next, we compute the value of $|\mathcal{X}^{1}_{a}|$. By Proposition $\ref{prp2}$, we have
$$\mathcal{X}^{1}_{a}=1\circ \big(D_{4}(1\,x_{3}\,\cdots\,x_{n-1})\cap D_{2-\ell-\ell^*}(y_{3+\ell}\,\cdots\,y_{n-2-\ell^*})\big) \circ a,$$
where $\ell,\ell^*\geq 0$ are integers such that $y_{2+l}$ is the leftmost element $1$ of $\mathbf{y}$ and $y_{n-1-\ell^*}$ is the rightmost element $a$ of $\mathbf{y}$. Specifically, we discuss the value of $|\mathcal{X}^{1}_{a}|$ based on the different values of $\ell$ and $\ell^*$.
\begin{itemize}
\item When $\ell=0,\ell^*=0$ then $\mathbf{x}=1\,1\,x_3\,\cdots\,x_{n-1}\,a$ and $\mathbf{y}=0\,1\,y_3\,\cdots\,y_{n-2}\,a\,\overline{a}$. If $y_3\,\cdots\,y_{n-2}\in D_2(1\,x_3\,\cdots\,x_{n-1})$ then $d_L(\mathbf{x},\mathbf{y})\leq 2$. Thus, $y_3\,\cdots\,y_{n-2}\notin D_2(1\,x_3\,\cdots\,x_{n-1})$. By Lemma $\ref{lm3}$, it follows that $|D_{4}(1\,x_{3}\,\cdots\,x_{n-1})\cap D_{2}(y_{3}\,\cdots\,y_{n-2})|\leq 4(n-4)-13=4n-29.$
\item When $\ell+\ell^*\geq 1$, then we have $|\mathcal{X}^{1}_{a}|\leq |D_{2-\ell-\ell^*}(y_{3+\ell}\,\cdots\,y_{n-2-\ell^*})|\leq  (n-5).$
\end{itemize}
Therefore, for any $n\geq 13$, we have
\begin{equation}
|\mathcal{X}^{1}_{a}|\leq 4n-29.\nonumber
\end{equation}

Finally, we discuss the value of $|\mathcal{X}^{1}_{\overline{a}}|$. By Proposition $\ref{prp2}$, we have
$$\mathcal{X}^{1}_{\overline{a}}=1\circ \big(D_{3-\ell}(1\,x_{3}\,\cdots\,x_{n-2-\ell})\cap D_{3-\ell^*}(y_{3+\ell^*}\,\cdots\,y_{n-1})\big) \circ \overline{a},$$
where $\ell,\ell^*\geq 0$ are integers such that $y_{2+l^*}$ is the leftmost element $1$ of $\mathbf{y}$ and $x_{n-1-\ell}$ is the rightmost element $\overline{a}$ of $\mathbf{x}$. Specifically, we discuss the value of $|\mathcal{X}^{1}_{\overline{a}}|$ based on the different values of $\ell$ and $\ell^*$.
\begin{itemize}
\item When $\ell=0,\ell^*=0$ then $\mathbf{x}=1\,1\,x_3\,\cdots\,x_{n-2}\,\overline{a}\,a$ and $\mathbf{y}=0\,1\,y_3\,\cdots\,y_{n-1}\,\overline{a}$. If $D_1(1\,x_3\,\cdots\,x_{n-2})\cap D_1(y_3\,\cdots\,y_{n-1})\neq \emptyset$ then $d_L(\mathbf{x},\mathbf{y})\leq 2$. Thus, $d_L(1\,x_3\,\cdots\,x_{n-2}, y_3\,\cdots\,y_{n-1})\geq 2$. By Corollary $\ref{cor1}$, since $n-3\geq 8$ it follows that $|D_{3}(1\,x_{3}\,\cdots\,x_{n-2})\cap D_{3}(y_{3}\,\cdots\,y_{n-1})|\leq N(n-3,2,3)=6(n-3)-30=6n-48$.
\item When $\ell=0,\ell^*=1$ or $\ell=1,\ell^*=0$, by Lemma $\ref{lm3}$ we have $|\mathcal{X}^{1}_{\overline{a}}|\leq 3n-20.$
\item When $\ell=1,\ell^*=1$, by Theorem $\ref{thm1}$ we have $|\mathcal{X}^{1}_{\overline{a}}|=|D_{2}(1\,x_{3}\,\cdots\,x_{n-3})\cap D_{2}(y_{4}\,\cdots\,y_{n-1})|\leq N(n-4,1,2)=2n-12.$
\item When $\ell\geq 2$, or $\ell^*\geq 2$, it follows that $|\mathcal{X}^{1}_{\overline{a}}|\leq (n-5).$
\end{itemize}
Therefore, for any $n\geq 13$, it follows that
\begin{equation}
|\mathcal{X}^{1}_{\overline{a}}|\leq 6n-48.\nonumber
\end{equation}

Based on the above discussion, we have
\begin{align*}
|\mathcal{X}|&=|\mathcal{X}^{0}_{a}|+|\mathcal{X}^{0}_{\overline{a}}|+|\mathcal{X}^{1}_{a}|+|\mathcal{X}^1_{\overline{a}}|\leq (3n-20)+(n-5)+(4n-29)+(6n-48)=14n-102\leq 20n-166,
\end{align*}
for any $n\geq 13$. Thus, the lemma follows.
\end{proof}

\subsection{The proof of the subcase $(b)$}
Consider the subcase where $x_i \neq y_i$ (for $i=1, n$), $x_j \neq x_{j+1}$ and $y_j \neq y_{j+1}$ (for $j = 1, n-1$), and with either ($x_3 \neq y_3~ \text{and}~ x_2=x_3$) or ($x_{n-2} \neq y_{n-2}~ \text{ and }~ x_{n-1} = x_{n-2}$). 

\begin{lemma}
\label{lm9}
Let $n\geq 13$ and let $\mathbf{x}=x_1\,\cdots\,x_n, \mathbf{y}=y_1\,\cdots\,y_n\in \mathbb{F}_2^n$ with $x_1\neq y_1, x_n\neq y_n, x_1\neq x_2, y_1\neq y_2, x_{n-1}\neq x_n, 
y_{n-1}\neq y_n$, and $x_3\neq y_3$, $x_2=x_3$. If $d_L(\mathbf{x},\mathbf{y})\geq 3$, then
\begin{equation}
|D_4(\mathbf{x})\cap D_4(\mathbf{y})|\leq 20n-166.\nonumber
\end{equation}
\end{lemma}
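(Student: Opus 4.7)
The plan is to replicate the decomposition-and-casework strategy used in Lemma~\ref{lm8} while exploiting the extra bit rigidity forced by the present hypotheses. Normalize $x_1=1,\,y_1=0,\,x_n=a,\,y_n=\bar a$ with $a\in\mathbb{F}_2$; the assumptions of Lemma~\ref{lm9} then pin down $x_2=x_3=0$, $y_2=y_3=1$, $x_{n-1}=\bar a$, and $y_{n-1}=a$. Apply Proposition~\ref{prp1} with $m_1=m_2=1$ to decompose $|\mathcal{X}|=|\mathcal{X}^{0}_{a}|+|\mathcal{X}^{0}_{\bar a}|+|\mathcal{X}^{1}_{a}|+|\mathcal{X}^{1}_{\bar a}|$; by Proposition~\ref{prp2}, the two same-length classes $\mathcal{X}^{0}_{a},\mathcal{X}^{1}_{\bar a}$ become $D_3\cap D_3$ on length-$(n-3)$ interiors, while the two cross classes $\mathcal{X}^{0}_{\bar a},\mathcal{X}^{1}_{a}$ become $D_2\cap D_4$ on interiors of lengths $(n-4,n-2)$.

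A standard lift argument then transfers $d_L(\mathbf{x},\mathbf{y})\geq 3$ inward: reattaching the fixed outer bit on each side sends any $\mathbf{z}$ in an interior intersection to a witness in $D_{k+1}(\mathbf{x})\cap D_{k+1}(\mathbf{y})$, so each same-length interior pair has Levenshtein distance at least $2$, and each cross pair satisfies the non-containment hypothesis required by Lemma~\ref{lm3}. The baseline bounds are then Corollary~\ref{cor1} on the same-length pieces, giving $\leq N(n-3,2,3)=6n-48$, and part~(2) of Lemma~\ref{lm3} on the cross pieces, giving $\leq 4(n-4)-13=4n-29$. These sum to $20n-154$, which overshoots the target $20n-166$ by $12$.

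To close the gap I would case-split on $(x_{n-2},y_{n-2})\in\{a,\bar a\}^2$. Whenever a matching tail bit produces a nontrivial common suffix on the corresponding interior pair, Lemma~\ref{lm6} sharpens $6n-48$ to $6n-51$ and Lemma~\ref{lm7} sharpens $4n-29$ to $4n-30$, and in the best branch this already recovers $8$ of the $12$ units. The hard branch is the one in which neither tail bit helps: there I would exploit the forced front pattern $y_1y_2y_3=0\,1\,1$ against $x_1x_2x_3=1\,0\,0$ and refine via Proposition~\ref{prp1} with $m_1=2$, splitting each $\mathcal{X}^{0}_{\star}$ into $00$- and $01$-prefixed subclasses; the $00$-prefix is realized by $x_2x_3$ in $\mathbf{x}$ at no extra cost, but forces $\mathbf{y}$ to step past $y_2y_3=11$ at a cost of two additional deletions, so that subclass shrinks substantially. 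A symmetric $m_2=2$ refinement at the tail, together with further applications of Lemma~\ref{lm6} and Lemma~\ref{lm7} whenever longer common suffixes appear (invoking their non-alternating clauses when $\mathbf{w}$ grows past length one without alternation), supplies the remaining units. The principal technical obstacle is the consistent bookkeeping across all $(x_{n-2},y_{n-2})$-branches and their secondary subdivisions, since each branch yields only a few units of savings and these must line up exactly to produce $20n-166$ in every case.
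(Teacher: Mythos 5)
Your proposal contains the paper's key mechanism, but only as a fallback, and neither of your two routes closes as written. The paper's actual proof is much more direct: after normalizing $\mathbf{x}=1\,0\,0\,x_4\cdots x_{n-2}\,\overline{a}\,a$ and $\mathbf{y}=0\,1\,1\,y_4\cdots y_{n-2}\,a\,\overline{a}$, it immediately refines \emph{all four} one-symbol classes by a second prefix symbol. Because $x_2=x_3=0$ forces $y_2=y_3=1$, every two-symbol prefix ($00$, $01$, $10$, $11$) is cheap for exactly one of the two sequences and costs the other at least two deletions, so the four classes are bounded by $4n-29$, $3n-20$, $3n-20$, $4n-29$, summing to $14n-98\leq 20n-166$ for $n\geq 13$. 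No case analysis on $(x_{n-2},y_{n-2})$ and no appeal to Lemmas~\ref{lm6} or~\ref{lm7} is needed.

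The concrete gaps in your plan are these. First, your primary route (tail case-split plus Lemmas~\ref{lm6}/\ref{lm7}) cannot deliver the claimed sharpenings in general: those lemmas require a nonempty common prefix or suffix ($s+t\geq 1$), and here the interior pairs \emph{never} share a first symbol (e.g.\ for $\mathcal{X}^{0}_{a}$ the interiors begin with $x_3=0$ versus $y_2=1$), so everything hinges on the last symbols matching; in the branch where they do not, $s=t=0$ and the lemmas are inapplicable, leaving you with the baseline $6n-48$ and $4n-29$. Second, your hard-branch fallback applies the $m_1=2$ refinement only to the $\mathcal{X}^{0}_{\star}$ classes. Keeping the baselines $4n-29$ and $6n-48$ for $\mathcal{X}^{1}_{a}$ and $\mathcal{X}^{1}_{\overline{a}}$ gives a total of $(4n-29)+(3n-20)+(4n-29)+(6n-48)=17n-126$, which at $n=13$ equals $95>94=20n-166$, so the bound fails by one exactly at the base of the induction in Theorem~\ref{thm5}. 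The ``symmetric $m_2=2$ refinement at the tail'' cannot recover this, because the hypotheses give $x_{n-1}\neq x_n$ and $y_{n-1}\neq y_n$ but place no constraint on $x_{n-2}$: there is no repeated symbol at the tail, hence no forced two-deletion penalty analogous to the one created by $x_2=x_3$ at the head. The missing step is simply to run the same head refinement on $\mathcal{X}^{1}_{a}$ and $\mathcal{X}^{1}_{\overline{a}}$ (prefixes $10$ and $11$), where the identical asymmetry appears; with that, the whole tail case-split becomes unnecessary.
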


\begin{proof}
Without loss of generality, we let $x_1=1$ and $x_n=a$ with $a\in \mathbb{F}_2$. Then, $\mathbf{x}=1\,0\,0\,x_4\,\cdots\,x_{n-2}\,\overline{a}\,a$, and $\mathbf{y}=0\,1\,1\,y_4\,\cdots\,y_{n-2}\,a\,\overline{a}$. We denote $\mathcal{X}=D_4(\mathbf{x})\cap D_4(\mathbf{y})$. Then, by Proposition $\ref{prp1}$, we have $|\mathcal{X}|=|\mathcal{X}^{0}_{a}|+|\mathcal{X}^{0}_{\overline{a}}|+|\mathcal{X}^{1}_{a}|+|\mathcal{X}^1_{\overline{a}}|$. In the following, we compute the values of $|\mathcal{X}^{i}_{j}|$ for all $i=0,1$ and $j=a,\overline{a}$, respectively.

First, we consider the value of $|\mathcal{X}^{0}_{a}|$. By Propositions $\ref{prp1}$ and $\ref{prp2}$, we have $|\mathcal{X}^{0}_{a}|=|\mathcal{X}^{01}_{a}|+|\mathcal{X}^{00}_{a}|$, $\mathcal{X}^{01}_{a}=0\,1\circ \big(D_{2-\ell}(x_{5+l}\,\cdots\,x_{n-2}\,\overline{a})\cap D_{3}(1\,\cdots\,y_{n-2})\big)\circ a,$ and $\mathcal{X}^{00}_{a}=0\,0\circ \big(D_{3}(x_{4}\,\cdots\,x_{n-2}\,\overline{a})\cap D_{1-\ell^*}(y_{5+\ell^*}\,\cdots\,y_{n-2})\big)\circ a,$ where $\ell,\ell^*$ are integers, $0\leq \ell\leq 2$, and $0\leq \ell^*\leq 1$.

If $\ell=0$ by Lemma $\ref{lm3}$ then $|\mathcal{X}^{01}_{a}|\leq 3n-23$ because of $(x_{5}\,\cdots\,x_{n-2}\,\overline{a})\notin D_1(1\,\cdots\,y_{n-2})$; if $\ell=1$ then $|\mathcal{X}^{01}_{a}|\leq |D_1(x_{6}\,\cdots\,x_{n-2}\,\overline{a})|\leq n-6$; if $\ell=2$ then $|\mathcal{X}^{01}_{a}|\leq 1$. Thus, for any $n\geq 13$ we have $|\mathcal{X}^{01}_{a}|\leq 3n-23$. Similarly, for any $n\geq 13$ we have $|\mathcal{X}^{00}_{a}|\leq n-6$. Therefore, for any $n\geq 13$, it follows that
\begin{equation}
|\mathcal{X}^{0}_{a}|\leq 4n-29.\nonumber
\end{equation}

Next, we compute the value of $|\mathcal{X}^{0}_{\overline{a}}|$. By Propositions $\ref{prp1}$ and $\ref{prp2}$, we have
 $|\mathcal{X}^{0}_{\overline{a}}|=|\mathcal{X}^{01}_{\overline{a}}|+|\mathcal{X}^{00}_{\overline{a}}|$,
$\mathcal{X}^{01}_{\overline{a}}=0\,1\circ \big(D_{1-\ell}(x_{5+l}\,\cdots\,x_{n-2})\cap D_{4}(1\,\cdots\,y_{n-2}\,a)\big) \circ \overline{a},$ and
$\mathcal{X}^{00}_{\overline{a}}=0\,0\circ \big(D_{2}(x_{4}\,\cdots\,x_{n-2})\cap D_{2-\ell^*}(y_{5+\ell^*}\,\cdots\,y_{n-2}\,a)\big) \circ \overline{a},$
where $\ell,\ell^*$ are integers, $0\leq \ell\leq 1$, and $0\leq \ell^*\leq 2$.

If $\ell=0$ then $|\mathcal{X}^{01}_{\overline{a}}|\leq n-6$;  if $\ell=1$ then $|\mathcal{X}^{01}_{\overline{a}}|\leq 1$. Thus, for any $n\geq 13$ we have $|\mathcal{X}^{01}_{\overline{a}}|\leq n-6$. By Theorem $\ref{thm1}$, if $\ell^*=0$ then we have $|\mathcal{X}^{00}_{\overline{a}}|=|D_{2}(x_{4}\,\cdots\,x_{n-2})\cap D_{2}(y_{5}\,\cdots\,y_{n-2}\,a)|\leq N(n-5,1,2)=2(n-7)=2n-14$. Therefore, for any $n\geq 13$, it follows that
\begin{equation}
|\mathcal{X}^{0}_{\overline{a}}|\leq 3n-20.\nonumber
\end{equation}

Moreover, we discuss the value of $|\mathcal{X}^{1}_{a}|$. By Propositions $\ref{prp1}$ and $\ref{prp2}$, we have $|\mathcal{X}^{1}_{a}|=|\mathcal{X}^{11}_{a}|+|\mathcal{X}^{10}_{a}|$.
$\mathcal{X}^{11}_{a}=1\,1\circ \big(D_{2-\ell}(x_{5+\ell}\,\cdots\,x_{n-2}\,\overline{a})\cap D_{2}(1\,y_4\,\cdots\,y_{n-2})\big)\circ a,$ and
$\mathcal{X}^{10}_{a}=1\,0\circ \big(D_{4}(0\,x_{4}\,\cdots\,x_{n-2}\,\overline{a})\cap D_{1-\ell^*}(y_{5+\ell^*}\,\cdots\,y_{n-2})\big)\circ a,$
where $\ell,\ell^*$ are integers, $0\leq \ell\leq 2$, and $0\leq \ell^*\leq 1$. Similarly, by using the method of computing the value of $|\mathcal{X}^{0}_{\overline{a}}|$, we have
\begin{equation}
|\mathcal{X}^{1}_{a}|\leq 3n-20.\nonumber
\end{equation}

Finally, Consider the value of $|\mathcal{X}^{1}_{\overline{a}}|$. By Propositions $\ref{prp1}$ and $\ref{prp2}$, we have
 $|\mathcal{X}^{1}_{\overline{a}}|=|\mathcal{X}^{11}_{\overline{a}}|+|\mathcal{X}^{10}_{\overline{a}}|$,
$\mathcal{X}^{11}_{\overline{a}}=1\,1\circ \big(D_{1-\ell}(x_{5+\ell}\,\cdots\,x_{n-2})\cap D_{3}(y_4\,\cdots\,y_{n-2}\,a)\big)\circ \overline{a},$ and
$\mathcal{X}^{10}_{\overline{a}}=1\,0\circ \big(D_{3}(0\,x_{4}\,\cdots\,x_{n-2})\cap D_{2-\ell^*}(y_{5+\ell^*}\,\cdots\,y_{n-2}\,a)\big)\circ \overline{a},$
where $\ell,\ell^*$ are integers, $0\leq \ell\leq 1$, and $0\leq \ell^*\leq 2$. Similarly, by using the method of computing the value of $|\mathcal{X}^{0}_{a}|$, we can obtain
\begin{equation}
|\mathcal{X}^{1}_{\overline{a}}|\leq 4n-29.\nonumber
\end{equation}

By the above discussion, we have
\begin{align*}
|\mathcal{X}|=|\mathcal{X}^{0}_{a}|+|\mathcal{X}^{0}_{\overline{a}}|+|\mathcal{X}^{1}_{a}|+|\mathcal{X}^1_{\overline{a}}|\leq 2(3n-20)+2(4n-29)=14n-98\leq 20n-166,
\end{align*}
for any $n\geq 13$. So, the lemma follows.
\end{proof}

\subsection{The proof of the subcase $(c)$}
Consider the subcase with $x_i \neq y_i$ (for $i = 1, n$), $x_j \neq x_{j+1}$ and $y_j \neq y_{j+1}$ (for $j=1, n-1$), and with either $(x_3 \neq y_3, x_2 \neq x_3, \text{ and } x_{n-2} \neq y_{n-2}$) or ($x_{n-2} \neq y_{n-2}, x_{n-1} \neq x_{n-2}, \text{ and } x_3 \neq y_3$). 

 

\begin{lemma}
\label{lm10}
Let $n\geq 13$ and let $\mathbf{x}=x_1\,\cdots\,x_n, \mathbf{y}=y_1\,\cdots\,y_n\in \mathbb{F}_2^n$ with $x_1\neq y_1, x_n\neq y_n, x_1\neq x_2, x_{n-1}\neq x_n, y_1\neq y_2, y_{n-1}\neq y_n$, and $x_3\neq y_3$, $x_2\neq x_3$, $x_{n-2}\neq y_{n-2}$, $x_{n-2}\neq x_{n-1}$. If $d_L(\mathbf{x},\mathbf{y})\geq 3$, then
\begin{equation}
|D_4(\mathbf{x})\cap D_4(\mathbf{y})|\leq 20n-166.\nonumber
\end{equation}
\end{lemma}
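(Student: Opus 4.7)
The plan is to follow the decomposition framework used in the proof of Lemma~\ref{lm9}. After normalizing $x_1=1$ and $x_n=a$, the hypotheses determine
\begin{equation*}
\mathbf{x}=1\,0\,1\,x_4\,\cdots\,x_{n-3}\,a\,\overline{a}\,a,\qquad
\mathbf{y}=0\,1\,0\,y_4\,\cdots\,y_{n-3}\,\overline{a}\,a\,\overline{a},
\end{equation*}
so that each of $\mathbf{x},\mathbf{y}$ begins and ends with a length-three alternating block. Setting $\mathcal{X}=D_4(\mathbf{x})\cap D_4(\mathbf{y})$, I would first apply Proposition~\ref{prp1} to write $|\mathcal{X}|=|\mathcal{X}^{0}_{a}|+|\mathcal{X}^{0}_{\overline{a}}|+|\mathcal{X}^{1}_{a}|+|\mathcal{X}^{1}_{\overline{a}}|$, and then use Proposition~\ref{prp2} to rewrite the two \emph{balanced} pieces $\mathcal{X}^{0}_{a}$ and $\mathcal{X}^{1}_{\overline{a}}$ as intersections of $D_3$-balls of equal-length substrings of length $n-3$, and the two \emph{unbalanced} pieces $\mathcal{X}^{0}_{\overline{a}}$ and $\mathcal{X}^{1}_{a}$ as expressions of the form $|D_2(\cdot)\cap D_4(\cdot)|$ between substrings of lengths $n-4$ and $n-2$.

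Next, I would propagate the hypothesis $d_L(\mathbf{x},\mathbf{y})\geq 3$ down to distance conditions on these inner pairs. The extracted first/last symbols appear as immediate neighbors of the endpoints in both sequences (namely $x_2=y_1=0$, $y_{n-1}=x_n=a$, and their symmetric counterparts), so any common length-$(n-4)$ subsequence of a balanced inner pair, or any witness that the shorter substring of an unbalanced pair lies in the $D_2$-ball of the longer, would extend by prepending and appending these symbols to a common length-$(n-2)$ subsequence of $\mathbf{x}$ and $\mathbf{y}$, forcing $d_L(\mathbf{x},\mathbf{y})\leq 2$ and contradicting the hypothesis. Consequently each balanced inner pair satisfies $d_L\geq 2$, and each unbalanced pair satisfies the non-containment assumption required by Lemma~\ref{lm7}.

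With these conditions verified, Lemma~\ref{lm6} applies to each balanced piece and Lemma~\ref{lm7} to each unbalanced piece, the shared first and last symbols ensuring $s,t\geq 1$ in both lemmas. In the generic regime where the longest common prefix and suffix of each inner pair have length exactly one, the middle-block length $l$ is at least $8$ in Lemma~\ref{lm6} and at least $5$ in Lemma~\ref{lm7} for every $n\geq 13$; this yields
\begin{equation*}
|\mathcal{X}^{0}_{a}|,\,|\mathcal{X}^{1}_{\overline{a}}|\leq 6(n-3)-36=6n-54,\qquad |\mathcal{X}^{0}_{\overline{a}}|,\,|\mathcal{X}^{1}_{a}|\leq 4(n-4)-15=4n-31,
\end{equation*}
summing to $20n-170\leq 20n-166$.

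The hardest part will be the non-generic regime, in which the longest common prefix or suffix of an inner pair exceeds length one, driving $l$ into the short-block items of Lemma~\ref{lm6}/\ref{lm7} (whose constants are weaker by $O(1)$). To absorb this, I would further sub-decompose each $\mathcal{X}^{i}_{j}$ by its second symbol through a second invocation of Proposition~\ref{prp1}; each resulting sub-piece either reuses a generic bound on a slightly shorter substring, or, when the inner radius has been reduced, can be controlled through Lemma~\ref{lm3}, Theorem~\ref{thm1}, and the ball-size identity~(\ref{eq2}), contributing only $O(n)$. The requisite bookkeeping closely parallels the case analyses already carried out in the proofs of Lemmas~\ref{lm8} and~\ref{lm9}, and is expected to yield $|\mathcal{X}|\leq 20n-166$ in every configuration.
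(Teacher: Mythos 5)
Your setup and your generic-regime computation match the paper's, but there is a genuine gap at precisely the point you defer. Applying Lemmas~\ref{lm6} and~\ref{lm7} with only the information $s,t\geq 1$, the sharpest per-piece bounds available are $|\mathcal{X}^{0}_{a}|,|\mathcal{X}^{1}_{\overline{a}}|\leq 6(n-3)-34=6n-52$ (worst case, attained when the inner middle block has length $l=6$) and $|\mathcal{X}^{0}_{\overline{a}}|,|\mathcal{X}^{1}_{a}|\leq 4(n-4)-14=4n-30$ (attained at $l=4$), which sum to $20n-164$, two more than the target. Your generic regime ($l\geq 8$, resp.\ $l\geq 5$) does give $20n-170$, but you mischaracterize the dangerous non-generic configurations: they are not the short blocks $l\leq 5$ of Lemma~\ref{lm6} (those carry coefficient $4$ and are harmless) but $l=6$ and $l=7$, whose constants are \emph{larger} than in the generic case. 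Re-decomposing each $\mathcal{X}^{i}_{j}$ independently by its second symbol cannot beat the per-piece bounds of Lemmas~\ref{lm6} and~\ref{lm7} (that recursion is exactly how those lemmas are proved), so piecewise analysis stalls at $20n-164$; and the claim that the residue contributes ``only $O(n)$'' is not meaningful here, since every term is $\Theta(n)$ and the additive constant must be controlled to within $2$.

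What is missing is a joint argument across the four pieces, and this is where the paper does its real work. It assumes $|\mathcal{X}|>20n-166$, deduces that at least three of the four per-piece inequalities are tight, and then invokes the equality characterizations in Lemmas~\ref{lm6} and~\ref{lm7}: tightness of a balanced piece forces $l=6$, $d_L(\mathbf{v},\widetilde{\mathbf{v}})=2$, and alternating $\mathbf{u},\mathbf{w}$, while tightness of an unbalanced piece forces $l=4$. Because the four inner pairs come from the same $\mathbf{x},\mathbf{y}$ with the extracted endpoints shifted, these structural conditions are strongly correlated; comparing the decompositions shows that simultaneous tightness of $|\mathcal{X}^{0}_{a}|$ and $|\mathcal{X}^{0}_{\overline{a}}|$ pins down $\mathbf{u},\mathbf{v},\widetilde{\mathbf{v}},\mathbf{w}$ enough to force $|\mathcal{X}^{1}_{\overline{a}}|\leq 6n-54$ in one branch and $|\mathcal{X}^{1}_{\overline{a}}|\leq 4n-33$ in the other (the latter via a small finite computation on a length-$8$ against length-$6$ pair), recovering the needed slack. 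Your proposal contains no cross-term reasoning of this kind, so as written it does not close the two-unit gap.
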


\begin{proof}
Without loss of generality, we let $x_1=1$ and $x_n=a$ with $a\in \mathbb{F}_2$. Then, $\mathbf{x}=1\,0\,1\,x_4\,\cdots\,x_{n-3}\,a\,\overline{a}\,a$, and $\mathbf{y}=0\,1\,0\,y_4\,\cdots\,y_{n-3}\,\overline{a}\,a\,\overline{a}$. We denote $\mathcal{X}=D_4(\mathbf{x})\cap D_4(\mathbf{y})$. Then, by Proposition $\ref{prp1}$, we have $|\mathcal{X}|=|\mathcal{X}^{0}_{a}|+|\mathcal{X}^{0}_{\overline{a}}|+|\mathcal{X}^{1}_{a}|+|\mathcal{X}^1_{\overline{a}}|$.

In the following, we compute the values of $|\mathcal{X}^{i}_{j}|$ for all $i=0,1$ and $j=a,\overline{a}$, respectively. By Proposition $\ref{prp2}$, we have
\begin{align*}
\mathcal{X}^{0}_{a}&=0\circ (D_3(\mathbf{x}^{0}_{a})\cap D_3(\mathbf{y}^{0}_{a}))\circ a=0\circ \big(D_{3}(1\,x_{4}\,\cdots\,x_{n-3}\,a\,\overline{a})\cap D_{3}(1\,0\,y_4\,\cdots\,y_{n-3}\,\overline{a})\big)\circ a,\\
\mathcal{X}^{0}_{\overline{a}}&=0\circ (D_2(\mathbf{x}^{0}_{\overline{a}})\cap D_4(\mathbf{y}^{0}_{\overline{a}}))\circ \overline{a}=0 \big(D_{2}(1\,x_{4}\,\cdots\,x_{n-3}\,a)\cap D_{4}(1\,0\,y_4\,\cdots\,y_{n-2}\,\overline{a}\,a)\big) \circ \overline{a},\\
\mathcal{X}^{1}_{a}&=1\circ (D_4(\mathbf{x}^{1}_{a})\cap D_2(\mathbf{y}^{1}_{a}))\circ a=1\circ \big(D_{4}(0\,1\,x_{4}\,\cdots\,x_{n-3}\,a\,\overline{a})\cap D_{2}(0\,y_4\,\cdots\,y_{n-3}\,\overline{a})\big)\circ a,\\
\mathcal{X}^{1}_{\overline{a}}&=1\circ (D_3(\mathbf{x}^{1}_{\overline{a}})\cap D_3(\mathbf{y}^{1}_{\overline{a}}))\circ \overline{a}=1\circ \big(D_{3}(0\,1\,x_{4}\,\cdots\,x_{n-3}\,a)\cap D_{3}(0\,y_4\,\cdots\,y_{n-3}\,\overline{a}\,a)\big)\circ \overline{a},
\end{align*}
where $d_L(\mathbf{x}^{0}_{a},\mathbf{y}^{0}_{a})\geq 2,\mathbf{x}^{0}_{\overline{a}}\notin D_2(\mathbf{y}^{0}_{\overline{a}}),\mathbf{y}^{1}_{a}\notin D_2(\mathbf{x}^{1}_{a}),$ and $d_L(\mathbf{x}^{1}_{\overline{a}},\mathbf{y}^{1}_{\overline{a}})\geq 2$.

Since $n\geq 13$ we have $|\mathbf{x}^{0}_{a}|=|\mathbf{x}^{1}_{\overline{a}}|\geq 10$ and $|\mathbf{x}^{0}_{\overline{a}}|=|\mathbf{y}^{1}_{a}|\geq 9$. By Lemmas $\ref{lm6}$ and $\ref{lm7}$, we have
$$|\mathcal{X}^{0}_{a}|=|D_3(\mathbf{x}^{0}_{a})\cap D_3(\mathbf{y}^{0}_{a})|\overset{(I_1)}{\leq} 6(n-3)-34=6n-52,$$
$$|\mathcal{X}^{0}_{\overline{a}}|=|D_2(\mathbf{x}^{0}_{\overline{a}})\cap D_4(\mathbf{y}^{0}_{\overline{a}})|\overset{(I_2)}{\leq} 4(n-4)-14=4n-30,$$
$$|\mathcal{X}^{1}_{a}|=|D_4(\mathbf{x}^{1}_{a})\cap D_2(\mathbf{y}^{1}_{a})|\overset{(I_3)}{\leq} 4(n-4)-14=4n-30,$$
$$|\mathcal{X}^{1}_{\overline{a}}|=|D_3(\mathbf{x}^{1}_{\overline{a}})\cap D_3(\mathbf{y}^{1}_{\overline{a}})|\overset{(I_4)}{\leq} 6(n-3)-34=6n-52,$$
for any $n\geq 13$.

Thus, we have $|\mathcal{X}|=|\mathcal{X}^{0}_{a}|+|\mathcal{X}^{0}_{\overline{a}}|+|\mathcal{X}^{1}_{a}|+|\mathcal{X}^{1}_{\overline{a}}|\leq 20n-164$ for any $n\geq 13$. Assume that $|\mathcal{X}|> 20n-166$ for any $n\geq 13$. Then at least three of the above four inequalities (Inequations $(I_1)-(I_4)$) hold with equality. Thus, these three inequalities must contain Inequations $(I_1)$ and $(I_2)$, or Inequations $(I_3)$ and $(I_4)$. Since Inequations $(I_1)-(I_2)$ and Inequations $(I_3)-(I_4)$ are similar, thus we only consider one case of Inequations $(I_1)$ and $(I_2)$.

Consider the case of $|\mathcal{X}^{0}_{a}|=6n-52$ and $|\mathcal{X}^{0}_{\overline{a}}|=4n-30$. Since $|\mathcal{X}^{0}_{a}|=6n-52$, by Lemma $\ref{lm6}$, we have $\mathbf{x}^{0}_{a}=\mathbf{u}\,\mathbf{v}\,\mathbf{w},~\mathbf{y}^{0}_{a}=\mathbf{u}\,\mathbf{\widetilde{v}}\,\mathbf{w},$
where $|\mathbf{v}|=|\mathbf{\widetilde{v}}|=6$, $\mathbf{u}=u_1\,\cdots\,u_s=\mathbf{a}_{s}$, $\mathbf{w}=w_1\,\cdots\,w_t$, $s,t\geq 1$, $d_L(\mathbf{v},\mathbf{\widetilde{v}})=2$, and the first element and the last element of $\mathbf{v}$ and $\mathbf{\widetilde{v}}$ are both different,  $\mathbf{w}$ is an alternating sequence that ends with $\overline{a}$. For convenience, $\mathbf{v}=v_1\,v_2\,\cdots\,v_5\,v_6$ and $\mathbf{\widetilde{v}}=\widetilde{v}_1\,\widetilde{v}_2\,\cdots\,\widetilde{v}_5\,\widetilde{v}_6$.
Then, we have
$\mathbf{x}=1\,0\,\mathbf{a}_{s}\,\mathbf{v}\,\mathbf{w}\,a,~\mathbf{y}=0\,\mathbf{a}_{s}\,\mathbf{\widetilde{v}}\,\mathbf{w}\,a\,\overline{a},$
and
$\mathbf{x}_{[2,s+2]}=\mathbf{y}_{[1,s+1]}$, $x_{s+3}\neq y_{s+2}$, $x_{s+8}\neq y_{s+7}$, $\mathbf{x}_{[s+9,s+t+9]}=\mathbf{y}_{[s+8,s+t+8]}$,
where $s+t+9=n$. Next, we discuss the values of $\mathbf{x}^{0}_{\overline{a}}$ and $\mathbf{y}^{0}_{\overline{a}}$. By the definition of $\mathbf{x}^{0}_{\overline{a}}$ and $\mathbf{y}^{0}_{\overline{a}}$, it follows that
$\mathbf{x}^{0}_{\overline{a}}=\mathbf{u}\,\mathbf{v}\,\mathbf{w}_{[1,t-1]},~\mathbf{y}^{0}_{\overline{a}}=\mathbf{u}\,\mathbf{\widetilde{v}}\,\mathbf{w}\,a.$
Since $|\mathcal{X}^{0}_{\overline{a}}|=4n-30$, by Lemma $\ref{lm7}$,  we have $\mathbf{x}^{0}_{\overline{a}}=\mathbf{u}_1\,\mathbf{v}_1\,\mathbf{w}_1$ and $\mathbf{y}^{0}_{\overline{a}}=\mathbf{u}_1\,\mathbf{\widetilde{v}}_1\,\mathbf{w}_1$, where $|\mathbf{v}_1|=4,|\mathbf{\widetilde{v}}_1|=6$, $\mathbf{w}_{1
[t-1]}=a$, $\mathbf{v}_1\notin D_2(\mathbf{\widetilde{v}})$, and the first element and the last element of $\mathbf{v}_1$ and $\mathbf{\widetilde{v}}_1$ are both different. By comparison, we can obtain that
$$\mathbf{u}=\mathbf{u}_1, \mathbf{v}_1=\mathbf{v}_{[1,4]}, \mathbf{w}_1=\mathbf{v}_{[5,6]}\,\mathbf{w}_{[1,t-1]}, \mathbf{\widetilde{v}}_1=\mathbf{\widetilde{v}}, \mathbf{w}_1=\mathbf{w}\,a.$$
Thus, $v_5=\mathbf{w}_{[1]}$, $v_6=\mathbf{w}_{[2]}$, and $v_4\neq \widetilde{v}_6$. Next, we compare $\mathbf{x}^{1}_{\overline{a}}$ and $\mathbf{y}^{1}_{\overline{a}}$ as follows. By the definition of $\mathbf{x}^{1}_{\overline{a}}$ and $\mathbf{y}^{1}_{\overline{a}}$, then we have
\begin{align*}
\mathbf{x}^{1}_{\overline{a}}=0\,\mathbf{u}\,\mathbf{v}\,\mathbf{w}_{[1,t-1]}=0\,\mathbf{a}_s\,v_1\,v_2\,v_3\,v_4\,\mathbf{w}_1=0\,\mathbf{a}_{s-2}\,\mathbf{a}_{s[s-1]}\,
\mathbf{a}_{s[s]}\,v_1\,v_2\,v_3\,v_4\,\mathbf{w}_1,
\end{align*}
$$\mathbf{y}^{1}_{\overline{a}}=\mathbf{u}_{[2,s]}\,\mathbf{\widetilde{v}}\,\mathbf{w}\,a=0\,\mathbf{a}_{s-2}\,\widetilde{v}_1\,\cdots\,\widetilde{v}_5\,\widetilde{v}_6\,
\mathbf{w}_1.$$

When $\mathbf{a}_{s[s-1]}= \widetilde{v}_1$, then the first $s$ elements of $\mathbf{x}^{1}_{\overline{a}}$ and $\mathbf{y}^{1}_{\overline{a}}$ are same. For convenience, let $\mathbf{x}^{1}_{\overline{a}}=\mathbf{u}_2\,\mathbf{v}_2\,\mathbf{w}_2$ and $\mathbf{y}^{1}_{\overline{a}}=\mathbf{u}_2\,\mathbf{\widetilde{v}}_2\,\mathbf{w}_2$, where the first and last elements of $\mathbf{v}_2$ and $\mathbf{\widetilde{v}}_2$ are both different. Then $|\mathbf{v}_2|\leq 5$. By Lemma $\ref{lm6}$, we have that
$$|\mathcal{X}^{1}_{\overline{a}}|\leq 6(n-3)-36=6n-54.$$
Therefore, we have
$$|\mathcal{X}|=|\mathcal{X}^{0}_{a}|+|\mathcal{X}^{0}_{\overline{a}}|+|\mathcal{X}^{1}_{a}|+|\mathcal{X}^1_{\overline{a}}|\leq 6n-52+4n-30+4n-30+6n-54=20n-166.$$

When $\mathbf{a}_{s[s-1]}\neq \widetilde{v}_1$, we let $\mathbf{u}_1=0\,\mathbf{a}_{s-2}$ and
$\mathbf{v}_3=\mathbf{a}_{s[s-1]}\,\mathbf{a}_{s[s]}\,v_1\,v_2\,v_3\,v_4$, then we have
$$\mathbf{x}^{1}_{\overline{a}}=\mathbf{u}_1\,\mathbf{a}_{s[s-1]}\,\mathbf{a}_{s[s]}\,v_1\,v_2\,v_3\,v_4\,\mathbf{w}_1=\mathbf{u}_1\,\mathbf{v}_3\,\mathbf{w}_1,~
\mathbf{y}^{1}_{\overline{a}}=\mathbf{u}_1\,\widetilde{v}_1\,\cdots\,\widetilde{v}_5\,\widetilde{v}_6\,\mathbf{w}_1=\mathbf{u}_1\,\mathbf{\widetilde{v}}\,\mathbf{w}_1,$$
where $|\mathbf{v}_3|=6$, $d_L(\mathbf{v}_3,\mathbf{\widetilde{v}})=2$ because of $d_L(\mathbf{x},\mathbf{y})=3$. For convenience, let $\mathbf{u}_{[s]}=\mathbf{u}_{1[s-1]}=c$, then $\mathbf{a}_{s[s-1]}=\overline{c},\mathbf{a}_{s[s]}=c$ because of $0\,\mathbf{u}=\mathbf{u}_1\,\mathbf{a}_{s[s-1]}\,\mathbf{a}_{s[s]}=0\,\mathbf{a}_s$. Since $\mathbf{a}_{s[s-1]}\neq \widetilde{v}_1$, thus we have $\widetilde{v}_1=c.$
Furthermore, we discuss the values of $\mathbf{x}^{1}_{a}$ and $\mathbf{y}^{1}_{a}$ as follows. Then, we have
$$\mathbf{x}^{1}_{a}=0\,\mathbf{u}\,\mathbf{v}\,\mathbf{w}=\mathbf{u}_1\,\overline{c}\,c\,v_1\,v_2\,v_3\,v_4\,v_5\,v_6\,\mathbf{w},~\mathbf{y}^{1}_{a}=\mathbf{u}_{[2,s]}\,
\mathbf{\widetilde{v}}\,\mathbf{w}=\mathbf{u}_1\,\widetilde{v}_1\,\cdots\,\widetilde{v}_5\,\widetilde{v}_6\,\mathbf{w},$$
where $\widetilde{v}_1\neq \overline{c}$, $v_6\neq \widetilde{v}_6$, and $\mathbf{u}_{1[s-1]}=\widetilde{v}_1=c$. By the similar method of Lemma $\ref{lm7}$, we decompose $\mathbf{x}^{1}_{a}$ and $\mathbf{y}^{1}_{a}$, continue the process until $\mathbf{x}$ and $\mathbf{y}$ are transformed into $\overline{c}\,c\,\mathbf{v}$ and $\mathbf{\widetilde{v}}$, respectively. The maximum contribution of the first $s-2$ left decompositions and the later $t-1$ right decompositions to the value of $|\mathcal{X}^{1}_{\overline{a}}|$ is $4(s-2)$ and $4(t-1)$, respectively. By Lemma $\ref{lm7}$, the $(s-1)$-th left decomposition contributes only $1$ because of $\mathbf{u}_{1[s-1]}=\widetilde{v}_1=c$. The $t$-th right decomposition contributes only $3$ because of $v_6\neq \widetilde{v}_6$. Thus,
we have $|\mathcal{X}^{1}_{\overline{a}}|\leq 4(s-2)+1+|D_4(\overline{c},c,\mathbf{v})\cap D_2(\mathbf{\widetilde{v}})|+4(t-1)+3.$
By using a computerized search, we have $|D_4(\overline{c}\,c\,\mathbf{v})\cap D_2(\mathbf{\widetilde{v}})|\leq 11$. So, we can obtain that
$$|\mathcal{X}^{1}_{\overline{a}}|\leq 4(s-2)+1+11+4(t-1)+3=4(s+t)+3=4(n-9)+3=4n-33.$$
Therefore, we have
$$|\mathcal{X}|=|\mathcal{X}^{0}_{a}|+|\mathcal{X}^{0}_{\overline{a}}|+|\mathcal{X}^{1}_{a}|+|\mathcal{X}^1_{\overline{a}}|\leq 6n-52+4n-30+6n-52+4n-33=20n-167,$$
which completes the proof of Lemma $\ref{lm10}$.

\end{proof}

\subsection{The proof of the subcase $(d)$}
Consider the subcase with $x_i \neq y_i$ (for $i = 1, n$), $x_j \neq x_{j+1}$ and $y_j \neq y_{j+1}$ (for $j=1, n-1$), and with either $(x_3 \neq y_3, x_2 \neq x_3, \text{ and } x_{n-2}=y_{n-2}$) or ($x_{n-2} \neq y_{n-2}, x_{n-1} \neq x_{n-2}, \text{ and } x_3=y_3$). 

\begin{lemma}
\label{lm11}
Let $n\geq 13$ and let $\mathbf{x}=x_1\,\cdots\,x_n, \mathbf{y}=y_1\,\cdots\,y_n \in \mathbb{F}_2^n$ with $x_1\neq y_1, x_n\neq y_n, x_1\neq x_2, x_{n-1}\neq x_n, y_1\neq y_2, y_{n-1}\neq y_n$, and $x_3\neq y_3$, $x_2\neq x_3$, $x_{n-2}=y_{n-2}$. If $d_L(\mathbf{x},\mathbf{y})\geq 3$, then
\begin{equation}
|D_4(\mathbf{x})\cap D_4(\mathbf{y})|\leq 20n-166.\nonumber
\end{equation}
\end{lemma}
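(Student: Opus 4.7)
The plan is to adapt the strategy of Lemma~\ref{lm10} to the new structural constraint $x_{n-2}=y_{n-2}$. Without loss of generality take $x_1=1,\,x_n=a$, and let $c=x_{n-2}=y_{n-2}$, so that
\[
\mathbf{x}=1\,0\,1\,x_4\cdots x_{n-3}\,c\,\overline{a}\,a,\qquad \mathbf{y}=0\,1\,0\,y_4\cdots y_{n-3}\,c\,a\,\overline{a}.
\]
Set $\mathcal{X}=D_4(\mathbf{x})\cap D_4(\mathbf{y})$. First I would apply Proposition~\ref{prp1} to split $|\mathcal{X}|=|\mathcal{X}^0_a|+|\mathcal{X}^0_{\overline{a}}|+|\mathcal{X}^1_a|+|\mathcal{X}^1_{\overline{a}}|$, and use Proposition~\ref{prp2} to rewrite each piece as an intersection of two deletion balls of explicit truncations of $\mathbf{x}$ and $\mathbf{y}$. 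Just as in Lemma~\ref{lm10}, the pieces $\mathcal{X}^0_a$ and $\mathcal{X}^1_{\overline{a}}$ collapse to $D_3\cap D_3$ intersections on length-$(n-3)$ sequences and will be bounded using Lemma~\ref{lm6}; the pieces $\mathcal{X}^0_{\overline{a}}$ and $\mathcal{X}^1_a$ collapse to $D_2\cap D_4$ intersections between a length-$(n-4)$ and a length-$(n-2)$ sequence and will be bounded by Lemma~\ref{lm7}. The hypothesis $d_L(\mathbf{x},\mathbf{y})\geq 3$ automatically supplies the Levenshtein or non-containment hypothesis required by those lemmas.

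Next I would split into the two subcases $c=a$ and $c=\overline{a}$: the common-suffix structure of the four pairs depends only on this choice. In contrast with Lemma~\ref{lm10}, where the terminal triples $a\,\overline{a}\,a$ and $\overline{a}\,a\,\overline{a}$ of $\mathbf{x}$ and $\mathbf{y}$ are both alternating, here exactly one of the two $D_3\cap D_3$ pairs (and one of the two $D_2\cap D_4$ pairs) has empty common suffix, placing it in the weaker $t=0$ branch of Lemma~\ref{lm6}, which is three units looser than the $s,t\geq 1$ branch. The coarse sum of the four bounds therefore exceeds $20n-166$, and I would argue by contradiction: if $|\mathcal{X}|>20n-166$, then all four bounds must be essentially saturated.

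The saturation clauses of Lemmas~\ref{lm6} and \ref{lm7} then force each of the four pairs to decompose as $\mathbf{u}\,\mathbf{v}\,\mathbf{w}$ and $\mathbf{u}\,\widetilde{\mathbf{v}}\,\mathbf{w}$ with alternating common prefix $\mathbf{u}$, alternating common suffix $\mathbf{w}$, and middle blocks of the prescribed small length. Propagating these decompositions back to $\mathbf{x}$ and $\mathbf{y}$ pins them down to an alternating shape outside short central and terminal blocks, and the constraint $c=x_{n-2}=y_{n-2}$ couples the parity of $c$ to the alternation of the suffix $\mathbf{w}$. Following the endgame of Lemma~\ref{lm10}, one of the remaining pieces can then be re-analyzed (in part with the aid of the same kind of small computer check used there) and shown to be strictly smaller than its generic upper bound by at least the missing units, producing the desired contradiction.

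The main obstacle is precisely this final equality analysis. The overshoot is larger than in Lemma~\ref{lm10} because of the three-unit $t=0$ penalty in Lemma~\ref{lm6}, so multiple structural coincidences must occur simultaneously, and both subcases $c\in\{a,\overline{a}\}$ have to be handled. I expect the hardest corner to be the one in which the nontrivial-suffix $D_3\cap D_3$ pair has $|\mathbf{w}|=1$: Lemma~\ref{lm6} is then already almost tight on that side, and the extra savings must come entirely from a careful case-split of the $D_4\cap D_2$ cross term, guided by the parity condition imposed by $x_{n-2}=y_{n-2}$.
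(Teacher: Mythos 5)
Your setup coincides with the paper's: the same four-way decomposition via Propositions~\ref{prp1} and \ref{prp2}, the same assignment of the two $D_3\cap D_3$ pieces to Lemma~\ref{lm6} and the two $D_2\cap D_4$ pieces to Lemma~\ref{lm7} (or Lemma~\ref{lm3}), and the correct observation that the loss of one common suffix makes the coarse sum $20n-160$, i.e.\ six units too large. (The paper disposes of the dichotomy $x_{n-2}=a$ versus $x_{n-2}=\overline{a}$ by a complementation symmetry rather than treating both, but that is cosmetic.) The difficulty is that everything after this point in your proposal is a plan rather than an argument, and the plan as framed has a structural weakness. You propose a global contradiction: if $|\mathcal{X}|>20n-166$ then all four bounds are ``essentially saturated,'' whereupon the equality clauses of Lemmas~\ref{lm6} and \ref{lm7} pin down the shape of $\mathbf{x}$ and $\mathbf{y}$. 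But those equality clauses characterize \emph{exact} equality only, and only in the $s,t\geq 1$ branch with $l\in\{6,7\}$; the $t=0$ branch of Lemma~\ref{lm6} (bound $6n-31$), which is precisely the branch responsible for the largest term $|\mathcal{X}^0_a|\leq 6n-49$, carries no equality characterization at all. With a six-unit overshoot, ``within six of the maximum, summed over four terms'' does not trigger any of these rigidity statements, so the saturation step does not get off the ground as described.

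What the paper actually does is a direct case split on the boundary bits $x_4,y_4,x_{n-3},y_{n-3}$ (five main cases), showing in each that specific terms drop by a combined total of at least six: e.g.\ $x_{n-3}=a$ forces a non-alternating suffix in the $\mathcal{X}^1_{\overline{a}}$ pair and drops that bound from $6n-52$ to $6n-58$ in one stroke; the case $x_{n-3}=y_{n-3}=\overline{a}$, $x_4=0$, $y_4=1$ requires a further split on the length $l$ of the middle block and the alternation of the prefix, parity bookkeeping, and computer checks on explicit length-$6$ and length-$8$ blocks (including identifying the extremal pair $\mathbf{v}=1\,0\,1\,0\,1\,0$, $\widetilde{\mathbf{v}}=0\,1\,1\,0\,0\,1$). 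You correctly anticipate that such an endgame exists and that it is the hard part, but you do not supply it, and the route you sketch toward it would need to be reorganized along the paper's lines before it could be completed. As it stands the proposal identifies the right tools and the right obstruction but leaves the entire six-unit recovery --- which is the whole content of the lemma --- unproved.
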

\begin{proof}
Without loss of generality, we let $x_1=1$, $x_{n-2}=b, x_n=a$ with $a,b\in \mathbb{F}_2$. Then, $\mathbf{x}=1\,0\,1\,x_4\,\cdots\,x_{n-3}\,b\,\overline{a}\,a$, and $\mathbf{y}=0\,1\,0\,y_4\,\cdots\,y_{n-3}\,b\,a\,\overline{a}$. We denote $\mathcal{X}=D_4(\mathbf{x})\cap D_4(\mathbf{y})$. Let $b=a$. Then we have $\mathbf{x}=1\,0\,1\,x_4\,\cdots\,x_{n-3}\,a\,\overline{a}\,a$ and  $\mathbf{y}=0\,1\,0\,y_4\,\cdots\,y_{n-3}\,a\,a\,\overline{a}$. Thus, $\overline{\mathbf{x}}=0\,1\,0\,\overline{x_4}\,\cdots\,\overline{x_{n-3}}\,\overline{a}\,a\,\overline{a}$ and $\overline{\mathbf{y}}=1\,0\,1\,\overline{y_4}\,\cdots\,\overline{y_{n-3}}\,\overline{a}\,\overline{a}\,a$. It is easily verified that $|D_t(\mathbf{x})\cap D_t(\mathbf{y})|=|D_t(\overline{\mathbf{x}})\cap D_t(\overline{\mathbf{y}})|$. Assume that $b=\overline{a}$. Then we have  $\mathbf{x}=1\,0\,1\,x_4\,\cdots\,x_{n-3}\,\overline{a}\,\overline{a}\,a$ and  $\mathbf{y}=0\,1\,0\,y_4\,\cdots\,y_{n-3}\,\overline{a}\,a\,\overline{a}$. By comparison, we only discuss the case where $b=a$.

By Propositions $\ref{prp1}$ and $\ref{prp2}$, we have $|\mathcal{X}|=|\mathcal{X}^{0}_{a}|+|\mathcal{X}^{0}_{\overline{a}}|+|\mathcal{X}^{1}_{a}|+|\mathcal{X}^1_{\overline{a}}|$ and
\begin{align*}
\mathcal{X}^{0}_{a}&=0\circ (D_3(\mathbf{x}^{0}_{a})\cap D_3(\mathbf{y}^{0}_{a}))\circ a=0\circ \big(D_{3}(1\,x_{4}\,\cdots\,x_{n-3}\,a\,\overline{a})\cap D_{3}(1\,0\,y_4\,\cdots\,y_{n-3}\,a)\big)\circ a,\\
\mathcal{X}^{0}_{\overline{a}}&=0\circ (D_2(\mathbf{x}^{0}_{\overline{a}})\cap D_4(\mathbf{y}^{0}_{\overline{a}}))\circ \overline{a}=0 \big(D_{2}(1\,x_{4}\,\cdots\,x_{n-3}\,a)\cap D_{4}(1\,0\,y_4\,\cdots\,y_{n-3}\,a\,a)\big) \circ \overline{a},\\
\mathcal{X}^{1}_{a}&=1\circ (D_4(\mathbf{x}^{1}_{a})\cap D_2(\mathbf{y}^{1}_{a}))\circ a=1\circ \big(D_{4}(0\,1\,x_{4}\,\cdots\,x_{n-3}\,a\,\overline{a})\cap D_{2}(0\,y_4\,\cdots\,y_{n-3}\,a)\big)\circ a,\\
\mathcal{X}^{1}_{\overline{a}}&=1\circ (D_3(\mathbf{x}^{1}_{\overline{a}})\cap D_3(\mathbf{y}^{1}_{\overline{a}}))\circ \overline{a}=1\circ \big(D_{3}(0\,1\,x_{4}\,\cdots\,x_{n-3}\,a)\cap D_{3}(0\,y_4\,\cdots\,y_{n-3}\,a\,a)\big)\circ \overline{a},
\end{align*}
where $d_L(\mathbf{x}^{0}_{a},\mathbf{y}^{0}_{a})\geq 2,\mathbf{x}^{0}_{\overline{a}}\notin D_2(\mathbf{y}^{0}_{\overline{a}}),\mathbf{y}^{1}_{a}\notin D_2(\mathbf{x}^{1}_{a}),$ and $d_L(\mathbf{x}^{1}_{\overline{a}},\mathbf{y}^{1}_{\overline{a}})\geq 2$.

Since $n\geq 13$ we have $|\mathbf{x}^{0}_{a}|=|\mathbf{x}^{1}_{\overline{a}}|\geq 10$ and $|\mathbf{x}^{0}_{\overline{a}}|=|\mathbf{y}^{1}_{a}|\geq 9$. By Lemmas $\ref{lm6}$ and $\ref{lm7}$, we have
\begin{align}
|\mathcal{X}^{0}_{a}|=|D_3(\mathbf{x}^{0}_{a})\cap D_3(\mathbf{y}^{0}_{a})|\leq 6(n-3)-31=6n-49,\label{eq7}\\
|\mathcal{X}^{0}_{\overline{a}}|=|D_2(\mathbf{x}^{0}_{\overline{a}})\cap D_4(\mathbf{y}^{0}_{\overline{a}})|\leq 4(n-4)-14=4n-30,\label{eq8}\\
|\mathcal{X}^{1}_{a}|=|D_4(\mathbf{x}^{1}_{a})\cap D_2(\mathbf{y}^{1}_{a})|\leq 4(n-4)-13=4n-29,\label{eq9}\\
|\mathcal{X}^{1}_{\overline{a}}|=|D_3(\mathbf{x}^{1}_{\overline{a}})\cap D_3(\mathbf{y}^{1}_{\overline{a}})|\leq 6(n-3)-34=6n-52,\label{eq10}
\end{align}
for any $n\geq 13$.

Next, we will discuss the value of $|\mathcal{X}|$ based on the values of $x_4,y_4,x_{n-3}$, and $y_{n-3}$. Specifically, we will address this in five cases as follows: \textbf{Case 1}: $x_{n-3}=a$; \textbf{Case 2}: $x_{n-3}=\overline{a},y_{n-3}=a$; \textbf{Case 3}: $x_{n-3}=y_{n-3}=\overline{a},x_4=1$; \textbf{Case 4}: $x_{n-3}=y_{n-3}=\overline{a},x_4=0, y_4=0$; \textbf{Case 5}: $x_{n-3}=y_{n-3}=\overline{a},x_4=0,y_4=1$.

\textbf{Case 1}: When $x_{n-3}=a$, since the last two elements of $\mathbf{x}^{1}_{\overline{a}}$ and $\mathbf{y}^{1}_{\overline{a}}$ are both $(a,a)$, then by Lemma $\ref{lm6}$ we have $$|\mathcal{X}^{1}_{\overline{a}}|\leq 6(n-3)-40=6n-58.$$
Combing with Inequations $(\ref{eq7})-(\ref{eq9})$, we have $|\mathcal{X}|= |\mathcal{X}^{0}_{a}|+|\mathcal{X}^{0}_{\overline{a}}|+|\mathcal{X}^{1}_{a}|+|\mathcal{X}^{1}_{\overline{a}}|\leq 6n-49+4n-30+4n-29+6n-58=20n-166$, for any $n\geq 13$.

\textbf{Case 2}: When $x_{n-3}=\overline{a},y_{n-3}=a$, we discuss the value of $|\mathcal{X}^{0}_{a}|$ as follows. First we have $|\mathcal{X}^{0}_{a}|=|\mathcal{X}^{0}_{aa}|+|\mathcal{X}^{0}_{a\overline{a}}|$ such that
\begin{align*}
|\mathcal{X}^{0}_{aa}|&=|D_2(\mathbf{x}^{0}_{aa})\cap D_3(\mathbf{y}^{0}_{aa})|=|D_{2}(1\,x_{4}\,\cdots\,x_{n-4}\,\overline{a})\cap D_{3}(1\,0\,y_4\,\cdots\,y_{n-4}\,a)|,\\
|\mathcal{X}^{0}_{a\overline{a}}|&=|D_3(\mathbf{x}^{0}_{a\overline{a}})\cap D_3(\mathbf{y}^{0}_{a\overline{a}})|=|D_{3}(1\,x_{4}\,\cdots\,x_{n-4}\,\overline{a}\,a)\cap D_{1-\ell}(1\,0\,y_4\,\cdots\,y_{n-5-l})|,
\end{align*}
where $\mathbf{x}^{0}_{aa}\notin D_1(\mathbf{y}^{0}_{aa})$ and $\ell\geq 0$. By Lemma $\ref{lm3}$, we have $|\mathcal{X}^{0}_{aa}|\leq 3(n-5)-8=3n-23$. If $\ell\geq 0$ then $|\mathcal{X}^{0}_{aa}|\leq |D_1(1\,0\,y_4\,\cdots\,y_{n-5})|\leq n-6$. Thus, $|\mathcal{X}^{0}_{a}|\leq 4n-29$ for any $n\geq 13$. Combing with Inequations $(\ref{eq8})$-$(\ref{eq10})$, we have $|\mathcal{X}|= |\mathcal{X}^{0}_{a}|+|\mathcal{X}^{0}_{\overline{a}}|+|\mathcal{X}^{1}_{a}|+|\mathcal{X}^{1}_{\overline{a}}|\leq 4n-29+4n-30+4n-29+6n-58=18n-140\leq 20n-166$, for any $n\geq 13$.

\textbf{Case 3}: When $x_{n-3}=y_{n-3}=\overline{a},x_4=1$, we discuss the values of $|\mathcal{X}^{0}_{a}|$ and $|\mathcal{X}^{0}_{\overline{a}}|$ as follows. Since the second and last bits of $\mathbf{x}^{0}_{a}$ and $\mathbf{y}^{0}_{a}$ are different and $n\geq 13$, by Lemma $\ref{lm6}$, we have
$$|\mathcal{X}^{0}_{a}|\leq 6(n-3)-33=6n-51.$$
Since the first and second bits of $\mathbf{x}^{0}_{\overline{a}}$ are $1$, we have $|\mathcal{X}^{00}_{\overline{a}}|\leq 1$. Then, by Lemma $\ref{lm7}$, it follows that
$$|\mathcal{X}^{0}_{\overline{a}}|=|\mathcal{X}^{00}_{\overline{a}}|+|\mathcal{X}^{01}_{\overline{a}}|\leq 1+4(n-5)-14=4n-33,$$
for any $n\geq 13$. Combing with Inequations $(\ref{eq9})$-$(\ref{eq10})$, we have $|\mathcal{X}|= |\mathcal{X}^{0}_{a}|+|\mathcal{X}^{0}_{\overline{a}}|+|\mathcal{X}^{1}_{a}|+|\mathcal{X}^{1}_{\overline{a}}|\leq 6n-51+4n-33+4n-29+6n-52=20n-165$, for any $n\geq 13$. Assume that $|\mathcal{X}|> 20n-166$ for any $n\geq 13$. Then the above two inequalities (Inequations $(\ref{eq9})$-$(\ref{eq10})$) hold with equality. Equation $(\ref{eq9})$ holds only when the condition of $l$ in Lemma $\ref{lm7}$ satisfies $l=4$. Thus, $y_4=y_5=1$ which makes the left common subsequence of $\mathbf{x}^{1}_{\overline{a}}$ and $\mathbf{y}^{1}_{\overline{a}}$ is not a alternating sequence. Therefore, the above two inequalities cannot hold simultaneously. So, $|\mathcal{X}|\leq 20n-166$ for any $n\geq 13$.

\textbf{Case 4}: When $x_{n-3}=y_{n-3}=\overline{a},x_4=y_4=0$, we discuss the values of $|\mathcal{X}^{0}_{a}|$, $|\mathcal{X}^{0}_{\overline{a}}|$, and $|\mathcal{X}^{1}_{\overline{a}}|$ as follows. Since $y_4=0$ then the second bits of $\mathbf{x}^{1}_{\overline{a}}$ and $\mathbf{y}^{1}_{\overline{a}}$ are not same. By Lemma $\ref{lm6}$, we have
$$|\mathcal{X}^{1}_{\overline{a}}|\leq 6(n-3)-36=6n-54,$$
for any $n\geq 13$.

When $x_5=1$, then the third bits of $\mathbf{x}^{0}_{a}$ and $\mathbf{y}^{0}_{a}$ are not same which makes $l\geq 8$ because of $n\geq 13$. By Lemma $\ref{lm6}$, we have
$$|\mathcal{X}^{0}_{a}|\leq 6(n-3)-33=6n-51,$$
for any $n\geq 13$. 
Since $y_4=0$ and $y_{n-3}=\overline{a}$ then the first and second bits of $\mathbf{y}^{1}_{a}$ are same and the condition of $l\geq 5$ in Lemma $\ref{lm7}$ holds. Thus, 
$$|\mathcal{X}^{1}_{a}|\leq 4(n-4)-16=4n-32,$$
for any $n\geq 13$. 
Combing with Inequation $(\ref{eq9})$, we have $|\mathcal{X}|= |\mathcal{X}^{0}_{a}|+|\mathcal{X}^{0}_{\overline{a}}|+|\mathcal{X}^{1}_{a}|+|\mathcal{X}^{1}_{\overline{a}}|\leq 6n-51+4n-30+4n-32+6n-54=20n-167$ for any $n\geq 13$.

When $x_5=0$, then the left common subsequence of $\mathbf{x}^{0}_{a}$ and $\mathbf{y}^{0}_{a}$ is not an alternating sequence. By Lemma $\ref{lm6}$, we have
$$|\mathcal{X}^{0}_{a}|\leq 6(n-3)-40=6n-58,$$
for any $n\geq 13$.
Combing with Inequations $(\ref{eq8})$-$(\ref{eq9})$, we have $|\mathcal{X}|= |\mathcal{X}^{0}_{a}|+|\mathcal{X}^{0}_{\overline{a}}|+|\mathcal{X}^{1}_{a}|+|\mathcal{X}^{1}_{\overline{a}}|\leq 6n-58+4n-30+4n-29+6n-54=20n-171$ for any $n\geq 13$.

\textbf{Case 5}: When $x_{n-3}=y_{n-3}=\overline{a},x_4=0, y_4=1$, we discuss all the values of $|\mathcal{X}^{0}_{a}|$, $|\mathcal{X}^{0}_{\overline{a}}|$, $|\mathcal{X}^{1}_{a}|$, and $|\mathcal{X}^{1}_{\overline{a}}|$ as follows. First, we let $\mathbf{x}^{0}_{a}=\mathbf{u}\,\mathbf{v},~\mathbf{y}^{0}_{a}=\mathbf{u}\,\mathbf{\widetilde{v}},$
where $|\mathbf{v}|=|\mathbf{\widetilde{v}}|=l$, $\mathbf{u}=u_1\,\cdots\,u_s=\mathbf{a}_s$, $s\geq 2$, $s+l=n-3$, $d_L(\mathbf{v},\mathbf{\widetilde{v}})\geq 2$, and the first element and the last element of $\mathbf{v}$ and $\mathbf{\widetilde{v}}$ are both different. By comparison, we have
$\mathbf{x}^{0}_{\overline{a}}=\mathbf{u}\,\mathbf{v}_{[1,l-2]}\,a,~\mathbf{y}^{0}_{\overline{a}}=\mathbf{u}\,\mathbf{\widetilde{v}}\,a,$
where $v_{l-2}=\overline{a}$ and $\widetilde{v}_l=a$. Next, we will study the value of $|\mathcal{X}|$ based on the value of $l$ and the different sequence $\mathbf{u}$.

When $\mathbf{u}$ is not an alternating sequence, by Lemma $\ref{lm6}$, we have
$$|\mathcal{X}^{0}_{a}|\leq 6(n-3)-37=6n-55,$$
for any $n\geq 13$. Combing with Inequations $(\ref{eq8})$-$(\ref{eq10})$, we have $|\mathcal{X}|= |\mathcal{X}^{0}_{a}|+|\mathcal{X}^{0}_{\overline{a}}|+|\mathcal{X}^{1}_{a}|+|\mathcal{X}^{1}_{\overline{a}}|\leq 6n-55+4n-30+4n-29+6n-52=20n-166$ for any $n\geq 13$.

Consider $\mathbf{u}$ is an alternating sequence. Since $\mathbf{u}$ is an alternating sequence $\mathbf{a}_s$, we have $x_i=(i\mod 2), y_j=(j+1\mod 2),$
for any $3\leq i\leq s+2$ and $2\leq j\leq s+1$. For convenience, we let
$$\mathbf{x}^{1}_{a}=\mathbf{p}\,\mathbf{z},~\mathbf{y}^{1}_{a}=\mathbf{p}\,\mathbf{\widetilde{z}};
~\mathbf{x}^{1}_{\overline{a}}=\mathbf{p}\,\mathbf{z}_{[1,l_1]},a,~\mathbf{y}^{1}_{\overline{a}}=\mathbf{p}\,\mathbf{\widetilde{z}}\,a,$$
where $|\mathbf{\widetilde{z}}|=l_1$, $|\mathbf{z}|=l_1+2$, $|\mathbf{p}|=n-4-l_1$, $z_{l_1}=\overline{a}$, and $\widetilde{z}_{l_1}=a$. By comparison, if $s\geq 3$ then we have $p_i=x_{i+1}=y_{i+2}=(i+1\mod 2)$ for any $1\leq i\leq s-1$. Thus, if $s\geq 3$ then $|\mathbf{p}|=n-4-l_1\geq s-1$ which makes $l_1\leq l$. Moreover, if $x_{s+3}\neq x_{s+2}$ then $y_{s+2}=x_{s+2}\neq x_{s+3}$ because of $v_1=x_{s+3}$ and $\widetilde{v}_1=y_{s+2}$. So, $x_{s+1}\neq y_{s+2}$ because of $x_{s+1}\neq x_{s+2}$. Therefore, when $s\geq 3$, if $x_{s+3}\neq x_{s+2}$ then $l_1=l$; otherwise $l_1<l$.

When $s=2$, then we have $x_5=0$ and $l=n-3-2\geq 8$ for any $n\geq 13$. Moreover, the third bit of $\mathbf{x}^{0}_{b}$ and $\mathbf{y}^{0}_{b}$ are not same for all $b\in \{a,\overline{a}\}$. By Lemmas $\ref{lm6}$ and $\ref{lm7}$, we have
$$|\mathcal{X}^{0}_{a}|\leq 6(n-3)-33=6n-51,$$ and
$$|\mathcal{X}^{0}_{\overline{a}}|\leq 4(n-4)-17=4n-33,$$ 
because of $x_4=x_5=0$. Combing with Inequations $(\ref{eq9})$-$(\ref{eq10})$, we have $|\mathcal{X}|= |\mathcal{X}^{0}_{a}|+|\mathcal{X}^{0}_{\overline{a}}|+|\mathcal{X}^{1}_{a}|+|\mathcal{X}^{1}_{\overline{a}}|\leq 6n-51+4n-33+4n-29+6n-52=20n-165$ for any $n\geq 13$. Inequations $(\ref{eq9})$ and $(\ref{eq10})$ hold only when $l_1\leq 4$ and $l_1=6$ which causes a contradiction. So, when $s=2$, we have $|\mathcal{X}|\leq 20n-166$ for any $n\geq 13$.

When $s\geq 3$, we will discuss the value of $|\mathcal{X}|$ based on the value of $l$. Specifically, we will divide in the following three cases: \textbf{Case A}: $l\geq 7$; \textbf{Case B}: $l=6$; \textbf{Case C}: $l\leq 5$.

\textbf{Case A}: Consider $l\geq 7$. If $x_{s+3}\neq x_{s+2}$ then $l_1=l\geq 7$ and $y_{s+2}=x_{s+2}=y_{s+1}\neq x_{s+1}$. Then $p_{s-1}=\widetilde{z}_1$. Since $l=l_1\geq 7$, by Lemmas $\ref{lm6}$ and $\ref{lm7}$, we have
$|\mathcal{X}^{0}_{a}|\leq 6(n-3)-32=6n-50, |\mathcal{X}^{0}_{\overline{a}}|\leq 4(n-4)-15=4n-31,|\mathcal{X}^{1}_{a}|\leq 4(n-4)-16=4n-32,
|\mathcal{X}^{1}_{\overline{a}}|\leq 6(n-3)-35=6n-53,$ for any $n\geq 13$. Therefore, we have $|\mathcal{X}|= |\mathcal{X}^{0}_{a}|+|\mathcal{X}^{0}_{\overline{a}}|+|\mathcal{X}^{1}_{a}|+|\mathcal{X}^{1}_{\overline{a}}|\leq 6n-50+4n-31+4n-32+6n-53=20n-166$ for any $n\geq 13$.

If $x_{s+3}=x_{s+2}$ then $l_1\leq l-1$. For convenience, let $x_{s+2}=c$ with $c\in \mathbb{F}_2$. Then we have $x_{s+1}=y_{s+2}=\overline{c}$ and $x_{s+2}=x_{s+3}=c$. In the following, we divide the two cases: $l\geq 8$ and $l=7$.

When $l\geq 8$, since $u_{s}=v_1$ by Lemmas $\ref{lm6}$ and $\ref{lm7}$, we have
$|\mathcal{X}^{0}_{a}|\leq 6(n-3)-33=6n-51, |\mathcal{X}^{0}_{\overline{a}}|\leq 4(n-4)-17=4n-33,|\mathcal{X}^{1}_{a}|\leq 4(n-4)-13=4n-29,
|\mathcal{X}^{1}_{\overline{a}}|\leq 6(n-3)-34=6n-52,$ for any $n\geq 13$. Therefore, we have $|\mathcal{X}|= |\mathcal{X}^{0}_{a}|+|\mathcal{X}^{0}_{\overline{a}}|+|\mathcal{X}^{1}_{a}|+|\mathcal{X}^{1}_{\overline{a}}|\leq 6n-51+4n-33+4n-29+6n-52=20n-165$ for any $n\geq 13$. The last two inequations hold only when $l_1\leq 4$ and $l_1=6$ which causes a contradiction. So, when $l\geq 8$, we have $|\mathcal{X}|\leq 20n-166$ for any $n\geq 13$.

When $l=7$, if $y_{s+3}=\overline{c}$ then $x_{s+2}\neq y_{s+3}$ which makes $l_1=l-1=6$; otherwise $l_1\leq 5$. Consider $y_{s+3}=\overline{c}$. Then $y_{s+3}\neq x_{s+2}$ and $y_{s+3}=y_{s+2}$ which makes $p_{s-1}=\widetilde{z}_1$. By Lemmas $\ref{lm6}$ and $\ref{lm7}$, we have
$|\mathcal{X}^{0}_{a}|\leq 6(n-3)-32=6n-50, |\mathcal{X}^{0}_{\overline{a}}|\leq 4(n-4)-17=4n-33,|\mathcal{X}^{1}_{a}|\leq 4(n-4)-16=4n-32,
|\mathcal{X}^{1}_{\overline{a}}|\leq 6(n-3)-34=6n-52,$ for any $n\geq 13$. Therefore, $|\mathcal{X}|\leq 20n-167$ for any $n\geq 13$. Consider $y_{s+3}=c$. Then $l_1\leq 5$. By Lemma $\ref{lm6}$, we have
$$|\mathcal{X}^{1}_{\overline{a}}|\leq 4(n-3)-14=4n-26,$$
for any $n\geq 14$. Combing with Inequation $(\ref{eq9})$, we have $|\mathcal{X}|= |\mathcal{X}^{0}_{a}|+|\mathcal{X}^{0}_{\overline{a}}|+|\mathcal{X}^{1}_{a}|+|\mathcal{X}^{1}_{\overline{a}}|\leq 6n-50+4n-33+4n-29+4n-26=18n-138\leq 20n-166$ for any $n\geq 14$. When $n=13$, it have been verified that $|\mathcal{X}|\leq 20n-166$ \cite{Pham2}.

\textbf{Case B}: Consider $l=6$. If $x_{s+3}=x_{s+2}$ then $l_1\leq l-1$. By Lemmas $\ref{lm6}$ and $\ref{lm7}$, we have
we have
$$|\mathcal{X}^{0}_{\overline{a}}|\leq 4(n-4)-16=4n-32,|\mathcal{X}^{1}_{\overline{a}}|\leq 4(n-3)-14=4n-26,$$
for any $n\geq 14$. Combing with Inequations $(\ref{eq7})$ and $(\ref{eq9})$, we have $|\mathcal{X}|= |\mathcal{X}^{0}_{a}|+|\mathcal{X}^{0}_{\overline{a}}|+|\mathcal{X}^{1}_{a}|+|\mathcal{X}^{1}_{\overline{a}}|\leq 6n-49+4n-32+4n-29+4n-26=18n-136\leq 20n-166$ for any $n\geq 15$. When $n=13$ or $14$, it have been verified that $|\mathcal{X}|\leq 20n-166$ \cite{Pham2}.

If $x_{s+3}\neq x_{s+2}$ and $l=6$ then $\mathbf{x}=\mathbf{a}_{s+2}\,x_{n-6}\,x_{n-5}\,x_{n-4}\,\overline{a}\,a\,\overline{a}\,a$ and $\mathbf{y}=0\,\mathbf{a}_{s}\,y_{n-7}\,y_{n-6}\,y_{n-5}\,y_{n-4}\,\overline{a}\,a\,a\,\overline{a}$, where $x_{n-6}=x_{n-8}\neq y_{n-7}$. For convenience, let $x_{n-8}=c$ with $c\in \mathbb{F}_2$. Thus if $n$ is odd then $c=1$; otherwise $c=0$. Moreover, we have $\mathbf{x}^{1}_{a}=0\,\mathbf{a}_{n-13}\,\overline{c}\,c\,\overline{c}\,c\,x_{n-5}\,x_{n-4}\,\overline{a}\,a\,\overline{a}$, $\mathbf{y}^{1}_{a}=0\,\mathbf{a}_{n-13}\,\overline{c}\,\overline{c}\,y_{n-6}\,y_{n-5}\,y_{n-4}\,\overline{a}\,a$, $\mathbf{x}^{1}_{\overline{a}}=0\,\mathbf{a}_{n-13}\,\overline{c}\,c\,\overline{c}\,c\,x_{n-5}\,x_{n-4}\,\overline{a}\,a$, and $\mathbf{y}^{1}_{\overline{a}}=0\,\mathbf{a}_{n-13}\,\overline{c}\,\overline{c}\,y_{n-6}\,y_{n-5}\,y_{n-4}\,\overline{a}\,a\,a$.
Since $y_{n-7}=y_{n-8}=\overline{c}$ thus by Lemma $\ref{lm7}$ we have
$$|\mathcal{X}^{1}_{a}|\leq 4(n-4)-16=4n-32,$$
for any $n\geq 13$. For convenience, let $\mathbf{v}=c\,\overline{c}\,c\,x_{n-5}\,x_{n-4}\,\overline{a}$, $\mathbf{\widetilde{v}}=\overline{c}\,y_{n-6}\,y_{n-5}\,y_{n-4}\,\overline{a}\,a$. Thus, $d_L(\mathbf{v},\mathbf{\widetilde{v}})\geq 2$. By decompose $\mathbf{x}^{1}_{\overline{a}}$ and $\mathbf{y}^{1}_{\overline{a}}$ on both the left and right sides, continue the process until $\mathbf{x}^{1}_{\overline{a}}$ and $\mathbf{y}^{1}_{\overline{a}}$ are transformed into $\mathbf{v}$ and $\mathbf{\widetilde{v}}$, respectively. Then, we have
$$|\mathcal{X}^{1}_{\overline{a}}|\leq (n-3-8)|D_2(\mathbf{v})\cap D_2(\mathbf{\widetilde{v}})|+6+|D_3(\mathbf{v})\cap D_3(\mathbf{\widetilde{v}})|,$$
where $\mathbf{v}=c\,\overline{c}\,c\,x_{n-5}\,x_{n-4}\,\overline{a}$, $\mathbf{\widetilde{v}}=\overline{c}\,y_{n-6}\,y_{n-5}\,y_{n-4}\,\overline{a}\,a$, and $d_L(\mathbf{v},\mathbf{\widetilde{v}})\geq 2$. By using a computerized search, we can obtain $|D_2(\mathbf{v})\cap D_2(\mathbf{\widetilde{v}})|\leq 6$ and $|D_3(\mathbf{v})\cap D_3(\mathbf{\widetilde{v}})|\leq 8$. Moreover, taking into account the structure of $\mathbf{v}$ and $\mathbf{\widetilde{v}}$, $|D_2(\mathbf{v})\cap D_2(\mathbf{\widetilde{v}})|=6$ if and only if $\mathbf{v}=1\,0\,1\,0\,1\,0, \mathbf{\widetilde{v}}=0\,1\,1\,0\,0\,1$ or $\mathbf{v}=0\,1\,0\,1\,0\,1, \mathbf{\widetilde{v}}=1\,0\,0\,1\,1\,0$.

If $|D_2(\mathbf{v})\cap D_2(\mathbf{\widetilde{v}})|\leq 5$ then
$$|\mathcal{X}^{1}_{\overline{a}}|\leq 5(n-3-8)+6+8=5n-41,$$
for any $n\geq 13$.
Combing with Inequations $(\ref{eq7})-(\ref{eq8})$, we have $|\mathcal{X}|= |\mathcal{X}^{0}_{a}|+|\mathcal{X}^{0}_{\overline{a}}|+|\mathcal{X}^{1}_{a}|+|\mathcal{X}^{1}_{\overline{a}}|\leq 6n-49+4n-30+4n-32+5n-41=19n-152\leq 20n-166$ for any $n\geq 14$. When $n=13$ or $14$, it have been verified that $|\mathcal{X}|\leq 20n-166$ \cite{Pham2}.

When $|D_2(\mathbf{v})\cap D_2(\mathbf{\widetilde{v}})|=6$, if $n$ is odd then $\mathbf{v}=1\,0\,1\,0\,1\,0, \mathbf{\widetilde{v}}=0\,1\,1\,0\,0\,1$; otherwise $\mathbf{v}=0\,1\,0\,1\,0\,1, \mathbf{\widetilde{v}}=1\,0\,0\,1\,1\,0$. Consider $n$ is odd. By decompose $\mathbf{x}^{1}_{a}$ and $\mathbf{y}^{1}_{a}$ on both the left and right sides, continue the process until $\mathbf{x}^{1}_{\overline{a}}$ and $\mathbf{y}^{1}_{\overline{a}}$ are transformed into $\mathbf{v}\,1\,0$ and $\mathbf{\widetilde{v}}$, respectively. Then, we have
$$|\mathcal{X}^{1}_{\overline{a}}|\leq 4(n-4-1-6)+1+|D_4(\mathbf{v}\,1\,0)\cap D_2(\mathbf{\widetilde{v}})|,$$
for any $n\geq 13$. Moreover, by using a computerized search, we have $|D_4(\mathbf{v}\,1\,0)\cap D_2(\mathbf{\widetilde{v}})|\leq 8$.
Thus, we have
$$|\mathcal{X}^{1}_{\overline{a}}|\leq 4(n-4-1-6)+1+8=4n-35.$$
Combing with Inequations $(\ref{eq7}),(\ref{eq8}),(\ref{eq10})$, we have $|\mathcal{X}|= |\mathcal{X}^{0}_{a}|+|\mathcal{X}^{0}_{\overline{a}}|+|\mathcal{X}^{1}_{a}|+|\mathcal{X}^{1}_{\overline{a}}|\leq 6n-49+4n-30+4n-35+6n-52=20n-166$ for any $n\geq 13$. Similarly, when $n$ is even we also have $|\mathcal{X}|\leq 20n-166$ for any $n\geq 13$.

\textbf{Case C}: Consider $l\leq 5$. By Lemma $\ref{lm6}$, we have
$$|\mathcal{X}^{0}_{a}|\leq 4(n-3)-14=4n-26,|\mathcal{X}^{1}_{\overline{a}}|\leq 4(n-3)-14=4n-26,$$
for any $n\geq 14$. Combing with Inequations $(\ref{eq8})$-$(\ref{eq9})$, we have $|\mathcal{X}|= |\mathcal{X}^{0}_{a}|+|\mathcal{X}^{0}_{\overline{a}}|+|\mathcal{X}^{1}_{a}|+|\mathcal{X}^{1}_{\overline{a}}|\leq 4n-26+4n-30+4n-29+4n-26=16n-111\leq 20n-166$ for any $n\geq 14$. When $n=13$, it have been verified that $|\mathcal{X}|\leq 20n-166$ \cite{Pham2}.

Based on the above discussion of the five cases, the lemma follows.
\end{proof}

\subsection{The proof of the subcase $(e)$}
Consider the subcase with $x_i \neq y_i$ (for $i = 1, n$), $x_j \neq x_{j+1}$ and $y_j \neq y_{j+1}$ (for $j = 1, n-1$), and with $x_l = y_l$ for $l = 3, n-2$, and at least one of the following: $x_4 \neq y_4$, $x_4 = y_4 = x_3$, $x_{n-3} \neq y_{n-3}$, or $x_{n-3} = y_{n-3} = x_{n-2}$. Without loss of generality,  we only need to consider the first and second scenarios, $x_4 \neq y_4$ or $x_4 = y_4 = x_3$, because the third and fourth scenarios ($x_{n-3} \neq y_{n-3}$, or $x_{n-3} = y_{n-3} = x_{n-2}$) are reduced to the first and second by reversing $\mathbf{x}$ and $\mathbf{y}$. 
\begin{lemma}
\label{lm12}
Let $n\geq 13$ and let $\mathbf{x}=x_1\,\cdots\,x_n, \mathbf{y}=y_1\,\cdots\,y_n\in \mathbb{F}_2^n$ with $x_i \neq y_i$, $x_j \neq x_{j+1}$, $y_j \neq y_{j+1}$, $x_l = y_l$ for $i\in\{1,n\}, j\in\{1,n-1\}, l\in\{3,n-2\}$. If $x_4\neq y_4$ or $x_4=y_4=x_3$, then
\begin{equation}
|D_4(\mathbf{x})\cap D_4(\mathbf{y})|\leq 20n-166.\nonumber
\end{equation}
\end{lemma}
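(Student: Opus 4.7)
The plan is to mimic the decomposition used in the proofs of Lemmas~\ref{lm10} and \ref{lm11}, but with a finer case analysis dictated by the new hypothesis on position~$4$.

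First I normalize by setting $x_1=1,y_1=0,x_n=a,y_n=\overline{a}$; the hypotheses then force $x_2=0,y_2=1,x_{n-1}=\overline{a},y_{n-1}=a$. Writing $c=x_3=y_3$ and $d=x_{n-2}=y_{n-2}$, we obtain
\begin{equation*}
\mathbf{x}=1\,0\,c\,x_4\cdots x_{n-3}\,d\,\overline{a}\,a,\qquad \mathbf{y}=0\,1\,c\,y_4\cdots y_{n-3}\,d\,a\,\overline{a}.
\end{equation*}
By Propositions~\ref{prp1} and \ref{prp2} the set $\mathcal{X}=D_4(\mathbf{x})\cap D_4(\mathbf{y})$ decomposes as $|\mathcal{X}|=|\mathcal{X}^0_a|+|\mathcal{X}^0_{\overline{a}}|+|\mathcal{X}^1_a|+|\mathcal{X}^1_{\overline{a}}|$, where each of the four inner intersections is either a ``$D_3\cap D_3$'' intersection of two sequences of length $n-3$ or a ``$D_2\cap D_4$'' (resp.\ $D_4\cap D_2$) intersection of sequences of lengths $n-4$ and $n-2$. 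Thus Lemmas~\ref{lm6} and \ref{lm7} are the appropriate tools for each of the four pieces.

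I then split the proof by the hypothesis on position~$4$. In Case~A ($x_4\neq y_4$), the longest common prefix of each inner pair of sequences terminates by position~$3$, so the ``non-alternating middle'' length $l$ appearing in Lemmas~\ref{lm6}--\ref{lm7} is forced to be at least $n-\Theta(1)$, and moreover the pattern $c,x_4,y_4$ already breaks alternation of the common prefix of some of the four pieces. In Case~B ($x_4=y_4=c$), the common prefix may extend to position~$4$, but because $x_3=x_4$ the common prefix is itself \emph{not} alternating, so the sharper ``non-alternating prefix'' bounds stated at the end of Lemmas~\ref{lm6} and \ref{lm7} apply to at least one of the four pieces. The analysis on the right end, where $x_{n-2}=y_{n-2}=d$ is given, is identical in both cases and contributes its standard bound through the same two lemmas.

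Finally, I add the four bounds $|\mathcal{X}^i_j|$ in each case. The arithmetic is already close to the target: the generic sum $6n-52+4n-30+4n-30+6n-52=20n-164$ obtained in Lemma~\ref{lm10} overshoots the claim by only $2$, so one only needs to check that in both Cases~A and~B at least one of the four pieces loses $2$ compared with its generic bound, either from a reduced length parameter $l$ or from a non-alternating common prefix/suffix. The main obstacle will be bookkeeping: tracking the many sub-sub-cases generated by the values of $c,d,x_4,y_4,x_{n-3},y_{n-3}$ and verifying the claimed $-2$ savings in each; any small-$n$ configurations where the asymptotic accounting is not sharp (most naturally $n=13,14$) will be disposed of by the same computerized verification used elsewhere in Section~\ref{sec4}, and the inductive scheme of Theorem~\ref{thm5} is \emph{not} needed here because the bound follows directly from Lemmas~\ref{lm6}--\ref{lm7} and these case checks.
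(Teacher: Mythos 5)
Your high-level frame (split $\mathcal{X}$ into $\mathcal{X}^0_a,\mathcal{X}^0_{\overline{a}},\mathcal{X}^1_a,\mathcal{X}^1_{\overline{a}}$ and bound each piece) matches the paper, and your observation that $x_4=y_4=x_3$ forces a non-alternating common prefix is the right idea for that branch. But the central quantitative claim of your plan --- that the four pieces generically sum to $20n-164$ as in Lemma~\ref{lm10}, so that a saving of $2$ from one piece suffices --- does not survive in this configuration, and this is a genuine gap. The $20n-164$ accounting in Lemma~\ref{lm10} relies on every one of the four inner pairs satisfying the hypotheses of Lemmas~\ref{lm6} and \ref{lm7}, in particular $s+t\geq 1$ (a nonempty common prefix or suffix). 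Here, because $x_3=y_3$ and $x_{n-2}=y_{n-2}$, the pieces $\mathcal{X}^1_a$ and $\mathcal{X}^1_{\overline{a}}$ involve sequences whose first symbols are $0$ versus $1$, and depending on $a$ and $b=x_{n-2}$ their last symbols can also disagree; when $s=t=0$ Lemmas~\ref{lm6}--\ref{lm7} simply do not apply, and the only available bounds are the weaker Corollary~\ref{cor1} ($6(n-3)-30=6n-48$) and Lemma~\ref{lm3} ($4(n-4)-13=4n-29$). Plugging those in, the naive sum is on the order of $20n-160$ or larger, so you must find savings of roughly $6$, not $2$, and they cannot come from the non-alternating refinements alone.

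The paper closes this gap by two mechanisms you do not have. In the case $x_4\neq y_4$ (after the complementation reduction to $x_3=y_3=1$, the subcase $x_4=0$, $y_4=1$), it performs a \emph{second} level of decomposition of $\mathcal{X}^0_a$ into $\mathcal{X}^{00}_a,\mathcal{X}^{010}_a,\mathcal{X}^{011}_a$ (and similarly for $\mathcal{X}^1_{\overline{a}}$), using Lemma~\ref{lm3} on the resulting unbalanced intersections to obtain $|\mathcal{X}^0_a|\leq 4n-32$ and $|\mathcal{X}^1_{\overline{a}}|\leq 4n-29$; all four pieces then sit at $\approx 4n$, giving $16n-119$, far below the target. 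In the case $x_4=y_4=x_3$ (and in the subcase $x_4=1$, $y_4=0$), the non-alternating-prefix refinements of Lemmas~\ref{lm6}--\ref{lm7} do carry the day, but the bookkeeping is the opposite of what you describe: the pieces where those lemmas apply must save $6$ and $1$ (e.g.\ $6n-58$ and $4n-31$) precisely to offset the pieces where they do not apply and which \emph{exceed} the generic bound ($6n-48$ and $4n-29$), with an additional case split on $b=x_{n-2}$ versus $\overline{a}$ and on $x_{n-3}$ to land exactly on $20n-166$. Without these two ingredients your argument does not reach the stated bound.
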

\begin{proof}
Without loss of generality, we let $x_1=1$, $x_3=c$, $x_{n-2}=b, x_n=a$ with $a,b,c\in \mathbb{F}_2$. Then, $\mathbf{x}=1\,0\,c\,x_4\,\cdots\,x_{n-3}\,b\,\overline{a}\,a$, and $\mathbf{y}=0\,1\,c\,y_4\,\cdots\,y_{n-3}\,b\,a\,\overline{a}$. We denote $\mathcal{X}=D_4(\mathbf{x})\cap D_4(\mathbf{y})$. Let $c=0$. Then we have $\mathbf{x}=1\,0\,0\,x_4\,\cdots\,x_{n-3}\,b\,\overline{a}\,a$ and  $\mathbf{y}=0\,1\,0\,y_4\,\cdots\,y_{n-3}\,b\,a\,\overline{a}$. Thus, $\overline{\mathbf{x}}=0\,1\,1\,\overline{x_4}\,\cdots\,\overline{x_{n-3}}\,\overline{b}\,a\,\overline{a}$ and $\overline{\mathbf{y}}=1\,0\,1\,\overline{y_4}\,\cdots\,\overline{y_{n-3}}\,\overline{b}\,\overline{a}\,a$. It is easily verified that $|D_t(\mathbf{x})\cap D_t(\mathbf{y})|=|D_t(\overline{\mathbf{x}})\cap D_t(\overline{\mathbf{y}})|$. Assume that $c=1$. Then we have  $\mathbf{x}=1\,0\,1\,x_4\,\cdots\,x_{n-3}\,b\,\overline{a}\,a$ and  $\mathbf{y}=0\,1\,1\,y_4\,\cdots\,y_{n-3}\,b\,a\,\overline{a}$. By comparison, we only discuss the case where $c=1$.

By Proposition $\ref{prp1}$, it follows that $|\mathcal{X}|=|\mathcal{X}^{0}_{a}|+|\mathcal{X}^{0}_{\overline{a}}|+|\mathcal{X}^{1}_{a}|+|\mathcal{X}^1_{\overline{a}}|$. Next, we discuss the value of $|\mathcal{X}|$ in the following three cases: $1)~x_4=0$ and $y_4=1$; $2)~x_4=1$ and $y_4=0$; $3)~x_4=1$ and $y_4=1$.


\textbf{Case $1)$:} When $x_4=0$ and $y_4=1$, then $\mathbf{x}=(1,0,1,0,\cdots,b,\overline{a},a)$ and  $\mathbf{y}=(0,1,1,1,\cdots,b,a,\overline{a})$. Thus,
\begin{align*}
|\mathcal{X}^{0}_{a}|&=|D_3(\mathbf{x}^{0}_{a})\cap D_3(\mathbf{y}^{0}_{a})|=|D_3(1\,0\,x_5\,\cdots\,x_{n-3}\,b\,\overline{a})\cap D_3(1\,1\,1\,y_5\,\cdots\,y_{n-3}\,b)|,\\
|\mathcal{X}^{0}_{\overline{a}}|&=|D_2(\mathbf{x}^{0}_{\overline{a}})\cap D_4(\mathbf{y}^{0}_{\overline{a}})|=|D_2(1\,0\,x_5\,\cdots\,x_{n-3}\,b)\cap D_4(1\,1\,1\,y_5\,\cdots\,y_{n-3}\,b\,a)|,\\
|\mathcal{X}^{1}_{a}|&=|D_4(\mathbf{x}^{1}_{a})\cap D_3(\mathbf{y}^{1}_{a})|=|D_4(0\,1\,0\,x_5\,\cdots\,x_{n-3}\,b\,\overline{a})\cap D_2(1\,1\,y_5\,\cdots\,y_{n-3}\,b)|,\\
|\mathcal{X}^{1}_{\overline{a}}|&=|D_3(\mathbf{x}^{1}_{\overline{a}})\cap D_3(\mathbf{y}^{1}_{\overline{a}})|=|D_3(0\,1\,0\,x_5\,\cdots\,x_{n-3}\,b)\cap D_3(1\,1\,y_5\,\cdots\,y_{n-3}\,b\,a)|.
\end{align*}
Moreover, if $d_L(\mathbf{x}^{0}_{a},\mathbf{y}^{0}_{a})\leq 1$ then $d_L(\mathbf{x},\mathbf{y})\leq 2$. Thus, $d_L(\mathbf{x}^{0}_{a},\mathbf{y}^{0}_{a})\geq 2$. Similarly, it is easily verified that $\mathbf{x}^{0}_{\overline{a}} \notin D_2(\mathbf{y}^{0}_{\overline{a}}), \mathbf{y}^{1}_{a} \notin D_2(\mathbf{x}^{1}_{a}),$ and $d_L(\mathbf{x}^{1}_{\overline{a}},\mathbf{y}^{1}_{\overline{a}})\geq 2$.

By Lemma $\ref{lm3}$, we have
$$|\mathcal{X}^{0}_{\overline{a}}|\leq 4(n-4)-13=4n-29,|\mathcal{X}^{1}_{a}|\leq 4(n-4)-13=4n-29,$$
where $|\mathbf{x}^{0}_{\overline{a}}|=|\mathbf{y}^{1}_{a}|=n-4$ and $n\geq 13$.
Next, we estimate the value of $|\mathcal{X}^{0}_{a}|$ for any $n\geq 13$. By Propositions $\ref{prp1}$ and $\ref{prp2}$, we have
$|\mathcal{X}^{0}_{a}|=|\mathcal{X}^{00}_{a}|+|\mathcal{X}^{010}_{a}|+|\mathcal{X}^{011}_{a}|,$ and
\begin{align*}
|\mathcal{X}^{00}_{a}|&=|D_2(\mathbf{x}^{00}_{a})\cap D_{0-\ell_1}(\mathbf{y}^{00}_{a})|=|D_2(x_5\,\cdots\,x_{n-3}\,b\,\overline{a})\cap D_{0-\ell_1}(y_{6+\ell_1}\,\cdots\,y_{n-3}\,b)|,\\
|\mathcal{X}^{010}_{a}|&=|D_3(\mathbf{x}^{010}_{a})\cap D_{1-\ell_1}(\mathbf{y}^{010}_{a})|=|D_3(x_5\,\cdots\,x_{n-3}\,b\,\overline{a})\cap D_{1-\ell_1}(y_{6+\ell_1}\,\cdots\,y_{n-3}\,b)|,\\
|\mathcal{X}^{011}_{a}|&=|D_{2-\ell}(\mathbf{x}^{011}_{a})\cap D_{3}(\mathbf{y}^{011}_{a})|=|D_{2-\ell}(x_{6+\ell}\,\cdots\,x_{n-3}\,b\,\overline{a})\cap D_{3}(1\,y_{5}\,\cdots\,y_{n-3}\,b)|.
\end{align*}
Thus, $|\mathcal{X}^{00}_{a}|\leq |D_{0-\ell_1}(y_{6+\ell_1}\,\cdots\,y_{n-3}\,b)|\leq |D_{0}(y_{6+\ell_1}\,\cdots\,y_{n-3}\,b)|=1$ and $|\mathcal{X}^{010}_{a}|\leq |D_1(y_{6}\,\cdots\,y_{n-3}\,b)|\leq n-7$. Consider the value of $|\mathcal{X}^{011}_{a}|$. If $\ell=0$ then $\mathbf{x}^{011}_{a}\notin D_1(\mathbf{y}^{011}_{a})$ because of $\mathbf{x}^{011}_{a}\in D_1(\mathbf{y}^{011}_{a})$ which will make $d_L(\mathbf{x},\mathbf{y})=2$ and causes a contradiction with $d_L(\mathbf{x},\mathbf{y})\geq 3$. Furthermore, by Lemma $\ref{lm3}$, we have $$|\mathcal{X}^{011}_{a}|\leq 3(n-6)-8=3n-26,$$
for $n\geq 13$. Therefore,
$$|\mathcal{X}^{0}_{a}|=|\mathcal{X}^{00}_{a}|+|\mathcal{X}^{010}_{a}|+|\mathcal{X}^{011}_{a}|\leq 1+(n-7)+(3n-26)=4n-32.$$
Similarly, we can also estimate the value of $|\mathcal{X}^{1}_{\overline{a}}|$. By Propositions $\ref{prp1}$ and $\ref{prp2}$, we have 
$|\mathcal{X}^{1}_{\overline{a}}|=|\mathcal{X}^{10}_{\overline{a}}|+|\mathcal{X}^{11}_{\overline{a}}|,$ and
\begin{align*}
|\mathcal{X}^{10}_{\overline{a}}|&=|D_3(\mathbf{x}^{10}_{\overline{a}})\cap D_{1-\ell^*}(\mathbf{y}^{10}_{\overline{a}})|=|D_3(1\,0\,x_5\,\cdots\,x_{n-3}\,b)\cap D_{1-\ell^*}(y_{6+\ell^*}\,\cdots\,y_{n-3}\,b\,a)|,\\
|\mathcal{X}^{11}_{\overline{a}}|&=|D_{2}(\mathbf{x}^{11}_{\overline{a}})\cap D_{3}(\mathbf{y}^{11}_{\overline{a}})|=|D_{2}(0\,x_5\,\cdots\,x_{n-3}\,b)\cap D_{3}(1\,y_{5}\,\cdots\,y_{n-3}\,b\,a)|,
\end{align*}
where $\mathbf{x}^{11}_{\overline{a}}\notin D_{1}(\mathbf{y}^{11}_{\overline{a}})$.
Thus, $|\mathcal{X}^{10}_{\overline{a}}|\leq |D_{1-\ell^*}(y_{6+\ell^*}\,\cdots\,y_{n-3}\,b\,a)|\leq n-6$. By Lemma $\ref{lm3}$, we have
$$|\mathcal{X}^{11}_{\overline{a}}|\leq 3(n-5)-8=3n-23.$$
Therefore,
$$|\mathcal{X}^{1}_{\overline{a}}|=|\mathcal{X}^{10}_{\overline{a}}|+|\mathcal{X}^{11}_{\overline{a}}|\leq n-6+3n-23=4n-29.$$ So, when $x_4=0$ and $y_4=1$, it follows that
$$|\mathcal{X}|=|\mathcal{X}^{0}_{a}|+|\mathcal{X}^{0}_{\overline{a}}|+|\mathcal{X}^{1}_{a}|+|\mathcal{X}^1_{\overline{a}}|\leq 4n-32+(4n-29)*3<20n-166,$$
for any $n\geq 13$.

\textbf{Case $2)$:} When $x_4=1$ and $y_4=0$, then $\mathbf{x}=1\,0\,1\,1\,\cdots\,b\,\overline{a}\,a$ and  $\mathbf{y}=0\,1\,1\,0\,\cdots\,b\,a\,\overline{a}$. Thus,
\begin{align*}
|\mathcal{X}^{0}_{a}|&=|D_3(\mathbf{x}^{0}_{a})\cap D_3(\mathbf{y}^{0}_{a})|=|D_3(1\,1\,x_5\,\cdots\,x_{n-3}\,b\,\overline{a})\cap D_3(1\,1\,0\,y_5\,\cdots\,y_{n-3}\,b)|,\\
|\mathcal{X}^{0}_{\overline{a}}|&=|D_2(\mathbf{x}^{0}_{\overline{a}})\cap D_4(\mathbf{y}^{0}_{\overline{a}})|=|D_2(1\,1\,x_5\,\cdots\,x_{n-3}\,b)\cap D_4(1\,1\,0\,y_5\,\cdots\,y_{n-3}\,b\,a)|,\\
|\mathcal{X}^{1}_{a}|&=|D_4(\mathbf{x}^{1}_{a})\cap D_3(\mathbf{y}^{1}_{a})|=|D_4(0\,1\,1\,x_5\,\cdots\,x_{n-3}\,b\,\overline{a})\cap D_2(1\,0\,y_5\,\cdots\,y_{n-3}\,b)|,\\
|\mathcal{X}^{1}_{\overline{a}}|&=|D_3(\mathbf{x}^{1}_{\overline{a}})\cap D_3(\mathbf{y}^{1}_{\overline{a}})|=|D_3(0\,1\,1\,x_5\,\cdots\,x_{n-3}\,b)\cap D_3(1\,0\,y_5\,\cdots\,y_{n-3}\,b\,a)|,
\end{align*}
where $d_L(\mathbf{x}^{0}_{a},\mathbf{y}^{0}_{a})\geq 2$, $\mathbf{x}^{0}_{\overline{a}} \notin D_2(\mathbf{y}^{0}_{\overline{a}}), \mathbf{y}^{1}_{a} \notin D_2(\mathbf{x}^{1}_{a}),$ and $d_L(\mathbf{x}^{1}_{\overline{a}},\mathbf{y}^{1}_{\overline{a}})\geq 2$. For convenience, we let $\mathbf{x}^{0}_{a}=\mathbf{u\,v\,w}$ and $\mathbf{y}^{0}_{a}=\mathbf{u\,\widetilde{v}\,w}$, where $\mathbf{u}$ and $\mathbf{w}$ are the longest common prefix and suffix of $\mathbf{x}^{0}_{a}$ and $\mathbf{y}^{0}_{a}$, respectively. Hence, $\mathbf{u}$ is also the longest common prefix of $\mathbf{x}^{0}_{\overline{a}}$ and $\mathbf{y}^{0}_{\overline{a}}$. We let $\mathbf{x}^{1}_{\overline{a}}=\mathbf{u_1\,v_1\,w_1}$ and $\mathbf{y}^{1}_{\overline{a}}=\mathbf{u_1\,\widetilde{v_1}\,w_1}$, where $\mathbf{u_1}$ and $\mathbf{w_1}$ are the longest common prefix and suffix of $\mathbf{x}^{1}_{\overline{a}}$ and $\mathbf{y}^{1}_{\overline{a}}$, respectively.

When $b=\overline{a}$, then the last bits of $\mathbf{x}^{0}_{a}$ and $\mathbf{y}^{0}_{a}$ are same such that $|\mathbf{w}|\geq 1$. Since the first and second bits of $\mathbf{x}^{0}_{c}$ and $\mathbf{y}^{0}_{c}$ are same for all $c\in \{a,\overline{a}\}$ thus $\mathbf{u}$ is not a alternating sequence. By Lemmas $\ref{lm6}$ and $\ref{lm7}$, we have
\begin{equation*}
|\mathcal{X}^{0}_{a}|\leq 6(n-3)-40=6n-58, |\mathcal{X}^{0}_{\overline{a}}|\leq 4(n-4)-15=4n-31.
\end{equation*}
By Lemma $\ref{lm3}$ and Corollary $\ref{cor1}$, we have 
\begin{equation*}
|\mathcal{X}^{1}_{a}|\leq 4(n-4)-13=4n-29, |\mathcal{X}^{1}_{\overline{a}}|\leq 6(n-3)-30=6n-48.
\end{equation*}
Therefore, when $b=\overline{a}$ we have $|\mathcal{X}|=|\mathcal{X}^{0}_{a}|+|\mathcal{X}^{0}_{\overline{a}}|+|\mathcal{X}^{1}_{a}|+|\mathcal{X}^1_{\overline{a}}|\leq 6n-58+4n-31+4n-29+6n-48=20n-166,$ for any $n\geq 13$. 

When $b=a$, then the last bits of $\mathbf{x}^{0}_{a}$ and $\mathbf{y}^{0}_{a}$ are not same such that $|\mathbf{w}|=0$. Since the first and second bits of $\mathbf{x}^{0}_{c}$ and $\mathbf{y}^{0}_{c}$ are same for all $c\in \{a,\overline{a}\}$ thus $\mathbf{u}$ is not an alternating sequence. By Lemmas $\ref{lm6}$ and $\ref{lm7}$, we have
\begin{equation*}
|\mathcal{X}^{0}_{a}|\leq 6(n-3)-37=6n-55, |\mathcal{X}^{0}_{\overline{a}}|\leq 4(n-4)-16=4n-32.
\end{equation*}
By Lemma $\ref{lm3}$, we have
\begin{equation*}
|\mathcal{X}^{1}_{a}|\leq 4(n-4)-13=4n-29.
\end{equation*}
If $x_{n-3}=a$ then $\mathbf{w}_1$ is not an alternating sequence. Thus, by Lemma $\ref{lm6}$ we have
\begin{equation*}
|\mathcal{X}^{1}_{\overline{a}}|\leq 6(n-3)-37=6n-55.
\end{equation*}
If $x_{n-3}\neq a$ then $|\mathbf{v}_1|\geq 8$. Thus, by Lemma $\ref{lm6}$ we have
\begin{equation*}
|\mathcal{X}^{1}_{\overline{a}}|\leq 6(n-3)-33=6n-51.
\end{equation*}
Therefore, when $b=a$, we have $|\mathcal{X}|=|\mathcal{X}^{0}_{a}|+|\mathcal{X}^{0}_{\overline{a}}|+|\mathcal{X}^{1}_{a}|+|\mathcal{X}^1_{\overline{a}}|\leq 6n-55+4n-32+4n-29+6n-51<20n-166,$ for any $n\geq 13$. 

By the above discussion, when $x_4=1$ and $y_4=0$, we have $|\mathcal{X}|\leq 20n-166$ for any $n\geq 13$. 

\textbf{Case $3)$:} When $x_4=1$ and $y_4=1$, then $\mathbf{x}=1\,0\,1\,1\,\cdots\,b\,\overline{a}\,a$ and  $\mathbf{y}=0\,1\,1\,1\,\cdots\,b\,a\,\overline{a}$. Thus,
\begin{align*}
|\mathcal{X}^{0}_{a}|&=|D_3(\mathbf{x}^{0}_{a})\cap D_3(\mathbf{y}^{0}_{a})|=|D_3(1\,1\,x_5\,\cdots\,x_{n-3}\,b\,\overline{a})\cap D_3(1\,1\,1\,y_5\,\cdots\,y_{n-3}\,b)|,\\
|\mathcal{X}^{0}_{\overline{a}}|&=|D_2(\mathbf{x}^{0}_{\overline{a}})\cap D_4(\mathbf{y}^{0}_{\overline{a}})|=|D_2(1\,1\,x_5\,\cdots\,x_{n-3}\,b)\cap D_4(1\,1\,1\,y_5\,\cdots\,y_{n-3}\,b\,a)|,\\
|\mathcal{X}^{1}_{a}|&=|D_4(\mathbf{x}^{1}_{a})\cap D_3(\mathbf{y}^{1}_{a})|=|D_4(0\,1\,1\,x_5\,\cdots\,x_{n-3}\,b\,\overline{a})\cap D_2(1\,1\,y_5\,\cdots\,y_{n-3}\,b)|,\\
|\mathcal{X}^{1}_{\overline{a}}|&=|D_3(\mathbf{x}^{1}_{\overline{a}})\cap D_3(\mathbf{y}^{1}_{\overline{a}})|=|D_3(0\,1\,1\,x_5\,\cdots\,x_{n-3}\,b)\cap D_3(1\,1\,y_5\,\cdots\,y_{n-3}\,b\,a)|,
\end{align*}
where $d_L(\mathbf{x}^{0}_{a},\mathbf{y}^{0}_{a})\geq 2$, $\mathbf{x}^{0}_{\overline{a}} \notin D_2(\mathbf{y}^{0}_{\overline{a}}), \mathbf{y}^{1}_{a} \notin D_2(\mathbf{x}^{1}_{a}),$ and $d_L(\mathbf{x}^{1}_{\overline{a}},\mathbf{y}^{1}_{\overline{a}})\geq 2$. We discuss the value of $|\mathcal{X}|$ based on $b=\overline{a}$ or $a$. Similarly, by using the method of proving Case $2$, when $x_4=1$ and $y_4=1$, we also have $|\mathcal{X}|\leq 20n-166$ for any $n\geq 13$.

By the above discussion, if $x_4\neq y_4$ or $x_4=y_4=1$, then we have $|D_4(\mathbf{x})\cap D_4(\mathbf{y})|\leq 20n-166,$ which completes the proof of Lemma $\ref{lm12}$.
\end{proof}

\subsection{The proof of the subcase $(f)$}
For the subcase where $x_i \neq y_i$ (for $i = 1, n$), $x_j \neq x_{j+1}$ and $y_j \neq y_{j+1}$ (for $j = 1, 3, n-2, n-1$), and $x_l = y_l$ (for $l = 3, 4, n-3, n-2$),  Lemma $\ref{lm17}$ shows that $|D_4(\mathbf{x})\cap D_4(\mathbf{y})|\leq 20n-166$. Before that, we introduce some preliminary lemmas and notations.


\begin{definition}
Two binary sequences $\mathbf{x}$ and $\mathbf{y}$ are Type-A-confusable if 
$$\mathbf{x}=\mathbf{u\,v\,w},~~ \text{and}~~ \mathbf{y}=\mathbf{u\,\overline{v}\,w},$$
for some substrings $\mathbf{u}$, $\mathbf{v}$, and $\mathbf{w}$ such that $|\mathbf{v}|\geq 2$, $\overline{\mathbf{v}}$ is the complement of $\mathbf{v}$, and $\mathbf{v}$ is an alternating sequence.
\end{definition}

\begin{definition}
Two binary sequences $\mathbf{x}$ and $\mathbf{y}$ are Type-B-confusable if
$$\mathbf{x}=\mathbf{u}\,a\,\overline{a}\,\mathbf{v}\,b\,\mathbf{w},~~ \text{and}~~ \mathbf{y}=\mathbf{u}\,\overline{a}\,\mathbf{v}\,b\,\overline{b}\,\mathbf{w},$$
or vice versa, for some substrings $\mathbf{u}$, $\mathbf{v}$, and $\mathbf{w}$, and $a,b\in \mathbb{F}_2$.
\end{definition}

Some properties of Type-A/B confusable sequences are given in \cite{Chrisnata}.

\begin{lemma}[{\cite[Lemma 2]{Chrisnata}}]
\label{lm13}
Let $\mathbf{x}$ and $\mathbf{y}$ be binary sequences. We have that $|D_1(\mathbf{x})\cap D_1(\mathbf{y})|=2$ if and only if $\mathbf{x}$ and $\mathbf{y}$ are Type-A confusable.
\end{lemma}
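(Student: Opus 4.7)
The plan is to prove both directions of the equivalence, noting that this result is attributed to \cite{Chrisnata}, so the approach follows the spirit of that paper. For the forward direction, assume $\mathbf{x}=\mathbf{u}\,\mathbf{v}\,\mathbf{w}$ and $\mathbf{y}=\mathbf{u}\,\overline{\mathbf{v}}\,\mathbf{w}$ with $\mathbf{v}=v_1 v_2 \cdots v_k$ alternating and $k\geq 2$. The alternating property gives $v_{i+1}=\overline{v_i}$, so $v_2 v_3 \cdots v_k = \overline{v_1}\,\overline{v_2}\,\cdots\,\overline{v_{k-1}}$. Hence deleting the first bit of $\mathbf{v}$ in $\mathbf{x}$ yields the same sequence as deleting the last bit of $\overline{\mathbf{v}}$ in $\mathbf{y}$; symmetrically, deleting the last bit of $\mathbf{v}$ in $\mathbf{x}$ matches deleting the first bit of $\overline{\mathbf{v}}$ in $\mathbf{y}$. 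This produces two distinct elements in $D_1(\mathbf{x})\cap D_1(\mathbf{y})$, namely $\mathbf{u}\,v_2\cdots v_k\,\mathbf{w}$ and $\mathbf{u}\,v_1\cdots v_{k-1}\,\mathbf{w}$; they are distinct because $v_1\neq v_2$ forces their $(|\mathbf{u}|+1)$-th bits to differ. To upper-bound the intersection by $2$, I would apply Proposition~\ref{prp1} to decompose $D_1(\mathbf{x})\cap D_1(\mathbf{y})$ according to the first and last bits, and use Lemma~\ref{lm5} to localize the intersection to the middle portion, leaving only $D_1(\mathbf{v})\cap D_1(\overline{\mathbf{v}})$, whose size is exactly $2$ by direct inspection of alternating sequences.

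For the reverse direction, assume $|D_1(\mathbf{x})\cap D_1(\mathbf{y})|\geq 2$; in particular $|\mathbf{x}|=|\mathbf{y}|=n$ and $\mathbf{x}\neq \mathbf{y}$. Let $\mathbf{u}$ and $\mathbf{w}$ be the longest common prefix and suffix of $\mathbf{x}$ and $\mathbf{y}$, so that $\mathbf{x}=\mathbf{u}\,\mathbf{v}\,\mathbf{w}$ and $\mathbf{y}=\mathbf{u}\,\widetilde{\mathbf{v}}\,\mathbf{w}$ with $|\mathbf{v}|=|\widetilde{\mathbf{v}}|\geq 1$ and $v_1\neq \widetilde{v}_1$, $v_{|\mathbf{v}|}\neq \widetilde{v}_{|\widetilde{\mathbf{v}}|}$. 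By Lemma~\ref{lm4} (or Lemma~\ref{lm5}), the intersection absorbs into the form $\mathbf{u}\circ (D_1(\mathbf{v})\cap D_1(\widetilde{\mathbf{v}}))\circ \mathbf{w}$ once one checks that splitting a deletion into the prefix or suffix shortens the allowed radius below zero. This reduces the problem to the case where $\mathbf{u},\mathbf{w}$ are empty, $v_1\neq \widetilde{v}_1$, and $v_k\neq \widetilde{v}_k$, with $|D_1(\mathbf{v})\cap D_1(\widetilde{\mathbf{v}})|\geq 2$. I would then argue inductively on $k$: among the two common subsequences, pick the one where the leftmost deletion in $\mathbf{v}$ occurs at the smallest position $p$; matching it with the leftmost deletion in $\widetilde{\mathbf{v}}$ (forced to position $1$ since $v_1\neq \widetilde{v}_1$) gives $v_2 = \widetilde{v}_1 = \overline{v_1}$, and similarly from the second common subsequence one forces $v_3 = \overline{v_2}$, and so on until the whole of $\mathbf{v}$ is pinned down as alternating with $\widetilde{\mathbf{v}}=\overline{\mathbf{v}}$.

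The main obstacle is the reverse direction, specifically the recursive bookkeeping that shows every bit of $\mathbf{v}$ must flip relative to $\widetilde{\mathbf{v}}$ in an alternating pattern. The cleanest way to handle this is to assign to each common subsequence its pair of canonical (leftmost) deletion positions in $\mathbf{x}$ and $\mathbf{y}$, and then show that having two such pairs is incompatible with any nonalternating internal structure. Since the statement is quoted directly from \cite[Lemma 2]{Chrisnata}, I would defer the detailed casework to that reference and invoke it as a black box once the preliminary reduction using Lemma~\ref{lm5} has been carried out.
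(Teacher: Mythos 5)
The paper gives no proof of this lemma at all: it is imported verbatim as \cite[Lemma~2]{Chrisnata}, so there is no in-paper argument to compare against, and your ultimate deferral to that reference matches the paper's own treatment exactly. Your sketch is nonetheless sound. The forward direction is complete: the two shifts of an alternating block do give two distinct common subsequences, and Lemma~\ref{lm5} (applicable since $d_L(\mathbf{v},\overline{\mathbf{v}})\geq 1$) localizes the intersection to $D_1(\mathbf{v})\cap D_1(\overline{\mathbf{v}})$, which has size exactly $2$. For the reverse direction, the ``recursive bookkeeping'' you flag as the main obstacle actually collapses: once $\mathbf{u},\mathbf{w}$ are the longest common prefix and suffix, any $\mathbf{z}\in D_1(\mathbf{v})\cap D_1(\widetilde{\mathbf{v}})$ arises from deletion positions $i$ in $\mathbf{v}$ and $j$ in $\widetilde{\mathbf{v}}$ with $\min(i,j)=1$ (forced by $v_1\neq\widetilde{v}_1$) and $\max(i,j)=k$ (forced by $v_k\neq\widetilde{v}_k$), so the only options are $(i,j)=(1,k)$, giving $v_{p+1}=\widetilde{v}_p$ for all $p$, and $(i,j)=(k,1)$, giving $v_p=\widetilde{v}_{p+1}$ for all $p$. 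Two distinct common subsequences force both relations, whence $v_{p+2}=v_p$, $\widetilde{v}_p=v_{p+1}=\overline{v_p}$, and $k\geq 2$, i.e.\ Type-A confusability; so the black-box invocation is not actually needed.
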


\begin{lemma}[{\cite[Lemma 4]{Chrisnata}}]
\label{lm14}
Let $\mathbf{x}$ and $\mathbf{y}$ be binary sequences. If $|D_1(\mathbf{x})\cap D_1(\mathbf{y})|=1$, then either the Hamming distance of $\mathbf{x}$ and $\mathbf{y}$ is one or $\mathbf{x}$ and $\mathbf{y}$ are Type-B confusable.
\end{lemma}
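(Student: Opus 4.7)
The plan is to begin by reducing to a canonical configuration. I would write $\mathbf{x} = \mathbf{u}\,\mathbf{v}\,\mathbf{w}$ and $\mathbf{y} = \mathbf{u}\,\mathbf{\widetilde{v}}\,\mathbf{w}$, where $\mathbf{u}$ and $\mathbf{w}$ are the longest common prefix and suffix of $\mathbf{x}$ and $\mathbf{y}$. By maximality, whenever $\mathbf{v}$ is nonempty its first and last bits differ from those of $\mathbf{\widetilde{v}}$. Applying Lemma \ref{lm4} with $t = s = 1$, every term of the union indexed by $(p,q) \neq (0,0)$ contains the factor $D_0(\mathbf{v}) \cap D_0(\mathbf{\widetilde{v}}) = \{\mathbf{v}\} \cap \{\mathbf{\widetilde{v}}\}$, which is empty since $\mathbf{v} \neq \mathbf{\widetilde{v}}$. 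Hence
\begin{equation*}
D_1(\mathbf{x}) \cap D_1(\mathbf{y}) \;=\; \mathbf{u}\circ \bigl(D_1(\mathbf{v}) \cap D_1(\mathbf{\widetilde{v}})\bigr) \circ \mathbf{w},
\end{equation*}
so the hypothesis reduces to $|D_1(\mathbf{v}) \cap D_1(\mathbf{\widetilde{v}})| = 1$ with $\mathbf{v}$ and $\mathbf{\widetilde{v}}$ disagreeing at both endpoints.

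Next, I would dispose of the short cases. If $|\mathbf{v}| = 1$, then $\mathbf{v} = a$ and $\mathbf{\widetilde{v}} = \overline{a}$, so $\mathbf{x}$ and $\mathbf{y}$ differ in a single position and their Hamming distance equals one, giving the first alternative of the lemma. If $|\mathbf{v}| = 2$, the endpoint conditions force $\{\mathbf{v}, \mathbf{\widetilde{v}}\} = \{01,\,10\}$, which is Type-A confusable; Lemma \ref{lm13} then yields intersection size two, contradicting the hypothesis. For $|\mathbf{v}| \geq 3$, the same lemma forbids $\mathbf{v}, \mathbf{\widetilde{v}}$ from being Type-A confusable, and this exclusion will be the key leverage below.

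For $|\mathbf{v}| \geq 3$, the core step is to pin down the unique common sequence $\mathbf{z} \in D_1(\mathbf{v}) \cap D_1(\mathbf{\widetilde{v}})$. Write $\mathbf{z}$ as $\mathbf{v}$ with its $i$-th symbol deleted and also as $\mathbf{\widetilde{v}}$ with its $j$-th symbol deleted. Since $v_1 \neq \widetilde{v}_1$, reading off the first bit of $\mathbf{z}$ forces either $i = 1$ with $v_2 = \widetilde{v}_1$, or $j = 1$ with $\widetilde{v}_2 = v_1$; a symmetric dichotomy applies to the last bit, yielding four subcases. In each subcase I would propagate the equality of $\mathbf{v}$ with its $i$-th bit removed and $\mathbf{\widetilde{v}}$ with its $j$-th bit removed across the remaining indices; the consistent patterns, after using the excluded Type-A structure to kill spurious additional common deletions, collapse to $\mathbf{v} = a\,\overline{a}\,\mathbf{v}'\,b$ with $\mathbf{\widetilde{v}} = \overline{a}\,\mathbf{v}'\,b\,\overline{b}$ (or its left-right mirror), which is exactly the Type-B confusable pattern.

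The main obstacle will be ensuring that the internal block $\mathbf{v}'$ emerges unambiguously and that no embedded Type-A sub-pattern inside $\mathbf{v}, \mathbf{\widetilde{v}}$ manufactures a second common deletion. I would handle this by invoking Lemma \ref{lm13} iteratively on the shortest disagreeing interior window: whenever such a window is Type-A, the intersection cardinality would jump above one, contradicting our assumption, so the window must in fact consist of matching bits, forcing exactly the Type-B skeleton. A brief brute-force inspection for $|\mathbf{v}| \in \{3,4\}$ supplies the base of this argument, after which the dichotomy-plus-propagation step extends cleanly to arbitrary $|\mathbf{v}|$.
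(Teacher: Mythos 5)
First, note that the paper does not actually prove this statement: Lemma~\ref{lm14} is imported verbatim as \cite[Lemma 4]{Chrisnata}, so there is no in-paper proof to compare against. Judged on its own terms, your skeleton is the right one: stripping the longest common prefix $\mathbf{u}$ and suffix $\mathbf{w}$ and invoking Lemma~\ref{lm4} with $t=s=1$ (the $(p,q)\neq(0,0)$ terms vanish because $D_0(\mathbf{v})\cap D_0(\widetilde{\mathbf{v}})=\emptyset$) correctly reduces the hypothesis to $|D_1(\mathbf{v})\cap D_1(\widetilde{\mathbf{v}})|=1$ with $\mathbf{v},\widetilde{\mathbf{v}}$ disagreeing at both ends, and the endpoint dichotomy on the deletion positions of the common subsequence $\mathbf{z}$ is exactly the lever that finishes the job. (Two small imprecisions: you should dispose of the degenerate case $\mathbf{v}=\epsilon$, and for $|\mathbf{v}|=2$ the endpoint conditions alone allow $\{00,11\}$ as well as $\{01,10\}$ --- you need the nonemptiness of the intersection to exclude the former.)

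The genuine weak point is the execution of the core step for $|\mathbf{v}|=l\geq 3$. Your proposed mechanism --- ``propagate across the remaining indices,'' then ``invoke Lemma~\ref{lm13} iteratively on the shortest disagreeing interior window'' to kill spurious common deletions --- is both unnecessary and not well-founded: Lemma~\ref{lm13} speaks about whole sequences, and a Type-A sub-window of $\mathbf{v},\widetilde{\mathbf{v}}$ does not by itself produce two elements of $D_1(\mathbf{v})\cap D_1(\widetilde{\mathbf{v}})$ when the surrounding symbols disagree; moreover, after you have already stripped the maximal common prefix and suffix there is no common flanking material left to factor such a window out of, so the iteration has nothing to recurse on. None of this machinery is needed. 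Write $\mathbf{z}$ as $\mathbf{v}$ with position $i$ deleted and as $\widetilde{\mathbf{v}}$ with position $j$ deleted. Since $v_1\neq\widetilde v_1$, not both $i>1$ and $j>1$; since $v_l\neq\widetilde v_l$, not both $i<l$ and $j<l$. For $l\geq 2$ this forces $(i,j)=(1,l)$ or $(l,1)$. Taking $(i,j)=(1,l)$, equating the two expressions for $\mathbf{z}$ gives $\widetilde v_k=v_{k+1}$ for all $1\leq k\leq l-1$; together with $v_2=\widetilde v_1\neq v_1$ and $\widetilde v_l\neq v_l$ this is literally $\mathbf{v}=a\,\overline{a}\,\mathbf{v}'\,b$, $\widetilde{\mathbf{v}}=\overline{a}\,\mathbf{v}'\,b\,\overline{b}$ with $a=v_1$, $b=v_l$, $\mathbf{v}'=v_3\cdots v_{l-1}$ (the case $(l,1)$ is the ``vice versa''), and $l=2$ collapses to $\{a\overline{a},\overline{a}a\}$, which Lemma~\ref{lm13} excludes under the size-one hypothesis. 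No induction on $l$, no brute force for $l\in\{3,4\}$, and no interior-window analysis is required. With the core step repaired in this way your argument is complete and is, in substance, the standard proof of the cited lemma.
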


When $d_L(\mathbf{x},\mathbf{y})=1$, we discuss some properties of $|D_2(\mathbf{x})\cap D_2(\mathbf{y})|$ as follows. Since $d_L(\mathbf{x},\mathbf{y})=1$, we have $|D_1(\mathbf{x})\cap D_1(\mathbf{y})|\geq 1$. That is, $|D_1(\mathbf{x})\cap D_1(\mathbf{y})|=1,~\text{or}~2$.

\begin{lemma}[{\cite[Proposition 12]{Chrisnata}}]
\label{lm15}
Let $\mathbf{x}$ and $\mathbf{y}$ be binary sequences of length $n\geq 4$ that are Type-$B$-confusable. If $D_1(\mathbf{x})\cap D_1(\mathbf{y})=\{\mathbf{z}\}$ then we have
\begin{equation*}
\big|\big(D_2(\mathbf{x})\cap D_2(\mathbf{y})\big)\backslash D_1(\mathbf{z})\big|\leq 2.
\end{equation*}
Thus $|D_2(\mathbf{x})\cap D_2(\mathbf{y})|\leq n$.
\end{lemma}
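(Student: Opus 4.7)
The plan is to pinpoint $\mathbf{z}$ explicitly from the Type-B structure, apply the decomposition of Lemma $\ref{lm4}$ to isolate the source of ``extra'' two-deletions beyond $D_1(\mathbf{z})$, and then enumerate the small remaining set by hand.

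First, write $\mathbf{x}=\mathbf{u}\,a\,\overline{a}\,\mathbf{v}\,b\,\mathbf{w}$ and $\mathbf{y}=\mathbf{u}\,\overline{a}\,\mathbf{v}\,b\,\overline{b}\,\mathbf{w}$. Deleting the leading $a$ from $\mathbf{x}$ and the trailing $\overline{b}$ from $\mathbf{y}$ both yield the same word $\mathbf{z}=\mathbf{u}\,\overline{a}\,\mathbf{v}\,b\,\mathbf{w}$, which must therefore be the unique element of $D_1(\mathbf{x})\cap D_1(\mathbf{y})$. Consequently $D_1(\mathbf{z})\subseteq D_2(\mathbf{x})\cap D_2(\mathbf{y})$, and the task reduces to bounding the set of extras $E=\bigl(D_2(\mathbf{x})\cap D_2(\mathbf{y})\bigr)\setminus D_1(\mathbf{z})$. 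Moreover, were $\mathbf{z}$ alternating, the forced alignment of $\mathbf{u}$, $\mathbf{v}$, and $\mathbf{w}$ with the alternation would also expose $\mathbf{x}$ and $\mathbf{y}$ as Type-A confusable, so Lemma $\ref{lm13}$ would give $|D_1(\mathbf{x})\cap D_1(\mathbf{y})|=2$ contradicting the hypothesis; hence $\mathbf{z}$ is not alternating and $|D_1(\mathbf{z})|\le n-2$.

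Next, I would apply Lemma $\ref{lm4}$ with common prefix $\mathbf{u}$, common suffix $\mathbf{w}$, and differing middle blocks $\mathbf{m}_1=a\,\overline{a}\,\mathbf{v}\,b$ and $\mathbf{m}_2=\overline{a}\,\mathbf{v}\,b\,\overline{b}$, obtaining
\begin{equation*}
D_2(\mathbf{x})\cap D_2(\mathbf{y})=\bigcup_{p+q\le 2}D_p(\mathbf{u})\circ\bigl(D_{2-p-q}(\mathbf{m}_1)\cap D_{2-p-q}(\mathbf{m}_2)\bigr)\circ D_q(\mathbf{w}).
\end{equation*}
Since $\mathbf{m}_1\neq\mathbf{m}_2$, the $p+q=2$ layer is empty. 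The $p+q=1$ layer is governed by $D_1(\mathbf{m}_1)\cap D_1(\mathbf{m}_2)$, whose only element is $\overline{a}\,\mathbf{v}\,b$ (delete the $a$ from $\mathbf{m}_1$ or the $\overline{b}$ from $\mathbf{m}_2$); every element of the $p+q=1$ layer thus factors through $\mathbf{z}$ and already lies in $D_1(\mathbf{z})$. Therefore $E$ can only arise from the $p=q=0$ layer, i.e.\ from $D_2(\mathbf{m}_1)\cap D_2(\mathbf{m}_2)$ sandwiched by $\mathbf{u}$ and $\mathbf{w}$.

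Finally, any $\mathbf{s}'\in D_2(\mathbf{m}_1)\cap D_2(\mathbf{m}_2)$ that is not already captured by the common one-deletion $\overline{a}\,\mathbf{v}\,b$ must simultaneously retain the leading $a$ of $\mathbf{m}_1$ and the trailing $\overline{b}$ of $\mathbf{m}_2$; thus $\mathbf{s}'$ has length $|\mathbf{v}|+1$, begins with $a$, and ends with $\overline{b}$. Matching the interior of $\mathbf{s}'$ against $\mathbf{v}$, with one deletion allowed from each side, leaves at most two candidate words. Combining $|E|\le 2$ with $|D_1(\mathbf{z})|\le n-2$ yields $|D_2(\mathbf{x})\cap D_2(\mathbf{y})|\le n$. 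The main obstacle is this last step: rigorously certifying that at most two extras exist despite the freedom to redistribute deletions across $\mathbf{u}$, the middle, and $\mathbf{w}$, and ruling out boundary scenarios where repeated bits at the junctions $\mathbf{u}\,a$ or $\overline{b}\,\mathbf{w}$ could conceivably inflate the count; this requires a careful case check that the two candidate extras are distinct from every element of $D_1(\mathbf{z})$ and cannot be enlarged by shifting deletions into the flanks.
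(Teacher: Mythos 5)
The paper offers no proof of this lemma: it is imported verbatim as Proposition 12 of \cite{Chrisnata}, so there is no in-paper argument to compare against. Judged on its own, your route is sound and the obstacle you flag at the end does close, but both of the places where your sketch could leak extra elements must be sealed by Lemma \ref{lm13} together with the singleton hypothesis, and you should say so explicitly. First, your claim that $D_1(\mathbf{m}_1)\cap D_1(\mathbf{m}_2)=\{\overline{a}\,\mathbf{v}\,b\}$ is justified only by exhibiting membership; if this intersection had a second element, Lemma \ref{lm13} would make $\mathbf{m}_1,\mathbf{m}_2$ Type-A confusable, hence (prepending $\mathbf{u}$ and appending $\mathbf{w}$) $\mathbf{x},\mathbf{y}$ Type-A confusable, forcing $|D_1(\mathbf{x})\cap D_1(\mathbf{y})|=2$ and contradicting the hypothesis --- without this, the $p+q=1$ layer could itself contribute many extras. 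Second, for the final count: once an extra $\mathbf{s}'$ is known to start with $a$ and end with $\overline{b}$, no embedding of $\mathbf{s}'$ into $\mathbf{m}_2$ can use the first letter of $\mathbf{m}_2$ and no embedding into $\mathbf{m}_1$ can use the last letter of $\mathbf{m}_1$, so $\mathbf{s}'\in D_1(a\,\overline{a}\,\mathbf{v})\cap D_1(\mathbf{v}\,b\,\overline{b})$; these are two words of equal length, so by $N(m,1,1)=2$ (Theorem \ref{thm1}) the intersection has at most two elements unless $a\,\overline{a}\,\mathbf{v}=\mathbf{v}\,b\,\overline{b}$, and that degenerate case forces $\mathbf{v}$ to be alternating with $\mathbf{m}_1,\mathbf{m}_2$ complementary alternating words, again contradicting the hypothesis via Lemma \ref{lm13}. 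The same mechanism disposes of your worry about repeated bits at the junctions and about redistributing deletions into the flanks (those contributions all land in $D_1(\mathbf{z})$). With these two invocations of Lemma \ref{lm13} written out, your argument is complete; the justification that $\mathbf{z}$ is not alternating, which you only gesture at, is correct and follows by the same pattern.
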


\begin{lemma}
\label{lm16}
Let $\mathbf{x}$ and $\mathbf{y}$  be binary sequences of length $n\geq 7$, and $d_L(\mathbf{x},\mathbf{y})=1$.
There are some sequences $\mathbf{x}$ and $\mathbf{y}$ such that $|D_2(\mathbf{x}) \cap D_2(\mathbf{y})|=2n-4,2n-5,~\text{or}~2n-6$, then $\mathbf{x}$ and $\mathbf{y}$ are Type-A confusable. Specifically, we let $\mathbf{x}=\mathbf{u\,a\,v}$ and $\mathbf{y}=\mathbf{u\,\overline{a}\,v}$ with $|\mathbf{u}|=s,|\mathbf{a}|=l,|\mathbf{v}|=t$, where $s\geq 0,t\geq 0,l\geq 2, s+l+t=n$, and $\mathbf{a,u,v}$ are alternating sequences. Then we have the following results:
\begin{enumerate}
    \item  $|D_2(\mathbf{x}) \cap D_2(\mathbf{y})|=2n-4$ if and only if $l=n$, or $l=2$ and $\min\{s,t\}=0$;
    \item  $|D_2(\mathbf{x}) \cap D_2(\mathbf{y})|=2n-5$ if and only if $l=n-1$, $l=n-2$ and $\{s,t\}=\{0,2\}$, or $l=2$ and $s,t\geq 1$, or $n-2\geq l\geq 3$, $\min\{s,t\}=0$ and $\max\{s,t\}\geq 1$;
     \item  $|D_2(\mathbf{x}) \cap D_2(\mathbf{y})|=2n-6$ if and only if $n-2\geq l\geq 3$, $s\geq 1$, and $t\geq 1$;
\end{enumerate}
Moreover, for any $n\geq 4$, if $|D_1(\mathbf{x})\cap D_1(\mathbf{y})|\leq 1$ then $|D_2(\mathbf{x}) \cap D_2(\mathbf{y})|\leq n$.
\end{lemma}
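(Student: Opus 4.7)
Since $d_L(\mathbf{x},\mathbf{y})=1$, we have $|D_1(\mathbf{x})\cap D_1(\mathbf{y})|\in\{1,2\}$, and the argument splits on this value. First I would dispose of the case $|D_1(\mathbf{x})\cap D_1(\mathbf{y})|\leq 1$. By Lemma \ref{lm14}, either $\mathbf{x}$ and $\mathbf{y}$ have Hamming distance one or they are Type-B-confusable; Lemma \ref{lm15} handles the Type-B case and gives $|D_2(\mathbf{x})\cap D_2(\mathbf{y})|\leq n$. The Hamming-distance-one case is easy to handle directly via Lemma \ref{lm4}: writing $\mathbf{x}=\mathbf{p}\,a\,\mathbf{q}$ and $\mathbf{y}=\mathbf{p}\,\overline{a}\,\mathbf{q}$ and decomposing at the differing coordinate shows that every common $2$-deletion factors through deleting that coordinate, so $|D_2(\mathbf{x})\cap D_2(\mathbf{y})|\leq n$. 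Since $n<2n-6$ whenever $n\geq 7$, this entire subcase cannot produce an intersection of size $2n-4$, $2n-5$, or $2n-6$; this proves the ``moreover'' clause and reduces the rest of the lemma to $|D_1(\mathbf{x})\cap D_1(\mathbf{y})|=2$.

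In that remaining case, Lemma \ref{lm13} yields that $\mathbf{x}$ and $\mathbf{y}$ are Type-A-confusable. I would canonicalise the decomposition by extending the alternating middle block as far as possible to both sides, obtaining $\mathbf{x}=\mathbf{u}\,\mathbf{a}\,\mathbf{v}$, $\mathbf{y}=\mathbf{u}\,\overline{\mathbf{a}}\,\mathbf{v}$ with $\mathbf{u},\mathbf{a},\mathbf{v}$ all alternating and $|\mathbf{a}|=l\geq 2$, matching the parametrization in the lemma. Then I would apply Lemma \ref{lm4} with the common prefix $\mathbf{u}$ and suffix $\mathbf{v}$ to write
\begin{equation*}
D_2(\mathbf{x})\cap D_2(\mathbf{y})=\bigcup_{p+q\leq 2}D_p(\mathbf{u})\circ\bigl(D_{2-p-q}(\mathbf{a})\cap D_{2-p-q}(\overline{\mathbf{a}})\bigr)\circ D_q(\mathbf{v}).
\end{equation*}
Because $\mathbf{a}\neq\overline{\mathbf{a}}$, all terms with $p+q=2$ vanish, leaving only $(p,q)\in\{(0,0),(1,0),(0,1)\}$. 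Theorem \ref{thm1} gives $|D_2(\mathbf{a})\cap D_2(\overline{\mathbf{a}})|=2(l-2)$ and $|D_1(\mathbf{a})\cap D_1(\overline{\mathbf{a}})|=2$, so the union bound already yields $|D_2(\mathbf{x})\cap D_2(\mathbf{y})|\leq 2(l-2)+2s+2t=2n-4$, which is the top value in (1).

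Finally, I would determine the exact size by computing the overlaps between the three surviving blocks. A common $2$-deletion lying in two different blocks corresponds to a deletion that can be realised either by removing one coordinate from $\mathbf{u}$ (or $\mathbf{v}$) together with one from the middle, or by removing both coordinates from the middle; these overlaps are governed by whether $\mathbf{u}$ and $\mathbf{v}$ are nonempty and by the parity of $l$ relative to the first/last bits of $\mathbf{u},\mathbf{v}$. Tabulating the results across the regimes $l=n$, $l=n-1$, $l=n-2$, $l=2$, and $3\leq l\leq n-2$ (further splitting on whether $s=0$, $t=0$, or both are positive) produces the three-way classification into the values $2n-4$, $2n-5$, $2n-6$ stated in parts (1)--(3).

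The main technical obstacle is the bookkeeping in the last step: the three surviving blocks are not disjoint, and obtaining exact equalities (rather than just the bound $2n-4$) requires enumerating precisely which common $2$-deletions admit multiple realisations through the Lemma \ref{lm4} decomposition. The degenerate cases $s=0$, $t=0$, and small $l$ are where these overlap counts are most delicate, and they are exactly the boundary points that sharpen the bound from $2n-4$ down to $2n-5$ and $2n-6$ in the corresponding sub-regimes.
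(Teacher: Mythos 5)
Your overall strategy matches the paper's: dispose of $|D_1(\mathbf{x})\cap D_1(\mathbf{y})|\leq 1$ via Lemmas \ref{lm14} and \ref{lm15} to get the bound $n$, invoke Lemma \ref{lm13} to get Type-A confusability, decompose with Lemma \ref{lm4}, and observe that the $p+q=2$ terms vanish so the union bound gives $2n-4$. Up to that point the argument is sound. But there are two genuine gaps. First, your ``canonicalisation'' step is not valid: you cannot in general extend the alternating middle block into $\mathbf{u}$ or $\mathbf{v}$ while preserving the Type-A structure (if $u_s\neq a_1$ then $u_s\,\mathbf{a}$ is alternating in $\mathbf{x}$, but $\mathbf{y}$ carries $u_s\,\overline{\mathbf{a}}$ in those positions, not $\overline{u_s}\,\overline{\mathbf{a}}$, so the enlarged middle block is no longer complemented). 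The prefix and suffix of a Type-A pair can genuinely fail to be alternating, and the correct way to reduce to the alternating case --- which the paper carries out --- is to exhibit an element counted twice in the Lemma \ref{lm4} union whenever $\mathbf{u}$ or $\mathbf{v}$ is non-alternating, forcing $|D_2(\mathbf{x})\cap D_2(\mathbf{y})|\leq 2n-7$ and hence below all three target values.

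Second, and more importantly, the heart of the lemma --- the exact three-way classification in parts (1)--(3) --- is only announced, not proved. You say the overlaps ``are governed by'' certain parameters and that ``tabulating the results \dots produces the classification,'' but you never compute a single overlap. The paper does this explicitly: it writes the three surviving blocks as $A_1\cup B_1$, $A_2\cup B_2$, $A_3\cup B_3$, shows the $B_i$ are pairwise disjoint and disjoint from every $A_j$, reduces the problem to evaluating $|A_1\cup A_2\cup A_3|$, and then identifies exactly which of the at most nine candidate elements coincide in each regime of $(l,s,t)$ --- for instance, when $l=2$ and $s,t\geq 1$ the single element $\mathbf{a}^{u_1}(s)\circ\mathbf{a}^{v_1}(t)$ lies in all three $A_i$, giving $|A_1\cup A_2\cup A_3|=7$ and the total $2n-5$. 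Without carrying out this computation, the distinctions between $2n-4$, $2n-5$, and $2n-6$, and in particular the delicate boundary cases such as $l=n-2$ with $\{s,t\}=\{0,2\}$ versus $\{1,1\}$, remain unestablished; this is precisely the content the lemma is needed for in the proof of Lemma \ref{lm17}.
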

The proof of Lemma $\ref{lm16}$ is given in Appendix \ref{APP-E}.

Now we discuss the subcase where $x_i \neq y_i$ (for $i = 1, n$), $x_j \neq x_{j+1}$ and $y_j \neq y_{j+1}$ (for $j = 1, 3, n-2, n-1$), and $x_l = y_l$ (for $l = 3, 4, n-3, n-2$). 
\begin{lemma}
\label{lm17}
Let $n\geq 13$ and let $\mathbf{x}=x_1\,\cdots\,x_n, \mathbf{y}=y_1\,\cdots\,y_n\in \mathbb{F}_2^n$ with $x_i \neq y_i$, $x_j \neq x_{j+1}$, $y_j \neq y_{j+1}$, and $x_l = y_l$ for $i\in \{1,n\}$, $j\in\{1,3,n-3,n-1\}$, $l\in \{3,4,n-3,n-2\}$. If $d_L(\mathbf{x},\mathbf{y})\geq 3$, then we have
\begin{equation}
|D_4(\mathbf{x})\cap D_4(\mathbf{y})|\leq 20n-166.\nonumber
\end{equation}
\end{lemma}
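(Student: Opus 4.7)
The plan is to follow the same template used in Lemmas \ref{lm8}--\ref{lm12}: normalize the sequences, decompose $\mathcal{X}=D_4(\mathbf{x})\cap D_4(\mathbf{y})$ using Propositions \ref{prp1} and \ref{prp2}, bound each of the four pieces with Lemmas \ref{lm6} and \ref{lm7}, and finally squeeze out the last couple of units via a refined case analysis. Without loss of generality, I would take $x_1=1$, $y_1=0$, $x_n=a$, $y_n=\overline{a}$ for some $a\in\mathbb{F}_2$, and write $c=x_3=y_3$, $b=x_{n-2}=y_{n-2}$. The hypothesis of subcase~$(f)$ then forces
\begin{equation*}
\mathbf{x}=1\,0\,c\,\overline{c}\,x_5\cdots x_{n-4}\,\overline{b}\,b\,\overline{a}\,a,\qquad \mathbf{y}=0\,1\,c\,\overline{c}\,y_5\cdots y_{n-4}\,\overline{b}\,b\,a\,\overline{a}.
\end{equation*}

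Next I split $|\mathcal{X}|=|\mathcal{X}^0_a|+|\mathcal{X}^0_{\overline{a}}|+|\mathcal{X}^1_a|+|\mathcal{X}^1_{\overline{a}}|$ and use Proposition \ref{prp2} to identify, for each piece, the leftmost/rightmost occurrences of the prescribed first and last symbols in $\mathbf{x}$ and $\mathbf{y}$. As in Lemmas \ref{lm10} and \ref{lm11}, this yields one ``symmetric'' pair of pieces of the form $D_3(\mathbf{x}^*)\cap D_3(\mathbf{y}^*)$ (with $|\mathbf{x}^*|=|\mathbf{y}^*|=n-3$) and one ``asymmetric'' pair of the form $D_2(\mathbf{x}^{**})\cap D_4(\mathbf{y}^{**})$ (with the sequences differing in length by $2$). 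The crucial observation in this subcase is that because positions $3,4,n{-}3,n{-}2$ all agree in $\mathbf{x}$ and $\mathbf{y}$, the trimmed sequences automatically possess nontrivial common prefixes and suffixes, so each piece can be put in the canonical form $\mathbf{u}\,\mathbf{v}\,\mathbf{w}$ versus $\mathbf{u}\,\widetilde{\mathbf{v}}\,\mathbf{w}$ required by Lemmas \ref{lm6} and \ref{lm7}.

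Applying Lemmas \ref{lm6} and \ref{lm7} then gives the preliminary estimates $|\mathcal{X}^0_a|,|\mathcal{X}^1_{\overline{a}}|\le 6n-49$ and $|\mathcal{X}^0_{\overline{a}}|,|\mathcal{X}^1_a|\le 4n-29$ or slightly better, summing to something in the range $20n-156$ to $20n-164$. To reach $20n-166$ one must split on the values $b,c,x_5,y_5,x_{n-4},y_{n-4}$ and exploit three types of sharpening: the ``non-alternating prefix/suffix'' refinements of Lemma \ref{lm6} (bounds $6n-37$ and $6n-40$), the analogous refinements of Lemma \ref{lm7} ($4n-16,4n-17$), and the constraints $d_L(\mathbf{x}^*,\mathbf{y}^*)\ge 2$ and $\mathbf{x}^{**}\notin D_2(\mathbf{y}^{**})$ that follow from $d_L(\mathbf{x},\mathbf{y})\ge 3$. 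Lemmas \ref{lm13}--\ref{lm16} are then used to control the residual short-sequence intersections, mirroring the bookkeeping at the end of the proof of Lemma \ref{lm10}.

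The hard part, as in Lemma \ref{lm11}, will be the equality-compatibility argument: the four pieces cannot simultaneously attain their initial maxima, so I expect to run a mild case analysis (probably on whether $c=b$ or $c=\overline{b}$ and on the parity structure in the middle of the sequences) showing that if one piece is tight, then at least two units are lost in another. In the very tightest configurations I would fall back on invoking the computerised base cases $n\in\{13,14\}$ of Theorem \ref{thm5}, exactly as is done in Case B of the proof of Lemma \ref{lm11}. Adding the final sharpened bounds piecewise then yields $|D_4(\mathbf{x})\cap D_4(\mathbf{y})|\le 20n-166$ for every $n\ge 13$.
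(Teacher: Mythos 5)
Your high-level frame (normalize to $\mathbf{x}=1\,0\,1\,0\,\mathbf{v}\,\overline{b}\,b\,\overline{a}\,a$, $\mathbf{y}=0\,1\,1\,0\,\mathbf{w}\,\overline{b}\,b\,a\,\overline{a}$, split into the four pieces $\mathcal{X}^i_j$, and open with Lemmas \ref{lm6} and \ref{lm7}) matches the paper, but the device you rely on to close the final gap does not work in this subcase, and the paper's proof is built on a different mechanism. The naive sum of the Lemma \ref{lm6}/\ref{lm7} bounds here is about $20n-162$, four units over, and --- crucially --- subcase $(f)$ contains the extremal pair attaining $N(n,3,4)=20n-166$ exactly, so the argument must pin down the extremal structure precisely rather than merely show that not all pieces are simultaneously tight. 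An equality-compatibility argument in the style of Lemma \ref{lm10} is not available: Lemmas \ref{lm6} and \ref{lm7} characterize equality only for $l=6,7$ and $l=4$ respectively, whereas in subcase $(f)$ the relevant middle blocks have length growing with $n$ (here $l\geq 8$), and the non-alternating prefix/suffix refinements you cite do not bite, because in the worst case the common prefixes and suffixes of the trimmed pieces are short alternating strings.

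What the paper actually does --- and what is missing from your sketch --- is to abandon Lemma \ref{lm7} for $\mathcal{X}^1_a$ and $\mathcal{X}^0_{\overline{a}}$ and decompose them (and $\mathcal{X}^1_{\overline{a}}$) two further levels on each side, until every summand is an intersection of small-radius deletion balls of the middle segments, e.g.\ $D_1(0\,\mathbf{v})\cap D_1(0\,\mathbf{w})$, $D_2(0\,\mathbf{v})\cap D_2(0\,\mathbf{w})$, $D_2(\mathbf{v}\,a)\cap D_2(\mathbf{w}\,a)$. The proof then runs a trichotomy on $|D_1(0\,\mathbf{v})\cap D_1(0\,\mathbf{w})|\in\{0,1,2\}$: the values $0$ and $1$ give $|\mathcal{X}|\leq 18n-140$ outright, while the value $2$ forces, via Lemma \ref{lm13}, that $\mathbf{v}$ and $\mathbf{w}$ are Type-A-confusable, $\mathbf{v}=\mathbf{v}_1\,\mathbf{u}\,\mathbf{w}_2$ and $\mathbf{w}=\mathbf{v}_1\,\overline{\mathbf{u}}\,\mathbf{w}_2$ with $\mathbf{u}$ alternating; a case analysis on $(|\mathbf{v}_1|,|\mathbf{w}_2|)$ using the exact enumeration in Lemma \ref{lm16} of $|D_2(\cdot)\cap D_2(\cdot)|$ for Type-A-confusable pairs then yields the key combined bound $|\mathcal{X}^1_a|+|\mathcal{X}^0_{\overline{a}}|\leq 8n-64$, which together with $|\mathcal{X}^0_a|\leq 6n-54$ and $|\mathcal{X}^1_{\overline{a}}|\leq 6n-48$ gives exactly $20n-166$. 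In your plan, Lemmas \ref{lm13}--\ref{lm16} appear only as bookkeeping for ``residual short-sequence intersections''; in fact they control the dominant, linearly growing terms and are the heart of this subcase. (Also, unlike Lemma \ref{lm11}, the paper's proof here never needs to retreat to the computerized base cases.)
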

\begin{proof}
Without loss of generality, we let $x_1=1$, $x_3=c$, $x_{n-2}=b, x_n=a$ with $a,b,c\in \mathbb{F}_2$. Then, $\mathbf{x}=1\,0\,c\,\overline{c}\,x_5\,\cdots\,x_{n-4}\,\overline{b}\,b\,\overline{a}\,a$, and $\mathbf{y}=0\,1\,c\,\overline{c}\,y_5\,\cdots\,y_{n-4}\,\overline{b}\,b\,a\,\overline{a}$. We denote $\mathcal{X}=D_4(\mathbf{x})\cap D_4(\mathbf{y})$. Let $c=0$. Then we have $\mathbf{x}=1\,0\,0\,1\,x_5\,\cdots\,x_{n-4}\,\overline{b}\,b\,\overline{a}\,a$ and  $\mathbf{y}=0\,1\,0\,1\,y_5\,\cdots\,y_{n-4}\,\overline{b}\,b\,a\,\overline{a}$. Thus, $\overline{\mathbf{x}}=0\,1\,1\,0\,\overline{x_5}\,\cdots\,\overline{x_{n-4}}\,b\,\overline{b}\,a\,\overline{a}$ and $\overline{\mathbf{y}}=1\,0\,1\,0\,\overline{y_5}\,\cdots\,\overline{y_{n-4}}\,b\,\overline{b}\,\overline{a}\,a$. It is easily verified that $|D_t(\mathbf{x})\cap D_t(\mathbf{y})|=|D_t(\overline{\mathbf{x}})\cap D_t(\overline{\mathbf{y}})|$. Assume that $c=1$. Then we have  $\mathbf{x}=1\,0\,1\,0\,x_5\,\cdots\,x_{n-4}\,\overline{b}\,b\,\overline{a}\,a$ and  $\mathbf{y}=0\,1\,1\,0\,y_5\,\cdots\,y_{n-4}\,\overline{b}\,b\,a\,\overline{a}$. By comparison, we only discuss the case where $c=1$. For convenience, we let $\mathbf{x}=1\,0\,1\,0\,\mathbf{v}\,\overline{b}\,b\,\overline{a}\,a$ and  $\mathbf{y}=0\,1\,1\,0\,\mathbf{w}\,\overline{b}\,b\,a\,\overline{a}$ with $\mathbf{v}=v_1\,\cdots\,v_{n-8},\mathbf{w}=w_1\,\cdots\,w_{n-8}\in \mathbb{F}_2^{n-8}$.

First, we decompose $\mathcal{X}$ by Propositions $\ref{prp1}$ and $\ref{prp2}$ as follows: we have $|\mathcal{X}|=|\mathcal{X}^{0}_{a}|+|\mathcal{X}^{0}_{\overline{a}}|+|\mathcal{X}^{1}_{a}|+|\mathcal{X}^1_{\overline{a}}|$, where
\begin{align*}
|\mathcal{X}^{0}_{a}|&=|D_3(\mathbf{x}^{0}_{a})\cap D_3(\mathbf{y}^{0}_{a})|=|D_3(1\,0\,\mathbf{v}\,\overline{b}\,b\,\overline{a})\cap D_3(1\,1\,0\,\mathbf{w}\,\overline{b}\,b)|,\\
|\mathcal{X}^{0}_{\overline{a}}|&=|D_2(\mathbf{x}^{0}_{\overline{a}})\cap D_4(\mathbf{y}^{0}_{\overline{a}})|=|D_2(1\,0\,\mathbf{v}\,\overline{b}\,b)\cap D_4(1\,1\,0\,\mathbf{w}\,\overline{b}\,b\,a)|,\\
|\mathcal{X}^{1}_{a}|&=|D_4(\mathbf{x}^{1}_{a})\cap D_3(\mathbf{y}^{1}_{a})|=|D_4(0\,1\,0\,\mathbf{v}\,\overline{b}\,b\,\overline{a})\cap D_2(1\,0\,\mathbf{w}\,\overline{b}\,b)|,\\
|\mathcal{X}^{1}_{\overline{a}}|&=|D_3(\mathbf{x}^{1}_{\overline{a}})\cap D_3(\mathbf{y}^{1}_{\overline{a}})|=|D_3(0\,1\,0\,\mathbf{v}\,\overline{b}\,b)\cap D_3(1\,0\,\mathbf{w}\,\overline{b}\,b\,a)|.
\end{align*}
Moreover, if $d_L(\mathbf{x}^{0}_{a},\mathbf{y}^{0}_{a})\leq 1$ then $d_L(\mathbf{x},\mathbf{y})\leq 2$. Thus, $d_L(\mathbf{x}^{0}_{a},\mathbf{y}^{0}_{a})\geq 2$. Similarly, it is easily verified that $\mathbf{x}^{0}_{\overline{a}} \notin D_2(\mathbf{y}^{0}_{\overline{a}}), \mathbf{y}^{1}_{a} \notin D_2(\mathbf{x}^{1}_{a}),$ and $d_L(\mathbf{x}^{1}_{\overline{a}},\mathbf{y}^{1}_{\overline{a}})\geq 2$.

We discuss the value of $|D_4(\mathbf{x})\cap D_4(\mathbf{y})|$ in two cases: A)$b=\overline{a}$; B)$b=a$.

\textbf{Case A): $b=\overline{a}$}. Then $\mathbf{x}=1\,0\,1\,0\,\mathbf{v}\,a\,\overline{a}\,\overline{a}\,a$ and $\mathbf{y}=0\,1\,1\,0\,\mathbf{w}\,a\, \overline{a}\,a\,\overline{a}$. By Propositions $\ref{prp1}$ and $\ref{prp2}$, we can decompose $\mathcal{X}^{0}_{\overline{a}},\mathcal{X}^{1}_{a}$, and $\mathcal{X}^{1}_{\overline{a}}$, respectively. Then, $\mathcal{X}^{0}_{\overline{a}}$ can be decomposed as follows:
\begin{align}
|\mathcal{X}^{0}_{\overline{a}}|&=|\mathcal{X}^{00}_{\overline{a}a}|+|\mathcal{X}^{00}_{\overline{a}\overline{a}}|+|\mathcal{X}^{01}_{\overline{a}a}|+
|\mathcal{X}^{010}_{\overline{a}\overline{a}}|+|\mathcal{X}^{011}_{\overline{a}\overline{a}}|,\nonumber\\
|\mathcal{X}^{00}_{\overline{a}a}|&=|D_0(\mathbf{x}^{00}_{\overline{a}a})\cap D_{2}(\mathbf{y}^{00}_{\overline{a}a})|=|D_0(\mathbf{v})\cap D_{2}(\mathbf{w}\,a\,\overline{a})|,\nonumber\\
|\mathcal{X}^{00}_{\overline{a}\overline{a}}|&=|D_1(\mathbf{x}^{00}_{\overline{a}\overline{a}})\cap D_{1}(\mathbf{y}^{00}_{\overline{a}\overline{a}})|=|D_1(\mathbf{v}\,a)\cap D_{1}(\mathbf{w}\,a)|,\nonumber\\
|\mathcal{X}^{01}_{\overline{a}a}|&=|D_1(\mathbf{x}^{01}_{\overline{a}a})\cap D_{4}(\mathbf{y}^{01}_{\overline{a}a})|=|D_1(0\,\mathbf{v})\cap D_{4}(1\,0\,\mathbf{w}\,a\,\overline{a})|,\nonumber\\
|\mathcal{X}^{010}_{\overline{a}\overline{a}}|&=|D_2(\mathbf{x}^{010}_{\overline{a}\overline{a}})\cap D_{2}(\mathbf{y}^{010}_{\overline{a}\overline{a}})|=|D_2(\mathbf{v}\,a)\cap D_{2}(\mathbf{w}\,a)|,\nonumber\\
|\mathcal{X}^{011}_{\overline{a}\overline{a}}|=|D_{1-\ell}&(\mathbf{x}^{011}_{\overline{a}\overline{a}})\cap D_{3}(\mathbf{y}^{011}_{\overline{a}\overline{a}})|=|D_{1-\ell}(\mathbf{v}_{[2+\ell,n-8]}\,a)\cap D_{3}(0\,\mathbf{w}\,a)|,\label{eq11}
\end{align}
where $\ell\geq 0$. If $\ell=0$ then $v_{1}=1$; if $\ell=1$ then $v_{1}=0$ and $v_{2}=1$.

Then, $\mathcal{X}^{1}_{a}$ can be decomposed as follows:
\begin{align}
|\mathcal{X}^{1}_{a}|&=|\mathcal{X}^{10}_{aa}|+|\mathcal{X}^{10}_{a\overline{a}}|+|\mathcal{X}^{11}_{aa}|+|\mathcal{X}^{11}_{a\overline{a}a}|
+|\mathcal{X}^{11}_{a\overline{a}\,\overline{a}}|,\nonumber\\
|\mathcal{X}^{10}_{aa}|&=|D_2(\mathbf{x}^{10}_{aa})\cap D_{0}(\mathbf{y}^{10}_{aa})|=|D_2(1\,0\,\mathbf{v})\cap D_{0}(\mathbf{w})|,\nonumber\\
|\mathcal{X}^{10}_{a\overline{a}}|&=|D_4(\mathbf{x}^{10}_{a\overline{a}})\cap D_{1}(\mathbf{y}^{(1,0)}_{a\overline{a}})|=|D_4(1\,0\,\mathbf{v}\,a\,\overline{a})\cap D_{1}(\mathbf{w}\,a)|,\nonumber\\
|\mathcal{X}^{11}_{aa}|&=|D_1(\mathbf{x}^{11}_{aa})\cap D_{1}(\mathbf{y}^{11}_{aa})|=|D_1(0\,\mathbf{v})\cap D_{1}(0\,\mathbf{w})|,\nonumber\\
|\mathcal{X}^{11}_{a\overline{a}a}|&=|D_2(\mathbf{x}^{11}_{a\overline{a}a})\cap D_{2}(\mathbf{y}^{11}_{a\overline{a}a})|=|D_2(0\,\mathbf{v})\cap D_{2}(0\,\mathbf{w})|,\nonumber\\
|\mathcal{X}^{11}_{a\overline{a}\,\overline{a}}|=|D_3(\mathbf{x}&^{11}_{a\overline{a}\,\overline{a})})\cap D_{1-\ell^*}(\mathbf{y}^{11}_{a\overline{a}\,\overline{a})})|=|D_3(0\,\mathbf{v}\,a)\cap D_{1-\ell^*}(0\,\mathbf{w}_{[1,n-9-\ell^*]})|,\label{eq12}
\end{align}
where $\ell^*\geq 0$. If $\ell^*=0$ then $\mathbf{w}_{[n-8]}=\overline{a}$; if $\ell^*=1$ then $\mathbf{w}_{[n-8]}=a$ and $\mathbf{w}_{[n-9]}=\overline{a}$.

Moreover, $\mathcal{X}^{1}_{\overline{a}}$ can be decomposed as follows:
\begin{align}
|\mathcal{X}^{1}_{\overline{a}}|=&|\mathcal{X}^{10}_{\overline{a}a}|+|\mathcal{X}^{10}_{\overline{a}\,\overline{a}}|+
|\mathcal{X}^{11}_{\overline{a}a}|+|\mathcal{X}^{11}_{\overline{a}\,\overline{a}}|,\nonumber\\
|\mathcal{X}^{10}_{\overline{a}a}|=&|D_2(\mathbf{x}^{10}_{\overline{a}a})\cap D_{2}(\mathbf{y}^{10}_{\overline{a}a})|=|D_2(1\,0\,\mathbf{v})\cap D_{2}(\mathbf{w}\,a\,\overline{a})|,\nonumber\\
|\mathcal{X}^{10}_{\overline{a}\,\overline{a}}|=&|D_3(\mathbf{x}^{10}_{\overline{a}\,\overline{a}})\cap D_{1}(\mathbf{y}^{10}_{\overline{a}\,\overline{a}})|=|D_3(1\,0\,\mathbf{v}\,a)\cap D_{1}(\mathbf{w}\,a)|,\nonumber\\
|\mathcal{X}^{11}_{\overline{a}a}|=&|D_1(\mathbf{x}^{11}_{\overline{a}a})\cap D_{3}(\mathbf{y}^{11}_{\overline{a}a})|=|D_1(0\,\mathbf{v})\cap D_{3}(0\,\mathbf{w}\,a\,\overline{a})|,\nonumber\\
|\mathcal{X}^{11}_{\overline{a}\,\overline{a}}|=&|D_2(\mathbf{x}^{11}_{\overline{a}\,\overline{a}})\cap D_{2}(\mathbf{y}^{11}_{\overline{a}\,\overline{a}})|=|D_2(0\,\mathbf{v}\,a)\cap D_{2}(0\,\mathbf{w}\,a)|.\label{eq13}
\end{align}

By Lemma $\ref{lm6}$, for any $n\geq 13$, we have
\begin{equation}
|\mathcal{X}^{0}_{a}|\leq 6(n-3)-36=6n-54,\label{eq14}
\end{equation}
because of $|\mathbf{x}^{0}_{a}|=n-3$ and $\mathbf{x}^{0}_{a},\mathbf{y}^{0}_{a}$ satisfy one condition of $l\geq 8$ in Lemma $\ref{lm6}$. Since $d_L(\mathbf{x}^{1}_{\overline{a}},\mathbf{y}^{1}_{\overline{a}})\geq 2$, thus we have
\begin{equation}
|\mathcal{X}^{1}_{\overline{a}}|\leq N(n-3,2,3)=6(n-3)-30=6n-48,\label{eq15}
\end{equation}
for any $n\geq 13$.
By Lemma $\ref{lm7}$, we have
we have
\begin{equation}
|\mathcal{X}^{0}_{\overline{a}}|\leq 4(n-4)-14=4n-30,\label{eq16}
\end{equation}
because of $|\mathbf{x}^{0}_{\overline{a}}|=n-4$ and $\mathbf{x}^{0}_{\overline{a}},\mathbf{y}^{0}_{\overline{a}}$ satisfy one condition of $l\geq 5$ in Lemma $\ref{lm7}$.

Next, we analyze the value of $|D_1(0\,\mathbf{v})\cap D_1(0\,\mathbf{w})|$ in $|\mathcal{X}_{a}^{1}|$ (see Equation (\ref{eq12})). Since $d_L(\mathbf{x},\mathbf{y})\geq 3$ we have $d_L(\mathbf{v},\mathbf{w})\geq 1$. Thus, $|D_1(0\,\mathbf{v})\cap D_1(0\,\mathbf{w})|=0,1,$ or $2$.

When $|D_1(0\,\mathbf{v})\cap D_1(0\,\mathbf{w})|=0$, then $d_L(0\,\mathbf{v},0\,\mathbf{w})\geq 2$. We have $|D_2(0\,\mathbf{v})\cap D_2(0\,\mathbf{w})|\leq N(n-7,2,2)=6$ for any $n\geq 13$. Therefore, by $(\ref{eq12})$ we have
\begin{equation}
|\mathcal{X}^{1}_{a}|\leq 1+(n-7)+0+6+(n-8)=2n-8,\label{eq17}
\end{equation}
because of $|\mathcal{X}^{10}_{aa}|\leq |D_{0}(\mathbf{w})|=1$, $|\mathcal{X}^{10}_{a\overline{a}}|\leq |D_{1}(\mathbf{w}\,a)|\leq n-7$, $|\mathcal{X}^{11}_{aa}|=0$, $|\mathcal{X}^{11}_{a\overline{a}a}|\leq 6$, and $|\mathcal{X}^{11}_{a\overline{a}\,\overline{a}}|\leq |D_1(0\,\mathbf{w}_{[1,n-9]})|\leq n-8$.
for any $n\geq 13$. So, when $|D_1(0\,\mathbf{v})\cap D_1(0\,\mathbf{w})|=0$, by $(\ref{eq14})$-$(\ref{eq17})$ it follows that
\begin{align*}
|\mathcal{X}|=|\mathcal{X}^{0}_{a}|+|\mathcal{X}^{0}_{\overline{a}}|+|\mathcal{X}^{1}_{a}|+|\mathcal{X}^1_{\overline{a}}|\leq 6n-54+4n-30+2n-8+6n-48=18n-140\leq 20n-166,
\end{align*}
for any $n\geq 13$.

When $|D_1(0\,\mathbf{v})\cap D_1(0\,\mathbf{w})|=1$, then $d_L(0\,\mathbf{v}, 0\,\mathbf{w})=1$. By Lemma $\ref{lm16}$, we have $|D_2(0\,\mathbf{v})\cap D_2(0\,\mathbf{w})|\leq n-7$ for any $n\geq 13$. Moreover, $\mathbf{v}$ and $\mathbf{w}$ are not both alternating sequences. Thus, $|D_1(\mathbf{w}\,a)|=n-7$ and $|D_1(0\,\mathbf{v})|=n-7$ do not both hold true.
Therefore, by $(\ref{eq12})$ we have
\begin{equation}
|\mathcal{X}^{1}_{a}|\leq 1+(n-7)+1+(n-7)+(n-8)=3n-20,\label{eq18}
\end{equation}
because of $|\mathcal{X}^{10}_{aa}|\leq |D_{0}(\mathbf{w})|=1$, $|\mathcal{X}^{10}_{a\overline{a}}|\leq |D_{1}(\mathbf{w}\,a)|\leq n-7$, $|\mathcal{X}^{11}_{aa}|=1$, $|\mathcal{X}^{11}_{a\overline{a}a}|\leq n-7$, and $|\mathcal{X}^{11}_{a\overline{a}\,\overline{a}}|\leq |D_1(0\,\mathbf{w}_{[1,n-9]})|\leq n-8$. By $(\ref{eq13})$, we also have
\begin{equation}
|\mathcal{X}^{1}_{\overline{a}}|\leq 2(n-8)+(n-7)+(n-7)+(n-6)=5n-36,\label{eq19}
\end{equation}
because of $|\mathcal{X}^{10}_{\overline{a}a}|\leq N(n-6,1,2)=2(n-6-2)$, $|\mathcal{X}^{10}_{\overline{a}\,\overline{a}}|\leq |D_{1}(\mathbf{w}\,a)|\leq n-7$, $|\mathcal{X}^{11}_{\overline{a}a}|\leq |D_1(0\,\mathbf{v})|\leq n-7$, and $|\mathcal{X}^{11}_{\overline{a}\,\overline{a}}|\leq n-6$ by using Lemma $\ref{lm16}$.
So, when $|D_1(0\,\mathbf{v})\cap D_1(0\,\mathbf{w})|=1$, by $(\ref{eq14}),(\ref{eq16}),(\ref{eq18})$, and $(\ref{eq19})$, we have
\begin{align*}
|\mathcal{X}|=|\mathcal{X}^{0}_{a}|+|\mathcal{X}^{0}_{\overline{a}}|+|\mathcal{X}^{1}_{a}|+|\mathcal{X}^1_{\overline{a}}|\leq 6n-54+4n-30+3n-20+5n-36=18n-140\leq 20n-166,
\end{align*}
for any $n\geq 13$.

When $|D_1(0\,\mathbf{v})\cap D_1(0\,\mathbf{w})|=2$, by Lemma $\ref{lm13}$, then $0\,\mathbf{v}$ and $0\,\mathbf{w}$ are Type-A-confusable. Moreover, $\mathbf{v}$ and $\mathbf{w}$ are Type-A-confusable. For convenience, let $\mathbf{v}=\mathbf{v}_1\,\mathbf{u}\,\mathbf{w}_2$ and $\mathbf{w}=\mathbf{v}_1\,\overline{\mathbf{u}}\,\mathbf{w}_2$, where $|\mathbf{v}_1|=s$, $|\mathbf{u}|=l$, $|\mathbf{w}_2|=t$, and $s+l+t=n-8$, where $\mathbf{u}$ is an alternating sequence. Next, we discuss the value of $|\mathcal{X}|$ based on different values of $s$ and $t$.

\textbf{Case 1}: When $s\geq 1$ and $t\geq 1$, we have that $\mathbf{v}_1\,\mathbf{u}$ and $\mathbf{v}_1\,\overline{\mathbf{u}}$ are not both alternating sequences, and $\mathbf{u}\,\mathbf{w}_2$ and $\overline{\mathbf{u}}\,\mathbf{w}_2$ are not both alternating sequences. Hence, $|D_1(\mathbf{v})|+|D_1(\mathbf{w})|\leq 2(n-8)-2$. By $(\ref{eq11})$ and $(\ref{eq12})$, then we have
\begin{align*}
|\mathcal{X}^{1}_{a}|&\leq |D_{0}(\mathbf{w})|+|D_{1}(\mathbf{w}\,a)|+|D_1(0\,\mathbf{v})\cap D_{1}(0\,\mathbf{w})|+|D_2(0\,\mathbf{v})\cap D_{2}(0\,\mathbf{w})|+|D_{1-\ell^*}(0\,\mathbf{w}_{[1,n-9-\ell^*]})|\\
&\leq n-5+|D_{1}(\mathbf{w}\,a)|+|D_2(0\,\mathbf{v})\cap D_{2}(0\,\mathbf{w})|,\\
|\mathcal{X}^{0}_{\overline{a}}|&\leq |D_{0}(\mathbf{v})|+|D_1(\mathbf{v}\,a)\cap D_{1}(\mathbf{w}\,a)|+|D_{1}(0\,\mathbf{v})|+|D_2(\mathbf{v}\,a)\cap D_{2}(\mathbf{w}\,a)|+|D_{1-\ell}(\mathbf{v}_{[2+\ell,n-8]}\,a)|\\
&\leq n-5+|D_{1}(0\,\mathbf{v})|+|D_2(\mathbf{v}\,a)\cap D_{2}(\mathbf{w}\,a)|,
\end{align*}
because of $|D_1(\mathbf{v}\,a)\cap D_{1}(\mathbf{w}\,a)|=2$.
Since $|D_1(\mathbf{v})|+|D_1(\mathbf{w})|\leq 2(n-8)-2$, we have $|D_1(0\,\mathbf{v})|+|D_1(\mathbf{w}\,a)|\leq 2(n-8)$. Since $\mathbf{v}$ and $\mathbf{w}$ are Type-A-confusable, and $s,t\geq 1$, by Lemma $\ref{lm16}$, we have $|D_2(\mathbf{v}\,a)\cap D_{2}(\mathbf{w}\,a)|\leq 2(n-7)-5$ and $|D_2(0\,\mathbf{v})\cap D_{2}(0\,\mathbf{w})|\leq 2(n-7)-5$.
Therefore,
\begin{equation}
|\mathcal{X}^{1}_{a}|+|\mathcal{X}^{0}_{\overline{a}}|\leq 2(n-5)+2(n-8)+2(2n-14-5)=8n-64.
\label{eq20}
\end{equation}
So, by $(\ref{eq14})$, $(\ref{eq15})$, and $(\ref{eq20})$, it follows that
\begin{align*}
|\mathcal{X}|&=|\mathcal{X}^{0}_{a}|+|\mathcal{X}^{0}_{\overline{a}}|+|\mathcal{X}^{1}_{a}|+|\mathcal{X}^1_{\overline{a}}|\leq 6n-54+8n-64+6n-48=20n-166.
\end{align*}

\textbf{Case 2}: When $s\geq 1$ and $t=0$, then $\mathbf{v}=\mathbf{v}_1\,\mathbf{u}$ and $\mathbf{w}=\mathbf{v}_1\,\overline{\mathbf{u}}$ are not both alternating sequences. Thus $\mathbf{v}_1\,\mathbf{u}$ and $\mathbf{v}_1\,\overline{\mathbf{u}}$ are not both alternating sequences. Hence, $|D_1(\mathbf{v})|+|D_1(\mathbf{w})|\leq 2(n-8)-1$.
Since $|D_1(\mathbf{v})|+|D_1(\mathbf{w})|\leq 2(n-8)-1$, we have $|D_1(0\,\mathbf{v})|+|D_1(\mathbf{w}\,a)|\leq 2(n-8)+1$. Since $\mathbf{v}$ and $\mathbf{w}$ are Type-A-confusable, and $s\neq 1$ or $n-10$, by Lemma $\ref{lm16}$, we have $|D_2(\mathbf{v}\,a)\cap D_{2}(\mathbf{w}\,a)|\leq 2(n-7)-6$ and $|D_2(0\,\mathbf{v})\cap D_{2}(0\,\mathbf{w})|\leq 2(n-7)-5$.
Therefore,
\begin{equation}
|\mathcal{X}^{1}_{a}|+|\mathcal{X}^{0}_{\overline{a}}|\leq 2(n-5)+2(n-8)+1+2(2n-14)-11=8n-64.\label{eq21}
\end{equation}
So, if $s\neq 1 $ or $n-10$ then by $(\ref{eq14})$, $(\ref{eq15})$, and $(\ref{eq21})$, we have
\begin{align*}
|\mathcal{X}|=|\mathcal{X}^{0}_{a}|+|\mathcal{X}^{0}_{\overline{a}}|+|\mathcal{X}^{1}_{a}|+|\mathcal{X}^1_{\overline{a}}|\leq 6n-54+8n-64+6n-48=20n-166.
\end{align*}
If $s=1$ and $t=0$ then by Lemma $\ref{lm16}$ we have $|D_2(\mathbf{v}\,a)\cap D_{2}(\mathbf{w}\,a)|\leq 2(n-7)-6$ and $|D_2(0\,\mathbf{v})\cap D_{2}(0\,\mathbf{w})|\leq 2(n-7)-4$.
Therefore,
\begin{equation}
|\mathcal{X}^{1}_{a}|+|\mathcal{X}^{0}_{\overline{a}}|\leq 2(n-5)+2(n-8)+1+2(2n-14)-10=8n-63.\label{eq22}
\end{equation}
If $s=1$ and $t=0$ then $\mathbf{v}$ and $\mathbf{w}$ are not both alternating sequences. Thus, $|D_1(0\,\mathbf{v})|+|D_{1}(\mathbf{w}\,a)|\leq 2(n-7)-1$. So, by $(\ref{eq13})$, we have
\begin{equation}
|\mathcal{X}^{1}_{\overline{a}}|\leq 2(n-8)+2(n-7)-1+2(n-6)-6=6n-49,\label{eq23}
\end{equation}
because of $|\mathcal{X}^{10}_{\overline{a}a}|\leq N(n-6,1,2)=2(n-6-2)$ and $|\mathcal{X}^{11}_{\overline{a}\,\overline{a}}|\leq 2(n-6)-6$ by using Lemma $\ref{eq18}$. Furthermore, $|\mathcal{X}^{0}_{a}|\leq 6n-54$. So, if $s=1$ and $t=0$ then by $(\ref{eq14})$, $(\ref{eq22})$, and $(\ref{eq23})$, it follows that
\begin{align*}
|\mathcal{X}|=|\mathcal{X}^{0}_{a}|+|\mathcal{X}^{0}_{\overline{a}}|+|\mathcal{X}^{1}_{a}|+|\mathcal{X}^1_{\overline{a}}|\leq 6n-54+8n-63+6n-49=20n-166.
\end{align*}

If $s=n-10$ and $t=0$ then by $(\ref{eq12})$ we have
$$|\mathcal{X}^{1}_{a}|\leq 3n-22+|D_{1-\ell^*}(0\,\mathbf{w}_{[1,n-9-\ell^*]})|$$
because of $|\mathcal{X}^{10}_{aa}|\leq |D_{0}(\mathbf{w})|=1$, $|\mathcal{X}^{10}_{a\overline{a}}|\leq |D_{1}(\mathbf{w}\,a)|\leq n-7$, $|\mathcal{X}^{11}_{aa}|=|D_1(0\,\mathbf{v})\cap D_{1}(0\,\mathbf{w})|=2$, $|\mathcal{X}^{11}_{a\overline{a}a}|=|D_2(0\,\mathbf{v})\cap D_{2}(0\,\mathbf{w})|\leq 2(n-7)-4$, and $|\mathcal{X}^{11}_{a\overline{a}\,\overline{a}}|\leq |D_{1-\ell^*}(0\,\mathbf{w}_{[1,n-9-\ell^*]})|$.
When $w_{n-8}=a$ thus $\ell^*\geq 1$ and $|\mathcal{X}^{1}_{a}|\leq 3n-21$. So, if $w_{n-8}=a$ then $(\ref{eq14})$, $(\ref{eq15})$, and $(\ref{eq16})$, it follows that
\begin{align*}
|\mathcal{X}|=|\mathcal{X}^{0}_{a}|+|\mathcal{X}^{0}_{\overline{a}}|+|\mathcal{X}^{1}_{a}|+|\mathcal{X}^1_{\overline{a}}|\leq 6n-54+3n-21+4n-30+6n-48=19n-153\leq 20n-166,
\end{align*}
for any $n\geq 13$.
When $w_{n-8}=\overline{a}$, then we have $\ell^*=0$. Thus $$|\mathcal{X}^{1}_{a}|\leq 2n-15+|D_1(\mathbf{w},a)|+|D_{1}(0\,\mathbf{w}_{[1,n-9]})|.$$ Since $s=n-10$ and $t=0$, we have $v_{n-8}=a$. Then we have
$$|\mathcal{X}^{0}_{\overline{a}}|\leq 2n-16+|D_1(0\,\mathbf{v})|+|D_{1-\ell}(\mathbf{v}_{[2+\ell,n-8]}\,a)|$$
because of $|\mathcal{X}^{00}_{\overline{a}a}|\leq |D_{0}(\mathbf{v})|=1$, $|\mathcal{X}^{00}_{\overline{a}\,\overline{a}}|=|D_1(\mathbf{v}\,a)\cap D_{1}(\mathbf{w}\,a)|=2$, $|\mathcal{X}^{01}_{\overline{a}a}|\leq |D_{1}(0\,\mathbf{v})|$, $|\mathcal{X}^{010}_{\overline{a}\,\overline{a}}|=|D_2(\mathbf{v}\,a)\cap D_{2}(\mathbf{w}\,a)|\leq 2(n-7)-5$, and $|\mathcal{X}^{011}_{\overline{a}\,\overline{a}}|\leq |D_{1-\ell}(\mathbf{v}_{[2+\ell,n-8]}\,a)|$. Since $s=n-10$ and $t=0$, then $\mathbf{v}$ and $\mathbf{w}$ are not both alternating sequences, $\mathbf{w}_{[1,n-9]}$ and $\mathbf{v}_{[2,n-8]}$ are not both alternating sequences. Thus, $|D_1(\mathbf{w}\,a)|+|D_1(0\,\mathbf{v})|\leq 2(n-7)-1$, and $|D_{1}(0\,\mathbf{w}_{[1,n-9]})|+|D_{1}(\mathbf{v}_{[2,n-8]}\,a)|\leq 1+|D_{1}(\mathbf{w}_{[1,n-9]})|+|D_{1}(\mathbf{v}_{[2,n-8]})|\leq 2(n-9)-1+1=2(n-9)$ because of $v_{n-8}=a$. Therefore,
\begin{align}
|\mathcal{X}^{1}_{a}|+|\mathcal{X}^{0}_{\overline{a}}|&\leq (2n-15)+(2n-16)+|D_1(\mathbf{w}\,a)|+|D_1(0\,\mathbf{v})|+|D_{1}(0\,\mathbf{w}_{[1,n-9]})|+|D_{1}(\mathbf{v}_{[2,n-8]}\,a)|\nonumber\\
&\leq (2n-15)+(2n-16)+2(n-7)-1+2(n-9)=8n-64.\label{eq24}
\end{align}
So, if $w_{n-8}=\overline{a}$ then by $(\ref{eq14})$, $(\ref{eq15})$, and $(\ref{eq24})$, it follows that
\begin{align*}
|\mathcal{X}|=|\mathcal{X}^{0}_{a}|+|\mathcal{X}^{0}_{\overline{a}}|+|\mathcal{X}^{1}_{a}|+|\mathcal{X}^1_{\overline{a}}|\leq 6n-54+8n-64+6n-48=20n-166.
\end{align*}

\textbf{Case 3}: When $s=0$ and $t\geq 1$, by using the similar method of proving the case of $s\geq 1$ and $t=0$, we also obtain that $|\mathcal{X}|\leq 20n-166$.

\textbf{Case 4}: When $s=t=0$, if $\ell^*\geq 1$ or $\ell \geq 1$ then $|\mathcal{X}|\leq 19n-153\leq 20n-166$ for any $n\geq 13$. Then we only consider $\ell=\ell^*=0$. That is, $v_1=1$ and $w_{n-8}=\overline{a}$. Since $s=t=0$ then $\mathbf{v}$ is an alternating sequence with length of $n-8$ that starts $a$ and ends with $1$, and  $\mathbf{w}$ is also an alternating sequence with length of $n-8$ that starts $\overline{a}$ and ends with $0$.
Thus,
$$|\mathcal{X}^{0}_{\overline{a}}|\leq 2n-16+n-7+n-9=4n-32$$
because of $|\mathcal{X}^{00}_{\overline{a}a}|\leq |D_{0}(\mathbf{v})|=1$, $|\mathcal{X}^{00}_{\overline{a}\,\overline{a}}|=|D_1(\mathbf{v}\,a)\cap D_{1}(\mathbf{w}\,a)|=2$, $|\mathcal{X}^{01}_{\overline{a}a}|\leq |D_{1}(0\,\mathbf{v})|=n-7$, $|\mathcal{X}^{010}_{\overline{a}\,\overline{a}}|=|D_2(\mathbf{v}\,a)\cap D_{2}(\mathbf{w}\,a)|\leq 2(n-7)-5$, and $|\mathcal{X}^{011}_{\overline{a}\,\overline{a}}|\leq |D_{1}(\mathbf{v}_{[2,n-8]}\,a)|=n-9$. We also have
$$|\mathcal{X}^{1}_{a}|\leq 1+n-7+2+2(n-7)-5+n-9=4n-32$$
because of $|\mathcal{X}^{10}_{aa}|\leq |D_{0}(\mathbf{w})|=1$, $|\mathcal{X}^{10}_{a\overline{a}}|\leq |D_{1}(\mathbf{w}\,a)|=n-7$, $|\mathcal{X}^{11}_{aa}|=|D_1(0\,\mathbf{v})\cap D_{1}(0\,\mathbf{w})|=2$, $|\mathcal{X}^{11}_{a\overline{a}a}|=|D_2(0\,\mathbf{v})\cap D_{2}(0\,\mathbf{w})|\leq 2(n-7)-5$, and $|\mathcal{X}^{11}_{a\overline{a}\,\overline{a}}|\leq |D_{1}(0\,\mathbf{w}_{[1,n-9]})|=n-9$.
So, if $s=t=0$ then by $(\ref{eq14})$ and $(\ref{eq15})$, it follows that
\begin{align*}
|\mathcal{X}|=|\mathcal{X}^{0}_{a}|+|\mathcal{X}^{0}_{\overline{a}}|+|\mathcal{X}^{1}_{a}|+|\mathcal{X}^1_{\overline{a}}|\leq 6n-54+8n-64+6n-48=20n-166.
\end{align*}

\textbf{Case B): $b=a$}. Then, we have
$$|\mathcal{X}^{0}_{a}|=|D_3(\mathbf{x}^{0}_{a})\cap D_3(\mathbf{y}^{0}_{a})|=|D_3(1\,0\,\mathbf{v}\,\overline{a}\,a\,\overline{a})\cap D_3(1\,1\,0\,\mathbf{w}\,\overline{a}\,a)|,$$
$$|\mathcal{X}^{0}_{\overline{a}}|=|D_2(\mathbf{x}^{0}_{\overline{a}})\cap D_4(\mathbf{y}^{0}_{\overline{a}})|=|D_2(1\,0\,\mathbf{v}\,\overline{a}\,a)\cap D_4(1\,1\,0\,\mathbf{w}\,\overline{a}\,a\,a)|,$$
$$|\mathcal{X}^{1}_{a}|=|D_4(\mathbf{x}^{1}_{a})\cap D_3(\mathbf{y}^{1}_{a})|=|D_4(0\,1\,0\,\mathbf{v}\,\overline{a}\,a\,\overline{a})\cap D_2(1\,0\,\mathbf{w}\,\overline{a}\,a)|,$$
$$|\mathcal{X}^{1}_{\overline{a}}|=|D_3(\mathbf{x}^{1}_{\overline{a}})\cap D_3(\mathbf{y}^{1}_{\overline{a}})|=|D_3(0\,1\,0\,\mathbf{v}\,\overline{a}\,a)\cap D_3(1\,0\,\mathbf{w}\,\overline{a}\,a\,a)|.$$
Moreover, if $d_L(\mathbf{x}^{0}_{a},\mathbf{y}^{0}_{a})\leq 1$ then $d_L(\mathbf{x},\mathbf{y})\leq 2$. Thus, $d_L(\mathbf{x}^{0}_{a},\mathbf{y}^{0}_{a})\geq 2$. Similarly, it is easily verified that $\mathbf{x}^{0}_{\overline{a}} \notin D_2(\mathbf{y}^{0}_{\overline{a}}), \mathbf{y}^{1}_{a} \notin D_2(\mathbf{x}^{1}_{a}),$ and $d_L(\mathbf{x}^{1}_{\overline{a}},\mathbf{y}^{1}_{\overline{a}})\geq 2$.

By Lemma $\ref{lm6}$, we have
\begin{align}
|\mathcal{X}^{0}_{a}|\leq 6(n-3)-33=6n-51,\label{eq25}\\
|\mathcal{X}^{1}_{\overline{a}}|\leq 6(n-3)-33=6n-51.\label{eq26}
\end{align}

Then, $\mathcal{X}^{1}_{a}$ can be decomposed as follows:
\begin{align}
|\mathcal{X}^{1}_{a}|&=|\mathcal{X}^{10}_{aa}|+|\mathcal{X}^{10}_{a\overline{a}}|+|\mathcal{X}^{11}_{aa}|+|\mathcal{X}^{11}_{a\overline{a}}|,\nonumber\\
|\mathcal{X}^{10}_{aa}|&=|D_3(\mathbf{x}^{10}_{aa})\cap D_{1}(\mathbf{y}^{10}_{aa})|=|D_3(1\,0\,\mathbf{v}\,\overline{a})\cap D_{1}(\mathbf{w}\,\overline{a})|,\nonumber\\
|\mathcal{X}^{10}_{a\overline{a}}|&=|D_4(\mathbf{x}^{10}_{a\overline{a}})\cap D_{0}(\mathbf{y}^{10}_{a\overline{a}})|=|D_4(1\,0\,\mathbf{v}\,\overline{a},a)\cap D_{0}(\mathbf{w})|,\nonumber\\
|\mathcal{X}^{11}_{aa}|&=|D_2(\mathbf{x}^{11}_{aa})\cap D_{2}(\mathbf{y}^{11}_{aa})|=|D_2(0\,\mathbf{v}\,\overline{a})\cap D_{2}(0\,\mathbf{w}\,\overline{a})|,\nonumber\\
|\mathcal{X}^{11}_{a\overline{a}}|&=|D_3(\mathbf{x}^{11}_{a\overline{a}})\cap D_{1}(\mathbf{y}^{11}_{a\overline{a}})|=|D_3(0\,\mathbf{v}\,\overline{a})\cap D_{1}(0\,\mathbf{w})|.\label{eq27}
\end{align}
Moreover, $\mathcal{X}^{0}_{\overline{a}}$ can be decomposed as follows:
\begin{align}
|\mathcal{X}^{0}_{\overline{a}}|&=|\mathcal{X}^{00}_{\overline{a}a}|+|\mathcal{X}^{00}_{\overline{a}\,\overline{a}}|+|\mathcal{X}^{010}_{\overline{a}aa}|+|
\mathcal{X}^{010}_{\overline{a}a\overline{a}}|+|\mathcal{X}^{011}_{\overline{a}aa}|+|\mathcal{X}^{011}_{\overline{a}a\overline{a}}|
+|\mathcal{X}^{01}_{\overline{a}a}|+|\mathcal{X}^{01}_{\overline{a}\,\overline{a}}|,\nonumber\\
|\mathcal{X}^{00}_{\overline{a}a}|&=|D_1(\mathbf{x}^{00}_{\overline{a}a})\cap D_{2}(\mathbf{y}^{00}_{\overline{a}a})|=|D_1(\mathbf{v}\,\overline{a})\cap D_{2}(\mathbf{w}\,\overline{a}\,a)|,\nonumber\\
|\mathcal{X}^{00}_{\overline{a}\,\overline{a}}|&=|D_0(\mathbf{x}^{00}_{\overline{a}\,\overline{a}})\cap D_{0}(\mathbf{y}^{00}_{\overline{a}\,\overline{a}})|=|D_0(\mathbf{v})\cap D_{0}(\mathbf{w})|,\nonumber\\
|\mathcal{X}^{010}_{\overline{a}aa}|&=|D_{1-\ell_2}(\mathbf{x}^{010}_{\overline{a}aa})\cap D_{3}(\mathbf{y}^{010}_{\overline{a}aa})|=|D_{1-\ell_2}(\mathbf{v}_{[1,n-9-\ell_2]})\cap D_{3}(\mathbf{w}\,\overline{a})|,\nonumber\\
|\mathcal{X}^{010}_{\overline{a}a\overline{a}}|&=|D_2(\mathbf{x}^{010}_{\overline{a}a\overline{a}})\cap D_{2}(\mathbf{y}^{010}_{\overline{a}a\overline{a}})|=|D_2(\mathbf{v})\cap D_{2}(\mathbf{w})|,\nonumber\\
|\mathcal{X}^{011}_{\overline{a}aa}|&=|D_{0-\ell_1-\ell_2}(\mathbf{x}^{011}_{\overline{a}aa})\cap D_{4}(\mathbf{y}^{011}_{\overline{a}aa})|=|D_{0-\ell_1-\ell_2}(\mathbf{v}_{[2+\ell_1,n-9-\ell_2]})\cap D_{4}(0\,\mathbf{w}\,\overline{a})|,\nonumber\\
|\mathcal{X}^{011}_{\overline{a}a\overline{a}}|&=|D_{1-\ell_1}(\mathbf{x}^{011}_{\overline{a}a\overline{a}})\cap D_{3}(\mathbf{y}^{011}_{\overline{a}a\overline{a}})|=|D_{1-\ell_1}(\mathbf{v}_{[2+\ell_1,n-8]})\cap D_{3}(0\,\mathbf{w})|,\nonumber\\
|\mathcal{X}^{01}_{\overline{a}\,\overline{a}}|&=|D_1(\mathbf{x}^{01}_{\overline{a}\,\overline{a}})\cap D_{2}(\mathbf{y}^{01}_{\overline{a}\,\overline{a}})|=|D_1(0\,\mathbf{v})\cap D_{2}(1\,0\,\mathbf{w})|,\label{eq28}
\end{align}
where $\ell_1,\ell_2\geq 0$. If $\ell_1=0$ then $v_{1}=1$; if $\ell_2=0$ then $v_{n-8}=a$.

When $\ell_1\geq 1$ or $\ell_2\geq 1$, it is easily verified that $|\mathcal{X}^{1}_{a}|+|\mathcal{X}^{0}_{\overline{a}}|\leq 8n-64$ for any $n\geq 13$.

Consider $\ell_1=\ell_2=0$. If $|D_1(\mathbf{v})\cap D_1(\mathbf{w})|=1$ then we also obtain $|\mathcal{X}^{1}_{a}|+|\mathcal{X}^{0}_{\overline{a}}|\leq 8n-64$ for any $n\geq 13$.
When $|D_1(\mathbf{v})\cap D_1(\mathbf{w})|=2$, by Lemma $\ref{lm13}$, then $\mathbf{v}$ and $\mathbf{w}$ are Type-A-confusable. For convenience, let $\mathbf{v}=\mathbf{v}_1\,\mathbf{u}\,\mathbf{w}_2$ and $\mathbf{w}=\mathbf{v}_1\,\overline{\mathbf{u}}\,\mathbf{w}_2$, where $|\mathbf{v}_1|=s$, $|\mathbf{u}|=l$, $|\mathbf{w}_2|=t$, and $s+l+t=n-8$, where $\mathbf{u}$ is an alternating sequence. For any $n\geq 13$, we can also obtain that $|\mathcal{X}^{1}_{a}|+|\mathcal{X}^{0}_{\overline{a}}|\leq 8n-64$ based on different values of $s$ and $t$ by Lemma $\ref{lm16}$. Thus, when $b=a$, we can prove that $|\mathcal{X}|\leq 20n-166$ for any $n\geq 13$.
\end{proof}

\section{Conclusion}
\label{sec5}

In this paper, we study the sequence reconstruction problem over deletion channels. We proposed the lower bound on $N(n,3,t)$, that is, $N(n,3,t)\geq M(n,t)$ for $n\geq \max\{13,t+8\}$. Furthermore, we determined the values of $N(n,3,4)$ for any $n\geq 5$. For any $t\geq 5$ and  $n\geq t+8$, we propose the following conjecture.

\begin{conjecture}
For any $t\geq 5$ and $n\geq t+8$, we have
$$N(n,3,t)=M(n,t).$$
\end{conjecture}

\appendices
\section{The case of $n=5,6,7,8,9,10,11,12$}\label{APP-A}
The purpose of this appendix is to give two length-$n$ sequences $\mathbf{x}$ and $\mathbf{y}$ such that the cardinality of the intersection of their $4$-deletion balls is $N(n,3,4)$.

For $n=5,6,7,8$, we have $N(n,3,4)=2^{n-4}$. Specifically, we can find $\mathbf{x}$, $\mathbf{y}$ such that $d_L(\mathbf{x},\mathbf{y})=3$ and $N(n,3,4)=|D_4(\mathbf{x})\cap D_4(\mathbf{y})|$ for all $5\leq n\leq 12$ as follows.
\begin{itemize}
    \item When $n=5$, if $\mathbf{x}=0\,0\,0\,1\,0$ and $\mathbf{y}=1\,1\,1\,0\,1$ then $N(5,3,4)=|D_4(\mathbf{x})\cap D_4(\mathbf{y})|=2$.
    \item When $n=6$, if $\mathbf{x}=0\,1\,0\,1\,1\,0$ and $\mathbf{y}=1\,1\,0\,0\,0\,1$ then $N(6,3,4)=|D_4(\mathbf{x})\cap D_4(\mathbf{y})|=4$.
    \item When $n=7$, if $\mathbf{x}=0\,1\,0\,1\,1\,1\,0$ and $\mathbf{y}=1\,1\,0\,0\,1\,0\,1$ then $N(7,3,4)=|D_4(\mathbf{x})\cap D_4(\mathbf{y})|=8$.
    \item When $n=8$, if $\mathbf{x}=0\,1\,1\,0\,1\,0\,0\,1$ and $\mathbf{y}=1\,0\,0\,1\,0\,1\,1\,0$ then $N(8,3,4)=|D_4(\mathbf{x})\cap D_4(\mathbf{y})|=16$.
    \item When $n=9$, if $\mathbf{x}=0\,1\,1\,0\,0\,1\,0\,1\,0$ and $\mathbf{y}=1\,0\,1\,0\,1\,1\,0\,0\,1$ then $N(9,3,4)=|D_4(\mathbf{x})\cap D_4(\mathbf{y})|=26.$
    \item When $n=10$, if $\mathbf{x}=0\,1\,1\,0\,0\,1\,1\,0\,0\,1$ and $\mathbf{y}=1\,0\,1\,0\,1\,0\,1\,0\,1\,0$ then $N(10,3,4)=|D_4(\mathbf{x})\cap D_4(\mathbf{y})|=40$.
    \item When $n=11$, if $\mathbf{x}=0\,1\,1\,0\,0\,1\,1\,0\,1\,0\,1$ and $\mathbf{y}=1\,0\,1\,0\,1\,0\,1\,0\,1\,1\,0$ then $N(11,3,4)=|D_4(\mathbf{x})\cap D_4(\mathbf{y})|=57$.
    \item When $n=12$, if $\mathbf{x}=0\,1\,1\,0\,0\,1\,0\,1\,0\,1\,0\,1$ and $\mathbf{y}=1\,0\,1\,0\,1\,0\,1\,0\,0\,1\,1\,0$ then $N(12,3,4)=|D_4(\mathbf{x})\cap D_4(\mathbf{y})|=75$.
\end{itemize}

\section{Proof of Lemma \ref{lm3}}\label{APP-C}
The purpose of this appendix is to give the proof of Lemma \ref{lm3}.

\begin{proof}{ \rm (The proof of Lemma $\ref{lm3}$)}
We let $\mathbf{x}=x_1\,\cdots\,x_{n_1}$ and $y_1\,\cdots\,y_{n_2}$. 

First consider the case of $n_1=n_2+1$ and $\mathbf{y}\notin D_1(\mathbf{x})$. Let $\mathbf{u}$ denote the longest common prefix of $\mathbf{x}$ and $\mathbf{y}$, then $\mathbf{x}=\mathbf{u}\,\mathbf{x'}, \mathbf{y}=\mathbf{u}\,\mathbf{y'}$ such that $\mathbf{y}'\notin D_1(\mathbf{x}')$. We denote $\mathcal{X}=D_2(\mathbf{x})\cap D_1(\mathbf{y})$. 
\begin{itemize}
  \item If $|\mathbf{u}|=0$, then $\mathcal{X}^{x_1}\subset D_1(\mathbf{y})^{x_1}\subset D_0(y_3\,\cdots\,y_{n_2})$ and $\mathcal{X}^{\overline{x_1}}\subset \big(D_1(x_2\,\cdots\, x_{n_1})\cap D_1(\mathbf{y})\big)$. Thus, $|\mathcal{X}|=|\mathcal{X}^{x_1}|+|\mathcal{X}^{\overline{x_1}}|\leq 1+2\leq 3$ because of $x_2\,\cdots\, x_{n_1}\neq \mathbf{y}$.
  \item If $|\mathbf{u}|>0$, by Lemma $\ref{lm4}$, the intersection can be decomposed as
  \begin{align*}
    \mathcal{X}&=\bigcup\limits_{p\in\{0,1\}} D_p(\mathbf{u})\circ \big(D_{2-p}(\mathbf{x}')\cap D_{1-p}(\mathbf{y}')\big)\\
    &=\mathbf{u}\circ \big(D_{2}(\mathbf{x}')\cap D_{1}(\mathbf{y}')\big).
  \end{align*}
  Since $|D_{2}(\mathbf{x}')\cap D_{1}(\mathbf{y}')|\leq 3$, then $|\mathcal{X}|\leq 3$.
\end{itemize}
Denote $\mathcal{Y}=D_3(\mathbf{x})\cap D_2(\mathbf{y})$.
\begin{itemize}
  \item If $|\mathbf{u}|=0$, then $\mathcal{Y}^{x_1}\subset D_2(\mathbf{y})^{x_1}\subset D_1(y_3\,\cdots\,y_{n_2})$. Moreover, $|\mathcal{X}^{\overline{x}_1}|=|D_{2-\ell^*}(x_{3+\ell^*}\,\cdots\,x_{n_1})\cap D_{2}(y_{2}\,\cdots\,y_{n_2})|$ such that $x_{3}\,\cdots\,x_{n_1}\neq y_{2}\,\cdots\,y_{n_2}$ for $\ell=0$, where $\ell^*\geq 0$ is an integer. Then, $|\mathcal{Y}^{\overline{x_1}}|\leq \max\{|D_2(x_3\,\cdots\, x_{n_1})\cap D_2(y_2\,\cdots\,y_{n_2})|,n_2-2\}$. Thus, $|\mathcal{Y}|=|\mathcal{Y}^{x_1}|+|\mathcal{Y}^{\overline{x_1}}|\leq |D_1(y_3\,\cdots\,y_{n_2})|+N(n_2-1,1,2)=n_2-2+2(n_2-3)=3n_2-8$ for $n_2\geq4$.
  \item If $|\mathbf{u}|>0$, by Lemma $\ref{lm4}$, the intersection can be decomposed as
  \begin{align*}
    \mathcal{Y}&=\bigcup\limits_{p\in\{0,1,2\}} D_p(\mathbf{u})\circ \big(D_{3-p}(\mathbf{x}')\cap D_{2-p}(\mathbf{y}')\big)\\
    &=\bigg(\mathbf{u}\circ \big(D_{3}(\mathbf{x}')\cap D_{2}(\mathbf{y}')\big)\bigg)\cup \bigg(D_1(\mathbf{u})\circ \big(D_{2}(\mathbf{x}')\cap D_{1}(\mathbf{y}')\big)\bigg).
  \end{align*}
  Given that $|D_{3}(\mathbf{x}')\cap D_{2}(\mathbf{y}')|\leq 3(n_2-|\mathbf{u}|)-8, D_1(\mathbf{u})\leq |\mathbf{u}|, |D_{2}(\mathbf{x}')\cap D_{1}(\mathbf{y}')|\leq 3$, one can derive $|\mathcal{Y}|\leq 3n_2-8$.
\end{itemize}

Second consider the case of $n_1=n_2+2$ and $\mathbf{y}\notin D_2(\mathbf{x})$. Similarly, we have $|D_3(\mathbf{x})\cap D_1(\mathbf{y})|\leq 4$, and $|D_4(\mathbf{x})\cap D_2(\mathbf{y})|\leq 4n_2-13$ for any $n_2\geq 5$.

\end{proof}

\section{Proof of Lemmas $\ref{lm6}$ and $\ref{lm7}$}\label{APP-D}
The purpose of this appendix is to give the proof of Lemmas $\ref{lm6}$ and $\ref{lm7}$.

\begin{proof}{ \rm (The proof of Lemma $\ref{lm6}$)}
Let $\mathbf{x}=x_1\,x_2\,\cdots\,x_n$, $\mathbf{y}=y_1\,y_2\,\cdots\,y_n$, $\mathbf{u}=u_1\,\cdots\,u_s$, $\mathbf{w}=w_1\,\cdots\,w_t$, $\mathbf{v}=v_1\,\cdots\, v_l$, $\mathbf{\widetilde{v}}=\widetilde{v}_1\,\cdots\,\widetilde{v}_l$, $\mathcal{X}=D_3(\mathbf{x})\cap D_3(\mathbf{y})$. By Lemma $\ref{lm4}$, the intersection can be decomposed as
\begin{equation}
D_3(\mathbf{x})\cap D_3(\mathbf{y})=\bigcup\limits_{p+q\leq 3}D_p(\mathbf{u})\circ \big(D_{3-p-q}(\mathbf{v})\cap D_{3-p-q}(\mathbf{\widetilde{v}})) \circ D_{q}(\mathbf{w}).\nonumber
\end{equation}
Since $d_L(\mathbf{x},\mathbf{y})\geq 2$, then $d_L(\mathbf{v},\mathbf{\widetilde{v}})\geq 2$ such that
\begin{align*}
D_3(\mathbf{x})\cap D_3(\mathbf{y})&=\big(D_1(\mathbf{u})\circ \big(D_{2}(\mathbf{v})\cap D_{2}(\mathbf{\widetilde{v}})) \circ \mathbf{w}\big)\cup
 \big(\mathbf{u}\circ \big(D_{2}(\mathbf{v})\cap D_{2}(\mathbf{\widetilde{v}})) \circ D_1(\mathbf{w})\big)\cup \big(\mathbf{u}\circ \big(D_{3}(\mathbf{v})\cap D_{3}(\mathbf{\widetilde{v}})) \circ \mathbf{w}\big)\\
&=\bigcup\limits_{\mathbf{u}_1\in D_1(\mathbf{u})}\mathcal{X}_{\mathbf{w}}^{\mathbf{u}_1}\cup \mathcal{X}_{\mathbf{w}}^{\mathbf{u}}\cup \bigcup\limits_{\mathbf{w}_1\in D_1(\mathbf{w})}\mathcal{X}_{\mathbf{w}_1}^{\mathbf{u}},
\end{align*}
where $\bigcup\limits_{\mathbf{u}_1\in D_1(\mathbf{u})}\mathcal{X}_{\mathbf{w}}^{\mathbf{u}_1}=D_1(\mathbf{u})\circ \big(D_{2}(\mathbf{v})\cap D_{2}(\mathbf{\widetilde{v}})) \circ \mathbf{w}$, $\mathcal{X}_{\mathbf{w}}^{\mathbf{u}}=\mathbf{u}\circ \big(D_{3}(\mathbf{v})\cap D_{3}(\mathbf{\widetilde{v}})) \circ \mathbf{w}$, $\bigcup\limits_{\mathbf{w}_1\in D_1(\mathbf{w})}\mathcal{X}_{\mathbf{w}_1}^{\mathbf{u}}=\mathbf{u}\circ \big(D_{2}(\mathbf{v})\cap D_{2}(\mathbf{\widetilde{v}})) \circ D_1(\mathbf{w})$.

Observe that $\mathbf{u}_{[1,s-1]}\in D_1(\mathbf{u})$ if $s\geq 1$, and $\mathbf{w}_{[2,t]}\in D_1(\mathbf{w})$ if $t\geq 1$. Next, we will prove that $|\mathcal{X}_{\mathbf{w}}^{\mathbf{u}_{[1,s-1]}}\backslash \mathcal{X}_{\mathbf{w}}^{\mathbf{u}}|\leq 3$ and $|\mathcal{X}_{\mathbf{w}_{[2,t]}}^{\mathbf{u}}\backslash \mathcal{X}_{\mathbf{w}}^{\mathbf{u}}|\leq 3$. Without loss of generality, we only consider the value of $|\mathcal{X}_{\mathbf{w}}^{\mathbf{u}_{[1,s-1]}}\backslash \mathcal{X}_{\mathbf{w}}^{\mathbf{u}}|$. By Proposition $\ref{prp1}$, it follows that $|\mathcal{X}_{\mathbf{w}}^{\mathbf{u}_{[1,s-1]}}|=|\mathcal{X}_{\mathbf{w}}^{\mathbf{u}}|+|\mathcal{X}_{\mathbf{w}}^{\mathbf{u}_{[1,s-1]}\overline{u_s}}|$. Since $v_1\neq \widetilde{v}_1$, then $\overline{u_s}=v_1$ or $\widetilde{v}_1$. Without loss of generality, we consider the case where $v_1=\overline{u_s}$. Then, $\mathcal{X}_{\mathbf{w}}^{\mathbf{u}_{[1,s-1]}\overline{u_s}}=D_2(v_2\,\cdots\, v_l) \cap D_{1-\ell}(\widetilde{v}_{3+\ell}\,\cdots\,\widetilde{v}_l)$ with $\ell\geq 0$.
Thus, 
$$|\mathcal{X}_{\mathbf{w}}^{\mathbf{u}_{[1,s-1]}\overline{u_s}}|\leq \max\{ |D_2(v_2\,\cdots\, v_l) \cap D_1(\widetilde{v}_3\,\cdots\,\widetilde{v}_l)|,1\}.$$
Since $d_L(\mathbf{v},\mathbf{\widetilde{v}})\geq 2$, then $\widetilde{v}_3\,\cdots\,\widetilde{v}_l\notin D_1(v_2\,\cdots\, v_l)$ for $\ell=0$. By Lemma $\ref{lm3}$, we have $|\mathcal{X}_{\mathbf{w}}^{\mathbf{u}_{[1,s-1]}}\backslash \mathcal{X}_{\mathbf{w}}^{\mathbf{u}}|=|\mathcal{X}_{\mathbf{w}}^{\mathbf{u}_{[1,s-1]}\overline{u_s}}|\leq 3$ for $s\geq1$. Similarly, we also obtain that $|\mathcal{X}_{\mathbf{w}_{[2,t]}}^{\mathbf{u}}\backslash \mathcal{X}_{\mathbf{w}}^{\mathbf{u}}|\leq 3$ for $t\geq1$.

If $s,t\geq1$, by the decomposition of $\mathcal{X}$, then 
it follows that   
\begin{align*}
|\mathcal{X}|&=|\bigcup\limits_{\mathbf{u}_1\neq \mathbf{u}_{[1,s-1]}, \mathbf{u}_1\in D_1(\mathbf{u})}\mathcal{X}_{\mathbf{w}}^{\mathbf{u}_1}|+|\mathcal{X}_{\mathbf{w}}^{\mathbf{u}}|+|\bigcup\limits_{\mathbf{w}_1\neq \mathbf{w}_{[2,t]}, \mathbf{w}_1\in D_1(\mathbf{w})}\mathcal{X}_{\mathbf{w}_1}^{\mathbf{u}}|+|\mathcal{X}_{\mathbf{w}_{[2,t]}}^{\mathbf{u}}\backslash \mathcal{X}_{\mathbf{w}}^{\mathbf{u}}|+|\mathcal{X}_{\mathbf{w}}^{\mathbf{u}_{[1,s-1]}}\backslash \mathcal{X}_{\mathbf{w}}^{\mathbf{u}}|\\
&\leq (|D_1(\mathbf{u})|+|D_1(\mathbf{w})|-2)|D_{2}(\mathbf{v})\cap D_{2}(\mathbf{\widetilde{v}})|+|D_{3}(\mathbf{v})\cap D_{3}(\mathbf{\widetilde{v}})|+6\\
&\leq (s+t-2)|D_{2}(\mathbf{v})\cap D_{2}(\mathbf{\widetilde{v}})|+|D_{3}(\mathbf{v})\cap D_{3}(\mathbf{\widetilde{v}})|+6.
\end{align*}
By using a computerized search, we can obtain $N(2,2,2)=1$, $N(3,2,2)=2$, $N(4,2,2)=4$, $N(5,2,2)=4$, $N(2,2,3)=0$, $N(3,2,3)=1$, $N(4,2,3)=2$, $N(5,2,3)=4$, $N(6,2,3)=8$, and $N(7,2,3)=13$. By Theorem $\ref{thm3}$ and Corollary $\ref{cor1}$, it follows that $N(n,2,2)=6$ for $n\geq 6$ and $N(n,2,3)=6n-30$ for $n\geq 8$. When $s,t\geq1$, by the value of $l$, we have the following results:
\begin{enumerate}
    \item If $l=2$ then $$|\mathcal{X}|\leq (s+t)|D_{2}(\mathbf{v})\cap D_{2}(\mathbf{\widetilde{v}})|+|D_{3}(\mathbf{v})\cap D_{3}(\mathbf{\widetilde{v}})|\leq n-2+N(2,2,3)=n-2$$ because of $|D_3(\mathbf{v})\cap D_3(\mathbf{\widetilde{v}})|\leq N(2,2,3)=0$ and $|D_2(\mathbf{v})\cap D_2(\mathbf{\widetilde{v}})|\leq N(2,2,2)=1$;
    \item If $l=3$ then $$|\mathcal{X}|\leq (s+t)|D_{2}(\mathbf{v})\cap D_{2}(\mathbf{\widetilde{v}})|+|D_{3}(\mathbf{v})\cap D_{3}(\mathbf{\widetilde{v}})|\leq 2(n-3)+N(3,2,3)=2n-5$$ because of $|D_3(\mathbf{v})\cap D_3(\mathbf{\widetilde{v}})|\leq N(3,2,3)=1$ and $|D_2(\mathbf{v})\cap D_2(\mathbf{\widetilde{v}})|\leq N(3,2,2)=2$;
    \item If $l=4$ then $$|\mathcal{X}|\leq 4(n-4-2)+6+2=4n-16$$ because of $|D_3(\mathbf{v})\cap D_3(\mathbf{\widetilde{v}})|\leq N(4,2,3)=2$ and $|D_2(\mathbf{v})\cap D_2(\mathbf{\widetilde{v}})|\leq N(4,2,2)=4$;
    \item If $l=5$ then $$|\mathcal{X}|\leq 4(n-5-2)+6+4=4n-18$$ because of $|D_3(\mathbf{v})\cap D_3(\mathbf{\widetilde{v}})|\leq N(5,2,3)=4$ and $|D_2(\mathbf{v})\cap D_2(\mathbf{\widetilde{v}})|\leq N(5,2,2)=4$;
    \item If $l=6$ then $$|\mathcal{X}|\leq 6(n-6-2)+6+N(6,2,3)=6n-34$$ because of $|D_3(\mathbf{v})\cap D_3(\mathbf{\widetilde{v}})|\leq N(6,2,3)=8$ and $|D_2(\mathbf{v})\cap D_2(\mathbf{\widetilde{v}})|\leq N(6,2,2)=6$;
    \item If $l=7$ then $$|\mathcal{X}|\leq 6(n-7-2)+6+N(7,2,3)=6n-35$$ because of $|D_3(\mathbf{v})\cap D_3(\mathbf{\widetilde{v}})|\leq N(7,2,3)=13$ and $|D_2(\mathbf{v})\cap D_2(\mathbf{\widetilde{v}})|\leq N(7,2,2)=6$;
    \item If $l\geq 8$ then $$|\mathcal{X}|\leq 6(n-l-2)+6+6l-30=6n-36$$ because of $|D_3(\mathbf{v})\cap D_3(\mathbf{\widetilde{v}})|\leq N(l,2,3)=6l-30$ and $|D_2(\mathbf{v})\cap D_2(\mathbf{\widetilde{v}})|\leq N(l,2,2)=6$ for $l\geq 8$.
\end{enumerate}
By the above discussion, when $2\leq l\leq 5$, we have $$|D_3(\mathbf{x})\cap D_3(\mathbf{y})|\leq 6n-37,$$ for any $n\geq 10$ (By using a computerized search, it follows that $|D_3(\mathbf{x})\cap D_3(\mathbf{y})|\leq 22$ for $n=10$ and $l=4$). Moreover, if $\mathbf{u},\mathbf{w}$ are not both alternating sequences then $|D_3(\mathbf{x})\cap D_3(\mathbf{y})|\leq 6n-40$ for any $n\geq 10$. Consider $l=6$, we let $\mathbf{v}=1\,0\,1\,0\,1\,0$ and $\mathbf{\widetilde{v}}=0\,1\,1\,0\,0\,1$ with $d_L(\mathbf{v},\mathbf{\widetilde{v}})=2$, and let $\mathbf{u}=\mathbf{a}_s$ and $\mathbf{w}=\mathbf{a}_t$ be alternating sequences. Then, we compute that $|\mathcal{X}|=6n-34$. When $l=7$, we let $\mathbf{v}=1\,0\,1\,0\,1\,1\,0$, $\mathbf{\widetilde{v}}=0\,1\,1\,0\,1\,0\,1$, $\mathbf{u}=\mathbf{a}_s$ and $\mathbf{w}=\mathbf{a}_t$. Then we have $|\mathcal{X}|=6n-35$.

Similarly, when $s=0, t\geq 1$ or $s\geq 1, t=0$, we can prove that  $|D_3(\mathbf{x})\cap D_3(\mathbf{y})|\leq 6n-31$; if $l=7$ then we have $|D_3(\mathbf{x})\cap D_3(\mathbf{y})|\leq 6n-32$; if $l\geq 8$ then we have $|D_3(\mathbf{x})\cap D_3(\mathbf{y})|\leq 6n-33$. Furthermore, when $s=0, t\geq 1$ or $s\geq 1, t=0$, if $\mathbf{u}$ or $\mathbf{w}$ is not an alternating sequence then $|D_3(\mathbf{x})\cap D_3(\mathbf{y})|\leq 6n-37$.
\end{proof}

\begin{proof}{ \rm (The proof of Lemma $\ref{lm7}$)}
Let $n_1=n$, $n_2=n+2$, $\mathbf{x}=x_1\,x_2\,\cdots\,x_n$, $\mathbf{y}=y_1\,y_2\,\cdots\,y_{n+2}$, $\mathbf{u}=u_1\,\cdots\,u_s$, $\mathbf{w}=w_1\,\cdots\,w_t$, $\mathbf{v}=v_1\,\cdots\, v_l$, $\mathbf{\widetilde{v}}=\widetilde{v}_1\,\cdots\,\widetilde{v}_{l+2}$, $\mathcal{X}=D_2(\mathbf{x})\cap D_4(\mathbf{y})$. By Lemma $\ref{lm4}$, the intersection can be decomposed as
\begin{equation}
D_2(\mathbf{x})\cap D_4(\mathbf{y})=\bigcup\limits_{p+q\leq 2}D_p(\mathbf{u})\circ \big(D_{2-p-q}(\mathbf{v})\cap D_{4-p-q}(\mathbf{\widetilde{v}})) \circ D_{q}(\mathbf{w}).\nonumber
\end{equation}
Since $\mathbf{x}\notin D_2(\mathbf{y})$, then $\mathbf{v}\notin D_2(\mathbf{\widetilde{v}})$ such that
\begin{align*}
D_2(\mathbf{x})\cap D_4(\mathbf{y})&=\big(D_1(\mathbf{u})\circ \big(D_{1}(\mathbf{v})\cap D_{3}(\mathbf{\widetilde{v}})) \circ \mathbf{w}\big)\cup
 \big(\mathbf{u}\circ \big(D_{1}(\mathbf{v})\cap D_{3}(\mathbf{\widetilde{v}})) \circ D_1(\mathbf{w})\big)\cup \big(\mathbf{u}\circ \big(D_{2}(\mathbf{v})\cap D_{4}(\mathbf{\widetilde{v}})) \circ \mathbf{w}\big)\\
&=\bigcup\limits_{\mathbf{u}_1\in D_1(\mathbf{u})}\mathcal{X}_{\mathbf{w}}^{\mathbf{u}_1}\cup \mathcal{X}_{\mathbf{w}}^{\mathbf{u}}\cup \bigcup\limits_{\mathbf{w}_1\in D_1(\mathbf{w})}\mathcal{X}_{\mathbf{w}_1}^{\mathbf{u}},
\end{align*}
where $\bigcup\limits_{\mathbf{u}_1\in D_1(\mathbf{u})}\mathcal{X}_{\mathbf{w}}^{\mathbf{u}_1}=D_1(\mathbf{u})\circ \big(D_{1}(\mathbf{v})\cap D_{3}(\mathbf{\widetilde{v}})) \circ \mathbf{w}$, $\mathcal{X}_{\mathbf{w}}^{\mathbf{u}}=\mathbf{u}\circ \big(D_{2}(\mathbf{v})\cap D_{4}(\mathbf{\widetilde{v}})) \circ \mathbf{w}$, $\bigcup\limits_{\mathbf{w}_1\in D_1(\mathbf{w})}\mathcal{X}_{\mathbf{w}_1}^{\mathbf{u}}=\mathbf{u}\circ \big(D_{1}(\mathbf{v})\cap D_{3}(\mathbf{\widetilde{v}})) \circ D_1(\mathbf{w})$.

Observe that $\mathbf{u}_{[1,s-1]}\in D_1(\mathbf{u})$ if $s\geq 1$, and $\mathbf{w}_{[2,t]}\in D_1(\mathbf{w})$ if $t\geq 1$. Next, we will prove that $|\mathcal{X}_{\mathbf{w}}^{\mathbf{u}_{[1,s-1]}}\backslash \mathcal{X}_{\mathbf{w}}^{\mathbf{u}}|\leq 3$ and $|\mathcal{X}_{\mathbf{w}_{[2,t]}}^{\mathbf{u}}\backslash \mathcal{X}_{\mathbf{w}}^{\mathbf{u}}|\leq 3$. Without loss of generality, we only consider the value of $|\mathcal{X}_{\mathbf{w}}^{\mathbf{u}_{[1,s-1]}}\backslash \mathcal{X}_{\mathbf{w}}^{\mathbf{u}}|$. By Proposition $\ref{prp1}$, it follows that $|\mathcal{X}_{\mathbf{w}}^{\mathbf{u}_{[1,s-1]}}|=|\mathcal{X}_{\mathbf{w}}^{\mathbf{u}}|+|\mathcal{X}_{\mathbf{w}}^{\mathbf{u}_{[1,s-1]}\overline{u_s}}|$. Since $v_1\neq \widetilde{v}_1$, then $\overline{u_s}=v_1$ or $\widetilde{v}_1$. Consider the case where $v_1=\overline{u_s}$. Then, $\mathcal{X}_{\mathbf{w}}^{\mathbf{u}_{[1,s-1]}\overline{u_s}}=D_1(v_2\,\cdots\, v_l) \cap D_{2-\ell}(\widetilde{v}_{3+\ell}\,\cdots\,\widetilde{v}_l)$ with $\ell\geq 0$.
Thus, 
$$|\mathcal{X}_{\mathbf{w}}^{\mathbf{u}_{[1,s-1]}\overline{u_s}}|\leq \max\{ |D_1(v_2\,\cdots\, v_l) \cap D_2(\widetilde{v}_3\,\cdots\,\widetilde{v}_{l+2})|,|D_1(v_2\,\cdots\, v_l) \cap D_1(\widetilde{v}_4\,\cdots\,\widetilde{v}_{l+2})|,1\}.$$
Since $\mathbf{v}\notin D_2(\mathbf{\widetilde{v}})$, then $v_2\,\cdots\, v_l\notin D_1(\widetilde{v}_3\,\cdots\,\widetilde{v}_{l+2})$ for $\ell=0$ and $v_2\,\cdots\, v_l\neq \widetilde{v}_4\,\cdots\,\widetilde{v}_{l+2}$ for $\ell=1$. By Lemma $\ref{lm3}$, we have $|\mathcal{X}_{\mathbf{w}}^{\mathbf{u}_{[1,s-1]}}\backslash \mathcal{X}_{\mathbf{w}}^{\mathbf{u}}|=|\mathcal{X}_{\mathbf{w}}^{\mathbf{u}_{[1,s-1]}\overline{u_s}}|\leq 3$ for $s\geq1$. Consider the case where $\widetilde{v}_1=\overline{u_s}$. Then, $\mathcal{X}_{\mathbf{w}}^{\mathbf{u}_{[1,s-1]}\overline{u_s}}=D_{0-\ell}(v_2\,\cdots\, v_l) \cap D_{3}(\widetilde{v}_{2}\,\cdots\,\widetilde{v}_l)$ with $\ell\geq 0$.
Thus,  $|\mathcal{X}_{\mathbf{w}}^{\mathbf{u}_{[1,s-1]}\overline{u_s}}|\leq 1.$ So, we have $|\mathcal{X}_{\mathbf{w}}^{\mathbf{u}_{[1,s-1]}}\backslash \mathcal{X}_{\mathbf{w}}^{\mathbf{u}}|=|\mathcal{X}_{\mathbf{w}}^{\mathbf{u}_{[1,s-1]}\overline{u_s}}|\leq 3$ for $s\geq1$.
Similarly, we also obtain that $|\mathcal{X}_{\mathbf{w}_{[2,t]}}^{\mathbf{u}}\backslash \mathcal{X}_{\mathbf{w}}^{\mathbf{u}}|\leq 3$ for $t\geq1$.

If $s,t\geq1$, by the decomposition of $\mathcal{X}$, then 
it follows that   
\begin{align*}
|\mathcal{X}|&=|\bigcup\limits_{\mathbf{u}_1\neq \mathbf{u}_{[1,s-1]}, \mathbf{u}_1\in D_1(\mathbf{u})}\mathcal{X}_{\mathbf{w}}^{\mathbf{u}_1}|+|\mathcal{X}_{\mathbf{w}}^{\mathbf{u}}|+|\bigcup\limits_{\mathbf{w}_1\neq \mathbf{w}_{[2,t]}, \mathbf{w}_1\in D_1(\mathbf{w})}\mathcal{X}_{\mathbf{w}_1}^{\mathbf{u}}|+|\mathcal{X}_{\mathbf{w}_{[2,t]}}^{\mathbf{u}}\backslash \mathcal{X}_{\mathbf{w}}^{\mathbf{u}}|+|\mathcal{X}_{\mathbf{w}}^{\mathbf{u}_{[1,s-1]}}\backslash \mathcal{X}_{\mathbf{w}}^{\mathbf{u}}|\\
&\leq (|D_1(\mathbf{u})|+|D_1(\mathbf{w})|-2)|D_{1}(\mathbf{v})\cap D_{3}(\mathbf{\widetilde{v}})|+|D_{2}(\mathbf{v})\cap D_{4}(\mathbf{\widetilde{v}})|+6\\
&\leq (s+t-2)|D_{1}(\mathbf{v})\cap D_{3}(\mathbf{\widetilde{v}})|+|D_{2}(\mathbf{v})\cap D_{4}(\mathbf{\widetilde{v}})|+6.
\end{align*}
For convenience, let $f(n)$ be the maximal value of $|D_2(\mathbf{x})\cap D_4(\mathbf{y})|$, where $\mathbf{x}\in \mathbb{F}_2^n$, $\mathbf{y}\in \mathbb{F}_2^{n+2}$, and $\mathbf{x}\notin D_2(\mathbf{y})$. By using a computerized search, we can obtain $f(2)=1$, $f(3)=2$, $f(4)=4$, $f(5)=7$, $f(6)=11$. When $s,t\geq1$, by the value of $l$, we have the following results:
\begin{enumerate}
    \item If $l=2$ then $$|\mathcal{X}|\leq (s+t)|D_{1}(\mathbf{v})\cap D_{3}(\mathbf{\widetilde{v}})|+f(2)\leq 2(n-2)+1=2n-3$$ because of $|D_1(\mathbf{v})\cap D_3(\mathbf{\widetilde{v}})|\leq 2$ and $|D_2(\mathbf{v})\cap D_2(\mathbf{\widetilde{v}})|\leq f(2)=1$ by using a computerized search;
    \item If $l=3$ then $$|\mathcal{X}|\leq (s+t)|D_{1}(\mathbf{v})\cap D_{3}(\mathbf{\widetilde{v}})|+f(3)\leq 3(n-3)+2=3n-7$$ because of $|D_1(\mathbf{v})\cap D_3(\mathbf{\widetilde{v}})|\leq 3$ and $|D_2(\mathbf{v})\cap D_2(\mathbf{\widetilde{v}})|\leq f(3)=2$ by using a computerized search;
    \item If $l=4$ then $$|\mathcal{X}|\leq 4(n-4-2)+6+4=4n-14$$ because of $|D_1(\mathbf{v})\cap D_3(\mathbf{\widetilde{v}})|\leq 4$ and $f(4)=4$;
    \item If $l\geq 5$ then $$|\mathcal{X}|\leq 4(n-l-2)+6+4l-13=4n-15$$ because of $|D_1(\mathbf{v})\cap D_3(\mathbf{\widetilde{v}})|\leq 3$ and $|D_2(\mathbf{v})\cap D_4(\mathbf{\widetilde{v}})|\leq 4l-13$ for $l\geq 5$.
\end{enumerate}
Therefore, for $n \geq 9$, we have $|\mathcal{X}| \leq 4n-14$, with equality possible only when $l=4$.

Consider the case where $s=0$ and $t\geq 1$, or $s\geq1$ and $t=0$. Similarly, if $l\geq 5$ then we can prove that $|D_2(\mathbf{x})\cap D_4(\mathbf{y})|\leq 4n-14$. Moreover, if $\mathbf{u}$ or $\mathbf{w}$ is not an alternating sequence then $|D_2(\mathbf{x})\cap D_4(\mathbf{y})|\leq 4n-16$ for any $l\geq 5$; $|D_2(\mathbf{x})\cap D_4(\mathbf{y})|\leq 4n-15$ for $l=4$.
\end{proof}

\section{Proof of Lemma $\ref{lm16}$}\label{APP-E}
The purpose of this appendix is to give the proof of Lemma $\ref{lm16}$.
\begin{proof}
Since $d_L(\mathbf{x},\mathbf{y})=1$, we have $|D_1(\mathbf{x})\cap D_1(\mathbf{y})|=1$ or $2$. When  $|D_1(\mathbf{x})\cap D_1(\mathbf{y})|=1$, by Lemma $\ref{lm15}$, we have $|D_2(\mathbf{x})\cap D_2(\mathbf{y})|\leq n$ for any $n\geq 4$.

For any $n\geq 7$, if $|D_2(\mathbf{x}) \cap D_2(\mathbf{y})|=2n-4,2n-5,~\text{or}~2n-6$ then we have $|D_1(\mathbf{x})\cap D_1(\mathbf{y})|=2$ because of $|D_2(\mathbf{x})\cap D_2(\mathbf{y})|\leq n$ with $|D_1(\mathbf{x})\cap D_1(\mathbf{y})|=1.$ By Lemma $\ref{lm13}$, if $|D_1(\mathbf{x})\cap D_1(\mathbf{y})|=2$ then $\mathbf{x}$ and  $\mathbf{y}$ are Type-$A$ confusable. For convenience, we let $\mathbf{x}=\mathbf{(u,a,v)}$ and $\mathbf{y}=\mathbf{(u,\overline{a},v)}$ with $|\mathbf{u}|=s,|\mathbf{a}|=l,|\mathbf{v}|=t$, where $s\geq 0,t\geq 0,l\geq 2, s+l+t=n$, and $\mathbf{a}$ is an alternating sequence. It is easily verified that $D_1(\mathbf{a})\cap D_1(\overline{\mathbf{a}})=\{\mathbf{a}(l-1),\overline{\mathbf{a}(l-1)}\}$ and $|D_2(\mathbf{a})\cap D_2(\overline{\mathbf{a}})|=2(l-2)$. Moreover, for any $l\geq 3$ we have $D_2(\mathbf{a})\cap D_2(\overline{\mathbf{a}})=\{\mathbf{a}(l-2),\overline{\mathbf{a}(l-2)},\mathbf{w}_1,\cdots,\mathbf{w}_{2l-6}\}$, where $\mathbf{w}_i$ is a binary sequence with one run of length $2$ for any $i\in [2l-6]$.
By Lemma $\ref{lm4}$, it follows that
\begin{align}
D_2(\mathbf{x})&\cap D_2(\mathbf{y})=\bigcup\limits_{p+q\leq 2}D_p(\mathbf{u})\circ \big(D_{2-p-q}(\mathbf{a})\cap D_{2-p-q}(\overline{\mathbf{a}})\big) \circ D_{q}(\mathbf{v})\nonumber\\
&=\big(D_1(\mathbf{u})\circ \{\mathbf{a}(l-1),\overline{\mathbf{a}(l-1)}\} \circ\mathbf{v}\big)
\cup \big(\mathbf{u} \circ \{\mathbf{a}(l-1),\overline{\mathbf{a}(l-1)}\} \circ D_1(\mathbf{v})\big)\cup \big(\mathbf{u}\circ (D_2(\mathbf{a})\cap D_2(\overline{\mathbf{a}}))\circ \mathbf{v})\big).\nonumber
\end{align}
Thus, $|D_2(\mathbf{x})\cap D_2(\mathbf{y})|\leq 2(|D_1(\mathbf{u})|+|D_1(\mathbf{v})|)+|D_2(\mathbf{a})\cap D_2(\overline{\mathbf{a}})|$.

When $\mathbf{u}$ and $\mathbf{v}$ are not both alternating sequences, without loss of generality, we assume $\mathbf{u}$ is not an alternating sequence, where $\mathbf{u}=(u_1,u_2,\cdots,u_s)$, $|D_1(\mathbf{u})|\leq s-1$, and $s\geq 2$. For convenience, let $\mathbf{a}^{b}(s)$ be an alternating sequence of length $s$ that starts with $b$, where $b\in \mathbb{F}_2$ and $s>0$. We consider a sequence $\mathbf{z}=(\mathbf{u}_{[1,s-1]},\mathbf{a}^{u_s}(l-1),\mathbf{v})=(\mathbf{u},\mathbf{a}^{\overline{u_s}}(l-2),\mathbf{v}),$ then $\mathbf{z}\in D_1(\mathbf{u})\circ \{\mathbf{a}(l-1),\overline{\mathbf{a}(l-1)}\} \circ\mathbf{v}$ and $\mathbf{z}\in \mathbf{u}\circ (D_2(\mathbf{a})\cap D_2(\overline{\mathbf{a}})) \circ\mathbf{v}$. Thus, $|D_2(\mathbf{x})\cap D_2(\mathbf{y})|\leq 2(|D_1(\mathbf{u})|+|D_1(\mathbf{v})|)+|D_2(\mathbf{a})\cap D_2(\overline{\mathbf{a}})|-1=2n-7$. So, if $|D_2(\mathbf{x})\cap D_2(\mathbf{y})|\geq 2n-6$ then $\mathbf{u}$ and $\mathbf{v}$ are both alternating sequences. 

Next, we discuss the value of $|D_2(\mathbf{x})\cap D_2(\mathbf{y})|$ in some cases of $l=n,n-1,n-2,2,$ or $3\leq l\leq n-3$. Let $\mathbf{u,v}$ be alternating sequences and $\mathcal{X}=D_2(\mathbf{x})\cap D_2(\mathbf{y})$. Then $\mathbf{u}=\mathbf{a}^{u_1}(s)$ and $\mathbf{v}=\mathbf{a}^{v_1}(t)$, where $u_1,v_1\in \mathbb{F}_2$.

If $l=n$ then we have $|D_2(\mathbf{x})\cap D_2(\mathbf{y})|=2n-4$. When $l=n-1$, with loss of generality, let $s=1$ and $\mathbf{u}=u_1$. By Proposition $\ref{prp1}$, we have $\mathcal{X}=\mathcal{X}^{u_1}+\mathcal{X}^{\overline{u_1}}$. Since $\mathbf{x}=u_1\,\mathbf{a}$ and $\mathbf{y}=u_1\,\overline{\mathbf{a}}$ we can obtain $|\mathcal{X}^{u_1}|=|D_2(\mathbf{a})\cap D_2(\mathbf{\overline{a}})|$ and $|\mathcal{X}^{\overline{u_1}}|=|D_1(\mathbf{a}^{u_1}(n-2))\cap D_0(\mathbf{a}^{\overline{u_1}}(n-3))|$. Thus, $|\mathcal{X}|=|\mathcal{X}^{u_1}|+|\mathcal{X}^{\overline{u_1}}|=2n-6+1=2n-5$. If $l=n-2$ and $\{s,t\}=\{0,2\}$ then we have $|D_2(\mathbf{x})\cap D_2(\mathbf{y})|=2n-5$. If $l=n-2$ and $\{s,t\}=\{1\}$ then by Proposition $\ref{prp1}$ it is easily verified that $|D_2(\mathbf{x})\cap D_2(\mathbf{y})|=2n-6$.

When $l=2$, if $\min\{s,t\}=0$ then $|D_2(\mathbf{x})\cap D_2(\mathbf{y})|=2n-4$. Consider $l=2$ and $s,t\geq 1$. If $\mathbf{u,v}$ are alternating sequences then $D_1(\mathbf{u})=\{\mathbf{a}^{u_1}(s-1),\mathbf{a}^{\overline{u_1}}(s-1),\mathbf{w}_1^{(1)},\cdots,\mathbf{w}_{s-2}^{(1)}\}$ and $D_1(\mathbf{v})=\{\mathbf{a}^{v_1}(t-1),\mathbf{a}^{\overline{v_1}}(t-1),\mathbf{w}_1^{(2)},\cdots,\mathbf{w}_{t-2}^{(2)}\}$, where $\mathbf{w}_i^{(j)}$ is a binary sequence with one run of length $2$ for some $i=1$ and $j\in [s-2]$ or some $i=2$ and $j\in [t-2]$. Specially, when $s=1$ or $t=1$, we have $D_1(\mathbf{u})\circ \mathcal{C}=\mathcal{C}$ or $\mathcal{C}\circ D_1(\mathbf{v})=\mathcal{C}$ for any subset $\mathcal{C}\in \mathbb{F}_2^{m}$, where $m\geq 1$ is an integer. We have
\begin{align*}
D_1(\mathbf{u})\circ \{\mathbf{a}^{1}(l-1),\mathbf{a}^{0}(l-1)\} \circ\mathbf{v}&=\{\mathbf{a}^{i}(s-1)\circ \mathbf{a}^{j}(l-1)\circ \mathbf{a}^{v_1}(t):i,j\in\{0,1\}\}\\
&~~~~~~~\bigcup \{\mathbf{w}^{(1)}_{i}\circ \mathbf{a}^{j}(l-1)\circ \mathbf{a}^{v_1}(t):i\in [s-2],j\in\{0,1\}\}=A_1\cup B_1,
\end{align*}
where $A_1=\{\mathbf{a}^{i}(s-1)\circ \mathbf{a}^{j}(l-1)\circ \mathbf{a}^{v_1}(t):i,j\in\{0,1\}\}$ and $B_1= \{\mathbf{w}^{(1)}_{i}\circ \mathbf{a}^{j}(l-1)\circ \mathbf{a}^{v_1}(t):i\in [s-2],j\in\{0,1\}\}$.
\begin{align*}
\mathbf{u}\circ \{\mathbf{a}^{1}(l-1),\mathbf{a}^{0}(l-1)\} \circ D_1(\mathbf{v})&=\{\mathbf{a}^{u_1}(s)\circ \mathbf{a}^{j}(l-1)\circ \mathbf{a}^{i}(t-1):i,j\in\{0,1\}\}\\
&~~~~~~~\bigcup \{\mathbf{a}^{u_1}(s)\circ \mathbf{a}^{j}(l-1)\circ \mathbf{w}^{(2)}_{i}:i\in [t-2],j\in\{0,1\}\}=A_2\cup B_2,
\end{align*}
where $A_2=\{\mathbf{a}^{u_1}(s)\circ \mathbf{a}^{j}(l-1)\circ \mathbf{a}^{i}(t-1):i,j\in\{0,1\}\}$ and $B_2=\{\mathbf{a}^{u_1}(s)\circ \mathbf{a}^{j}(l-1)\circ \mathbf{w}^{(2)}_{i}:i\in [t-2],j\in\{0,1\}\}$.
\begin{align*}
\mathbf{u}\circ (D_2(\mathbf{a})\cap D_2(\overline{\mathbf{a}})) \circ \mathbf{v}&=\{\mathbf{a}^{u_1}(s)\circ \mathbf{a}^{j}(l-2)\circ \mathbf{a}^{v_1}(t):j\in\{0,1\}\}\\
&~~~~~~~\bigcup \{\mathbf{a}^{u_1}(s)\circ \mathbf{w}_{i}\circ \mathbf{a}^{v_1}(t):i\in [2l-6]\}=A_3\cup B_3,
\end{align*}
where $A_3=\{\mathbf{a}^{u_1}(s)\circ \mathbf{a}^{j}(l-2)\circ \mathbf{a}^{v_1}(t):j\in\{0,1\}\}$ and $B_3=\{\mathbf{a}^{u_1}(s)\circ \mathbf{w}_{i}\circ \mathbf{a}^{v_1}(t):i\in [2l-6]\}$.
By comparison, we have $A_i\cap B_j=\emptyset$ for any $i,j\in[3]$, and $B_i\cap B_j=\emptyset$ for any $i,j\in [3]$ and $i\neq j$. Hence, we only consider a set $A_1\cup A_2\cup A_3$. If $l=2$ then $A_1=\{\mathbf{a}^{u_1}(s-1)\circ 0 \circ\mathbf{a}^{v_1}(t),\mathbf{a}^{u_1}(s-1)\circ 1 \circ\mathbf{a}^{v_1}(t),\mathbf{a}^{\overline{u_1}}(s-1)\circ 0 \circ\mathbf{a}^{v_1}(t),\mathbf{a}^{\overline{u_1}}(s-1)\circ 1 \circ\mathbf{a}^{v_1}(t)\}$,
$A_2=\{\mathbf{a}^{u_1}(s)\circ 0\circ \mathbf{a}^{v_1}(t-1),\mathbf{a}^{u_1}(s)\circ 1\circ \mathbf{a}^{v_1}(t-1),\mathbf{a}^{u_1}(s)\circ 0\circ \mathbf{a}^{\overline{v_1}}(t-1),\mathbf{a}^{u_1}(s)\circ 1\circ \mathbf{a}^{\overline{v_1}}(t-1)\}$,
$A_3=\{\mathbf{a}^{u_1}(s)\circ \mathbf{a}^{v_1}(t)\}$, and $B_3=\emptyset$. If $\mathbf{z}\in A_1\cap A_2$ then $\mathbf{z}_{[1,s]}=\mathbf{a}^{v_1}$ and $\mathbf{z}_{[s+1,s+t]}=\mathbf{a}^{v_1}(t)$. It is easily verified that $\mathbf{a}^{u_1}(s)\in \{\mathbf{a}^{u_1}(s-1)\circ 0,\mathbf{a}^{u_1}(s-1)\circ 1\}$ and $\mathbf{a}^{v_1}(t)\in \{0\circ \mathbf{a}^{\overline{v_1}}(t-1),1\circ \mathbf{a}^{\overline{v_1}}(t-1)\}$. Hence, $\mathbf{a}^{u_1}(s)\circ \mathbf{a}^{v_1}(t)\in A_i$ for any $i\in [3]$. So, $|A_1\cup A_2\cup A_3|=7$. Therefore, we have
\begin{align*}
D_2(\mathbf{x})&\cap D_2(\mathbf{y})=|A_1\cup A_2\cup A_3|+|B_1\cup B_2\cup B_3|=7+2(s-2+t-2)=2n-5.
\end{align*}
Similarly, when $l\geq 3$, $\min\{s,t\}=0$, and $\max\{s,t\}\geq 1$ then $D_2(\mathbf{x})\cap D_2(\mathbf{y})=2n-5$. Furthermore, when $l\geq 3$, and $s,t\geq 1$ then $D_2(\mathbf{x})\cap D_2(\mathbf{y})=2n-6$.
\end{proof}

\section*{Acknowledgments}
This research is supported in part by the National Key Research and Development Program of China under Grant Nos. 2022YFA1005000 and 2022YFA1004900, the National Natural Science Foundation of China under Grant Nos. 12001134, 62371259, and 62571301, and the Fundamental Research Funds for the Central Universities of China (Nankai University), and the Nankai Zhide Foundation.


\begin{thebibliography}{1}
\bibliographystyle{IEEEtran}


\bibitem{Church}
G.~M. Church, Y. Gao, and S. Kosuri, ``Next-generation digital information storage in DNA,'' \textit{Science}, vol. 337, no. 6102, pp. 1628--1628, 2012.\par

\bibitem{Yazdi}
S. Yazdi, H.M. Kiah, E.R. Garcia, J. Ma, H. Zhao, and O. Milenkovic, ``DNA-based storage: Trends and methods,'' \textit{IEEE Trans. Molecular, Biological, Multi-Scale Commun.}, vol. 1, no. 3, pp. 230--248, 2015. \par

\bibitem{Lenz}
A. Lenz, P. H. Siegel, A. Wachter-Zeh, and E. Yaakobi, ``Coding over sets for DNA storage,'' \textit{IEEE Trans. Inf. Theory}, vol. 66, no. 4, pp. 2331--2351, 2020.\par

\bibitem{Parkin}
S. S. Parkin, M. Hayashi, and L. Thomas, ``Magnetic domain-wall racetrack memory,'' \textit{Science}, vol. 320, pp. 190--194, 2008.\par

\bibitem{Chee}
Y. M. Chee, H. M. Kiah, A. Vardy, E. Yaakobi, and V. K. Vu, ``Coding for racetrack memories,'' \textit{IEEE Trans. Inf. Theory}, vol. 64, no. 11, pp. 7094--7112, 2018.\par

\bibitem{K1}
E. Konstantinova, ``Reconstruction of permutations distorted by single reversal errors,'' \textit{Discrete Applied Mathematics}, vol. 155, pp. 2426--2434, 2007.\par

\bibitem{L1}
V. I. Levenshtein, ``Efficient reconstruction of sequences,'' \textit{IEEE Trans. Inf. Theory}, vol. 47, no. 1, pp. 2--22, 2001.\par

\bibitem{L2}
V. I. Levenshtein, ``Efficient reconstruction of sequences from their subsequences or supersequences,'' \textit{Journal of Combin. Theory, Ser. A}, vol. 93, no.2, pp. 310--332, 2001.\par

\bibitem{L3}
V. I. Levenshtein, E. Konstantinova, E. Konstantinov, and S. Molodtsov, ``Reconstruction of a graph from $2$-vicinities of its vertices,'' \textit{Discrete Applied Mathematics}, vol. 156, pp. 1399--1406, 2008.\par

\bibitem{L4}
V. I. Levenshtein and J. Siemons, ``Error graphs and the reconstruction of elements in groups,'' \textit{Journal of Combin. Theory, Ser. A}, vol. 116, pp. 795--815, 2009.\par

\bibitem{Sala1}
F. Sala, R. Gabrys, C. Schoeny, and L. Dolecek, ``Exact reconstruction from insertions in synchronization codes,'' \textit{IEEE Trans. Inf. Theory}, vol. 63, no. 4, pp. 2428--2445, 2017.\par

\bibitem{Gabrys}
R. Gabrys and E. Yaakobi, ``Sequence reconstruction over the deletion channel,'' \textit{IEEE Trans. Inf. Theory}, vol. 64, no. 4, pp. 2924--2931, 2018.\par

\bibitem{Pham}
V. L. P. Pham, K. Goyal, and H. M. Kiah, ``Sequence reconstruction problem for deletion channels: a complete asymptotic solution,'' In \textit{Proc. IEEE Int. Symp. Inform. Theory}, pp. 992--997, 2022.\par

\bibitem{Pham2}
V. L. P. Pham, K. Goyal, and H. M. Kiah, ``Sequence reconstruction problem for deletion channels: a complete asymptotic solution,'' \textit{Journal of Combin. Theory, Ser. A}, https://doi.org/10.1016/j.jcta.2024.105980, 2025.\par

\bibitem{Yaakobi}
E. Yaakobi, M. Schwartz, M. Langberg, and J. Bruck, ``Sequence reconstruction for Grassmann graphs and permutations,'' In \textit{Proc. Int. Symp. Inform. Theory}, pp. 874--878, 2013.\par

\bibitem{Wang1}
X. Wang, ``Reconstruction of permutations distorted by single Kendall $\tau$-errors,'' \textit{Cryptogr. Commun.}, vol. 15, pp. 131--144, 2023. \par

\bibitem{Wang2}
X. Wang and E. V. Konstantinova, ``The sequence reconstruction problem for permutations with the Hamming distance,'' \textit{Cryptogr. Commun.}, vol. 16, pp. 1033--1057, 2024. \par

\bibitem{Wang3}
X. Wang, F.-W. Fu, and E. V. Konstantinova, ``The sequence reconstruction problem for permutations with the Hamming distance,'' \textit{Des. Codes Cryptogr.}, vol. 93, no. 1, pp. 11--37, 2025. \par

\bibitem{Calabi}
L. Calabi, ``On the computation of Levenshtein's distances,'' \textit{TN-9-0030, Parke Math. Labs., Inc., Carlisle, MA.}, 1967.

\bibitem{Chrisnata}
J. Chrisnata, H. M. Kiah, and E. Yaakobi, ``Correcting deletions with multiple reads,'' \textit{IEEE Trans. Inf. Theory}, vol. 68, no. 11, pp. 7141--7158, 2022.

\bibitem{Lan}
Z. Lan, Y. Sun, W. Yu, and G. Ge, ``Sequence reconstruction under channels with multiple bursts of insertions or deletions,''  \textit{IEEE Trans. Inf. Theory}, vol. 72, no. 1, pp. 315-330, 2026. 

\bibitem{Sun1}
Y. Sun, Y. Xi, and G. Ge, ``Sequence reconstruction under single-burst insertion/deletion/edit channel,'' \textit{IEEE Trans. Inf. Theory}, vol. 69, no. 7, pp. 4466-4483, 2023.

\bibitem{Zhang}
D. Zhang, G. Ge, and Y. Zhang, ``Sequence reconstruction over 3-deletion channels,'' In \emph{Proc. IEEE Int. Symp. Inf. Theory}, pp. 891-896, 2024.

\bibitem{Song}
W. Song, K. Cai, and T. Q. S. Quek, ``Sequence reconstruction for the single-deletion single-substitution channel,''  \textit{IEEE J. Sel. Areas Inf. Theory}, vol. 6, pp. 232-247, 2025.

\bibitem{Cai} K. Cai, H. M. Kiah, T. T. Nguyen, and E. Yaakobi, ``Coding for sequence reconstruction for single edits,'' \textit{IEEE Trans. Inf. Theory}, vol. 68, no. 1, pp. 66-79, 2022.

\bibitem{Sun2} Y. Sun and G. Ge, ``Correcting two-deletion with a constant number of reads,'' \textit{IEEE Trans. Inf. Theory}, vol. 69, no. 5, pp. 2969-2982, 2023.

\bibitem{Sun3} Y. Sun and G. Ge, ``Bounds and constructions of $\ell$-read codes under the Hamming metric,'' \textit{IEEE Trans. Inf. Theory}, vol. 71, no. 8, pp. 5868-5883, 2025.

\bibitem{Sun4} Y. Sun and G. Ge, ``Codes for correcting a burst of edits using weighted-summation VT sketch,'' \textit{IEEE Trans. Inf. Theory}, vol. 71, no. 3, pp. 1631-1646, 2025.

\bibitem{Ye} Z. Ye, X. Liu, X. Zhang, and G. Ge, ``Reconstruction of sequences distorted by two insertions,'' \textit{IEEE Trans. Inf. Theory}, vol. 69, no. 8, pp. 4977-4992, 2023.

\bibitem{Wu} R. Wu and X. Zhang, ``Balanced reconstruction codes for single edits,'' \textit{Des. Codes Cryptogr.,} vol. 92, pp. 2011-2029, 2024.

\end{thebibliography}
\end{document}